\documentclass[aps,pra,reprint,superscriptaddress,amsmath,amssymb,onecolumn,nofootinbib,secnumarabic]{revtex4-2}
\usepackage{color, xcolor, colortbl}
\usepackage{graphicx}
\usepackage{amsthm,amsmath,amssymb,amsfonts}
\usepackage{dcolumn}
\usepackage{url}
\usepackage{enumitem}
\usepackage[normalem]{ulem}
\usepackage{algpseudocode}
\usepackage{bm}
\usepackage{appendix}
\usepackage{multirow}
\usepackage{braket}
\usepackage[english]{babel}
\usepackage{aliascnt}
\usepackage{hyperref}
\usepackage[capitalize]{cleveref}
\usepackage{algorithm}
\usepackage{xpatch}
\usepackage{tikz}
\usepackage{adjustbox}
\usepackage{xspace}
\usetikzlibrary{quantikz}
\usepackage{pifont}
\newcommand{\cmark}{\textcolor{green}{\checkmark}}  %

\newcommand{\Or}{\mathcal{O}}

\newcommand{\RR}{\mathbb{R}}

\renewcommand{\Im}{\operatorname{Im}}

\newcommand{\Tr}{\operatorname{Tr}}

\newcommand{\mc}[1]{\mathcal{#1}}

\newcommand{\wt}[1]{\widetilde{#1}}

\newtheorem{thm}{Theorem}[section]
\crefname{thm}{Theorem}{Theorems}
\Crefname{thm}{Theorem}{Theorems}

\newaliascnt{lem}{thm}
\newaliascnt{rem}{thm}
\newaliascnt{prop}{thm}
\newaliascnt{cor}{thm}
\newaliascnt{assumption}{thm}
\newaliascnt{defn}{thm}
\newaliascnt{notation}{thm}
\newaliascnt{fact}{thm}
\newaliascnt{setup}{thm}

\newtheorem{lem}[lem]{Lemma}

\newtheorem{prop}[prop]{Proposition}
\newtheorem{cor}[cor]{Corollary}
\newtheorem{assumption}[assumption]{Assumption}
\newtheorem{defn}[defn]{Definition}

\newtheorem{fact}[fact]{Fact}
\newtheorem{setup}[setup]{Setup}

\aliascntresetthe{lem}
\aliascntresetthe{rem}
\aliascntresetthe{prop}
\aliascntresetthe{cor}
\aliascntresetthe{assumption}
\aliascntresetthe{defn}
\aliascntresetthe{notation}
\aliascntresetthe{fact}
\aliascntresetthe{setup}

\crefname{lem}{Lemma}{Lemmas}
\Crefname{lem}{Lemma}{Lemmas}

\crefname{rem}{Remark}{Remarks}
\Crefname{rem}{Remark}{Remarks}

\crefname{prop}{Proposition}{Propositions}
\Crefname{prop}{Proposition}{Propositions}

\crefname{cor}{Corollary}{Corollaries}
\Crefname{cor}{Corollary}{Corollaries}

\crefname{assumption}{Assumption}{Assumptions}
\Crefname{assumption}{Assumption}{Assumptions}

\crefname{defn}{Definition}{Definitions}
\Crefname{defn}{Definition}{Definitions}

\crefname{notation}{Notation}{Notations}
\Crefname{notation}{Notation}{Notations}

\crefname{fact}{Fact}{Facts}
\Crefname{fact}{Fact}{Facts}
\Crefname{setup}{Setup}{Setups}

\usepackage{bbm}
\newcommand{\zd}[1]{{\color{brown}[ZD:#1 ]}}

\newcommand{\rev}[1]{{\color{black}#1}}

\newcommand{\RZ}[1]{{\color{cyan}[RZ: #1 ]}}
\NewDocumentCommand{\ketbra}{mG{#1}}{\mathinner{|{#1}\rangle\!\langle{#2}|}}

\usepackage{silence}


\usetikzlibrary{fit}
\tikzset{%
  highlight/.style={rectangle,rounded corners,fill=blue!15,draw,fill opacity=0.3,thick,inner sep=0pt}
}

\graphicspath{{figures/}}

\newcommand{\DeptMathS}{Department of Mathematics, Stanford University, Stanford, CA 94305, USA}
\newcommand{\DeptMath}{Department of Mathematics, University of Michigan, Ann Arbor, MI 48109, USA}
\newcommand{\DeptCS}{Department of Computer Science, Purdue University, West Lafayette, Indiana 47907, USA}
\newcommand{\cmphead}[1]{\smallskip\noindent\textbf{#1.} }
\AtBeginDocument{%
  \setcounter{secnumdepth}{3}%
  \renewcommand{\thesubsection}{\thesection.\arabic{subsection}}%
}
\begin{document}

\title{Overcoming the Lamb Shift in System-Bath Interaction Models via KMS Detailed Balance: High-Accuracy Thermalization with Time-Bounded Interactions}
\author{Hongrui Chen}
\email{hongrui@stanford.edu}
\affiliation{\DeptMathS}
\author{Zhiyan Ding}
\email{zyding@umich.edu}
\affiliation{\DeptMath}
\author{Ruizhe Zhang}
\email{rzzhang@purdue.edu}
\affiliation{\DeptCS}
\begin{abstract}
We investigate quantum thermal state preparation algorithms based on system-bath interactions and uncover a surprising phenomenon in the weak-coupling regime. We rigorously prove that, if the system-bath interaction is engineered so that the transition part of the approximate Lindbladian generator satisfies the Kubo--Martin--Schwinger (KMS) detailed balance condition, then the unique fixed point of the dynamics can be made arbitrarily close to the Gibbs state in the weak-coupling limit, regardless of the structure of the Lamb shift term. Importantly, this remains true even when the approximate Lindbladian differs substantially from the ideal Davies generator and the Lamb shift term does not commute with the thermal state. Our result shows that the role of the KMS detailed balance condition extends well beyond standard Lindbladian dynamics, serving as a general principle for a broader class of dissipative systems. Furthermore, by combining this with a general perturbation framework, we bound the mixing time of the dynamics and establish an end-to-end complexity of $\mathcal{O}(\varepsilon^{-1})$ for Gibbs state preparation. These guarantees apply to any Hamiltonian whose associated KMS-detailed-balance Lindbladian is known to be fast mixing.
\end{abstract}
\maketitle

\section{Introduction}
Given a Hamiltonian $H$ and an inverse temperature $\beta$, the Gibbs state $\rho_\beta = e^{-\beta H}/\mathrm{Tr}(e^{-\beta H})$ is a fundamental object in quantum many-body physics, quantum chemistry, and materials science. It encodes the equilibrium properties of quantum systems and serves as a crucial resource for various applications. One leading approach to prepare the Gibbs state on a quantum computer is through simulating Lindblad dynamics, inspired by the natural thermalization process of quantum systems interacting with their environment. In this framework, the system density operator evolves under an engineered Lindblad dynamics that drives it toward the target Gibbs state. In particular, breakthrough work~\cite{ChenKastoryanoGilyen2023,Chen2025Efficient} and follow-up \cite{Ding_2025} have demonstrated that a large class of Lindbladian dynamics can be efficiently simulated while strictly maintaining the Gibbs state as the unique fixed point of the dynamics. The key design principle behind these dynamics is engineering the dissipative and coherent terms to satisfy the Kubo–Martin–Schwinger (KMS) Detailed Balance Condition, which ensures the evolution exhibits a form of time-reversal symmetry with respect to the Gibbs state. This condition not only guarantees that the Gibbs state is stationary, but also frequently yields favorable convergence properties, enabling rigorous performance guarantees for a wide range of physically relevant models~\cite{TemmeKastoryanoRuskaiEtAl2010,KastoryanoTemme2013,BardetCapelGaoEtAl2023,rouz2024,DingLiLinZhang2024,kochanowski2024rapid,rouze2024optimal,tong2024fast,smid2025polynomial,zhan2025rapidquantumgroundstate,bergamaschi2026fastmixingquantumspin}.

 Despite the success of Lindbladian-simulation-based quantum Gibbs state preparation, an important challenge remains: existing Lindbladian dynamics for thermal state preparation often involve jump operators and coherent terms with highly nontrivial structure, such as linear combinations of Heisenberg evolutions of local operators. Implementing such dynamics on quantum hardware typically requires block-encoding these complicated jump operators, time-reversed Hamiltonian simulation, elaborate clock-register logic, and a large number of ancilla qubits. These requirements place such approaches well beyond the resource capabilities of early fault-tolerant quantum computers. To resolve this issue, a different class of dissipative dynamics—the system–bath interaction framework—has recently emerged~\cite{lloyd2025quantumthermal,hahn2026efficientquantumthermalstate,ding2025endtoendefficientquantumthermal,wang2025lindbladdynamicsrigorousguarantees,ramonescandell2025thermalstatepreparationrepeated,langbehn2025universal,scandi2025thermal}. The high-level intuition is to simulate the open quantum system dynamics from first principles by introducing a small ancillary bath register that is weakly coupled to the system register, with the joint evolution governed by an engineered \emph{Hamiltonian}.
Dissipation is then naturally induced by tracing out and resetting the bath after a prescribed evolution time. Note that to exactly simulate the natural thermalization process, the bath should be extremely large or infinite, whereas this system-bath interaction framework can approximately prepare the Gibbs state after sufficiently many iterations with as small as a \emph{single-qubit} bath. The simple implementation avoids the significant overhead required by Lindblad dynamics-based approaches. Formally, the system density matrix $\rho_n$ at the $n$-th iteration updates via the quantum channel {$\Phi_{\alpha,T}$}:
\begin{equation}\label{eqn:Phi_alpha}
\rho_{n+1} = \Phi_{\alpha,T}(\rho_n) := \mathbb{E}\left(\mathrm{Tr}_E\left( U_\alpha(T) \left( \rho_n \otimes \rho_E \right) U_{\alpha}(T)^\dagger \right)\right)\,,
\end{equation}
where $\rho_E$ is the bath density matrix reset in each iteration, and $U_\alpha(T)$ denotes the unitary evolution by a carefully designed Hamiltonian $H_\alpha$ acting on both system and bath registers with coupling strength $\alpha$. A key observation for this approach is that, similar to one of the derivations of the Lindblad master equation, in the weak-coupling limit ($\alpha\rightarrow 0$), $\Phi_{\alpha,T}$ can be approximated by a Lindbladian up to the unitary transformation:
\begin{equation}\label{eqn:Phi_alpha_approx}
\Phi_{\alpha,T}(\rho)=\mathcal{U}_S(T)\,\circ\,\exp(\mathcal{L} \alpha^2)\,\circ\,\mathcal{U}_S(T)[\rho]+\mathcal{O}(\alpha^4)\,,
\end{equation}
where $\mathcal{U}_S(t)[\rho]:=e^{-iHt} \rho e^{iHt}$ is the Hamiltonian evolution only on the system part, and ${\cal L}$ is a Lindbladian.
However, none of the existing constructions can simultaneously guarantee both that ${\cal L}$ exactly fixes the target Gibbs state and that $\Phi_{\alpha,T}$ can be efficiently simulated. As a result, without carefully tuning the parameters, existing theoretical results cannot show that the fixed point of the quantum channel $\Phi_{\alpha,T}$ is close to the Gibbs state, even in the weak-coupling limit. From an algorithmic perspective, overcoming this issue requires tuning the parameters in $U_{\alpha}(T)$ and taking $T=\Omega(1/\varepsilon)$ so that the fixed point of $\Phi_{\alpha,T}$ is $\varepsilon$-close to the Gibbs state. However, these requirements introduce a large polynomial in $1/\varepsilon$ overhead in the end-to-end simulation cost for preparing an $\varepsilon$-accurate Gibbs state~\cite{slezak2026polynomialtime}.
A natural question to ask is:
\begin{center}
Q1: \emph{For the system-bath interaction model, is it possible to achieve efficient high-accuracy Gibbs state preparation?}
\end{center}


Under Lindbladian-simulation-based quantum Gibbs state preparation, recent constructions of KMS-detailed-balance Lindbladians can achieve efficient, high-accuracy preparation~\cite{ChenKastoryanoGilyen2023,Ding_2025}.
Motivated by the success of this KMS-detailed-balance approach, recent works~\cite{lloyd2025quantumthermal,hahn2026efficientquantumthermalstate} engineered system--bath interactions so that part of the effective Lindbladian ${\cal L}$ in these constructions also exactly satisfies the KMS detailed balance condition. However, this observation alone does not resolve Q1, because the resulting Lindbladian still contains an additional Lamb shift term. More specifically, the Lindbladian generator in \eqref{eqn:Phi_alpha_approx} can be decomposed as
\begin{equation}\label{eqn:Lindbladian_decomposition}
{\mc L}(\rho)=-i[H_{\rm Lamb},\rho]+{\cal L}_{\rm KMS}(\rho),.
\end{equation}
Here, $H_{\rm Lamb}$ is a Hermitian term, often called the Lamb shift, and $\mathcal{L}_{\rm KMS}$ is a Lindbladian satisfying the KMS detailed balance condition. Although $\mc{L}_{\rm KMS}$ fixes the Gibbs state, the Lamb shift generally does not commute with $\rho_\beta$, with $[H_{\rm Lamb},\rho_\beta]=\Omega(1/T)$, implying that the KMS detailed balance condition alone does not remove the main bottleneck: to show that the fixed point of $\Phi_{\alpha,T}$ is close to $\rho_\beta$, one still needs to choose the interaction time $T$ sufficiently large. This requirement is very similar to that in the first construction of approximate detailed-balance Lindbladians in~\cite{ChenKastoryanoBrandaoEtAl2023}; see the discussion below \eqref{eqn:Ham_total}. Based on the existing literature, the KMS detailed balance condition by itself does not appear sufficient to enable high-accuracy Gibbs state preparation through a system--bath interaction model. This naturally leads us to the following question:

\begin{center}
    Q2: \emph{Can KMS detailed balance provide any algorithmic benefits for the system-bath interaction framework?}
\end{center}

One main contribution of our work is to give a rigorous positive answer to Q2 within the same system-bath interaction framework. We present the first unified analysis showing that, \rev{after randomizing the interaction time according to a distribution \(T\sim\mu\),} the fixed point of the averaged quantum channel \(\Phi_\alpha := \mathbb{E}_{T\sim\mu}(\Phi_{\alpha,T})\) in the existing algorithms~\cite{ding2025endtoendefficientquantumthermal,lloyd2025quantumthermal,hahn2026efficientquantumthermalstate} can be made arbitrarily close to \(\rho_\beta\) in the weak-coupling limit, even when the expected interaction time remains bounded, i.e. \(\mathbb{E}_{\mu}(T)=\widetilde{\mathcal O}(\beta)\), provided that the transition term of the associated Lindbladian exactly satisfies the KMS detailed balance condition. Our result not only significantly strengthens the theoretical guarantees of existing algorithms but also rigorously demonstrates that the KMS detailed balance condition can deliver genuine algorithmic benefits within the system-bath interaction framework, even in the presence of a noncommuting Lamb shift. Algorithmically, we show a system-bath interaction-based Gibbs state preparation algorithm that improves the $\varepsilon$-dependence in the end-to-end complexity from $1/\varepsilon^4$~\cite{slezak2026polynomialtime} to $1/\varepsilon$. This also made a step further towards fully resolving the open question Q1.

We emphasize that the goal of our paper is \textbf{not} to show that the system-bath interaction model can reproduce ideal thermalizing Lindbladian dynamics, since this is generally false in the presence of a noncommuting Lamb shift. There is by now a substantial literature on the efficient simulation of Lindbladians~\cite{cleve_2017,li_et_2023,Ding2024,ChenKastoryanoBrandaoEtAl2023,PRXQuantum.5.020332,Pocrnic_2025}. However, these algorithms typically require block-encodings of complicated jump operators, which are unlikely to be efficient or feasible on early fault-tolerant quantum computers. By contrast, the main advantage of the system-bath interaction model is that the jump operators need not be implemented explicitly; instead, the jump operator naturally appears through the simulation of a simple Hamiltonian evolution. From a theoretical perspective, this simulation simplicity comes at the cost of a noncommuting Lamb shift in the approximate Lindbladian description. This noncommuting Lamb shift is precisely the main obstacle to high-accuracy Gibbs state preparation, and overcoming it is the central contribution of our work.

\cmphead{High-fidelity Gibbs state approximation} The first main result of our work is to establish that the distance between the fixed point of the {averaged} repeated interaction channel and the target Gibbs state is ${\cal O}(\alpha^2)$ {when the interaction time is randomized as $T\sim\mu$, where $\mu$ is supported on $[T_0,\infty)$ with $T_0=\widetilde{\mathcal O}(\beta)$ and has an exponentially decaying tail. The evolution time is therefore concentrated around times of order $T_0$ and satisfies $\mathbb{E}_{\mu}(T)=\widetilde{O}(\beta)$.} This finding significantly departs from standard open quantum system theory, which posits that a non-commuting Lamb shift necessarily alters the steady state of the dynamical generator. Conventional wisdom assumes that unless the dynamics strictly converge to the ideal Davies generator, a persistent steady-state bias is inevitable. More precisely, the engineered Hamiltonian $H_\alpha$ in the system-bath interaction framework has the following canonical form:
\begin{equation}\label{eqn:Ham_total}
  H_{\alpha}(t) = H + H_E + \alpha  \left( f(t)A_S \otimes B_E + \overline{f(t)}A_S^\dagger \otimes B_E^\dagger \right)\,,
\end{equation}
where $H$ and $H_E$ are the target system and bath Hamiltonians, respectively, $A_S$ and $B_E$ are the corresponding coupling operators, and $f(t)$ is a smooth envelope function that dictates the interaction's time-dependence. To ensure the fixed point of {$\Phi_{\alpha,T}$} remains close to $\rho_\beta$, \cite{ding2025endtoendefficientquantumthermal,hahn2026efficientquantumthermalstate} tuned the envelop function $f(t)$ and pushed its (approximate) support width to $+\infty$~\footnote{In the case where $f$ is Gaussian, we define the approximate support width to be twice its standard deviation.}. In this asymptotic regime, the Lamb shift term  $H_{\rm Lamb}$ commutes with the Gibbs state $\rho_\beta$ and ${\cal L}_{\rm KMS}$ is an approximation to the ideal Davies generator, which implies the fixed point of $\Phi_{\alpha,T}$ approaches to the Gibbs state in the weak-coupling limit. Although the underlying reason is different, this choice of $f$ is similar to that in the first construction of approximate detailed-balance Lindbladians in~\cite{ChenKastoryanoBrandaoEtAl2023}, where the support width of $f$ must also be taken to $+\infty$ in order for the dissipative part of the Lindbladian to approximately satisfy the detailed balance condition. In both cases, enlarging the support width of $f$ comes at the cost of a significantly increased simulation time and a potentially larger mixing time of the dynamics~\cite{hahn2026efficientquantumthermalstate}.

{In this work, we show that the averaged channel $\Phi_\alpha=\mathbb E_{T\sim\mu}(\Phi_{\alpha,T})$ suppresses the Lamb-shift-induced steady-state bias in the weak-coupling regime. The random-time average, together with the outer unitary evolution in each interaction round, induces a nontrivial cancellation with the KMS-detailed-balance dissipative structure. As a result, high-fidelity thermal state preparation can be achieved without taking the support width of $f$ to the asymptotic Davies limit. Our analysis therefore overturns the long-held view that the asymptotic Davies generator is a strict prerequisite for high-fidelity thermal state preparation via a system-bath interaction model. Instead, it reveals an unexpected error-cancellation mechanism already built into the discrete quantum channel $\Phi_{\alpha}$ with random interaction times, and with it a hidden resilience of time-bounded system-bath interaction models.}

Our work highlights the analytical power arising from the interplay between the exact KMS detailed balance of $\mathcal{L}_{\rm KMS}$ and the outer unitary evolution $U_S(T)$ in~\eqref{eqn:Phi_alpha_approx}. In particular, our analysis crucially relies on the exact KMS detailed balance of $\mathcal{L}_{\rm KMS}$ to uncover a cancellation mechanism that suppresses the steady-state bias. This in turn underscores the broader relevance of recently developed KMS-detailed-balance Lindbladian dynamics~\cite{ChenKastoryanoGilyen2023,Chen2025Efficient,Ding_2025} for quantum thermal state preparation, even within the system--bath interaction framework. More specifically, if one considers only the term $\exp(\alpha^2\mathcal L)$ in \eqref{eqn:Phi_map_approx}, then its fixed point generally remains biased away from $\rho_\beta$ because of the noncommuting Lamb shift. The key observation of our work is that this conclusion changes at the level of the discrete channel $\Phi_\alpha$: the KMS-detailed-balance structure of $\mathcal{L}_{\rm KMS}$, together with the outer unitary evolution $U_S(T)$, gives rise to a nontrivial cancellation mechanism that suppresses this bias. As a result, the fixed point of $\Phi_\alpha$ can still be driven close to $\rho_\beta$ in the weak-coupling regime. This perspective differs from that of existing works, where the unitary evolution $U_S(T)$ is often neglected or treated as a technical detail. By contrast, our asymptotic analysis shows that it plays an essential role in the bias-suppression mechanism. The analytical framework developed here may also be of independent interest.

\cmphead{End-to-end complexity analysis} The second main result of our work is a rigorous upper bound for the mixing time $\tau_{\rm mix}$, which gives a complete end-to-end complexity guarantee for quantum thermal state preparation via a system-bath interaction model. By leveraging a general perturbation framework, we demonstrate that, as long as the KMS-detailed-balance Lindbladian $\mathcal{L}_{\rm KMS}$ has a positive spectral gap $\lambda > 0$, the whole Lindbladian $\mathcal{L}$ also has a similar contraction rate, which in turn implies that the mixing time of $\Phi_{\alpha}$ scales as $\mathcal{O}\left(1/(\lambda \alpha^2)\right)$. This yields end-to-end complexity guarantees in terms of the total Hamiltonian evolution time for the system-bath interaction algorithm and achieves the best precision dependence (linear in $1/\varepsilon$) to the best of our knowledge. Notably, our mixing time analysis can be combined directly with \textbf{all} Hamiltonians that have been shown to have a positive spectral gap for their corresponding KMS-detailed-balance Lindbladian, including high‑temperature local spin Hamiltonians~\cite{rouz2024,rouze2024optimal}, weakly interacting fermionic/spin systems at all temperatures~\cite{smid2025rapidmixingquantumgibbs}, and 1D local Hamiltonians at all temperatures~\cite{bergamaschi2026fastmixingquantumspin}.

\subsection{Related works}

The preparation of quantum thermal states via engineered dissipation has recently transitioned from an empirical heuristic to a rigorously grounded algorithmic framework, largely driven by advances in Lindblad dynamics~\cite{RallWangWocjan2023,ChenKastoryanoBrandaoEtAl2023,ChenKastoryanoGilyen2023,Chen2025Efficient,Ding_2025}. A cornerstone of this progress is the development of KMS-detailed-balance Lindbladian dynamics~\cite{ChenKastoryanoGilyen2023}. In this paradigm, for a given Hamiltonian $H$, one engineers specific jump operators—typically defined as $K = \int_{-\infty}^{\infty} f(s) e^{iHs} A e^{-iHs} \, \mathrm{d}s$—alongside a tailored coherent term. Because the resulting generator strictly obeys the KMS detailed balance condition, the Gibbs state is guaranteed to be an exact fixed point. This framework offers a distinct computational advantage over the traditional Davies generator associated with the Gelfand--Naimark--Segal (GNS) detailed balance condition by being efficiently implementable on quantum hardware~\cite{cleve_2017,li_et_2023,Ding2024,ChenKastoryanoBrandaoEtAl2023,PRXQuantum.5.020332}, with convergence rates bounded by recently established mixing-time analyses for various thermal-regime models~\cite{TemmeKastoryanoRuskaiEtAl2010,KastoryanoTemme2013,BardetCapelGaoEtAl2023,rouz2024,DingLiLinZhang2024,kochanowski2024rapid,rouze2024optimal,tong2024fast}. Nevertheless, translating these theoretical models to near-term devices remains highly challenging; the structural complexity of the jump operators necessitates substantial overhead, including abundant ancilla qubits, time-reversed Hamiltonian simulations, and complex clock-register logic.

More recently, a new wave of dissipative algorithms based on the system-bath interaction framework has emerged as a promising alternative for quantum thermal state preparation~\cite{lloyd2025quantumthermal,hahn2026efficientquantumthermalstate,ding2025endtoendefficientquantumthermal,wang2025lindbladdynamicsrigorousguarantees,ramonescandell2025thermalstatepreparationrepeated,langbehn2025universal,scandi2025thermal}. By simulating the natural thermalization process of a system weakly coupled to an ancillary bath, these algorithms avoid the substantial implementation overhead associated with Lindbladian dynamics. In~\cite{lloyd2025quantumthermal,hahn2026efficientquantumthermalstate}, the authors designed specific system-bath interactions ensuring that the resulting Lindbladian $\mathcal{L}$ in~\eqref{eqn:Phi_alpha_approx} contains the Lindbladian term $\mathcal{L}_{\rm KMS}$ that strictly satisfies the KMS detailed balance condition. In particular, Hahn et al.~\cite{hahn2026efficientquantumthermalstate} rigorously demonstrated this approximation and provided a strict bound on the fixed-point approximation error. However, as noted in~\cite{hahn2026efficientquantumthermalstate}, the Lamb shift term $H_{\rm Lamb}$ in their construction commutes with $\rho_\beta$ only in the limit where the support width of the envelope function $f(t)$ approaches $+\infty$. This requirement not only inflates the simulation cost per iteration but also potentially increases the mixing time of the quantum channel, thereby affecting the fixed-point approximation error. Alternatively, the authors in~\cite{ding2025endtoendefficientquantumthermal} proposed a different system-bath interaction algorithm that only approximately satisfies the KMS detailed balance condition, which introduces a similar bias issue alongside a non-commuting Lamb shift. Nevertheless, their careful design ensures that as the support width of $f(t)$ approaches $+\infty$, the mixing time of the dynamics does not increase. This guarantees that, in the weak-coupling regime, the fixed point can be driven arbitrarily close to $\rho_\beta$ with a bounded end-to-end complexity, as proven in~\cite{ding2025endtoendefficientquantumthermal,slezak2026polynomialtime}. Building upon this framework, a subsequent study~\cite{wang2025lindbladdynamicsrigorousguarantees} demonstrated that the system-bath interaction proposed in~\cite{ding2025endtoendefficientquantumthermal} can actually extend beyond the weak-coupling regime while still achieving a rigorous approximation of the Gibbs state. This finding significantly accelerates the mixing time of the quantum channel and reveals that Lindbladian dynamics is not a strict prerequisite for efficient thermalization. This in turn improves the end-to-end complexity of the algorithm. However, a fundamental limitation remains: their analysis still relies on the asymptotic limit where the support width of $f(t)$ approaches $+\infty$ to ensure the fixed point remains close to $\rho_\beta$.

In this work, we introduce a novel analysis that overcomes the difficulties posed by the non-commuting Lamb shift \textbf{without} requiring the support width of $f(t)$ to approach $+\infty$. This result directly strengthens the theoretical guarantees of~\cite{hahn2026efficientquantumthermalstate,ding2025endtoendefficientquantumthermal}, rigorously justifying that their algorithms can achieve arbitrarily high accuracy in approximating the Gibbs state within the weak-coupling regime—without needing the dynamics to converge to the ideal Davies generator.
Furthermore, regarding end-to-end complexity, eliminating this asymptotic requirement allows us to achieve an improved precision dependence of $\mathcal{O}(1/\varepsilon)$, surpassing the existing state-of-the-art scaling of $\mathcal{O}(1/\varepsilon^2)$ established in~\cite{wang2025lindbladdynamicsrigorousguarantees}. A comparison of the theoretical guarantees across these works is summarized in Table~\ref{tab:comparison}.

During the completion of this paper, we became aware of a very recent update in~\cite{lloyd2025quantumthermal}, which proposes an idea similar our asymptotic analysis by expanding the fixed point of $\Phi_\alpha$ in the weak-coupling limit. Our result can be viewed as a rigorous justification and generalization of this perspective, together with explicit approximation error bounds and end-to-end complexity guarantees.

\begin{table*}[htbp!]
\centering
\begin{adjustbox}{width=\textwidth}
\begin{tabular}{l|c|c|c|c|c}
\hline

\hline
& \multicolumn{5}{c|}{\textbf{Properties}} \\
\cline{2-6}
\textbf{Algorithms} & \textbf{Fixed-point} & \textbf{Mixing time} & \textbf{Interaction} & \textbf{Support width} & \textbf{End-to-end}\\
& \textbf{error bound} & \textbf{guarantee} & \textbf{strength} & \textbf{of $f(t)$} & \textbf{simulation cost} \\
\hline
Langbehn \textit{et al.} \cite{langbehn2025universal} & ? & ? & $\rightarrow 0$ & ? & ? \\
Shtanko \textit{et al.} \cite{shtanko2021preparing}, Chen~\textit{et al.} \cite{chen2023fastthermalizationeigenstatethermalization} & \cmark & \cmark & $\rightarrow 0$ & ? & ? \\
Hagan \textit{et al.} \cite{hagan2025thermodynamic} & \cmark & ? & $\rightarrow 0$ & ? & Exponential \\
Hahn \textit{et al.} \cite{hahn2026efficientquantumthermalstate} & \cmark & ? & $\rightarrow 0$ & $+\infty$ & ? \\
Lloyd \textit{et al.} \cite{lloyd2025quantumthermal} & ? & ? & $\rightarrow 0$ & ? & ? \\
Scandi \textit{et al.} \cite{scandi2025thermal} (Gaussian bath) & \cmark & \cmark~\cite{slezak2026polynomialtime} & $\rightarrow 0$ & $+\infty$ & $\mathcal{O}(1/\varepsilon^2)$~\cite{slezak2026polynomialtime} \\
Ding \textit{et al.} \cite{ding2025endtoendefficientquantumthermal} & \textcolor{green}{\cmark} & \cmark & $\rightarrow 0$ & $+\infty$ & $\mathcal{O}(1/\varepsilon^4)$~\cite{slezak2026polynomialtime} \\
Ding \textit{et al.} \cite{ding2025endtoendefficientquantumthermal} & \textcolor{green}{\cmark} & \cmark & \textcolor{green}{$\Theta(1)$}~\cite{wang2025lindbladdynamicsrigorousguarantees} & $+\infty$ & $\mathcal{O}(1/\varepsilon^2)$~\cite{wang2025lindbladdynamicsrigorousguarantees} \\
\hline
\multicolumn{6}{c}{\textbf{Under the theory of this work (\cref{sec:applications})}} \\
\hline
Hahn \textit{et al.} \cite{hahn2026efficientquantumthermalstate} (modified version) & \cmark & \cmark & $\rightarrow 0$ & \textcolor{green}{$\Theta(1)$} & \textcolor{green}{$\mathcal{O}(1/\varepsilon)$} \\
Lloyd \textit{et al.} \cite{lloyd2025quantumthermal} (modified version) & \cmark & \cmark & $\rightarrow 0$ & \textcolor{green}{$\Theta(1)$} & \textcolor{green}{$\mathcal{O}(1/\varepsilon)$} \\
Ding \textit{et al.} \cite{ding2025endtoendefficientquantumthermal} (modified version) & \textcolor{green}{\cmark} & \cmark & $\rightarrow 0$ & \textcolor{green}{$\Theta(1)$} & \textcolor{green}{$\mathcal{O}(1/\varepsilon)$} \\
\hline
\hline
\end{tabular}
\end{adjustbox}
\vspace{0.5em}
\caption{Comparison of recent quantum thermal algorithms based on system-bath interaction.
``Fixed-point error bound'' refers to whether there is a rigorous fixed-point error bound for a general Hamiltonian $H$. ``Mixing time guarantee'' indicates whether the mixing time of the algorithm can be theoretically established at least for certain interacting Hamiltonians. ``Interaction strength'' specifies the required scaling of the system-bath coupling strength $\alpha$ to achieve the arbitrarily small fixed point error. ``Support width of $f(t)$'' indicates whether the support width of the envelope function $f(t)$ needs to approach $+\infty$ to achieve the arbitrarily small fixed point error. ``End-to-end simulation cost'' refers to the overall simulation complexity of preparing an $\varepsilon$-approximation of the Gibbs state. ``Modified version'' refers to a modification in the choice of parameters in the original algorithm but without changing the overall framework. Here, we quantify the simulation complexity as the total system Hamiltonian simulation time and only consider the scaling with respect to the precision $\varepsilon$. $\Theta(1)$ means that the parameter can be chosen as a constant independent of $\varepsilon$ but might depend on other parameters such as system size. The entries marked with ``?'' indicate that the corresponding property is not explicitly addressed in the original work.
}
\label{tab:comparison}
\end{table*}

\cmphead{Concurrent work} After posting the first version of this work on arXiv, we became aware of concurrent work by~\cite{ong2026rigorouserrorboundsdissipative}. They study the system-bath interaction model in the framework of~\cite{lloyd2025quantumthermal}, in a similar weak-coupling regime with bounded interaction time, and present a fixed-point analysis based on a similar perturbative approach, showing that the fixed-point error can be controlled by the system--bath coupling strength. The mixing behavior is also investigated numerically. In contrast, the present work provides a rigorous mixing-time analysis, along with end-to-end complexity guarantees.

\subsection{Organization of the paper}
In~\cref{sec:preliminary}, we introduce the notations and preliminary results that will be used in the subsequent sections. In~\cref{sec:main_results}, we present our main results on the fixed-point approximation error and mixing time analysis for the system-bath interaction framework. We will start with informal statements of the general results, which provides a general framework for analyzing the system-bath interaction algorithms. Then, in~\cref{sec:applications}, we apply these general results to the algorithms in~\cite{ding2025endtoendefficientquantumthermal,lloyd2025quantumthermal,hahn2026efficientquantumthermalstate} and present the rigorous end-to-end complexity guarantees for the algorithms in~\cite{ding2025endtoendefficientquantumthermal}. In~\cref{sec:proof_overview}, we provide an overview of the proof techniques for the main results and put the detailed proofs in the appendix. In~\cref{sec:conclusion}, we conclude with a discussion of potential future directions.

\section{Preliminary}\label{sec:preliminary}
In this section, we introduce the notations and preliminary results that will be used in the subsequent sections.~\cref{sec:notation} introduces the notations of matrix norms and asymptotic notations.~\cref{sec:kms_dbc} reviews the definition of the KMS detailed balance condition for quantum channels and Lindbladians, as well as a generic characterization of the KMS-detailed-balance Lindbladian based on~\cite{ding2025endtoendefficientquantumthermal}. In~\cref{sec:spectral_gap_mixing_time}, we review the definition of the spectral gap and mixing time for quantum channels and Lindbladians.~\cref{sec:system_bath_approximation} reviews the system-bath interaction model approximation result in the literature~\cite{ding2025endtoendefficientquantumthermal,hahn2026efficientquantumthermalstate,lloyd2025quantumthermal}.

\subsection{Notations}\label{sec:notation}
For a matrix $A \in \mathbb{C}^{d \times d}$,
we use $A^*$, $A^T$, and $A^\dagger$ to denote its complex conjugate, transpose, and Hermitian adjoint, respectively. In our work, we often assume $H$ is a $n$-qubit Hamiltonian, and define the Hilbert space as $\mathcal{H} = \mathbb{C}^{2^n}$. We also use $\mathcal{B}(\mathcal{H})$ to denote the set of all Hermitian operators on $\mathcal{H}$. The Schatten $p$-norm is defined by $\|A\|_p := \left(\sum_i \sigma_i(A)^p\right)^{1/p}$, where $\sigma_i(A)$ denotes the singular values of $A$. Important special cases include the trace norm $\|A\|_1$, the Hilbert--Schmidt (or Frobenius) norm $\|A\|_2$, and the operator norm $\|A\|_\infty \equiv \|A\|$.  For $p=2$, the Hilbert-Schmidit norm induced an inner product $\langle A,B\rangle = \Tr(A^\dag B)$.
We take the following definition of the Fourier transform and inversion formula:
\begin{align*}
    f(x)=\frac{1}{2\pi}\int_{-\infty}^{\infty} \widehat{f}(\omega)e^{-i\omega x}\mathrm{d}\omega\,,\qquad \widehat{f}(\omega)=\int_{-\infty}^{\infty} f(x)e^{i\omega x}\mathrm{d}x
\end{align*}

The trace distance between two quantum states $\rho_1$ and $\rho_2$ is defined as
\[
D(\rho_1,\rho_2) := \frac{1}{2}\|\rho_1-\rho_2\|_1.
\]
For a superoperator $\Phi:\mathcal{B}(\mathcal{H}) \to \mathcal{B}(\mathcal{H})$, the induced trace norm is defined by
\[
\|\Phi\|_{1\to 1} := \sup_{\|A\|_1=1}\|\Phi(A)\|_1,
\]
and we use $\Phi^\dag$ to denote its adjoint with respect to the Hilbert-Schmidt inner product.

Given a Lindbladian operator $\mathcal{L}$, it has the general form
\begin{equation}\label{eqn:lindblad_form}
\mathcal{L}(\rho)=\underbrace{-i[B,\rho]}_{\text{coherent term}}+\underbrace{\sum_i V_i\rho V^\dagger_i}_{\text{transition term}}-\frac{1}{2}\underbrace{\left\{V^\dagger_iV_i,\rho\right\}}_{\text{normalization term}}\,,
\end{equation}
where $B$ is a Hermitian operator and $\{V_i\}_i$ are the jump operators. We call the first term $-i[B,\rho]$ the \emph{coherent term}, the second term $\sum_i V_i\rho V^\dagger_i$ the \emph{transition term}, and the third term $-\frac{1}{2}\left\{V^\dagger_iV_i,\rho\right\}$ the \emph{normalization term}. Given a jump operator $V$,  we define the associated dissipative operator as
\begin{equation}
\mathcal{D}_{V}(\rho)=V\rho V^\dagger- \frac12\{V^\dagger V,\rho\}\,.
\end{equation}

In addition to the standard big-$\Or$ notation, we use the following asymptotic conventions. We write $f=\Omega(g)$ if $g=\Or(f)$, and $f=\Theta(g)$ if both $f=\Or(g)$ and $g=\Or(f)$. The notations $\wt{\Or}$, $\wt{\Omega}$, and $\wt{\Theta}$ are used to suppress subdominant polylogarithmic factors. Unless otherwise specified, $f=\wt{\Or}(g)$ means $f=\Or(g\,\operatorname{polylog}(g))$, $f=\wt{\Omega}(g)$ means $f=\Omega(g/\mathrm{polylog}(g))$, and $f=\wt{\Theta}(g)$ means $f=\Theta(g\,\operatorname{polylog}(g))$. Note, however, that these tilde notations do not suppress dominant polylogarithmic factors. For example, if $f=\Or(\log g\,\log\log g)$, then we write $f=\wt{\Or}(\log g)$ rather than $f=\wt{\Or}(1)$.

\subsection{Kubo–Martin–Schwinger (KMS) Detailed Balance Condition}\label{sec:kms_dbc}
Given a Hamiltonian $H$ and inverse temperature $\beta>0$, let $\rho_\beta \propto e^{-\beta H}$ denote the thermal state at inverse temperature $\beta$. For $s\in(0,1)$, we define the $s$-inner product on the space of operators by
\[
\langle A, B \rangle_{s, \rho_\beta} =\langle A, \rho_\beta^{1-s} B \rho_\beta^s  \rangle_2 =  \mathrm{Tr}\left( A^\dagger \rho_\beta^{1-s} B \rho_\beta^s \right) \,.
\]
In this work, we focus on the case $s=1/2$ and write
\[
\langle A, B \rangle_{\rho_\beta} := \langle A, B \rangle_{1/2, \rho_\beta}\,,
\]
for brevity. This inner product is called the Kubo--Martin--Schwinger (KMS) inner product, and the corresponding detailed balance condition is called the KMS detailed balance condition. It is worth noting that the literature also considers other variants of detailed balance, such as the GNS detailed balance condition, which is defined with respect to the $s$-inner product for any $0<s<1$ with $s\neq 1/2$.

Recent works~\cite{ChenKastoryanoGilyen2023,Ding_2025} show that one can construct Lindbladians for thermal state preparation that satisfy the KMS detailed balance condition and are efficiently simulable on quantum hardware. By contrast, the GNS inner product is often associated with Davies generators, and it is not clear whether, for general Hamiltonians, one can construct GNS-detailed-balance Lindbladians that are also efficiently simulable on quantum hardware. For this reason, we focus in this work on the KMS detailed balance condition, which is more directly relevant to general efficient quantum thermal state preparation.

We first introduce the definition of the KMS detailed balance condition, and then we present a generic characterization of the KMS-detailed-balance Lindbladian based on~\cite{ChenKastoryanoGilyen2023}.

\begin{defn}[KMS detailed balance condition] \label{def:DBC-general}
Let $\mathcal T:\mathcal B(\mathcal H)\to\mathcal B(\mathcal H)$ be a completely positive map. We say that $\mathcal T$ satisfies the KMS detailed balance condition with respect to $\rho_\beta$ if, for all operators $A$ and $B$,
\[
\langle A,\mathcal T^\dagger(B)\rangle_{\rho_\beta}
=
\langle \mathcal T^\dagger(A),B\rangle_{\rho_\beta}.
\]
Equivalently, $\mathcal T^\dagger$ is self-adjoint with respect to the KMS inner product.
\end{defn}
For quantum channels and Lindbladian generators, the KMS detailed balance condition implies that $\rho_\beta$ is stationary for the corresponding evolution.
\begin{itemize}
\item If $\mathcal T=\Phi$ is a quantum channel, then the KMS detailed balance condition implies that $\rho_\beta$ is a fixed point of $\Phi$. Indeed, taking $A=I$ in the definition gives
\[
\langle I,\Phi^\dagger(B)\rangle_{\rho_\beta}
=
\langle \Phi^\dagger(I),B\rangle_{\rho_\beta}.
\]
Since $\Phi$ is trace-preserving, we have $\Phi^\dagger(I)=I$, and therefore
\[
\Tr(\Phi(\rho_\beta)B)
=
\Tr(\rho_\beta^{1/2}\Phi^\dagger(B)\rho_\beta^{1/2})
=
\langle I,\Phi^\dagger(B)\rangle_{\rho_\beta}
=
\langle I,B\rangle_{\rho_\beta}
=
\Tr(\rho_\beta B).
\]
Since $B$ is arbitrary, this implies $\Phi(\rho_\beta)=\rho_\beta$.

\item If $\mathcal T=\mathcal L$ is a Lindbladian satisfying the same self-adjointness relation with respect to the KMS inner product, we similarly have that $\rho_\beta$ is stationary for the Lindbladain evolution, i.e., $\mathcal L(\rho_\beta)=0$. More generally, we will also consider Lindbladians with a unitary drift commuting with $\rho_\beta$, namely generators of the form
\begin{align} \label{eqn:DBCdrift}
\mathcal L(\rho)=-i[B,\rho]+\mathcal L_{\rm KMS}(\rho),
\end{align}
where $[B,\rho_\beta]=0$ and $\mathcal L_{\rm KMS}$ satisfies the KMS detailed balance condition. In this case, $\rho_\beta$ remains a stationary state and we say that $\mathcal L$ satisfies \emph{the $\rho_\beta$-KMS detailed balance condition with unitary drift.}

\item Given a Lindbladian $\mathcal{L}$ that takes the form
  \[
  \mathcal{L}(\rho)=-i[B,\rho]+\sum_i V_i\rho V^\dagger_i-\frac{1}{2}\left\{V^\dagger_iV_i,\rho\right\}\,.
  \]
  Define the transition operator $\mathcal{T}'(\rho)=\sum_i V_i\rho V^\dagger_i$. We say that  $\mc{L}$ consists $\rho_\beta$-KMS-detailed-balance transition part
  if $\mathcal{T}'$ satisfies the KMS detailed balance condition in~\cref{def:DBC-general}.
\end{itemize}

In recent constructions of KMS-detailed-balance Lindbladians for quantum thermal state preparation~\cite{ChenKastoryanoGilyen2023,Ding_2025}, the first step is to construct  a transition term
$
\mathcal T'(\rho)=\sum_i V_i\rho V_i^\dagger
$
that satisfies the KMS detailed balance condition stated in~\cref{def:DBC-general}. In this work, we show that within the system-bath interaction model, imposing the KMS detailed balance condition on the transition term is also a key ingredient for establishing the cancellation mechanism that suppresses the steady-state bias.

We now present a generic characterization of the KMS-detailed-balance Lindbladian based on~\cite[Lemma II.1, Corollary II.1]{ChenKastoryanoGilyen2023} and~\cite{Ding_2025}.  This characterization shows how to construct a KMS-detailed-balance Lindbladian from a transition part satisfying detailed balance by adding a suitable coherent term.  It will be used later in our analysis to establish the cancellation mechanism that suppresses the steady-state bias.

\begin{thm}\label{thm:kms_dbc_lindbladian_characterization}
Consider a Lindbladian of the form
\[
\mathcal{L}(\rho)=-i[B,\rho]+\sum_j V_j\rho V_j^\dagger-\frac{1}{2}\left\{V_j^\dagger V_j,\rho\right\}.
\]
Let $M:=\sum_j V_j^\dagger V_j$. Then $\mathcal{L}$ satisfies the $\rho_\beta$-DBC if and only if the followings hold:
\begin{itemize}
\item The transition term $\mathcal{T}(\rho)=\sum_j V_j\rho V_j^\dagger$ satisfies the $\rho_\beta$-KMS DBC, defined in~\cref{def:DBC-general}.
\item The coherent part is given by
\[
B=\frac{i}{2}\sum_{\nu}\tanh\!\left(\frac{\beta\nu}{4}\right)M_\nu,
\]
where $M_\nu:=\sum_{\lambda_k-\lambda_l=\nu}\Pi_\ell M\Pi_k$ is the Bohr frequency decomposition w.r.t. the Hamiltonian $H$, i.e., $H=\sum_k \lambda_k\Pi_k$ is the spectral decomposition of $H$.
\end{itemize}
\end{thm}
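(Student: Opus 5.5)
The plan is to work entirely in the Heisenberg picture and recast KMS detailed balance as an intertwining identity. Let $\Gamma(X):=\rho_\beta^{1/2}X\rho_\beta^{1/2}$. Then KMS self-adjointness of $\mathcal L^\dagger$ is equivalent to
\[
\Gamma\circ\mathcal L^\dagger=\mathcal L\circ\Gamma ,
\]
and the same holds for $\mathcal T^\dagger$.

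Split the Lindbladian as $\mathcal L=\mathcal T+\mathcal K$, with $\mathcal T(\rho)=\sum_j V_j\rho V_j^\dagger$ and $\mathcal K(\rho)=G\rho+\rho G^\dagger$, where $G:=-iB-\tfrac12 M$. Then
\[
\mathcal K^\dagger(X)=G^\dagger X+XG .
\]

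\textbf{Step 1 (the two parts decouple).} Conjugation by $\Gamma$ maps sandwich-type maps $X\mapsto\sum_j W_jXW_j'$ to maps of the same type. It maps left and right multiplications $X\mapsto YX+XZ$ to maps of the same type. I will argue that the decomposition of $\mathcal L$ into these two classes is unique once the harmless ambiguity $V_j\mapsto V_j+c_jI$ is fixed. Following the convention implicit in \cite{ChenKastoryanoGilyen2023}, I fix it by absorbing identity components into $G$, for instance by requiring the $V_j$ to be traceless.

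With this normalization, $\Gamma\mathcal L^\dagger=\mathcal L\Gamma$ holds if and only if $\Gamma\mathcal T^\dagger=\mathcal T\Gamma$ and $\Gamma\mathcal K^\dagger=\mathcal K\Gamma$ hold separately. The first identity is exactly the first bullet. This uniqueness/normalization step is where I expect the main obstacle: without it the ``only if'' direction of the first bullet is false as stated. I would first try to show that any map that is simultaneously of sandwich type and of multiplication type has the form $X\mapsto YX+XZ$ with $Y,Z$ spanned by the identity-shift directions, and then show that these shifts are killed by the chosen normalization.

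\textbf{Step 2 (the coherent condition).} Write out $\Gamma\mathcal K^\dagger=\mathcal K\Gamma$ for all $X$:
\[
\rho_\beta^{1/2}G^\dagger X\rho_\beta^{1/2}+\rho_\beta^{1/2}XG\rho_\beta^{1/2}
= G\rho_\beta^{1/2}X\rho_\beta^{1/2}+\rho_\beta^{1/2}X\rho_\beta^{1/2}G^\dagger .
\]
Comparing left and right multiplication parts, this is equivalent to
\[
\rho_\beta^{1/2}G^\dagger\rho_\beta^{-1/2}=G+cI
\]
for some constant $c$. Trace preservation and Hermiticity of $B$ and $M$ force $c=0$.

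\textbf{Step 3 (solve in Bohr components).} Decompose the condition in Bohr-frequency components. Conjugation by $\rho_\beta^{1/2}$ multiplies a frequency-$\nu$ component by $e^{\pm\beta\nu/2}$, with the sign fixed by the paper's convention for $M_\nu$. With $G=-iB-\tfrac12 M$, each component gives a scalar linear relation between $B_\nu$ and $M_\nu$:
\[
e^{\beta\nu/2}\bigl(iB_\nu-\tfrac12M_\nu\bigr)=-iB_\nu-\tfrac12M_\nu .
\]
Solving yields
\[
B_\nu=\pm\tfrac{i}{2}\,\frac{e^{\beta\nu/2}-1}{e^{\beta\nu/2}+1}\,M_\nu=\pm\tfrac{i}{2}\tanh(\beta\nu/4)\,M_\nu ,
\]
with the sign matching the stated formula under the stated indexing convention. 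Summing over $\nu$ gives $B$. The $\nu=0$ component of $B$ is unconstrained a priori but only adds a term commuting with $\rho_\beta$; I will either absorb it into the drift notion of \eqref{eqn:DBCdrift} or note that the formula sets it to $0$.

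\textbf{Converse.} The ``if'' direction follows by reversing Steps 2 and 3. If $B$ has the stated form, then $\Gamma\mathcal K^\dagger=\mathcal K\Gamma$ holds. Adding $\Gamma\mathcal T^\dagger=\mathcal T\Gamma$ from the first bullet then gives KMS detailed balance for $\mathcal L$.
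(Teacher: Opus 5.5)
The paper does not prove this statement. It imports it from \cite{ChenKastoryanoGilyen2023} and only ever uses the ``if'' direction, to build $G_{\mathcal D}$. Your outline follows the standard argument: write detailed balance as $\Gamma\mathcal L^\dagger\Gamma^{-1}=\mathcal L$, split off the multiplication part, and solve $\rho_\beta^{1/2}G^\dagger\rho_\beta^{-1/2}=G$ frequency by frequency. You are right that the ``only if'' half of the first bullet is false for an arbitrary Lindblad representation. Replacing $V_j\mapsto V_j+c_jI$ and compensating in $B$ leaves $\mathcal L$ unchanged but generically destroys detailed balance of $\mathcal T$.

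However, Step 2 contains a step that fails. Taking the trace of $\rho_\beta^{1/2}G^\dagger\rho_\beta^{-1/2}=G+cI$ gives $c\,\dim\mathcal H=\Tr(G^\dagger)-\Tr(G)=2i\Tr(B)$. So nothing forces $c=0$: $c$ is exactly the identity component of $B$, which is invisible in $\mathcal L$. For example, no jumps and $B=bI$ with $b\neq 0$ give $\mathcal L=0$, which is KMS detailed balanced, while the formula returns $B=0$. The ``only if'' direction therefore needs a second normalization, $\Tr B=0$. The target formula satisfies this automatically, since $\Tr M_\nu=0$ for $\nu\neq 0$ and $\tanh 0=0$.

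For the same reason, your remark that the $\nu=0$ component is ``unconstrained a priori'' is wrong: your own equation at $\nu=0$ reads $2iB_0=cI$. A nontrivial $B_0$ commuting with $\rho_\beta$ adds $i[B_0,\cdot\,]$ to $\mathcal L^\dagger$, and that term is anti-self-adjoint in the KMS inner product. The result is detailed balance with unitary drift as in \eqref{eqn:DBCdrift}, a strictly weaker property, so $B_0$ cannot be absorbed without changing the theorem.

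Step 1 carries the ``only if'' direction, and you leave it unproved. As phrased it cannot work: every linear map on $\mathcal B(\mathcal H)$ is of sandwich type, and every multiplication map is a sandwich map, so these two classes do not decompose anything. The lemma you need is the following. Let $\mathcal S:=\mathrm{span}\{X\mapsto AXB:\Tr A=\Tr B=0\}$ and $\mathcal M:=\{X\mapsto YX+XZ\}$. Identify $X\mapsto AXB$ with $A\otimes B^{T}$ and write $\mathcal B(\mathcal H)=\mathfrak{sl}\oplus\mathbb C I$, with $\mathfrak{sl}$ the traceless operators. Then $\mathcal S$ corresponds to $\mathfrak{sl}\otimes\mathfrak{sl}$ and $\mathcal M$ to $(\mathfrak{sl}\otimes I)\oplus(I\otimes\mathfrak{sl})\oplus\mathbb C(I\otimes I)$. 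Hence $\mathcal S\cap\mathcal M=\{0\}$, and there are no ``identity-shift directions'' left to kill.

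Both spaces are stable under Hilbert--Schmidt adjoints and under the similarity $\Phi\mapsto\Gamma\Phi\Gamma^{-1}$, because traces are similarity invariant. They are not stable under one-sided composition with $\Gamma$. So the comparison must be between $\Gamma\mathcal L^\dagger\Gamma^{-1}$ and $\mathcal L$. In your Step 2, terms like $X\mapsto\rho_\beta^{1/2}G^\dagger X\rho_\beta^{1/2}$ become multiplication maps only after substituting $X=\Gamma^{-1}(Y)$. With traceless $V_j$, the $\mathcal S$- and $\mathcal M$-components then match separately, which is exactly your decoupling.

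Finally, do not leave the sign as $\pm$. With the indexing written in the statement, $M_\nu=\sum_{\lambda_k-\lambda_\ell=\nu}\Pi_\ell M\Pi_k$, one has $\rho_\beta^{1/2}M_\nu\rho_\beta^{-1/2}=e^{\beta\nu/2}M_\nu$. This is precisely the factor in your equation, and solving it gives $B_\nu=-\tfrac{i}{2}\tanh(\beta\nu/4)M_\nu$, the opposite of the stated formula. The stated sign is correct for the reverse convention $\nu=\lambda_\ell-\lambda_k$, which is the one the paper actually uses later: the Fourier representation in \cref{fac:ding_T_bound} needs $e^{iHt}X_\nu e^{-iHt}=e^{i\nu t}X_\nu$. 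So your claim that the sign matches ``under the stated indexing convention'' is false, and you should flag the convention mismatch instead.

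With these repairs, the ``if'' direction, which is the only one the paper uses, goes through as you describe. It needs no normalization of the $V_j$ at all.
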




We note that~\cref{thm:kms_dbc_lindbladian_characterization} not only provides a general characterization of KMS-detailed-balance Lindbladians, but also yields an explicit formula for the coherent term $B$ in terms of the jump operators $\{V_j\}$. This formula will play a crucial role in our analysis. In particular, we will show that, when the transition term of a Lindbladian satisfies the KMS detailed balance condition, the formula for $B$ can be used to derive explicit expressions for $\mc{L}_{\rm KMS}$ and the remaining coherent term $H_{\rm Lamb}$, see~\eqref{eqn:H_Lamb_definition}. These explicit formulas then allow us to control the perturbation induced by $H_{\rm Lamb}$ and to establish the cancellation mechanism that suppresses the steady-state bias.

\subsection{Spectral gap and mixing time for quantum channels and Lindbladians}\label{sec:spectral_gap_mixing_time}

We start with the definition of spectral gap and mixing time for Lindbladians. If a Lindbladian $\mathcal{L}$ satisfies $\rho_\beta$-KMS DBC, then $\mathcal{L}^\dagger$ is self-adjoint with respect to the KMS inner product $\langle \cdot,\cdot\rangle_{\rho_\beta}$. Consequently, all eigenvalues of $\mathcal{L}^\dagger$ are real. On the other hand, since $\mathcal{L}$ is a Lindbladian, the real parts of the eigenvalues of $\mathcal{L}^\dagger$ are non-positive. It follows that the eigenvalues of $\mathcal{L}^\dagger$ are all non-positive real numbers. Therefore, the spectral gap of $\mathcal{L}$ is defined as the smallest nonzero eigenvalue of $-\mathcal{L}^\dagger$, which characterizes the convergence rate of the dynamics generated by $\mathcal{L}$ toward its fixed point $\rho_\beta$. Specifically, we define
\begin{equation}\label{eqn:spectral_gap_lindbladian}
\mathrm{Gap}(\mathcal{L}) = \inf_{\substack{A \neq 0 \\ \mathrm{Tr}(A\rho_\beta)=0}} \frac{-\langle A, \mathcal{L}^\dagger(A) \rangle_{1/2, \rho_\beta}}{\langle A, A \rangle_{1/2, \rho_\beta}} \,.
\end{equation}

The mixing time of a Lindbladian $\mathcal{L}$ has various definitions in the literature, depending on the specific context and requirements. In our work, we
defined the mixing time as the minimum time required for the dynamics generated by $\mathcal{L}$ to converge to its fixed point $\rho_\beta$ within a specified precision $\varepsilon$.
\begin{defn}[Mixing time of a Lindbladian]\label{def:mixing_time_lindbladian}
Let $\mathcal{L}$ be a Lindbladian with a unique fixed point $\rho_\beta$. For any $\varepsilon > 0$, the mixing time $t_{{\rm mix}, \mathcal{L}}(\varepsilon)$ is defined as
\begin{equation}
t_{{\rm mix}, \mathcal{L}}(\varepsilon) = \inf \left\{ t \in \mathbb{R}^+ \;\middle|\; \sup_{\rho} \| e^{t\mathcal{L}}(\rho) - \rho_\beta \|_1 \leq \varepsilon \right\} \,.
\end{equation}
\end{defn}
It is well known that a lower bound on the spectral gap~\eqref{eqn:spectral_gap_lindbladian} yields an upper bound on the mixing time of the Lindblad dynamics~\cite{TemmeKastoryanoRuskaiEtAl2010}. Specifically, let $\mathcal{L}$ be a detailed-balance Lindblad generator with spectral gap $\mathrm{Gap}(\mathcal{L})$. Then
\[
\|e^{t \mathcal{L}^\dagger}\rho-\rho_\beta \|^2_{1} \leq
\chi^2(e^{t \mathcal{L}^\dagger}\rho, \rho_\beta) \le \chi^2(\rho, \rho_\beta)
e^{- 2 \mathrm{Gap}(\mathcal{L}) t},
\]
where $\chi^2(\rho, \rho_\beta)  = \mathrm{Tr}[(\rho-\rho_\beta) \rho_\beta^{-1/2}(\rho-\rho_\beta) \rho_\beta^{-1/2}]$ denotes the $\chi^2$-divergence. Moreover,
\[
\max_\rho \chi^2(\rho, \rho_\beta) \le (\lambda_{\min}(\rho_\beta))^{-1} \leq 2^{N}e^{\beta\|H\|}\,,
\]
where $\lambda_{\min}(\rho_\beta)$ is the smallest eigenvalue of $\rho_\beta$. Therefore,
\[
t_{{\rm mix}, \mathcal{L}}(\varepsilon)\leq \frac{1}{2\mathrm{Gap}(\mathcal{L})}\left(\log(1/\varepsilon)+N\log(2)+\beta\|H\|\right)\,.
\]

Next, we define the mixing time for quantum channels. Since we focus on discrete channel $\Phi_{\alpha}$ in this work, we define the integer mixing time of $\Phi_{\alpha}$. It quantifies the worst-case convergence speed, i.e. the minimum number of iterations required for an arbitrary initial state to become $\varepsilon$-close to the target fixed point.

\begin{defn}\label{def:mixing_time}
Let $\Phi$ be a completely positive trace-preserving (CPTP) map with a unique fixed point $\rho_{\rm fix}(\Phi)$. For any $\varepsilon > 0$, the integer mixing time $\tau_{{\rm mix}, \Phi}(\varepsilon)$ is defined as
\begin{equation}
\tau_{{\rm mix}, \Phi}(\varepsilon) = \min \left\{ t \in \mathbb{N} \;\middle|\; \sup_{\rho} \| \Phi^t(\rho) - \rho_{\rm fix}(\Phi) \|_1 \leq \varepsilon \right\} \,.
\end{equation}
According to~\cite{ding2025endtoendefficientquantumthermal}, we also define the (rescaled) mixing time as
\begin{equation}\label{eqn:rescaled_mixing_time}
t_{{\rm mix}, \Phi}(\varepsilon) = \alpha^2 \tau_{{\rm mix}, \Phi}(\varepsilon)\,.
\end{equation}
\end{defn}

We conclude this section by a brief discussion on the relationship between the integer mixing time defined above and spectral gap of the Lindbladian dynamics. As we mentioned in the introduction (see~\eqref{eqn:Phi_alpha_approx}), the quantum channel {$\Phi_{\alpha,T}$} considered in the work can be approximated by:
\begin{equation}\label{eqn:phi_tilde}
\widetilde{\Phi}_{\alpha,T}(\rho):=\mathcal{U}_S(T)\,\circ\,\exp(\widetilde{\mathcal{L}} \alpha^2)\,\circ\,\mathcal{U}_S(T)[\rho]
\end{equation}
where $\mathcal{U}_S(T)(\rho)=e^{-iHT}\rho e^{iHT}$ is the unitary evolution under the system Hamiltonian $H$ for time $T$ and $\widetilde{\mathcal{L}}$ is some Lindbladian. According to the analysis in~\cite{ding2025endtoendefficientquantumthermal,wang2025lindbladdynamicsrigorousguarantees,slezak2026polynomialtime}, if $\widetilde{\mathcal{L}}$ satisfies the $\rho_\beta$-KMS detailed balance condition with unitary drift \eqref{eqn:DBCdrift} and the detailed-balance part $\mathcal{L}_{\rm KMS}$ has a positive spectral gap, then $\rho_\beta$ is the unique fixed point of $\widetilde{\Phi}_{\alpha,T}(\rho)$ and satisfy the following upper bound on the integer mixing time bound.
\begin{thm}\label{thm:integer_mixing_time_upper_bound} Let $\widetilde{\Phi}_{\alpha,T}$ be defined as~\eqref{eqn:phi_tilde} with $\widetilde{\mathcal{L}}$ {given by}
\begin{align}\label{eqn:unitary-drift}
\widetilde{\mathcal{L}}(\rho)=-i[\widetilde{B},\rho]+\widetilde{\mathcal{L}}_{\rm KMS}(\rho)\,,
\end{align}
where $\widetilde{B}$ is a Hermitian operator satisfying $[\widetilde{B},\rho_\beta]=0$, and $\widetilde{\mathcal{L}}_{\rm KMS}$ is a Lindbladian that satisfies the KMS detailed balance condition with respect to $\rho_\beta$. Then, for any $\varepsilon>0$,
\[
\tau_{{\rm mix}, \widetilde{\Phi}_{\alpha,T}}(\varepsilon)\leq \frac{1}{\mathrm{Gap}\left(\widetilde{L}_{\rm KMS}\right)\alpha^2}\log\left(\frac{2\left\|\rho_\beta^{-1/2}\right\|}{\varepsilon}\right)\,.
\]
\end{thm}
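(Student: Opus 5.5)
We will show that every factor of $\widetilde{\Phi}_{\alpha,T}$ is a contraction in the KMS-weighted $\chi^2$ geometry, and that the dissipative factor contracts strictly on the complement of $\rho_\beta$. Write $\Gamma(X):=\rho_\beta^{1/2}X\rho_\beta^{1/2}$. The KMS norm of a traceless perturbation $X=\rho-\rho_\beta$ is then
\[
\|X\|_{\chi}^2:=\Tr\bigl(X\rho_\beta^{-1/2}X\rho_\beta^{-1/2}\bigr)=\langle \Gamma^{-1}(X),\Gamma^{-1}(X)\rangle_{\rho_\beta}.
\]
This is exactly the $\chi^2$-divergence $\chi^2(\rho,\rho_\beta)$ from \cref{sec:spectral_gap_mixing_time}.

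\textbf{Step 1: the unitary and drift parts are isometries.} Since $H$ commutes with $\rho_\beta$, the map $\mathcal U_S(T)$ commutes with $\Gamma$ and is unitary in the Hilbert--Schmidt inner product. Hence $\|\mathcal U_S(T)X\|_\chi=\|X\|_\chi$, and $\mathcal U_S(T)$ fixes $\rho_\beta$. The same argument applies to the drift $e^{-i t[\widetilde B,\cdot]}$, because $[\widetilde B,\rho_\beta]=0$.

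\textbf{Step 2: handling the full generator.} In the Heisenberg picture, $\widetilde{\mathcal L}^\dagger=i[\widetilde B,\cdot]+\widetilde{\mathcal L}_{\rm KMS}^\dagger$. The drift $i[\widetilde B,\cdot]$ is anti-self-adjoint in $\langle\cdot,\cdot\rangle_{\rho_\beta}$, since $\widetilde B$ commutes with $\rho_\beta^{1/2}$. Consequently, for $Y=\Gamma^{-1}(X)$ with $\langle I,Y\rangle_{\rho_\beta}=\Tr X=0$, the symmetric part of the generator is
\[
\tfrac{d}{dt}\langle Y_t,Y_t\rangle_{\rho_\beta}=2\langle Y_t,\widetilde{\mathcal L}_{\rm KMS}^\dagger Y_t\rangle_{\rho_\beta}\le -2\,\mathrm{Gap}(\widetilde{\mathcal L}_{\rm KMS})\langle Y_t,Y_t\rangle_{\rho_\beta}.
\]
To justify this, we must check that the Schrödinger evolution of $X$ corresponds to the Heisenberg evolution of $Y$ under $\Gamma$. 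Detailed balance gives $\widetilde{\mathcal L}_{\rm KMS}\Gamma=\Gamma\widetilde{\mathcal L}_{\rm KMS}^\dagger$, and the drift commutes with $\Gamma$ up to a sign convention, which we will check carefully. We also use that the trace-zero condition is preserved. By Grönwall,
\[
\|e^{\alpha^2\widetilde{\mathcal L}}X\|_\chi\le e^{-\alpha^2\mathrm{Gap}}\|X\|_\chi.
\]

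\textbf{Step 3: iterate and convert to trace norm.} Combining Steps 1 and 2, each application of $\widetilde\Phi_{\alpha,T}$ contracts $\|\cdot\|_\chi$ by the factor $e^{-\alpha^2\mathrm{Gap}}$ and fixes $\rho_\beta$. This also gives uniqueness of the fixed point. After $k$ steps,
\[
\|\widetilde\Phi^k_{\alpha,T}(\rho)-\rho_\beta\|_1\le \|\widetilde\Phi^k_{\alpha,T}(\rho)-\rho_\beta\|_\chi\le e^{-k\alpha^2\mathrm{Gap}}\|\rho-\rho_\beta\|_\chi.
\]
The first inequality is Cauchy--Schwarz, as in \cref{sec:spectral_gap_mixing_time}. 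For the initial distance we use
\[
\|\rho-\rho_\beta\|_\chi\le\|\rho_\beta^{-1/4}(\rho-\rho_\beta)\rho_\beta^{-1/4}\|_2\le \|\rho_\beta^{-1/2}\|\,\|\rho-\rho_\beta\|_1\le 2\|\rho_\beta^{-1/2}\|.
\]
Setting the final bound equal to $\varepsilon$ and solving for $k$ gives the stated estimate.

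The main obstacle is Step 2. We must verify that the non-detailed-balance drift contributes nothing to the $\chi^2$ decay, which requires the intertwining identities with $\Gamma$ to hold exactly. We also need the spectral gap of the self-adjoint part to control the semigroup generated by the non-normal sum, which holds here because the drift is skew in the KMS inner product.
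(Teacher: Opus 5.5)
Your proposal is correct and follows essentially the same argument as the paper: your norm $\|X\|_\chi$ coincides with the paper's weighted norm $\|\rho_\beta^{-1/4}X\rho_\beta^{-1/4}\|_2$. The map $Y\mapsto\rho_\beta^{1/4}Y\rho_\beta^{1/4}$ turns your KMS-inner-product computation into the paper's Hermitian/anti-Hermitian splitting of the similarity transform $\mathcal K(\rho_\beta,\widetilde{\mathcal L})$, and the rest (Gr\"onwall, iteration, H\"older conversion to trace norm) matches. On the sign you flagged, $\Gamma^{-1}\widetilde{\mathcal L}\Gamma=\widetilde{\mathcal L}_{\rm KMS}^\dagger-i[\widetilde B,\cdot\,]$ rather than $\widetilde{\mathcal L}^\dagger$, which is harmless because the drift is skew in $\langle\cdot,\cdot\rangle_{\rho_\beta}$ with either sign.
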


The proof of~\cref{thm:integer_mixing_time_upper_bound} is hidden in~\cite{ding2025endtoendefficientquantumthermal,wang2025lindbladdynamicsrigorousguarantees,slezak2026polynomialtime}. For completeness, we provide a self-contained proof in~\cref{sec:proof_integer_mixing_time_upper_bound}. According to~\cref{thm:kms_dbc_lindbladian_characterization}, in the case when $\widetilde{\mathcal{L}}$ is of the form \eqref{eqn:unitary-drift}, we can directly obtain an upper bound on the integer mixing time of $\widetilde{\Phi}_{\alpha,T}$. However, in general, $\mathcal{L}$ in~\eqref{eqn:Phi_alpha_approx} may not satisfy such condition since the Lamb shift term $H_{\rm Lamb}$ does not commute with $\rho_\beta$ when $\sigma$ is fixed.  To overcome this issue, we need to control the perturbation induced by $H_{\rm Lamb}$ and show that $\mathcal{L}$ can be viewed as a small perturbation of a Lindbladian that satisfies the KMS detailed balance condition with unitary drift, and thus still has a positive contraction rate toward a slightly perturbed fixed point. This is the key technical step in our analysis, and more details will be discussed in~\cref{sec:proof_overview}.



\subsection{System-bath interaction model approximation result}\label{sec:system_bath_approximation}
In this section, we review approximation results from the literature on system--bath interaction models~\cite{ding2025endtoendefficientquantumthermal,hahn2026efficientquantumthermalstate,lloyd2025quantumthermal} that establish~\eqref{eqn:Phi_alpha_approx} and provide explicit formulas for the transition parts of the Lindbladian $\mathcal{L}$, showing that the transition term satisfies the $\rho_\beta$-KMS detailed balance condition. For simplicity, we provide rigorous error bounds only for the framework of~\cite{ding2025endtoendefficientquantumthermal}. For the results in~\cite{hahn2026efficientquantumthermalstate,lloyd2025quantumthermal}, we give only an informal discussion and do not state explicit error bounds.

\cmphead{Framework of~\cite{ding2025endtoendefficientquantumthermal}}
In~\cite{ding2025endtoendefficientquantumthermal}, the choices of $H_E$, $B_E$, and $f(t)$ satisfy the following assumptions:
\begin{setup}[\cite{ding2025endtoendefficientquantumthermal}]\label{setup:ding}
  \begin{itemize}
\item $H_E=-\omega Z/2$, where $\omega$ is sampled from a probability density $g(\omega)$, $\rho_E=\exp\left(-\tilde{\beta}H_E\right)/Z_{\tilde{\beta}}$, where $\tilde{\beta}$ is a parameter (may or may not equal to $\beta$).
\item $B_E=(X_E-iY_E)/2=\ket{1}\bra{0}$. The operator $A_S$ is sampled uniformly from a set of coupling operators $\mc{A}=\{A^i,-A^i\}_i$, where $\{(A^i)^\dagger\}_i=\{A^i\}_i$ and $\|A^i\|\leq 1$.
\item $f(t)=\frac{1}{(2\pi)^{1/4}\sigma^{1/2}}\exp\left(-\frac{t^2}{4\sigma^2}\right)$ with $\sigma\gg 1$.
\end{itemize}
\end{setup}

According to~\cite[Theorem 4, Lemma 11]{ding2025endtoendefficientquantumthermal} and~\cite{wang2025lindbladdynamicsrigorousguarantees}, {$\Phi_{\alpha,T}$} admits the following rigorous approximation:
\begin{thm}\label{thm:ding_approx} Define $\gamma(\omega)=(g(\omega)+g(-\omega))/(1+\exp(\tilde{\beta}\omega))$. When $\alpha^2\sigma=\mathcal{O}(1)$ and $T=\Omega(\sigma)$, the map {$\Phi_{\alpha,T}$} in~\cref{eqn:Phi_alpha} can be decomposed as
\begin{equation}\label{eqn:Phi_map_approx}
\rho_{n+1}={\Phi_{\alpha,T}}(\rho_n)=\mc{U}_S(T)\circ \exp(\mathcal{L}\alpha^2)\circ \mc{U}_S(T)[\rho_n]+\mc{O}\left(\alpha^2\sigma e^{-T^2/(4\sigma^2)}+\alpha^4\sigma\log(\sigma)\|\gamma'\|_{L^1}\right)\,.
\end{equation}
The error bound is in terms of the trace norm. Here, $\mathcal{L}$ is the Lindbladian
\begin{equation}\label{eqn:lindbladian_operator}
\begin{aligned}
\mc{L}(\rho)
=
\mathbb{E}_{A_S}\left(
\int^\infty_{-\infty}
-i\left[g(\omega)H_{\mathrm{LS},A_S}(\omega),\rho\right]
+\gamma(\omega)\mathcal{D}_{V_{A_S,f,T}(\omega)}(\rho)\,\mathrm{d}\omega
\right)\,,
\end{aligned}
\end{equation}
The jump operator $V_{A_S,f}(\omega)$ is defined by
\begin{equation}\label{eqn:V_sf}
V_{A_S,f}(\omega)=\int^\infty_{-\infty}f(t)A_S(t)e^{-i\omega t}\,\mathrm{d}t.
\end{equation}
The coherent term
\[
H_{\mathrm{LS},A_S}(\omega)=-\mathrm{Im}\left(\frac{\exp(-\tilde{\beta} \omega)}{1+\exp(-\tilde{\beta}\omega)}\mc{G}_{A^\dagger_S,f}(\omega)+\frac{1}{1+\exp(-\tilde{\beta}\omega)}\mc{G}_{A_S,f}(-\omega)\right)\,,
\]
with
\begin{equation}\label{eqn:G_S}
\mc{G}_{A_S,f}(\omega)=\int^\infty_{-\infty}\int^{s_1}_{-\infty}f(s_2)f(s_1) A^\dagger_S(s_2)A_S(s_1)\exp(-i\omega(s_1-s_2))\mathrm{d}s_2\mathrm{d}s_1\,.
\end{equation}
\end{thm}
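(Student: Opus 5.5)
We will prove \cref{thm:ding_approx} by a second-order Dyson expansion in the interaction picture, then average over the bath frequency $\omega$ and the coupling operator $A_S$, and finally resum the $\mathcal{O}(\alpha^2)$ generator into an exponential.

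\textbf{Step 1: interaction picture.} We place the interaction window symmetrically, so that the envelope is centered and the evolution runs over $[-T,T]$. Writing
\[
U_\alpha(T)=e^{-i(H+H_E)T}\,\widetilde U_\alpha\,e^{-i(H+H_E)T},
\]
the outer factors produce the two copies of $\mathcal U_S(T)$; on the bath they act trivially, since $[H_E,\rho_E]=0$ and the partial trace absorbs the bath rotation. The interaction-picture unitary $\widetilde U_\alpha$ is generated by
\[
\alpha\bigl(f(t)A_S(t)\otimes B_E(t)+\mathrm{h.c.}\bigr),\qquad B_E(t)=e^{-i\omega t}\ket{1}\bra{0}.
\]

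\textbf{Step 2: expansion to second order.} We expand $\mathrm{Tr}_E(\widetilde U_\alpha(\rho\otimes\rho_E)\widetilde U_\alpha^\dagger)$ in the Dyson series.
\begin{itemize}
\item The first-order term vanishes, because $\mathrm{Tr}(\rho_E\ket{1}\bra{0})=0$.
\item At second order, the bath correlation functions are $\langle B_E^\dagger B_E\rangle=p_0$ and $\langle B_EB_E^\dagger\rangle=p_1$, where $p_0/p_1=e^{\tilde\beta\omega}$.
\item The cross terms $\int\!\int f f\,A(s_1)\rho A^\dagger(s_2)\,e^{-i\omega(s_1-s_2)}$ assemble, after we extend the time integrals from $[-T,T]$ to $\mathbb R$, into $V_{A_S,f}(\omega)\rho V_{A_S,f}(\omega)^\dagger$ weighted by the populations. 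The truncation error is the Gaussian tail, of order $\sigma e^{-T^2/(4\sigma^2)}$ in the $L^1$ sense, which gives the first error term.
\item The time-ordered terms $\int\int_{s_2<s_1}$ produce $\mathcal G_{A_S,f}$. Their anti-Hermitian part equals $\tfrac12 V^\dagger V$, and it supplies the normalization in $\mathcal D_V$. Their Hermitian part, the imaginary part, gives $H_{\mathrm{LS},A_S}(\omega)$.
\end{itemize}

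\textbf{Step 3: averaging.} We average over $\omega\sim g$ and combine the $\omega$ and $-\omega$ contributions: the $A_S^\dagger$ channel at frequency $\omega$ matches the $A_S$ channel at $-\omega$, using that $\mathcal A$ is closed under adjoint. The population weights then become $\gamma(\omega)=(g(\omega)+g(-\omega))/(1+e^{\tilde\beta\omega})$, which yields exactly $\mathcal L$ in \eqref{eqn:lindbladian_operator}. The sign symmetry $\pm A^i$ kills odd-order terms; in particular the third-order terms vanish, so the next correction is $\alpha^4$.

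\textbf{Step 4: resummation and the $\alpha^4$ remainder (main obstacle).} We write the channel as $\mathcal I+\alpha^2\mathcal L+R_4$ and compare it with $e^{\alpha^2\mathcal L}=\mathcal I+\alpha^2\mathcal L+\mathcal O(\alpha^4\|\mathcal L\|^2)$. We need two bounds.
\begin{itemize}
\item \emph{Bound on $\|\mathcal L\|_{1\to1}$.} A naive estimate of $\|V\|$ gives $\|f\|_{L^1}\sim\sigma^{1/2}$, hence $\|\mathcal L\|=\mathcal O(\sigma)$ at fixed $\omega$. This is too crude. Instead we integrate over $\omega$ first: by Plancherel, $\int\|V(\omega)\|^2\,d\omega\lesssim\|f\|_2^2=\mathcal O(1)$, with weights controlled by $\gamma$.
\item \emph{Bound on the fourth-order Dyson term.} Naively it scales like $\alpha^4\|f\|_{L^1}^4\sim\alpha^4\sigma^2$. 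To improve this we exploit the oscillation in $\omega$: the $\omega$-integral of $e^{-i\omega(\cdot)}$ against the smooth $\gamma$ can be integrated by parts. This yields a decay factor $\|\gamma'\|_{L^1}/|s_i-s_j|$ and leaves one free time integral of length about $\sigma$, with a logarithmic cutoff at short times. The result is the bound $\alpha^4\sigma\log\sigma\,\|\gamma'\|_{L^1}$.
\end{itemize}
The assumption $\alpha^2\sigma=\mathcal O(1)$ keeps the higher Dyson terms geometrically summable. This integration-by-parts estimate is the delicate step; for it we would follow \cite[Lemma 11]{ding2025endtoendefficientquantumthermal}.
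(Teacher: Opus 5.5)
Your overall route matches the paper's: an interaction-picture Dyson expansion with the two outer $\mathcal U_S(T)$ factors, then identification of the $\alpha^2$ term with $\mathcal L$ (the $\omega\leftrightarrow-\omega$, $A_S\leftrightarrow A_S^\dagger$ symmetrization produces $\gamma$), the Gaussian-tail truncation error, and the oscillatory estimate obtained by integrating by parts in $\omega$.

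\textbf{The gap: orders $2n\ge 6$.} Using only $\|G_k\|\le\|f\|_{L^1}^k/k!$ and $\|f\|_{L^1}^2=\Theta(\sigma)$, the order-$2n$ term is bounded by $\|\gamma\|_{L^1}(C\alpha^2\sigma)^n/(2n)!$. At $n=3$ this is $\Theta(\alpha^6\sigma^3)$. That exceeds the target $\alpha^4\sigma\log(\sigma)\|\gamma'\|_{L^1}$ by a factor of order $\alpha^2\sigma^2/\log\sigma$ for fixed $\gamma$, which can be as large as $\sigma/\log\sigma$ when you only assume $\alpha^2\sigma=\mathcal O(1)$. So ``$\alpha^2\sigma=\mathcal O(1)$ keeps the higher terms geometrically summable'' gives the stated error only if \emph{every} order already carries the $1/\sigma$ gain from the $\omega$-oscillation, with constants growing at most geometrically in $n$.

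The paper proves exactly this uniform estimate, via \cite[(D15)]{wang2025lindbladdynamicsrigorousguarantees}. The bath trace contributes a single population factor at every order. So each order-$2n$ block is $\sum_k(-1)^k\int\gamma(\omega)G^\dagger_{2n-k,A_S}(\omega)\,\rho\,G_{k,A_S}(\omega)\,\mathrm d\omega$, containing one factor $\int\gamma(\omega)e^{i\omega x}\,\mathrm d\omega$, where $x$ is a signed sum of all $2n$ times. Bounding that factor by $\min\{\|\gamma\|_{L^1},\|\gamma'\|_{L^1}/|x|\}$ gives $\frac{(\Theta(\sigma))^n}{\sigma\sqrt n}\,\mathcal O\bigl(\log(\sqrt{2n}\,\sigma)\|\gamma'\|_{L^1}\bigr)$ per order. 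Summing over $n\ge 2$, with $\alpha^2\sigma$ below a small absolute constant, yields $\mathcal O(\alpha^4\sigma\log(\sigma)\|\gamma'\|_{L^1})$. You need this all-orders version, not just the fourth-order one. Also, the paper presents its bound as sharper than \cite[Theorem 4, Lemma 11]{ding2025endtoendefficientquantumthermal}, so following that lemma may not deliver the claimed rate.

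\textbf{A smaller point: the bound on $\|\mathcal L\|_{1\to1}$.} The claim $\int\|V(\omega)\|^2\,\mathrm d\omega\lesssim\|f\|_{L^2}^2$ is false in operator norm. Take a permutation-type $A_S$ whose Bohr frequencies are separated by much more than $\sigma^{-1}$: then $\|V(\omega)\|\approx\|f\|_{L^1}$ near each of them, and the integral grows with their number. What does hold is the operator identity $\int V^\dagger V\,\mathrm d\omega=2\pi\int f(t)^2A_S^\dagger(t)A_S(t)\,\mathrm dt$. Since the jump part $\mathcal J$ is completely positive, $\|\mathcal J\|_{1\to1}=\|\mathcal J^\dagger(I)\|\le2\pi\|\gamma\|_{L^\infty}\|f\|_{L^2}^2$. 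The coherent part needs its own bound, for example through the $L^1$ norms of the Fourier transforms of $g(\omega)/(1+e^{\mp\tilde\beta\omega})$, as in \cref{lem:ding_hamiltonian_part_bound}. With that fix, your explicit comparison of $e^{\alpha^2\mathcal L}$ with $\mathcal I+\alpha^2\mathcal L$ is sound; the paper's proof leaves that step implicit.

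Finally, odd orders vanish already because each coupling flips the bath qubit and $\rho_E$ is diagonal; the $\pm A^i$ symmetry is not needed for that.
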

In~\cref{thm:ding_approx}, \eqref{eqn:Phi_map_approx} provides a rigorous error bound for the approximation in~\eqref{eqn:Phi_alpha_approx}, while \eqref{eqn:lindbladian_operator} and~\eqref{eqn:V_sf} give an explicit formula for the Lindbladian $\mathcal{L}$ in terms of the choices of $H_E$, $B_E$, and $f(t)$. Because~\cref{thm:ding_approx} provides a better approximation error bound than~\cite[Theorem 4, Lemma 11]{ding2025endtoendefficientquantumthermal}, we provide a self-contained proof of~\cref{thm:ding_approx} in~\cref{app:ding_approx_proof}.

In~\cite{ding2025endtoendefficientquantumthermal}, the authors take $\tilde{\beta}$ to be equal to the target inverse temperature $\beta$. In that setting, the transition term in~\eqref{eqn:lindbladian_operator} satisfies the KMS detailed balance condition only approximately, and one must take $\sigma\to+\infty$ to ensure that the fixed point of {$\Phi_{\alpha,T}$} is close to $\rho_\beta$. By contrast, in our work, we choose
\begin{equation}\label{eqn:ding_choice}
\begin{aligned}
\tilde{\beta}&=\frac{2\beta}{2-\beta^2/(4\sigma^2)},
\qquad
&g(\omega)=&\frac{1}{\sqrt{2\pi\left(2\beta^{-2}-1/(4\sigma^2)\right)}}\exp\left(-\frac{(\beta\omega+1)^2}{2\left(2-\beta^2/(4\sigma^2)\right)}\right),\\
\gamma(\omega)&=\frac{g(\omega)+g(-\omega)}{1+\exp(\tilde{\beta}\omega)}=g(\omega),
\qquad
&f(t)=&\frac{e^{-t^2/(4\sigma^2)}}{\sqrt{\sigma\sqrt{2\pi}}}\,.
\end{aligned}
\end{equation}
With this choice, the dissipative parts of~\eqref{eqn:lindbladian_operator} \textbf{exactly recovers} those in~\cite[(1.3)--(1.5)]{ChenKastoryanoGilyen2023} (with $\sigma_E=(2\sigma)^{-1}$, $\omega_\gamma=\beta^{-1}$, $\sigma^2_\gamma=2\beta^{-2}-1/(4\sigma^2)$). Therefore, the transition part satisfies the KMS detailed balance condition exactly;  see~\cref{def:DBC-general} and~\cite[Proposition II.2]{ChenKastoryanoGilyen2023}.
These new choices of parameters are a crucial ingredient in our analysis, because it allows us to prove that the fixed point of $\Phi_{\alpha}$ can be made arbitrarily close to $\rho_\beta$ without requiring $\sigma\to+\infty$.

Throughout our analysis, we use the decomposition
\begin{equation}\label{eqn:Lindbladian_approx_ding}
\mathcal{L}(\rho)=-i[H_{\rm Lamb},\rho]+\mathcal{L}_{\rm KMS}(\rho),
\end{equation}
which follows from the fact that the transition part of $\mathcal{L}$ satisfies the KMS detailed balance condition. Here, $\mathcal{L}_{\rm KMS}$ is the full Lindbladian generator constructed through Theorem \ref{thm:kms_dbc_lindbladian_characterization} from the detailed-balance transitions in \eqref{eqn:lindbladian_operator} and
\begin{equation}\label{eqn:H_Lamb_definition}
H_{\rm Lamb}=H_{\rm coh}-G_{\mathcal D},
\end{equation}
where $H_{\rm coh}$ is the Hamiltonian in~\eqref{eqn:lindbladian_operator} built from $H_{\mathrm{LS},A_S}(\omega)$, and $G_{\mathcal D}$ is the coherent term in $\mathcal{L}_{\rm KMS}$ constructed from~\cref{thm:kms_dbc_lindbladian_characterization}. Note that these notations differ from that in some earlier works. There, $H_{\rm Lamb}$ often refers to the entire coherent part of the Lindbladian, while the remaining dissipative part is viewed as an approximation to the Davies generator. In our setting, however, $\mathcal{L}_{\rm KMS}$ itself contains a coherent contribution, and this term is essential for $\mathcal{L}_{\rm KMS}$ to satisfy the KMS detailed balance condition. Accordingly, our notation $H_{\rm Lamb}$ refers only to the residual coherent term after removing the coherent contribution already built into $\mathcal{L}_{\rm KMS}$.


In~\cref{sec:lemmas}, we will use~\cite[Theorem F.1]{wang2025lindbladdynamicsrigorousguarantees} or~\cite[Theorem 27]{ding2025endtoendefficientquantumthermal} to show that,
\begin{equation}\label{eqn:H_Lamb_error}
\left\|\rho_\beta^{-1/4}H_{\rm Lamb}\rho_\beta^{1/4}
-\rho_\beta^{1/4}H_{\rm Lamb}\rho_\beta^{-1/4}\right\|
=
\mc{O}\left(\frac{1}{\sigma}\right)\quad\quad (\text{ignore the dependence on $\beta$, $\|H\|$})\,.
\end{equation}
At the first glance, \eqref{eqn:H_Lamb_error} requires $\sigma \to \infty$ to make $\rho_\beta $ an approximate fixed point of $\mathcal{L}$ and the bias could not be controlled when $\sigma = \Theta(1)$.  Surprisingly, we show that the interplay between the unitary evolution $\mc{U}_S(T)$ and the dissipative evolution $\exp(\mathcal{L}\alpha^2)$ in~\eqref{eqn:Phi_map_approx} suppresses this bias. As a result, the fixed point of $\Phi_{\alpha}$ can still be made arbitrarily close to $\rho_\beta$ as $\alpha\to0$, even when $\sigma=\Theta(1)$.



\cmphead{Framework of~\cite{hahn2026efficientquantumthermalstate}}
In~\cite{hahn2026efficientquantumthermalstate}, the choice of parameters can be equivalently expressed as follows:
\begin{setup}\label{setup:hahn}
  \begin{itemize}
\item $H_E=0$ and $\rho_E=\ket{0}\bra{0}$.
\item $B_E=(X_E-iY_E)/2=\ket{1}\bra{0}$. The operator $A_S$ is sampled uniformly from a set of coupling operators $\mc{A}=\{A^i,-A^i\}_i$, where $\{(A^i)^\dagger\}_i=\{A^i\}_i$ and $\|A^i\|\leq 1$.
\item $f(t)=\sqrt{\frac{2}{\pi\sigma^2}}\exp\left(-\frac{2}{\sigma^2}\left(t-\frac{i\beta}{4}\right)^2\right)$.
\end{itemize}
\end{setup}

In~\cite[Eqn. (30)]{hahn2026efficientquantumthermalstate}, the authors establish~\eqref{eqn:Phi_alpha_approx} with
\[
\mathcal{L}(\rho)= -i[B,\rho]+\mathbb{E}_{A_S}\bigl(\mathcal{D}_{V_{A_S,f,T}}(\rho)\bigr),
\qquad
V_{A_S,f,T}=\int^T_{-T}f(t)A_S(t)\,\mathrm{d}t\,.
\]
The choice of $f(t)$ in~\cref{setup:hahn} ensures that the transition part of $\mathcal{L}$ \textbf{exactly satisfies} the KMS detailed balance condition; see the construction in~\cite[(3.22)--(3.33)]{Ding_2025} and~\cref{def:DBC-general}. Consequently, $\mathcal{L}$ also admits the decomposition in~\eqref{eqn:Lindbladian_approx_ding}. However, similarly to the bound in~\eqref{eqn:H_Lamb_error}, the Lamb shift term $H_{\rm Lamb}$ in~\cite{hahn2026efficientquantumthermalstate} does not commute with $\rho_\beta$ when $\sigma$ is fixed. As a result, the analysis in~\cite{hahn2026efficientquantumthermalstate} requires the limit $\sigma\to+\infty$ to show that the fixed point of {$\Phi_{\alpha,T}$} is close to $\rho_\beta$ in the weak-coupling regime; see~\cite[Appendix C.2]{hahn2026efficientquantumthermalstate}.

On the other hand, since the Fourier transform of $f(t)$ is a Gaussian with width $\sigma^{-1}$, taking $\sigma$ large suppresses transitions between energy levels with large gaps. As discussed in~\cite[Section IV.D]{hahn2026efficientquantumthermalstate}, this in turn increases the mixing time of the dynamics. Using the theoretical framework developed in our work, we show that it is not necessary to take $\sigma\to+\infty$ in order to guarantee that the fixed point of $\Phi_{\alpha}$ is close to $\rho_\beta$. In particular, after a slight modification of $H_E$ and $f$, when $\sigma=\Theta(1)$, the fixed point of $\Phi_{\alpha}$ can still be made arbitrarily close to $\rho_\beta$ as $\alpha\to0$. This strengthens the theoretical guarantees of~\cite{hahn2026efficientquantumthermalstate}, resolves its mixing-time issue, and leads to a provably efficient end-to-end algorithm for quantum Gibbs state preparation.

\cmphead{Framework of~\cite{lloyd2025quantumthermal}}
In~\cite{lloyd2025quantumthermal}, the choice of parameters can be equivalently expressed as follows:
\begin{setup}\label{setup:lloyd}
  \begin{itemize}
\item $H_E=-hZ/2$~\cite[(6)]{lloyd2025quantumthermal}, and $\rho_E=\ket{0}\bra{0}$.
\item $B_E=(X_E-iY_E)/2=\ket{1}\bra{0}$. The operator $A_S$ is sampled uniformly from a set of coupling operators $\mc{A}=\{A^i,-A^i\}_i$, where $\{(A^i)^\dagger\}_i=\{A^i\}_i$ and $\|A^i\|\leq 1$.
\item $f(t)=\frac{1}{\sqrt{\sigma}}\exp\left(-\frac{2t^2}{\sigma}\right)$; see~\cite[(8)]{lloyd2025quantumthermal}, where $\sigma=\beta/h$.
\end{itemize}
\end{setup}

Similar to~\cite{ding2025endtoendefficientquantumthermal,hahn2026efficientquantumthermalstate},~\cite{lloyd2025quantumthermal} shows that, in the weak-coupling limit $\alpha\to0$, the approximation~\eqref{eqn:Phi_alpha_approx} holds with
\[
\mathcal{L}(\rho)
=
-i[B',\rho]
+\mathbb{E}_{A_S}\bigl(\mathcal{D}_{V_{A_S,f,T}(h)}(\rho)\bigr),
\qquad
V_{A_S,f,T}(h)=\int^T_{-T}f(t)e^{iht}A_S(t)\,\mathrm{d}t.
\]
In addition, the choice of $f(t)$ in~\cref{setup:lloyd} ensures that the transition part in $\mathcal{L}$ \textbf{exactly satisfies} the KMS detailed balance condition, as shown in~\cite[Section III.B]{lloyd2025quantumthermal}. Therefore, $\mathcal{L}$ also admits the decomposition in~\eqref{eqn:Lindbladian_approx_ding} and the theoretical framework developed in our work can also be applied directly to the algorithm in~\cite{lloyd2025quantumthermal}.

\section{Main results}\label{sec:main_results}

In this section, we present the main results of our work. We first state two general theorems, \cref{thm:fixed_point_approx} and \cref{thm:mixing_time}, concerning the fixed-point approximation and mixing time of $\Phi_\alpha$ under the approximation~\eqref{eqn:Phi_alpha_approx} and the condition~\eqref{eqn:Lindbladian_decomposition}. In both theorems, we emphasize the scaling of the error and the mixing time with respect to $\alpha$, and do not make explicit the dependence on other parameters, such as $\|H\|,\beta$, since these depend on the specific system--bath interaction model. After presenting these general results, in~\cref{sec:applications} we apply them to the specific system--bath interaction models proposed in~\cite{ding2025endtoendefficientquantumthermal} to derive rigorous bounds on the fixed-point approximation, mixing time, and end-to-end complexity of the algorithm.


\begin{thm}[Fixed-Point Approximation, Informal]\label{thm:fixed_point_approx}
Suppose that for each $T \ge T_0$, $\Phi_{\alpha,T}$ satisfies the leading-order approximation
\begin{align}\label{eqn:approx2}
    \Phi_{\alpha,T}(\rho)
    =
    \mathcal{U}_S(T)\circ \exp(\alpha^2\mathcal{L})\circ \mathcal{U}_S(T)(\rho)
    +
    \mathcal{O}_{\beta,\sigma}(\alpha^4).
\end{align}
Assume that the Lindbladian $\mathcal{L}$ decomposes as
\begin{equation}\label{eqn:decompose2}
    \mathcal{L}(\rho)
    =
    -i[H_{\rm Lamb},\rho]+\mathcal{L}_{\rm KMS}(\rho),
\end{equation}
with $\mathcal{L}_{\rm KMS}$ satisfying the KMS detailed-balance condition, and with $H_{\rm Lamb}$ representing a Hermitian term. Then, there exists a proper time distribution $\mu$ with $\mathrm{supp}(\mu) \subset [T_0,\infty) $  and $\mathbb{E}_{\mu}(T)=\Theta(T_0)$ such that if
$
    \Phi_\alpha = \mathbb{E}_{T\sim\mu}\Phi_{\alpha,T}
$
has a unique fixed point $\rho_{\rm fix}(\Phi_\alpha)$ and finite mixing time, then for any $\varepsilon>0$,
\begin{equation}\label{eqn:fixed_point_approx}
    \left\|\rho_{\rm fix}(\Phi_\alpha)-\rho_\beta\right\|_1
    =
    \mathcal{O}\!\left(\varepsilon+\alpha^4\tau_{{\rm mix},\Phi_\alpha}(\varepsilon)\right)
    =
    \mathcal{O}\!\left(\varepsilon+\alpha^2 t_{{\rm mix},\Phi_\alpha}(\varepsilon)\right),
\end{equation}
where $\tau_{{\rm mix},\Phi_\alpha}(\varepsilon)$ is the mixing time of $\Phi_\alpha$ and $t_{{\rm mix},\Phi_\alpha}(\varepsilon)$ is the rescaled mixing time defined in Definition~\ref{def:mixing_time}.
\end{thm}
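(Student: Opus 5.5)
The plan is to compare $\rho_{\rm fix}(\Phi_\alpha)$ with a corrected reference state
\[
\rho' = \rho_\beta + \alpha^2 Y ,
\]
for a suitable operator $Y$ with $\Tr Y = 0$ and $\|Y\| = \mathcal{O}_{\beta,\sigma}(1)$. The goal is to choose $Y$ so that $\rho'$ is an approximate fixed point: $\|\Phi_\alpha(\rho')-\rho'\|_1 = \mathcal{O}(\alpha^4)$. Given this, I telescope with $t=\tau_{{\rm mix},\Phi_\alpha}(\varepsilon)$, using $\|\Phi_\alpha\|_{1\to1}\le 1$ on traceless operators:
\[
\|\Phi_\alpha^t(\rho')-\rho'\|_1 \le \sum_{k=0}^{t-1}\|\Phi_\alpha^{k}(\Phi_\alpha(\rho')-\rho')\|_1 \le t\,\mathcal{O}(\alpha^4).
\]
Here $\rho'$ need not be positive; if necessary I mix it with $\rho_\beta$, or apply the mixing bound to its positive and negative parts. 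By the definition of the mixing time, $\|\Phi_\alpha^t(\rho')-\rho_{\rm fix}\|_1\le \mathcal{O}(\varepsilon)$. The triangle inequality then gives
\[
\|\rho_{\rm fix}-\rho_\beta\|_1 \le \mathcal{O}(\varepsilon)+\alpha^4\tau_{{\rm mix},\Phi_\alpha}(\varepsilon)+\alpha^2\|Y\|_1 .
\]
The last term is absorbed into $\alpha^4\tau$, because $\tau_{{\rm mix},\Phi_\alpha}\gtrsim \alpha^{-2}$: a channel that is $\mathcal{O}(\alpha^2)$-close to a unitary conjugation cannot mix faster than that.

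To build $Y$, I expand each channel to second order in $\alpha$ using \eqref{eqn:approx2}:
\[
\Phi_{\alpha,T}(\rho')=\mathcal{U}_S(T)\bigl(\rho_\beta+\alpha^2\mathcal{L}(\rho_\beta)\bigr)+\alpha^2\,\mathcal{U}_S(2T)(Y)+\mathcal{O}(\alpha^4).
\]
This uses $\mathcal{U}_S(T)(\rho_\beta)=\rho_\beta$ and the fact that $\alpha^4\mathcal{L}(Y)$ is higher order. Since $\mathcal{L}_{\rm KMS}(\rho_\beta)=0$ by detailed balance, we get $\mathcal{L}(\rho_\beta)=C:=-i[H_{\rm Lamb},\rho_\beta]$. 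I decompose into Bohr-frequency blocks $X_\nu=\sum_{\lambda_k-\lambda_l=\nu}\Pi_k X\Pi_l$; then $\mathcal{U}_S(T)$ multiplies $X_\nu$ by $e^{-i\nu T}$. Writing $\widehat\mu(\nu)=\mathbb{E}_\mu e^{-i\nu T}$, the $\alpha^2$ equation becomes, blockwise,
\[
\widehat\mu(\nu)C_\nu+(\widehat\mu(2\nu)-1)Y_\nu=0 .
\]
For $\nu=0$ we have $C_0=0$, since $\rho_\beta$ is block diagonal. So I set $Y_0=0$ and, for $\nu\neq0$, $Y_\nu=\widehat\mu(\nu)C_\nu/(1-\widehat\mu(2\nu))$. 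These blocks are consistent with Hermiticity.

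For $\mu$ I take $T=T_0+S$ with $S\sim\mathrm{Exp}$ of mean $T_0$. This gives $\mathbb{E}_\mu T=2T_0$, an exponential tail, and
\[
\widehat\mu(2\nu)=\frac{e^{-2i\nu T_0}}{1+2i\nu T_0},\qquad |\widehat\mu(2\nu)|=(1+4\nu^2T_0^2)^{-1/2}<1 .
\]
Hence $|1-\widehat\mu(2\nu)|\ge c\min(1,|\nu|T_0)$. Large $\nu$ is therefore harmless. For small $\nu$ the denominator degenerates linearly in $\nu$, and the numerator must compensate. Here the commutator structure enters: $C_\nu=-i\bigl((H_{\rm Lamb})_\nu\rho_\beta-\rho_\beta(H_{\rm Lamb})_\nu\bigr)$, which in the blocks reads $(H_{\rm Lamb})_\nu$ times factors $(e^{-\beta\lambda_l}-e^{-\beta\lambda_k})\propto(1-e^{-\beta\nu})=\mathcal{O}(\beta|\nu|)$. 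Thus $|Y_\nu|\lesssim \beta\|H_{\rm Lamb}\|/T_0$ uniformly in $\nu$.

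The main obstacle is converting these blockwise bounds into a norm bound on $Y$ without losing dimension factors, since the sum over Bohr frequencies can be exponentially large. I would first try to express $Y$ as an integral transform, namely a Heisenberg-time average of $\rho_\beta^{\pm1/4}H_{\rm Lamb}\rho_\beta^{\mp1/4}$ against an $L^1$ kernel in time. The exponential form of $\mu$ should make the multiplier $\widehat\mu(\nu)(1-e^{-\beta\nu})/(1-\widehat\mu(2\nu))$ the Fourier transform of an integrable function. This matches the quantity controlled in \eqref{eqn:H_Lamb_error}, which is where the KMS-induced form of $H_{\rm Lamb}$ from \cref{thm:kms_dbc_lindbladian_characterization} would be used. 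If that fails, an $\mathcal{O}_{\beta}(1)$ dimension-dependent bound still suffices for the informal $\alpha$-scaling claimed in the theorem.
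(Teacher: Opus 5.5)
Your skeleton is the paper's. The ansatz $\rho_\beta+\alpha^2Y$ (your $Y$ is the paper's $E$), the blockwise cancellation $\widehat\mu(\nu)C_\nu+(\widehat\mu(2\nu)-1)Y_\nu=0$, and the telescoping up to $\tau_{{\rm mix},\Phi_\alpha}(\varepsilon)$ are exactly \cref{lem:avg_approx_fixed_point,lem:rhostar_to_true_fixed_point}. Since the informal statement hides all non-$\alpha$ constants, your dimension-dependent fallback for $\|Y\|_1$ does establish it. It does not, however, give the dimension-free bound $\|\rho^*-\rho_\beta\|_1=\mathcal{O}(\alpha^2\beta^2/T_0)$ of \cref{cor:approx_fix_point_1} that the rigorous version needs.

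Your absorption of $\alpha^2\|Y\|_1$ via $\tau_{{\rm mix},\Phi_\alpha}(\varepsilon)\gtrsim\alpha^{-2}$ is a genuine improvement: the paper only asserts the corresponding domination in \cref{cor:approx_fix_point_3}. Note that the comparison channel is the dephasing $\mathbb{E}_{T\sim\mu}\,\mathcal{U}_S(2T)$, not a unitary conjugation. The argument still works because that channel fixes every energy eigenstate. Also, the linear lower bound on $|1-\widehat\mu(2\nu)|$ near $\nu=0$ comes from $\widehat\mu(2\nu)=1-2i\nu\,\mathbb{E}_\mu T+\mathcal{O}(\nu^2)$. The fact $|\widehat\mu(2\nu)|<1$ alone only gives a quadratic bound, which would not be compensated by the numerator.

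The route you sketch for a dimension-free bound fails as stated. Write $\omega$ for the Bohr frequency below, since $\nu$ will denote the time kernel.

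\begin{itemize}
\item The multiplier $\widehat\mu(\omega)(1-e^{-\beta\omega})/(1-\widehat\mu(2\omega))$ grows like $e^{\beta|\omega|}/(|\omega|T_0)$ as $\omega\to-\infty$. It is therefore unbounded and cannot be the Fourier transform of any finite measure.
\item Passing through $\rho_\beta^{\pm1/4}H_{\rm Lamb}\rho_\beta^{\mp1/4}$, i.e.\ the Gibbs defect in \eqref{eqn:H_Lamb_error}, does not help. It leaves the bare multiplier $\widehat\mu(\omega)/(1-\widehat\mu(2\omega))$, which is unbounded because of its $1/\omega$ pole at $\omega=0$.
\end{itemize}

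The missing idea is the divided-difference identity $-i[H_{\rm Lamb},\rho_\beta]=[Z,H]$ with $Z=i\beta\int_0^1\rho_\beta^{1-u}H_{\rm Lamb}\rho_\beta^{u}\,\mathrm du$. The commutator with $H$ supplies exactly the factor $\omega$ that cancels the pole, and H\"older gives $\|Z\|_1\le\beta\|H_{\rm Lamb}\|$. Hence $E=\int e^{-iHt}(Z-Z_0)e^{iHt}\,\mathrm d\nu(t)$, where $\widehat\nu(\omega)=\omega\widehat\mu(\omega)/(1-\widehat\mu(2\omega))$ and $Z_0$ is the zero-Bohr-frequency block of $Z$. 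This yields $\|E\|_1\le 2\beta\|H_{\rm Lamb}\|\,\|\nu\|_{\rm TV}$ (\cref{lem:E_bound_general_nu}). Only $\|H_{\rm Lamb}\|=\mathcal{O}(\beta)$ is used here; the Gibbs defect enters only the mixing-time analysis.

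Finally, with your shifted exponential, $\widehat\nu$ does not decay: it tends to a pure phase $e^{\pm i\omega T_0}$ of modulus $1/T_0$, so $\nu$ has an atom at $T_0$. It is still a finite measure with $\|\nu\|_{\rm TV}=\mathcal{O}(T_0^{-1})$, but proving this needs a renewal-type argument. The paper instead takes $\mu_0(t)=(t-1)^3e^{-(t-1)}/6$, so that $\widehat\nu_0=\mathcal{O}(|\omega|^{-3})$. Then $\widehat\nu_0,\widehat\nu_0''\in L^1$ directly give $\|\nu\|_{L^1}=\mathcal{O}(T_0^{-1})$ (\cref{lem:nu_existence}).
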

The rigorous version of~\cref{thm:fixed_point_approx} is established in the first part of~\cref{thm:end_to_end_ding} for the setting considered in~\cite{ding2025endtoendefficientquantumthermal}, where all constants and technical details are made explicit. Its proof is based on asymptotic analysis together with perturbation theory for fixed points. A high-level overview of the argument will be given in~\cref{sec:proof_overview}. We note that the decomposition~\eqref{eqn:decompose2} in~\cref{thm:fixed_point_approx} is not automatically satisfied for general $\mathcal{L}$. To make this decomposition exist, the transition part of $\mathcal{L}$ must exactly satisfy the $\rho_\beta$-KMS detailed balance condition defined in~\cref{def:DBC-general}. This requires a careful construction of the system-bath interaction model, as shown in~\cref{sec:system_bath_approximation}. \rev{Moreover, in existing algorithms, since the evolution time is finite, the leading order approximation \eqref{eqn:approx2} holds only up to a truncation error depending on $T_0$. This truncation error decays exponentially in $T_0$ and can therefore be made negligible by taking $T_0$ to be logarithmically large. See \cref{thm:ding_approx} for the setup of \cite{ding2025endtoendefficientquantumthermal}. A more detailed discussion of this issue is provided in~\cref{sec:applications}.}



Next, the fixed-time decomposition {$\Phi_{\alpha,T}$}$(\rho)\approx \mathcal{U}_S(T)\circ \exp(\alpha^2 \mathcal{L}) \circ \mathcal{U}_S(T)$ suggests, after averaging over $T\sim\mu$, that $\tau_{\rm mix,\Phi_{\alpha}}(\varepsilon)$ often scales as $\alpha^{-2}$, which in turn implies that the fixed-point approximation error scales as $\alpha^2$. This heuristic is made more precise in the following informal theorem:


\begin{thm}[Mixing Time, Informal]\label{thm:mixing_time}
\rev{Suppose that the fixed-time channels {$\Phi_{\alpha,T}$} satisfies the leading order approximation  $\Phi_{\alpha,T}(\rho)
    =
    \mathcal{U}_S(T)\circ \exp(\alpha^2\mathcal{L})\circ \mathcal{U}_S(T)(\rho)
    +
    \mathcal{O}_{\beta,\sigma}(\alpha^4)$ for $T \in \mathrm{supp}(\mu)$ with the Lindbladian $\mathcal{L}$ satisfying the decomposition~\eqref{eqn:decompose2}.   Assume $\mathcal{L}_{\rm KMS}$ has a spectral gap $\lambda_{\rm gap} > 0$ and
\begin{equation}\label{eqn:H_Lamb_condition}
\left\|\rho_\beta^{-1/4}H_{\rm Lamb}\rho_\beta^{1/4}
-\rho_\beta^{1/4}H_{\rm Lamb}\rho_\beta^{-1/4}\right\|
=
\mc{O}\left(\lambda_{\rm gap}\right)\,.
\end{equation}}
Let {$\Phi_\alpha=\mathbb E_{T\sim\mu}\Phi_{\alpha,T}$}.
If $\alpha=\mathcal{O}((\lambda_{\rm gap})^{1/2})$, the mixing time of $\Phi_{\alpha}$ can be bounded by
\[
\tau_{\rm mix,\Phi_{\alpha}}=\widetilde{\mathcal{O}}\left(\alpha^{-2}\lambda^{-1}_{\rm gap}\right)\,.
\]
Furthermore,
\[
\left\|\rho_{\rm fix}(\Phi_\alpha) - \rho_\beta\right\|_1=\widetilde{\mathcal{O}}\left(\alpha^2\lambda^{-1}_{\rm gap}\right)\,,
\]
where we hind the dependency on $\beta$ and $\|H\|.$
\end{thm}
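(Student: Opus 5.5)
The plan is to reduce the mixing-time bound for $\Phi_\alpha$ to a contraction estimate in the KMS-weighted norm $\|X\|_{\rho_\beta^{-1}} := \|\rho_\beta^{-1/4}X\rho_\beta^{-1/4}\|_2$ (the norm associated with the $\chi^2$-divergence). I would write $\mathcal{L}=\mathcal{L}_0+\mathcal{P}$, where $\mathcal{L}_0$ collects $\mathcal{L}_{\rm KMS}$ together with the part of $-i[H_{\rm Lamb},\cdot]$ that commutes with $\rho_\beta$. The remainder $\mathcal{P}$ is the non-commuting part of the Lamb shift.

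\textbf{Step 1: the unperturbed part contracts.} Conjugating by $\rho_\beta^{-1/4}$ and using the KMS self-adjointness from \cref{def:DBC-general}, $\mathcal{L}_{\rm KMS}$ becomes self-adjoint on Hilbert--Schmidt space with gap $\lambda_{\rm gap}$. The commuting coherent part becomes anti-Hermitian and does not change the real part of the quadratic form. Hence $\|e^{t\mathcal{L}_0}X\|_{\rho_\beta^{-1}}\le e^{-\lambda_{\rm gap}t}\|X\|_{\rho_\beta^{-1}}$ for all traceless $X$, as in the proof of \cref{thm:integer_mixing_time_upper_bound}.

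\textbf{Step 2: the perturbation is relatively small.} In conjugated coordinates, $-i[H_{\rm Lamb},\cdot]$ becomes $X\mapsto -i(\tilde H X - X\tilde H^\dagger)$ with $\tilde H=\rho_\beta^{-1/4}H_{\rm Lamb}\rho_\beta^{1/4}$. Its symmetric (non-skew) part is controlled by $\|\tilde H-\tilde H^\dagger\|$, which is exactly the quantity in \eqref{eqn:H_Lamb_condition}. So the Hermitian part of the conjugated generator is at most $-\lambda_{\rm gap}+C\cdot\mathcal{O}(\lambda_{\rm gap})$ on the complement of the (perturbed) fixed direction. If the constant in \eqref{eqn:H_Lamb_condition} is small enough, say $\le\lambda_{\rm gap}/4$, I get the contraction $\|e^{t\mathcal{L}}X\|_{\rho_\beta^{-1}}\le e^{-\lambda_{\rm gap}t/2}\|X\|_{\rho_\beta^{-1}}$ modulo the drift of the fixed point.

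The subtlety is that $\rho_\beta$ is no longer exactly fixed. I would handle this by working with differences $X=\rho-\rho'$ of two states, which are traceless. Then the numerical-range argument bounds $\frac{d}{dt}\|X\|^2$ directly, with no reference to a fixed point. Trace-zero operators are preserved by the dynamics, and on that subspace the Hermitian part of the $\mathcal{L}_{\rm KMS}$ form is $\le-\lambda_{\rm gap}$. I expect this step to be the main obstacle. The gap inequality is with respect to the $\rho_\beta$-orthogonal complement of $I$ in the adjoint picture, and one must check that the trace-zero condition corresponds to orthogonality to $\rho_\beta^{1/2}$ in conjugated coordinates. It does: $\Tr X=\langle \rho_\beta^{1/2},\rho_\beta^{-1/4}X\rho_\beta^{-1/4}\rangle$, so the gap applies.

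\textbf{Step 3: lift to the channel.} The unitary $\mathcal{U}_S(T)$ commutes with $\rho_\beta$ and is an isometry in the weighted norm. Hence each $\widetilde\Phi_{\alpha,T}$ contracts differences by $e^{-\lambda_{\rm gap}\alpha^2/2}$, and so does the average over $T\sim\mu$ by convexity. Adding the $\mathcal{O}(\alpha^4)$ error from the assumed approximation, and converting norms via $\|\cdot\|_1\le\|\cdot\|_{\rho_\beta^{-1}}\le\|\rho_\beta^{-1/2}\|\,\|\cdot\|_1$, I would proceed by iterating over $k$ steps. Blocks of $k\sim(\lambda_{\rm gap}\alpha^2)^{-1}\log(\|\rho_\beta^{-1/2}\|)$ steps contract the trace distance by $1/2$, up to an accumulated error $k\alpha^4\|\rho_\beta^{-1/2}\|$. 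Requiring this error to be $\le1/4$ is ensured by $\alpha^2\lesssim\lambda_{\rm gap}$ up to logs, which is the hypothesis $\alpha=\mathcal{O}(\lambda_{\rm gap}^{1/2})$. The block map is then a strict trace-norm contraction, giving $\tau_{\rm mix}=\widetilde{\mathcal{O}}(\alpha^{-2}\lambda_{\rm gap}^{-1})$ and uniqueness of the fixed point.

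\textbf{Step 4: fixed-point bound.} Insert this mixing time into \cref{thm:fixed_point_approx} (whose cancellation mechanism for the averaged channel I take as given) with $\varepsilon=\alpha^2$. This yields $\|\rho_{\rm fix}(\Phi_\alpha)-\rho_\beta\|_1=\mathcal{O}(\alpha^2+\alpha^4\tau_{\rm mix})=\widetilde{\mathcal{O}}(\alpha^2\lambda_{\rm gap}^{-1})$.
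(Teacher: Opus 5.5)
Your proposal is correct (after one fixable slip in the error accounting, described in the second paragraph), and it reaches the result by a partly different route from the paper.

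\textbf{Steps 1--3 up to $\widetilde\Phi_\alpha$.} These match the paper's \cref{lem:gap_stability_lamb_shift}. The paper conjugates by $\rho_\beta^{-1/4}$, bounds the Hermitian part of $-i[H_{\rm Lamb},\cdot]$ by the Gibbs-commutation defect, and applies coercivity on the trace-zero subspace orthogonal to $\sqrt{\rho_\beta}$. It then lifts the contraction to $\widetilde\Phi_{\alpha,T}$ and $\widetilde\Phi_\alpha$ by unitary invariance and convexity. Your split $\mathcal L=\mathcal L_0+\mathcal P$ is unnecessary: the part of $H_{\rm Lamb}$ commuting with $\rho_\beta$ contributes nothing to the Hermitian part, so the paper treats $-i[H_{\rm Lamb},\cdot]$ as a whole.

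\textbf{Passing from $\widetilde\Phi_\alpha$ to $\Phi_\alpha$.} Here you genuinely differ. The paper uses the trajectory-comparison \cref{lem:mixing_transfer_close_channels}: if $\tau_{{\rm mix},\widetilde\Phi_\alpha}(\varepsilon/2)\,\|\Phi_\alpha-\widetilde\Phi_\alpha\|_{1\to1}\le\varepsilon/2$, then $\tau_{{\rm mix},\Phi_\alpha}(2\varepsilon)\le\tau_{{\rm mix},\widetilde\Phi_\alpha}(\varepsilon/2)$.
\begin{itemize}
\item The paper's route is short and yields the additive logarithm $\log(\|\rho_\beta^{-1/2}\|/\varepsilon)$. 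However, it forces an $\varepsilon$-dependent coupling $\alpha^2\lesssim\varepsilon\lambda_{\rm gap}/(\sigma\beta\log\sigma\cdot\log(\cdot))$, and it takes uniqueness of $\rho_{\rm fix}(\Phi_\alpha)$ as a hypothesis rather than proving it. For the same reason, the paper cannot set $\varepsilon=\alpha^2$ in the fixed-point bound as you do. It instead chooses $\varepsilon\asymp\alpha^2C/\lambda_{\rm gap}$ to saturate the transfer condition.
\item Your block argument shows that $\Phi_\alpha^k$ is a strict trace-norm contraction on traceless operators. This gives uniqueness of the fixed point of $\Phi_\alpha$ via Banach applied to the block map. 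It also gives geometric convergence at every precision under the $\varepsilon$-independent condition $\alpha^2\lesssim\lambda_{\rm gap}/(C\log\|\rho_\beta^{-1/2}\|)$, which is closer to the hypothesis $\alpha=\mathcal O(\lambda_{\rm gap}^{1/2})$ as stated. The cost is a product $\log\|\rho_\beta^{-1/2}\|\cdot\log(1/\varepsilon)$ in place of a sum.
\end{itemize}

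\textbf{Correction to the accumulated error.} The accumulated error must not carry the factor $\|\rho_\beta^{-1/2}\|$. As you wrote it, $k\alpha^4\|\rho_\beta^{-1/2}\|\le 1/4$ with $k\sim\log\|\rho_\beta^{-1/2}\|/(\lambda_{\rm gap}\alpha^2)$ would require $\alpha^2\lesssim\lambda_{\rm gap}/\|\rho_\beta^{-1/2}\|$. That is exponentially small in $n$, not ``up to logs.'' Instead, telescope on the traceless difference $X=\rho-\rho'$:
\[
\Phi_\alpha^kX-\widetilde\Phi_\alpha^kX=\sum_{j=0}^{k-1}\Phi_\alpha^{k-1-j}\bigl(\Phi_\alpha-\widetilde\Phi_\alpha\bigr)\widetilde\Phi_\alpha^{j}X\,.
\]
Since both maps are CPTP and hence have $1\to1$ norm one, this gives
\[
\|\Phi_\alpha^kX\|_1\le\Bigl(\|\rho_\beta^{-1/2}\|\,e^{-\lambda_{\rm gap}\alpha^2k/2}+k\,\|\Phi_\alpha-\widetilde\Phi_\alpha\|_{1\to1}\Bigr)\|X\|_1\,.
\]
The error is multiplicative in $\|X\|_1$, which is what makes the block map a genuine contraction rather than an affine bound with a floor. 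It is also free of $\|\rho_\beta^{-1/2}\|$, so only $\log\|\rho_\beta^{-1/2}\|$ enters, through $k$. With this fix your Steps 3 and 4 go through as claimed.
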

The rigorous version of~\cref{thm:mixing_time} is established in the second part of~\cref{thm:end_to_end_ding} for the setting considered in~\cite{ding2025endtoendefficientquantumthermal}. The proof is based on perturbation theory for mixing times and is in a similar spirit to the approach developed in~\cite{wang2025lindbladdynamicsrigorousguarantees}. A high-level overview of the argument will be presented in~\cref{sec:proof_overview}.

In~\cref{thm:mixing_time}, the constant depends on $\|H\|,\beta$, and other parameters, but the scaling with respect to $\lambda_{\rm gap}$, and $\varepsilon$ are made explicit. Ignoring other constant dependence and assume $\lambda_{\rm gap}=\Theta(1)$, the above theorem implies that, to achieve $\varepsilon$-accuracy, $\alpha=\mathcal{O}\left(\varepsilon^{-1/2}\right)$ suffices, and the mixing time scales as $\widetilde{\mathcal{O}}\left(\varepsilon^{-1}\right)$ with the total Hamiltonian simulation time {$T_{\rm total}=\tau_{\rm mix,\Phi_{\alpha}}(\varepsilon)\cdot\mathbb E_{T\sim\mu}(T)$}$=\widetilde{\mathcal{O}}\left(\varepsilon^{-1}\right)$. This justifies the claim that the algorithm achieves end-to-end complexity in $\varepsilon^{-1}$. In~\cref{thm:mixing_time}, the condition on $H_{\rm Lamb}$ is needed to ensure that the Lamb shift term does not significantly affect the spectral gap of $\mathcal{L}_{\rm KMS}$. In addition, the condition $\alpha=\mathcal{O}((\varepsilon\lambda_{\rm gap})^{1/2})$ is needed to ensure that the approximation error in~\eqref{eqn:Phi_alpha_approx} does not effect the mixing time bound. For all three setups in~\cref{sec:system_bath_approximation}, this can be achieved by slightly modifying the algorithm and tuning the paramter $\sigma=\Theta(\lambda^{-1}_{\rm gap})$ in the system--bath interaction model and won't introduce additional $\varepsilon$ dependence in the end-to-end complexity.


\subsection{Applications to Existing System-Bath Algorithms}\label{sec:applications}
In this section, we apply the general results of~\cref{thm:fixed_point_approx} and~\cref{thm:mixing_time} to the specific system--bath interaction model introduced in~\cite{ding2025endtoendefficientquantumthermal,lloyd2025quantumthermal,hahn2026efficientquantumthermalstate} and derive rigorous bounds for the algorithm in~\cite{ding2025endtoendefficientquantumthermal}. Throughout this work, we treat Hamiltonian simulation as the dominant cost of the algorithm and therefore measure the end-to-end complexity by the total time-dependent Hamiltonian simulation time $T_{\rm total}$. This quantity captures the leading contribution to the overall runtime. Once a specific Hamiltonian simulation method is chosen, such as the second-order Trotter formula used in~\cite{ding2025endtoendefficientquantumthermal}, it is straightforward to translate $T_{\rm total}$ into a gate complexity bound by analyzing the associated simulation error and implementation cost. In this work, we will not repeat this part of the analysis explicitly; further details can be found, for example, in~\cite[Appendix F.4]{ding2025endtoendefficientquantumthermal}.

For simplicity, we do not repeat the same arguments in full for the algorithms in~\cite{hahn2026efficientquantumthermalstate,lloyd2025quantumthermal}, but instead provide an informal discussion of how our general results apply to these algorithms after some modification.

\subsubsection{End-to-end Complexity Analysis of End-to-End Efficient Quantum Thermal State Preparation~\texorpdfstring{\cite{ding2025endtoendefficientquantumthermal}}{Lg}}\label{sec:ding_statement}

Following the discussion in~\cref{sec:system_bath_approximation}, we give additional assumptions to the~\cref{setup:ding}:
\begin{assumption}\label{assumption:ding_parameters}
\begin{itemize}
\item We set the Gaussian width $\sigma = \Omega(\beta)$ and the coupling strength $\alpha = o(1)$.
\item We set
\[
\tilde{\beta}=\frac{2\beta}{2-\beta^2/(4\sigma^2)},
\qquad
g(\omega)=\frac{\beta}{\sqrt{2\pi\left(2-\beta^2/(4\sigma^2)\right)}}\exp\left(-\frac{(\beta\omega+1)^2}{2\left(2-\beta^2/(4\sigma^2)\right)}\right),
\qquad
f(t)=\frac{e^{-t^2/(4\sigma^2)}}{\sqrt{\sigma\sqrt{2\pi}}}.
\]
\item The evolution time $T$ in each iteration is independently drawn from $\mu(t) = \mu_0(t/T_0)/T_0$, where
\[
\mu_0(t) = \frac{(t-1)^{3}\,e^{-(t-1)}}{6}\,\mathbf{1}_{t \geq 1}\,,
\]
and
\[
T_0\ge
2\sigma\sqrt{\log\!\left((\alpha^2\beta\log(\sigma))^{-1}\right)}\,.
\]
\end{itemize}
\end{assumption}
We have the following theorem that provides rigorous bounds on the fixed-point approximation and mixing time for the algorithm in~\cite{ding2025endtoendefficientquantumthermal}.
\begin{thm}\label{thm:end_to_end_ding}
Let $H$ be an $n$-qubit Hamiltonian and $\beta > 0$ the target inverse temperature. Consider the {averaged} quantum channel {$\Phi_\alpha:=\mathbb E_{T\sim\mu}(\Phi_{\alpha,T})$, where $\Phi_{\alpha,T}$ is defined in~\cref{eqn:Phi_alpha},} under~\cref{setup:ding} and \cref{assumption:ding_parameters}.  Then, for every $\varepsilon>0$, the fixed point of the channel satisfies
\begin{equation}\label{eqn:fixed_point_approx_ding}
\bigl\|\rho_{\rm fix}(\Phi_\alpha)-\rho_\beta\bigr\|_1
\le
\varepsilon+\widetilde{\mathcal O}\!\left(\sigma\beta\alpha^2\, t_{{\rm mix},\Phi_\alpha}(\varepsilon)\right),
\end{equation}
where $\widetilde{\mathcal O}$ hides logarithmic factors in $\sigma$. Here, $t_{{\rm mix},\Phi_\alpha}(\varepsilon)$ is the rescaled mixing time as defined in~\cref{def:mixing_time}~\eqref{eqn:rescaled_mixing_time}.

Suppose further that the KMS-detailed-balance Lindbladian $\mathcal{L}_{\rm KMS}$ in the decomposition~\cref{eqn:Lindbladian_approx_ding} is primitive with spectral gap $\lambda_{\rm gap} > 0$. If
\[
\sigma=\Theta\!\left(\frac{\beta^2}{\lambda_{\rm gap}}\right),
\qquad
\alpha^2=\widetilde{\Theta}\!\left(\frac{\varepsilon\lambda_{\rm gap}^2}{\beta^4\|H\|}\right),
\qquad
T_0\ge
2\sigma\sqrt{\log\!\left((\alpha^2\beta\log(\sigma))^{-1}\right)}
\]
in~\cref{assumption:ding_parameters}, then
\begin{equation}\label{eqn:mix_ding}
\tau_{\mathrm{mix},\Phi_\alpha}(\varepsilon)=\widetilde{\mathcal O}\!\left(\frac{\beta^5 \|H\|^2}{\lambda_{\rm gap}^3\varepsilon}\right),\quad \bigl\|\rho_{\rm fix}(\Phi_\alpha) - \rho_\beta\bigr\|_1\leq \varepsilon.
\end{equation}
\end{thm}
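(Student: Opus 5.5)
The plan is to prove the fixed-point bound \eqref{eqn:fixed_point_approx_ding} by a perturbative construction of an \emph{approximate} fixed point $\widetilde\rho=\rho_\beta+\alpha^2\delta$. I will then convert its small residual into a fixed-point error using the mixing time. The conversion is standard: if $\|\Phi_\alpha(\widetilde\rho)-\widetilde\rho\|_1\le r$, then writing $\widetilde\rho-\rho_{\rm fix}=\Phi_\alpha^{\tau}(\widetilde\rho)-\rho_{\rm fix}+\sum_{k<\tau}\Phi_\alpha^{k}(\widetilde\rho-\Phi_\alpha(\widetilde\rho))$ with $\tau=\tau_{{\rm mix},\Phi_\alpha}(\varepsilon)$ gives $\|\rho_{\rm fix}-\widetilde\rho\|_1\le\varepsilon+\tau r$. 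For this it suffices that $\widetilde\rho$ be Hermitian of unit trace, which $\delta$ will be. It therefore suffices to show $r=\widetilde{\mathcal O}(\sigma\beta\alpha^4)$ and $\|\delta\|_1=\mathcal O(1)$, since then $\alpha^2\|\delta\|_1$ is absorbed by the final bound.

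The construction of $\delta$ proceeds as follows. First, average \cref{thm:ding_approx} over $T\sim\mu$. The truncation term $\alpha^2\sigma e^{-T^2/(4\sigma^2)}$ is $\mathcal O(\alpha^4\beta\log\sigma)$ by the choice of $T_0$ in \cref{assumption:ding_parameters}, so
\[
\Phi_\alpha=\mathbb E_T\,\mathcal U_S(2T)+\alpha^2\,\mathbb E_T\,\mathcal U_S(T)\mathcal L\,\mathcal U_S(T)+\widetilde{\mathcal O}(\alpha^4\sigma).
\]
Here I expand $e^{\alpha^2\mathcal L}$ to second order and use $\|\mathcal L\|_{1\to1}=\mathcal O(1)$. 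Next, apply this expansion to $\rho_\beta$. Since $\mathcal U_S(T)\rho_\beta=\rho_\beta$ and $\mathcal L_{\rm KMS}(\rho_\beta)=0$ (this is exactly where the KMS structure of the transition part, through \cref{thm:kms_dbc_lindbladian_characterization}, enters), only $C:=-i[H_{\rm Lamb},\rho_\beta]$ survives at order $\alpha^2$. Killing the $\alpha^2$ residual then amounts to solving
\[
\bigl(\mathrm{id}-\mathbb E_T\,\mathcal U_S(2T)\bigr)\delta=\mathbb E_T\,\mathcal U_S(T)\,C .
\]
In the Bohr-frequency decomposition, $C$ has no $\nu=0$ component because $\rho_\beta$ is diagonal in the energy basis. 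On each $\nu\neq0$ block the equation reads $(1-\widehat\mu(2\nu))\delta_\nu=\widehat\mu(\nu)C_\nu$, up to sign conventions. Because $\mu$ is absolutely continuous, $\widehat\mu(2\nu)\ne1$ for $\nu\ne0$; as $\nu\to0$ one has $1-\widehat\mu(2\nu)\approx-2i\nu\,\mathbb E_\mu T$. Meanwhile $C_\nu$ carries a factor $(1-e^{-\beta\nu})\sim\beta\nu$. Concretely, I write $C=\rho_\beta^{1/2}\bigl(\rho_\beta^{-1/4}H_{\rm Lamb}\rho_\beta^{1/4}-\rho_\beta^{1/4}H_{\rm Lamb}\rho_\beta^{-1/4}\bigr)\rho_\beta^{1/2}$ up to a factor, and extract the $\sinh(\beta\nu/4)$ multiplier. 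The resulting Schur multiplier $m(\nu)=\widehat\mu(\nu)\,\tanh$-type$(\beta\nu)/(1-\widehat\mu(2\nu))$ is bounded near $0$ by $\mathcal O(\beta/T_0)$ times a smooth function. The residual after inserting $\widetilde\rho$ is then $\alpha^4(\mathbb E_T\,\mathcal U_S\mathcal L\,\mathcal U_S)\delta$ plus the $\mathcal O(\alpha^4\sigma\log\sigma)$ remainder of \cref{thm:ding_approx}, which gives $r=\widetilde{\mathcal O}(\sigma\beta\alpha^4)$ once $\|\delta\|_1$ is bounded.

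The main obstacle is turning this entrywise formula into a \emph{trace-norm} bound on $\delta$ that is uniform in the dimension and the spectrum of $H$. I plan to show that $m(\nu)$ is the Fourier transform of an $L^1$ kernel $k$, so that $\delta=\int k(t)\,\mathcal U_S(t)[\widetilde C]\,dt$ with $\|\delta\|_1\le\|k\|_{L^1}\|\widetilde C\|_1$. The polynomial prefactor $(t-1)^3$ in $\mu_0$ makes $\widehat\mu_0$ decay like $|\nu|^{-4}$, and the exponential tail makes it analytic in a strip. Together these should let me verify $\|k\|_{L^1}=\mathcal O(\beta\cdot\text{polylog})$ by a Wiener-type argument: control $m$ and two derivatives in $L^2$, and check that $1-\widehat\mu_0$ has its only zero at $\nu=0$ and that this zero is simple. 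The size of $\|\widetilde C\|_1$ would come from \cref{thm:ding_approx} and the bound \eqref{eqn:H_Lamb_error}.

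For the mixing-time part, I split $H_{\rm Lamb}=H_{\rm Lamb}^{0}+H_{\rm Lamb}^{\perp}$, where $H^{0}$ is the block-diagonal part commuting with $\rho_\beta$. I treat $\mathcal L^{0}=-i[H^0,\cdot]+\mathcal L_{\rm KMS}$ via \cref{thm:integer_mixing_time_upper_bound}, which gives contraction $e^{-\lambda_{\rm gap}\alpha^2}$ in the KMS-weighted $\chi^2$ norm for $\mathbb E_T\,\mathcal U_S e^{\alpha^2\mathcal L^0}\,\mathcal U_S$. The remaining pieces are the perturbation $H^\perp$ and the $\widetilde{\mathcal O}(\alpha^4\sigma)$ channel error. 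Using \eqref{eqn:H_Lamb_error}, the first has weighted operator norm $\mathcal O(\beta^2/\sigma)$; with $\sigma=\Theta(\beta^2/\lambda_{\rm gap})$ it costs at most $\lambda_{\rm gap}/4$ of the rate. The second is $\le\lambda_{\rm gap}\alpha^2/4$ in the weighted norm after paying a factor $\|\rho_\beta^{-1/2}\|$ inside the logarithm, by the choice of $\alpha^2$. This yields a per-step contraction $1-\lambda_{\rm gap}\alpha^2/2$ of the difference of two states, hence $\tau_{\rm mix}=\widetilde{\mathcal O}(\alpha^{-2}\lambda_{\rm gap}^{-1})$. Substituting the stated $\alpha^2$ gives \eqref{eqn:mix_ding}, and feeding $t_{\rm mix}=\widetilde{\mathcal O}(\lambda_{\rm gap}^{-1})$ into \eqref{eqn:fixed_point_approx_ding} gives error $\le\varepsilon$.
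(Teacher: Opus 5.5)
Your fixed-point half follows the paper's route. The paper also builds $\rho^*=\rho_\beta+\alpha^2E$ with $E=\int e^{-iHt}Ye^{iHt}\,\mathrm d\nu(t)$ and $\widehat\nu(\omega)=\omega\,\widehat\mu(\omega)/(1-\widehat\mu(2\omega))$. It proves $\|\nu\|_{L^1}=\mathcal O(T_0^{-1})$ from $L^1$ bounds on $\widehat\nu_0$ and $\widehat\nu_0''$, and then telescopes over $\tau_{{\rm mix},\Phi_\alpha}(\varepsilon)$ steps.

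One correction there. Once the factor of $\nu$ is divided out, the source is essentially $H_{\rm Lamb}$ itself: the $\sinh(\beta\nu/4)$ carried by the commutation defect is exactly what the pole of $1/(1-\widehat\mu(2\nu))$ at $\nu=0$ consumes. Its trace norm must therefore come from $\|H_{\rm Lamb}\|=\mathcal O(\beta)$, not from \eqref{eqn:H_Lamb_error}. The paper writes $Y=Z-\mathrm{diag}(Z)$ with $Z=i\beta\int_0^1\rho_\beta^{1-u}H_{\rm Lamb}\rho_\beta^{u}\,\mathrm du$ and applies H\"older, giving $\|E\|_1=\mathcal O(\beta^2/T_0)$.

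The genuine gap is in the mixing-time half. There you absorb the $\widetilde{\mathcal O}(\alpha^4\sigma)$ error of \cref{thm:ding_approx} into a per-step contraction $1-\lambda_{\rm gap}\alpha^2/2$ in the $\rho_\beta^{-1/4}$-weighted norm. That error is only a $1\to1$ bound, and the only available comparison is $\|X\|_1\le\|\rho_\beta^{-1/4}X\rho_\beta^{-1/4}\|_2\le\|\rho_\beta^{-1/2}\|\,\|X\|_1$. Hence the weighted operator norm of $\Phi_\alpha-\widetilde\Phi_\alpha$ is bounded only by $\|\rho_\beta^{-1/2}\|\cdot\mathcal O(\alpha^4\sigma\beta\log\sigma)$, with $\|\rho_\beta^{-1/2}\|\ge 2^{n/2}$.

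In a one-step estimate this factor multiplies the error; there is no logarithm for it to hide in. Making the error at most $\lambda_{\rm gap}\alpha^2/4$ would force $\alpha^2$ to be exponentially small in $n$, and then $\tau_{\rm mix}\sim(\alpha^2\lambda_{\rm gap})^{-1}$ would be exponential. Tellingly, your scheme never produces the factor $\varepsilon$ in the prescribed $\alpha^2$.

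The paper avoids this in three steps.
\begin{enumerate}
\item It proves weighted contraction only for the ideal averaged channel $\widetilde\Phi_\alpha=\mathbb E_{T}\,\mathcal U_S(T)\circ e^{\alpha^2\mathcal L}\circ\mathcal U_S(T)$.
\item It converts this into a trace-norm mixing time for $\widetilde\Phi_\alpha$, which is where $\log\|\rho_\beta^{-1/2}\|$ legitimately enters.
\item It transfers to $\Phi_\alpha$ by the trace-norm stability result \cref{lem:mixing_transfer_close_channels}: if $\tau_{{\rm mix},\widetilde\Phi_\alpha}(\varepsilon/2)\,\|\Phi_\alpha-\widetilde\Phi_\alpha\|_{1\to1}\le\varepsilon/2$, then $\tau_{{\rm mix},\Phi_\alpha}(2\varepsilon)\le\tau_{{\rm mix},\widetilde\Phi_\alpha}(\varepsilon/2)$.
\end{enumerate}
In the last step errors accumulate linearly in trace norm over $\tau$ steps. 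That condition is precisely what yields $\alpha^2\propto\varepsilon\lambda_{\rm gap}/\bigl(\sigma\beta\log\sigma\,\log(\|\rho_\beta^{-1/2}\|/\varepsilon)\bigr)$, i.e.\ the stated $\alpha^2=\widetilde\Theta(\varepsilon\lambda_{\rm gap}^2/(\beta^4\|H\|))$.

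A smaller caution on the Lamb shift: the claim that $-i[H^\perp,\cdot\,]$ has weighted operator norm $\mathcal O(\beta^2/\sigma)$ is false as written. After the similarity transform $\mathcal K(\rho_\beta,\cdot)$, set $A=\rho_\beta^{-1/4}H^\perp\rho_\beta^{1/4}$. Only the Hermitian part, a two-sided multiplication by $\pm\frac{i}{2}(A-A^\dagger)$, is bounded by the defect. The anti-Hermitian part involves $A+A^\dagger$ and is not controlled by \eqref{eqn:H_Lamb_error}.

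So the Lamb shift must enter through the coercive energy estimate on the generator, as in \cref{lem:gap_stability_lamb_shift}, not through a channel-level comparison of $e^{\alpha^2\mathcal L}$ with $e^{\alpha^2\mathcal L^0}$. With that estimate, the split $H_{\rm Lamb}=H^0+H^\perp$ becomes unnecessary, since a drift commuting with $\rho_\beta$ contributes nothing to the Hermitian part.
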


In~\cref{thm:end_to_end_ding},~\eqref{eqn:fixed_point_approx_ding} establishes a rigorous bound on the fixed-point approximation error, while~\eqref{eqn:mix_ding} provides an explicit estimate of the mixing time in terms of $\beta$, $\|H\|$, $\lambda_{\rm gap}$, and $\varepsilon$. In particular, these results show that, by appropriately tuning the parameters in the system--bath interaction model, the algorithm in~\cite{ding2025endtoendefficientquantumthermal} can attain arbitrarily small error $\varepsilon$ by taking $\alpha \to 0$ and fixing large enough $\sigma$. 
Based on~\cref{thm:end_to_end_ding}, we obtain the following corollary, which gives an explicit bound on the number of iterations required to achieve $\varepsilon$-accuracy, together with an estimate of the resulting end-to-end total Hamiltonian evolution time.
\begin{cor}\label{cor:end_to_end_ding}
Under the choice of parameters in~\cref{thm:end_to_end_ding}, any initial state $\rho_0$, after
$
N=\widetilde{\mathcal O}\!\left(
\frac{\|H\|^2}{\lambda_{\rm gap}^3\varepsilon}
\right)
$
applications of $\Phi_\alpha$, the output satisfies $\bigl\|\Phi_\alpha^N(\rho_0) - \rho_\beta\bigr\|_1 \leq \varepsilon$. Moreover, each iteration uses averaged Hamiltonian evolution time $\widetilde{\mathcal O}(1/\lambda_{\rm gap})$. In particular, if
\begin{equation}\label{eqn:scaling}
  \|H\| = \Theta(n),\ \lambda_{\rm gap} = \Theta\left(\lambda_0/n\right)\,,
\end{equation}
with a constant $\lambda_0 > 0$ independent of $n$, then the required number of iterations scale as
$
N = \tilde{\mathcal{O}}(n^5 \varepsilon^{-1} \lambda_0^{-3})
$
and {the expectation of} the total evolution time is
$
\tilde{O}(n^6 \varepsilon^{-1} \lambda_0^{-4})\,.
$
Here, we ignore the $\beta$ dependence for simplicity.
\end{cor}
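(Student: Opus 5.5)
The plan is to read off the corollary directly from \cref{thm:end_to_end_ding}, with a triangle inequality and an explicit computation of the mean interaction time under $\mu$. Fix $\varepsilon>0$ and apply \cref{thm:end_to_end_ding} with accuracy $\varepsilon/2$ in place of $\varepsilon$. That is, choose $\sigma=\Theta(\beta^2/\lambda_{\rm gap})$ and $\alpha^2=\widetilde\Theta(\varepsilon\lambda_{\rm gap}^2/(\beta^4\|H\|))$, with $T_0$ as in \cref{assumption:ding_parameters}; halving $\varepsilon$ only changes constants. The theorem then gives two facts:
\[
\bigl\|\rho_{\rm fix}(\Phi_\alpha)-\rho_\beta\bigr\|_1\le \varepsilon/2,
\qquad
\tau_{\mathrm{mix},\Phi_\alpha}(\varepsilon/2)=\widetilde{\mathcal O}\!\left(\frac{\beta^5\|H\|^2}{\lambda_{\rm gap}^3\varepsilon}\right).
\]
Take $N=\tau_{\mathrm{mix},\Phi_\alpha}(\varepsilon/2)$. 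By \cref{def:mixing_time}, $\|\Phi_\alpha^N(\rho_0)-\rho_{\rm fix}(\Phi_\alpha)\|_1\le\varepsilon/2$ for every initial state $\rho_0$. The triangle inequality then gives $\|\Phi_\alpha^N(\rho_0)-\rho_\beta\|_1\le\varepsilon$. Suppressing the $\beta$ factors gives the stated $N=\widetilde{\mathcal O}(\|H\|^2/(\lambda_{\rm gap}^3\varepsilon))$.

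Next I bound the averaged evolution time per iteration. Since $\mu(t)=\mu_0(t/T_0)/T_0$, we have $\mathbb E_\mu(T)=T_0\,\mathbb E_{\mu_0}(t)$. Under $\mu_0$, the variable $t-1$ is Gamma-distributed with shape $4$ and rate $1$ (its density is $(t-1)^3e^{-(t-1)}/6$). Hence $\mathbb E_{\mu_0}(t)=1+4=5$, and $\mathbb E_\mu(T)=5T_0$.

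Now take $T_0$ at its lower bound, $T_0=2\sigma\sqrt{\log((\alpha^2\beta\log\sigma)^{-1})}$. With the chosen $\alpha$ and $\sigma$, the logarithm is polylogarithmic in $\|H\|,\beta,1/\lambda_{\rm gap},1/\varepsilon$ and is absorbed into $\widetilde{\mathcal O}$. This gives $\mathbb E_\mu(T)=\widetilde{\mathcal O}(\sigma)=\widetilde{\mathcal O}(\beta^2/\lambda_{\rm gap})$, i.e.\ $\widetilde{\mathcal O}(1/\lambda_{\rm gap})$ once $\beta$ is ignored. The interaction times in different iterations are i.i.d.\ draws from $\mu$. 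By linearity of expectation, the expected total evolution time is therefore $N\cdot\mathbb E_\mu(T)=\widetilde{\mathcal O}(\|H\|^2/(\lambda_{\rm gap}^4\varepsilon))$.

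Finally, substitute the scaling \eqref{eqn:scaling}, namely $\|H\|=\Theta(n)$ and $\lambda_{\rm gap}=\Theta(\lambda_0/n)$. The iteration count becomes $N=\widetilde{\mathcal O}(n^2\cdot n^3\lambda_0^{-3}\varepsilon^{-1})=\widetilde{\mathcal O}(n^5\varepsilon^{-1}\lambda_0^{-3})$. Multiplying by the per-iteration time $\widetilde{\mathcal O}(n/\lambda_0)$ gives an expected total time of $\widetilde{\mathcal O}(n^6\varepsilon^{-1}\lambda_0^{-4})$.

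The only step needing care is the bookkeeping of logarithmic factors. One must check that $\log((\alpha^2\beta\log\sigma)^{-1})$ and the hidden logs in $\tau_{\rm mix}$ are indeed polylogarithmic in $n$ and $1/\varepsilon$; this holds because $\alpha^{-2}$ and $\sigma$ are polynomial in these quantities. Apart from that, the corollary is a direct consequence of \cref{thm:end_to_end_ding}.
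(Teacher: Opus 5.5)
Your proposal is correct and follows the same route as the paper. You take $N=\tau_{\mathrm{mix},\Phi_\alpha}(\varepsilon/2)$ from \cref{thm:end_to_end_ding} and combine it with the fixed-point bound through the triangle inequality. You compute $\mathbb{E}_{T\sim\mu}(T)=5T_0=\widetilde{\mathcal O}(\beta^2/\lambda_{\rm gap})$ with $T_0$ taken at its lower bound, and then substitute $\|H\|=\Theta(n)$, $\lambda_{\rm gap}=\Theta(\lambda_0/n)$, exactly as the paper does.
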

We put the proof of above theorem and corollary in~\cref{sec:proof_end_to_end_ding}. Compared with the previous results in~\cite{ding2025endtoendefficientquantumthermal,slezak2026polynomialtime,wang2025lindbladdynamicsrigorousguarantees}, our bound improves the dependence on the accuracy from high-order polynomials of $1/\varepsilon$ to linear in $1/\varepsilon$, which is consistent with the first-order approximation of the Lindbladian dynamics. Moreover, based on the discussion in~\cref{sec:conclusion}, we expect that the dependence on $n$ can be further improved by deriving sharper perturbation bounds for the spectral gap and incorporating recent rapid-mixing results. We leave such optimization of the $n$-dependence to future work.

In~\cref{cor:end_to_end_ding}, when $H=\sum_j h_j$ is a local Hamiltonian with $\|h_j\|=\mathcal{O}(1)$, we have $\|H\|=\mathcal{O}(n)$, which implies that the first condition in~\eqref{eqn:scaling} is satisfied. Moreover, the second condition in~\eqref{eqn:scaling} can also be satisfied for a large class of Hamiltonians. First, according to developments in the spectral gap analysis of KMS Lindbladians~\cite{TemmeKastoryanoRuskaiEtAl2010,KastoryanoTemme2013,BardetCapelGaoEtAl2023,rouz2024,DingLiLinZhang2024,kochanowski2024rapid,rouze2024optimal,tong2024fast}, $\mathcal{L}_{\rm KMS}$ from~\cref{assumption:ding_parameters} is primitive and has spectral gap $\lambda_{\rm gap}$ depending on $\sigma$ after a proper choice of $A_S$. Furthermore, a recent develop in~\cite{slezak2026polynomialtime} shows that, this spectral gap is non-decreasing in $\sigma$, which gives an lower bound for $\lambda_{\rm gap}$ for any $\sigma=\Omega(1/\lambda_{\rm gap})$. Below, we list several concrete examples of such local Hamiltonians together with the corresponding choices of $A_S$ that can be covered by~\eqref{eqn:scaling} in the above corollary:
\begin{itemize}
\item \textbf{Local spin Hamiltonians in the high-temperature regime}~\cite{rouz2024,rouze2024optimal}:

Consider a local Hamiltonian $H=\sum_i h_i$ on a $D$-dimensional lattice, where each term $h_i$ is supported on a ball of constant radius, and each qubit $j$ is contained in only constantly many such local terms. Let the system coupling operators be
\[
\mc{A}=\{\pm X_j,\pm Y_j,\pm Z_j\}_{j=1}^n.
\]
Then there exists a constant $\beta_c$, depending only on the locality structure of $H$, such that whenever $0<\beta<\beta_c$ and $\sigma>\beta$, the corresponding spectral gap satisfies $\lambda_{\rm gap}=\Omega(1/n)$.
\item \textbf{Weakly interacting fermionic systems at arbitrary temperature}~\cite{smid2025rapidmixingquantumgibbs}:

Let $H$ be a local fermionic Hamiltonian on the $D$-dimensional lattice $\Lambda=[0,L]^D$ of the form
\begin{equation}\label{eqn:H_fermion}
H = H_0 + H_1 = \sum_{ij} M_{i,j} c_i^\dagger c_j + \varepsilon \sum_j h_j,
\qquad \|h_j\|\le 1,
\end{equation}
where $(M_{i,j})$ is Hermitian, and $c_j^\dagger,c_j$ are the fermionic creation and annihilation operators at site $j$. Assume that each perturbation term $h_j$ is local and parity-preserving, in the sense that it contains an even number of creation and annihilation operators. Moreover, suppose that $H_0$ is $(1,l)$-geometrically local, while $\sum_j h_j$ is $(r_0,l)$-geometrically local. Concretely, every term in $H_0$ is supported on a set of sites with Manhattan diameter at most $1$, every $h_j$ is supported on a set of sites with Manhattan diameter at most $r_0$, and each site $i$ appears in at most $l$ nontrivial terms of the form $c_i^\dagger c_j$ or $h_j$. Choose the system coupling operators as
\[
\mc{A}=\{\pm c_j^\dagger,c_j\}_{j=1}^n.
\]
Then, for every fixed $\beta>0$, there exists a constant $\varepsilon_c=\Omega(1)$, depending only on $r_0,l,D,\beta$, such that for all $0\le \varepsilon<\varepsilon_c$ and $\sigma>\beta$, one has $\lambda_{\rm gap}=\Omega(1/n)$.

\item \textbf{Weakly interacting spin systems at arbitrary temperature}~\cite{tong2024fast,smid2025polynomial,smid2025rapidmixingquantumgibbs}:

Let $H$ be a local Hamiltonian on a $D$-dimensional spin lattice $\Lambda=[0,L]^D$, with total system size $N=(L+1)^D$, given by
\begin{equation}\label{eqn:H}
H = H_0 + H_1 = -\sum_i Z_i + \varepsilon \sum_j h_j,
\qquad \|h_j\|\le 1.
\end{equation}
Here, $H_0=-\sum_i Z_i$ plays the role of the non-interacting part, since its local terms have disjoint supports. The particular choice of $Z_i$ is only for convenience; more generally, one may replace it with other simple gapped local terms whose supports do not overlap. Assume further that the interaction term $H_1$ is $(r_0,l)$-geometrically local. Taking
\[
\mc{A}=\{\pm X_j,\pm Y_j,\pm Z_j\}_{j=1}^n,
\]
we obtain the following: for every fixed $\beta>0$, there exists a constant $\varepsilon_c=\Omega(1)$, depending only on $r_0,l,D,\beta$, such that if $0\le \varepsilon<\varepsilon_c$ and $\sigma>\beta$, then $\lambda_{\rm gap}=\Omega(1/n)$.

\item \textbf{One-dimensional local Hamiltonians at arbitrary temperature}~\cite{bergamaschi2026fastmixingquantumspin}:

Let $H=\sum_{j=1}^n h_j$ be a 1D local Hamiltonian, where the collection $\{h_j\}$ satisfies assumptions analogous to those in~\eqref{eqn:H}, now on a one-dimensional lattice. If we choose
\[
\mc{A}=\{\pm X_j,\pm Y_j,\pm Z_j\}_{j=1}^n,
\]
then for every inverse temperature $0\le \beta<\infty$ and every $\sigma>\beta$, the spectral gap obeys $\lambda_{\rm gap}=\Omega(1/n)$.
\end{itemize}

\subsubsection{Applications to the algorithms in~\texorpdfstring{\cite{hahn2026efficientquantumthermalstate,lloyd2025quantumthermal}}{Lg}}

In this section, we focus on the algorithms proposed in~\cite{hahn2026efficientquantumthermalstate,lloyd2025quantumthermal}, reviewed in~\cref{sec:system_bath_approximation}. In both setups, it is straightforward to verify that the transition part of $\mathcal{D}_{V_{A_S,f,T}}(\rho)$ \textbf{exactly satisfies} the KMS detailed balance condition, and therefore admits an expansion of the same form as~\cref{eqn:Lindbladian_decomposition}. Consequently,~\cref{thm:fixed_point_approx} applies directly in both cases, implying that the fixed point of $\Phi_\alpha$ is close to $\rho_\beta$, with an error scaling as $\alpha^4\tau_{\rm mix,\Phi_\alpha}(\varepsilon)$. Next, by following the calculations in~\cite{hahn2026efficientquantumthermalstate,lloyd2025quantumthermal} together with~\cref{sec:lemmas} of the present work, one can also justify the same bound as in~\eqref{eqn:H_Lamb_error}:
\[
\left\|\rho_\beta^{-1/4}H_{\rm Lamb}\rho_\beta^{1/4}
-\rho_\beta^{1/4}H_{\rm Lamb}\rho_\beta^{-1/4}\right\|
=
\mc{O}\left(\frac{1}{\sigma}\right).
\]
However, obtaining an end-to-end complexity bound presents a technical challenge: one needs a lower bound on the spectral gap of $\mathcal{L}_{\rm KMS}$ uniformly over different values of $\sigma$. Unfortunately, for the choices of $f$ in~\cref{setup:hahn,setup:lloyd}, the Fourier transform $\hat{f}$ is concentrated near $\omega=0$ with width $\sigma^{-1}$. Consequently, transitions between energy levels separated by large gaps are suppressed, which can make the spectral gap of $\mathcal{L}_{\rm KMS}$ small when $\sigma$ is large. Therefore, from a theoretical perspective, it remains unclear whether $\lambda_{\rm gap}(\mathcal{L}_{\rm KMS})$ can in practice stay larger than $\sigma^{-1}$. This issue was also illustrated in~\cite[Appendix G]{ding2025endtoendefficientquantumthermal} through a simple two-level example. On the other hand, we emphasize that the condition~\eqref{eqn:H_Lamb_condition}, namely
\[
\left\|\rho_\beta^{-1/4}H_{\rm Lamb}\rho_\beta^{1/4}
-\rho_\beta^{1/4}H_{\rm Lamb}\rho_\beta^{-1/4}\right\|
=\mathcal{O}\left(\lambda_{\rm gap}\right)\,,
\]
is introduced only as a technical requirement in the spectral-gap perturbation argument. Even if this condition does not hold, it does not necessarily follow that the mixing time becomes large or that the algorithm fails in practice. Clarifying this point remains an interesting open question for future work.

In this work, to address this issue, we consider a slight modification of the algorithms in~\cite{hahn2026efficientquantumthermalstate,lloyd2025quantumthermal}. Specifically, in the setting of~\cref{setup:lloyd}, we choose $\sigma$ in $f$ independently of the parameter $h$ in the definition of $H_E$, and then sample $h$ from a Gaussian distribution. The modified construction is given as follows:
\begin{setup}\label{setup:modified_lloyd}
\begin{itemize}
\item $H_E=-hZ/2$ and $\rho_E=\ket{0}\bra{0}$, where $h$ is sampled from $\mathcal{N}(\beta^{-1},2\beta^{-2}-2\sigma^{-1})$.
\item $B_E=(X_E-iY_E)/2=\ket{1}\bra{0}$. The operator $A_S$ is sampled uniformly from a set of coupling operators $\mc{A}=\{A^i,-A^i\}_i$, where $\{(A^i)^\dagger\}_i=\{A^i\}_i$ and $\|A^i\|\leq 1$.
\item $f(t)=\frac{1}{\sqrt{\sigma}}\exp\left(-\frac{2t^2}{\sigma}\right)$.
\item The evolution time $T$ in each iteration is drawn independently from a distribution same as that in~\cref{assumption:ding_parameters}.
\end{itemize}
\end{setup}
When $\sigma$ is large, $\hat{f}$ is concentrated near $\omega=0$ with width $\sigma^{-1}$. Nevertheless, different choices of $h$ in $H_E$ induce different energy transitions in the system, and the distribution of $h$ is chosen so as to compensate for this concentration effect. More precisely, it is designed so that: (1) the overall transition rates between different energy levels remain sufficiently large, ensuring that the spectral gap of $\mathcal{L}_{\rm KMS}$ does not vanish even when $\sigma$ is large; and (2) the upward and downward transition rates satisfy the detailed balance condition, so that the transition part of $\mathcal{L}$ satisfies KMS detailed balance.

As a consequence, although the bath construction is different, the resulting Lindbladian $\mathcal{L}$ still takes the same form as in~\cref{setup:ding} under the choice in~\eqref{eqn:ding_choice}. In particular, it admits the same decomposition as in~\cref{eqn:Lindbladian_decomposition}, with the same $\mathcal{L}_{\rm KMS}$ and $H_{\rm Lamb}$. Therefore, after applying the same asymptotic analysis and perturbation arguments, we expect the same end-to-end complexity bounds as in~\cref{thm:end_to_end_ding} and~\cref{cor:end_to_end_ding}. Compared with~\cref{setup:ding}, an additional practical advantage of the present construction is that the bath is always initialized in the zero state, which makes the implementation more convenient.

\section{Informal proof and technical overview}\label{sec:proof_overview}

\cmphead{Heuristic idea}\label{para:heuristic}
To understand why the fixed point of $\Phi_\alpha$ is close to $\rho_\beta$, we begin with the formal asymptotic ansatz
\[
\rho_{\rm fix}=\rho_\beta+\alpha^2E+\mathcal O(\alpha^4)
\]
and apply the three stages of {$\Phi_{\alpha,T}$} (forward evolution, Lindbladian, backward evolution) stated in~\cref{thm:ding_approx} to $\rho_{\rm fix}$. Up to $\mathcal{O}(\alpha^4)$ error, we have
\begin{align*}
\rho_{n+1/3} =&~ \rho_\beta + \alpha^2 U_S(T)\,E\,U_S^\dagger(T)\,,\\
\rho_{n+2/3} = &~ \rho_\beta + \alpha^2 U_S(T)\,E\,U_S^\dagger(T) + \alpha^2\bigl(-i[H_{\rm Lamb},\rho_\beta]\bigr) + \mathcal{O}(\alpha^4)\,,\\
\rho_{n+1} = &~\rho_\beta + \alpha^2 U_S(2T)\,E\,U_S^\dagger(2T) + \alpha^2 U_S(T)\bigl(-i[H_{\rm Lamb},\rho_\beta]\bigr)U_S^\dagger(T) + \mathcal{O}(\alpha^4)\,,
\end{align*}
where we again use $[H,\rho_\beta]=0$ and $\mathcal L_{\rm KMS}(\rho_\beta)=0$. We first keep the evolution time $T$ fixed and ask whether one can choose $E$ so that $\rho_{n+1}=\rho_{\rm fix}+\mathcal O(\alpha^4)$. Matching the order-$\alpha^2$ terms gives
\begin{equation}\label{eqn:matching_fixed_T}
U_S(2T)\,E\,U_S^\dagger(2T)-E+U_S(T)\bigl(-i[H_{\rm Lamb},\rho_\beta]\bigr)U_S^\dagger(T)=0\,,
\end{equation}
Writing this equation in the energy eigenbasis $\{|\lambda_j\rangle\}$ of $H$, for every $j\neq k$ we obtain
\[
e^{2i(\lambda_j-\lambda_k)T}E_{j,k}-E_{j,k}
+
e^{i(\lambda_j-\lambda_k)T}\frac{-i(H_{\rm Lamb})_{j,k}\bigl(e^{-\beta\lambda_k}-e^{-\beta\lambda_j}\bigr)}{Z_\beta}
=0\,,
\]
and hence
\[
E_{j,k}
=
\frac{e^{i(\lambda_j-\lambda_k)T}}{1-e^{2i(\lambda_j-\lambda_k)T}}
\cdot
\frac{i(H_{\rm Lamb})_{j,k}\bigl(e^{-\beta\lambda_j}-e^{-\beta\lambda_k}\bigr)}{Z_\beta}.
\]
{There are two complementary interpretations of the formula above. First, for any fixed $T$, it provides a solution to the matching equation~\eqref{eqn:matching_fixed_T}. Ignoring all other factors, this suggests that order-$\alpha^2$ matching can be achieved by choosing $E$ appropriately, and hence that the fixed point can remain close to $\rho_\beta$ when $\alpha$ is small. On the other hand, the same formula also reveals a difficulty associated with a deterministic choice of $T$. The denominator
$
1-e^{2i(\lambda_j-\lambda_k)T}
$
can become arbitrarily small whenever $(\lambda_j-\lambda_k)T$ is close to an integer multiple of $\pi$. Thus, a fixed interaction time does not provide uniform control of $E$ over all energy gaps, even when the numerator is perturbatively small. This is precisely where randomization of the interaction time enters the argument.}

We now replace the fixed time by a random time $T\sim\mu$ and ask for the same order-$\alpha^2$ matching for the fixed point of the average channel $\Phi_\alpha = \mathbb{E}_{T \sim \mu} {\Phi}_{\alpha,T} $. This gives
\[
\mathbb E_{T\sim\mu}\!\left(U_S(2T)\,E\,U_S^\dagger(2T)\right)
-E
+
\mathbb E_{T\sim\mu}\!\left(U_S(T)\bigl(-i[H_{\rm Lamb},\rho_\beta]\bigr)U_S^\dagger(T)\right)
=0\,,
\]
and in the energy basis, this becomes, for $j\neq k$,
\[
E_{j,k}
=
\widehat{\mu}\bigl(2(\lambda_k-\lambda_j)\bigr)E_{j,k}
+
\widehat{\mu}(\lambda_k-\lambda_j)\,
\frac{-i(H_{\rm Lamb})_{j,k}\bigl(e^{-\beta\lambda_k}-e^{-\beta\lambda_j}\bigr)}{Z_\beta}\,.
\]
Solving for $E_{j,k}$ gives
\[
E_{j,k}
=
\frac{\widehat{\mu}(\lambda_k-\lambda_j)}{1-\widehat{\mu}(2(\lambda_k-\lambda_j))}
\cdot
\frac{-i(H_{\rm Lamb})_{j,k}\bigl(e^{-\beta\lambda_k}-e^{-\beta\lambda_j}\bigr)}{Z_\beta}\,.
\]
This solution can be formulated in the operator Fourier transform form
\[
E=\int_{\mathbb R} e^{-iHt}Ye^{iHt}\,d\nu(t),
\]
where
\[
Y_{j,k}:=
-\frac{i(H_{\rm Lamb})_{j,k}}{Z_\beta}\,
\frac{e^{-\beta\lambda_j}-e^{-\beta\lambda_k}}{\lambda_j-\lambda_k},
\]
and $\nu$ is chosen so that
$
\widehat{\nu}(\omega)=\omega\,\frac{\widehat{\mu}(\omega)}{1-\widehat{\mu}(2\omega)},
$
This representation suggests the bound
\[
\|E\|_1\le \|\nu\|_{L^1}\|Y\|_1,
\]
and $\|Y\|_1\le \mathcal O(\beta\|H_{\rm Lamb}\|)$ can be controlled as shown later in~\cref{lem:E_bound_general_nu}.

We therefore choose $\mu$ to satisfy the following two conditions:
\begin{itemize}
    \item $\mu$ is supported on $[T_0,\infty)$. This ensures that the finite-time truncation error in~\cref{thm:ding_approx}~\eqref{eqn:Phi_map_approx} remains small.
    \item The corresponding multiplier $\widehat{\nu}$ comes from an $L^1$ kernel $\nu$. This makes the above operator Fourier transform representation useful, since it allows one to control $\|E\|_1$ through $\|\nu\|_{L^1}\|Y\|_1$.
\end{itemize}
The construction in~\cref{assumption:ding_parameters} is one such choice.

\cmphead{Rigorous Gibbs state approximation}To make this heuristic idea rigorous, we bypass justifying the asymptotic expansion and instead directly construct a trace-one Hermitian operator $\rho^* := \rho_\beta + \alpha^2 E$ using the form suggested above, then verify that it is an approximate fixed point of $\Phi_\alpha$ with controlled distance to $\rho_\beta$. More specifically, our proof consists of the following three steps:
\begin{itemize}
    \item First, by carefully constructing the probability distribution $\mu$ (\cref{lem:nu_existence}), we can show that the error term $E$ has a small norm, which implies that the auxiliary operator $\rho^*$ is close to the Gibbs state $\rho_\beta$:
    \begin{align*}
        \|\rho^*-\rho_\beta\|_1\leq \mathcal{O}_\beta(\alpha^2)\,.
    \end{align*}
    \item We then prove that one application of the channel $\Phi_\alpha$ nearly fixes the auxiliary operator $\rho^*$:
    \begin{align*}
        \|\Phi_\alpha(\rho^*)-\rho^*\|_1\leq {\cal O}_\beta(\alpha^4)\,.
    \end{align*}
    To this end, we use the approximate channel
    \begin{align*}
        \widetilde{\Phi}_{\alpha,T}:=\mathcal{U}_S(T)\circ e^{\alpha^2\mathcal{L}}\circ \mathcal{U}_S(T)
    \end{align*}
    as a bridge to show that after taking expectation over the evolution time $T$, where $\mathcal{U}_S(T)\circ e^{\alpha^2\mathcal{L}}\circ \mathcal{U}_S(T)$ comes from the first term in~\cref{thm:ding_approx}~\eqref{eqn:Phi_map_approx},
    the order-$\alpha^2$ terms in the difference $\mathbb{E}_T\left(\widetilde{\Phi}_{\alpha,T}\right)(\rho^*)-\rho^*$ are canceled. The approximation error of $\mathbb{E}_T\left(\widetilde{\Phi}_{\alpha,T}\right)$ is given in~\cref{thm:ding_approx}.
    \item Finally, we can use the mixing property of $\Phi_\alpha$ to show that $\rho^*$ is close to the fixed point of $\Phi_\alpha$:
    \begin{align}\label{eqn:bootstrapping}
        \|\rho^*-\rho_{\rm fix}(\Phi_\alpha)\|_1\leq t_{{\rm mix},\Phi_\alpha}(\varepsilon)\cdot \,\mathcal {O}_\beta \!\left(\alpha^2\right)\,.
    \end{align}
    Specifically, we can use the following telescoping sum to bootstrap the second step's result from one application of $\Phi_\alpha$ to $\tau$ applications, where $\tau:=\tau_{{\rm mix},\Phi_\alpha}(\varepsilon)$ is the integer mixing time of the channel $\Phi_\alpha$:
    \begin{align*}
        \|\rho^*-\Phi_\alpha^\tau(\rho^*)\|_1\leq \sum_{n=0}^{\tau-1}\|\Phi_\alpha^{n+1}(\rho^*)-\Phi_\alpha^n(\rho^*)\|_1\leq \tau \|\rho^*-\Phi_\alpha(\rho^*)\|_1\leq \mathcal{O}_\beta(\alpha^4 \tau_{{\rm mix},\Phi_\alpha}(\varepsilon))\,.
    \end{align*}
    Recall the rescaled mixing time $t_{\text{mix},\Phi_{\alpha}}:=\alpha^2\tau_{\text{mix},\Phi_{\alpha}}$, we obtain~\eqref{eqn:bootstrapping}.
    Combining with the first step, we can rigorously prove the Gibbs state approximation~\eqref{eqn:fixed_point_approx} in~\cref{thm:fixed_point_approx}.
\end{itemize}
Formal proofs of the approximation result are deferred to~\cref{sec:ding_fixed_point}.

\cmphead{Mixing time and end-to-end complexity}
Recall that by~\cref{thm:ding_approx} and~\eqref{eqn:Lindbladian_approx_ding}, the full generator $\mathcal{L}$ can be decomposed into two parts:
\[
\mathcal L=\mathcal L_{\rm KMS}-i[H_{\rm Lamb},\,\cdot\,]\,,
\]
where the first part satisfies the exact KMS detailed balance condition, and the second part is a coherent term.
Starting from the spectral gap $\lambda_{\rm gap}$ of $\mathcal L_{\rm KMS}$, we proceed as follows to analyze the contraction of the implemented channel $\Phi_\alpha$:
\begin{itemize}
\item We first prove, by a perturbative spectral-gap argument, that the full generator ${\cal L}$ still contracts at the same order as $\mathcal L_{\rm KMS}$ (i.e., controlled by $\lambda_{\rm gap}$). To pass the contraction from $\mathcal L_{\rm KMS}$ to the full generator $\mathcal{L}$, it is enough to control the Gibbs commutation defect of $H_{\rm Lamb}$
\begin{align*}
    \left\|\rho_\beta^{-1/4}H_{\rm Lamb}\rho_\beta^{1/4} -\rho_\beta^{1/4}H_{\rm Lamb}\rho_\beta^{-1/4}\right\|
\end{align*}
 at order $\mathcal O(\lambda_{\rm gap})$ (see~\cref{lem:gap_stability_lamb_shift}). In the present setting, this can be achieved by choosing $\sigma$ on the scale of $\beta^2/\lambda_{\rm gap}$ (see~\cref{lem:ding_hamiltonian_part_bound,lem:ding_jump_G_bound}).
\item We then pass from the generator to the channel level. For a fixed evolution time $T$, we consider the one-step approximate channel $\widetilde{\Phi}_{\alpha,T}$ again. Since $\widetilde{\Phi}_{\alpha,T}$ is obtained from $e^{\alpha^2\mathcal L}$ by conjugating with system unitaries, the contraction of $\mathcal{L}$ transfers directly to $\widetilde{\Phi}_{\alpha,T}$.
\item Finally, we define
\[
\widetilde{\Phi}_\alpha:=\mathbb E_{T\sim\mu}\bigl(\widetilde{\Phi}_{\alpha,T}\bigr)
\]
and using the stability of the quantum channel's mixing time to compare $\widetilde{\Phi}_{\alpha,T}$ and  $\widetilde{\Phi}_\alpha$ with the exact channel $\Phi_\alpha$. By~\cref{thm:ding_approx}, the one-step approximation error is of order $\mathcal O(\alpha^4)$, which is higher order than the contraction scale $\alpha^2$ governing the mixing time. Thus, these errors can be absorbed in the final perturbation argument.
\end{itemize}
After establishing the mixing time of $\Phi_\alpha$, the total Hamiltonian evolution time follows immediately by computing the expected evolution for each iteration. Formal proofs are deferred to~\cref{sec:ding_mixing}.

\section{Discussion and conclusion}\label{sec:conclusion}
In this work, we develop a general framework for analyzing the fixed-point approximation error and mixing time of system--bath interaction models whose transition part satisfies the KMS detailed balance condition. Within this framework, we rigorously show that the algorithms in~\cite{ding2025endtoendefficientquantumthermal,hahn2026efficientquantumthermalstate,lloyd2025quantumthermal} can achieve arbitrarily small error in the weak-interaction regime, even without the previously assumed commutation condition on the Lamb shift. Moreover, we derive explicit bounds on the mixing time and end-to-end complexity of these algorithms, improving the dependence on the target accuracy from a high-order polynomial to linear in $1/\varepsilon$.

The results of this paper not only deepen the theoretical understanding of the system-bath interaction based algorithms, but also highlight the importance of the KMS detailed balance condition in the analysis of quantum thermalization algorithms. In particular, our results suggest that, although Davies generators are often used as an approximation for analyzing fixed-point errors because their Lamb shift vanishes, the KMS detailed balance condition alone is already sufficient to guarantee an arbitrarily small fixed-point approximation error, even when the noncommuting Lamb shift is non-negligible. This insight parallels recent developments in Lindblad dynamics, where the KMS detailed balance condition has emerged as a useful design principle that can simultaneously ensure efficient implementation and arbitrarily small fixed-point approximation error~\cite{ChenKastoryanoGilyen2023,Ding_2025}.

Finally, we discuss several open questions and directions for future work. First, in the original algorithms of~\cite{ding2025endtoendefficientquantumthermal,hahn2026efficientquantumthermalstate,lloyd2025quantumthermal}, achieving an arbitrarily small fixed-point approximation error requires $\alpha^2\sigma\to 0$ together with $\sigma\to\infty$. In a recent work,~\cite{wang2025lindbladdynamicsrigorousguarantees}, the authors showed that, by taking $\alpha^2\sigma=\Theta(1)$ and $\sigma\to\infty$, the fixed-point approximation error can still be made arbitrarily small by increasing the support width of the bath correlation function. Our work complements this result by showing that, even when $\sigma=\Theta(1)$ and $\alpha\to 0$ (so that $\alpha^2\sigma\to 0$), the fixed-point approximation error can still be made arbitrarily small by decreasing the interaction strength. An interesting open question is therefore to better understand the tradeoff between these two parameters in controlling the fixed-point approximation error.

Second, although our complexity result achieves linear dependence on $1/\varepsilon$, the dependence on $n$ remains a high-order polynomial. In~\cref{cor:end_to_end_ding}, this complexity arises from two sources: 1) the first-order approximation to the Lindbladian dynamics has mixing time scaling as $\mathcal{O}(n^2)$, which leads to a contribution of $\mathcal{O}(n^4/\varepsilon)$ to the overall complexity; 2) in the spectral gap perturbation argument, we require $T\sim \sigma=\mathcal{O}(1/\lambda_{\rm gap})=\mathcal{O}(n)$, which introduces an additional factor of $\mathcal{O}(n^2)$. The first part may be improved by incorporating recent rapid-mixing results for open quantum system dynamics from~\cite{slezak2026polynomialtime}, which could potentially reduce the Lindbladian mixing time to $\mathcal{O}(n\log n)$ for a broad class of local Hamiltonians. The second part may be improved through a sharper perturbation analysis of the spectral gap. In particular, the current argument relies on a worst-case bound for the Gibbs commutation defect of $H_{\rm Lamb}$ and does not exploit the fact that the Lamb shift is a quasi-local operator with exponentially decaying tails.

Third, in our work, the unitary evolution $U_S(T)$ is essential for the error cancellation mechanism in the fixed-point approximation analysis, but it does not play a role in the mixing-time analysis. In recent works~\cite{PhysRevLett.134.140405,li2025speedingquantummarkovprocesses}, unitary evolution has been used to guarantee a unique fixed point or to accelerate mixing. It would be interesting to investigate whether a similar speedup mechanism can also be realized in our setting.

\section*{Acknowledgments}
The authors thank Yongtao Zhan, Lin Lin, Daniel Stilck França, Jerome Lloyd, Dominik Hahn, Chi-Fang Chen, Di Fang, and Matteo Scandi for helpful discussions and suggestions. This work was initiated during the authors' visit to the Institute for Pure \& Applied Mathematics (IPAM) for the ``New Frontiers in Quantum Algorithms for Open Quantum Systems'' workshop. Z.D. is supported by the University of Michigan through a startup grant. H.C. is supported by the U.S. Department of Energy, Office of Science, Accelerated Research in Quantum Computing Centers, Quantum Utility through Advanced Computational Quantum Algorithms, Grant No. DESC002557.

\section*{Statements and Declarations}
\cmphead{Data Availability}
This research is purely theoretical and does not involve any datasets.

\cmphead{Competing Interests}
The authors declare no competing interests.

\appendix
\renewcommand{\thesubsection}{\Alph{section}.\arabic{subsection}}

The appendix collects the technical details deferred from the main text and is organized as follows:
\begin{itemize}
\item In~\cref{sec:proof_integer_mixing_time_upper_bound}, we give a self-contained proof of the general integer mixing-time bound under KMS detailed balance with unitary drift (\cref{thm:integer_mixing_time_upper_bound}).
\item In~\cref{app:ding_approx_proof}, we prove the channel approximation result for the system--bath model in~\cref{setup:ding} (\cref{thm:ding_approx}).
\item In~\cref{sec:ding_fixed_point}, we construct an approximate fixed point and prove its closeness to the Gibbs state. This proves the fixed-point approximation result in~\cref{thm:end_to_end_ding}.
\item In~\cref{sec:ding_mixing}, we establish the mixing-time and end-to-end complexity bounds. This gives a complete proof of~\cref{thm:end_to_end_ding} and~\cref{cor:end_to_end_ding}.
\item In~\cref{sec:lemmas}, we collect the technical estimates on the Lamb-shift term used throughout these arguments.
\end{itemize}

\section{Proof of~\texorpdfstring{\cref{thm:integer_mixing_time_upper_bound}}{Lg}}\label{sec:proof_integer_mixing_time_upper_bound}

We follow the proof strategy in~\cite{ding2025endtoendefficientquantumthermal,wang2025lindbladdynamicsrigorousguarantees,slezak2026polynomialtime}. Take a similarity transformation of $\widetilde{\mathcal{L}}$ and decompose it into the Hermitian and the anti-Hermitian parts:
\begin{equation}\label{eqn:similarity_transformation}
\begin{aligned}
\mathcal{K}(\rho_\beta, \widetilde{\mathcal{L}}):=&~\rho_\beta^{-1 / 4} \widetilde{\mathcal{L}}\left[\rho_\beta^{1 / 4} (\cdot) \rho_\beta^{1 / 4}\right] \rho_\beta^{-1 / 4} :=\mathcal{H}(\rho_\beta, \widetilde{\mathcal{L}})+\mathcal{A}(\rho_\beta, \widetilde{\mathcal{L}}) \\
\mathcal{K}(\rho_\beta, \widetilde{\mathcal{L}})^{\dagger}=&~\rho_\beta^{1 / 4} \widetilde{\mathcal{L}}^{\dagger}\left[\rho_\beta^{-1 / 4} (\cdot) \rho_\beta^{-1 / 4}\right] \rho_\beta^{1 / 4} =\mathcal{H}(\rho_\beta, \widetilde{\mathcal{L}})-\mathcal{A}(\rho_\beta, \widetilde{\mathcal{L}})\,.
\end{aligned}
\end{equation}
In the case when $\widetilde{\mathcal{L}}$ satisfies the KMS detailed balance condition with unitary drift, it is straightforward to check that
\[
\mathcal{H}(\rho_\beta, \widetilde{\mathcal{L}})(\sqrt{\rho_\beta})=\mathcal{A}(\rho_\beta, \widetilde{\mathcal{L}})(\sqrt{\rho_\beta})=0\,.
\]
Because $\mathcal{H}(\rho_\beta, \widetilde{\mathcal{L}})$ is Hermitian with respect to Hilbert–Schmidt inner product and $\mathcal{H}(\rho_\beta, \widetilde{\mathcal{L}})(\sqrt{\rho_\beta})=0$, we can define the spectral gap of $\mathcal{H}(\rho_\beta, \widetilde{\mathcal{L}})$ as follows:
\[
\lambda_{\rm gap}(\mathcal{H}(\rho_\beta, \widetilde{\mathcal{L}})):=\inf_{\mathrm{Tr}(X\sqrt{\rho_\beta})=0,X\neq 0}\frac{-\left\langle X,\mathcal{H}(\rho_\beta, \widetilde{\mathcal{L}})(X)\right\rangle_2}{\left\langle X,X\right\rangle_2}\,.
\]
We notice that this gap matches with the spectral gap of $\widetilde{\mathcal{L}}_{\rm KMS}$ in the KMS inner product:
\[
\begin{aligned}
\mathrm{Gap}(\widetilde{\mathcal{L}}_{\rm KMS})=&~\inf_{\mathrm{Tr}(A\rho_\beta)=0,A\neq 0}\frac{-\left\langle A,\widetilde{\mathcal{L}}^\dagger_{\rm KMS}(A)\right\rangle_{1/2,\rho_\beta}}{\left\langle A,A\right\rangle_{1/2,\rho_\beta}}\\
=&~\inf_{\mathrm{Tr}(A\rho_\beta)=0,A\neq 0}\frac{-\left\langle \rho^{1/4}_\beta A\rho^{1/4}_\beta,\mathcal{K}^\dagger(\rho_\beta,\widetilde{\mathcal{L}}_{\rm KMS})[\rho^{1/4}_\beta A\rho^{1/4}_\beta]\right\rangle}{\left\langle \rho^{1/4}_\beta A\rho^{1/4}_\beta,\rho^{1/4}_\beta A\rho^{1/4}_\beta\right\rangle}\\
=&~\inf_{\mathrm{Tr}(A\rho_\beta)=0,A\neq 0}\frac{-\left\langle \rho^{1/4}_\beta A \rho^{1/4}_\beta,\mathcal{H}(\rho_\beta, \widetilde{\mathcal{L}}_{\rm KMS})[\rho^{1/4}_\beta A \rho^{1/4}_\beta]\right\rangle}{\left\langle \rho^{1/4}_\beta A \rho^{1/4}_\beta,\rho^{1/4}_\beta A \rho^{1/4}_\beta\right\rangle}\\
=&~\inf_{\mathrm{Tr}(A\rho_\beta)=0,A\neq 0}\frac{-\left\langle \rho^{1/4}_\beta A \rho^{1/4}_\beta,\mathcal{H}(\rho_\beta, \widetilde{\mathcal{L}})[\rho^{1/4}_\beta A \rho^{1/4}_\beta]\right\rangle}{\left\langle \rho^{1/4}_\beta A \rho^{1/4}_\beta,\rho^{1/4}_\beta A \rho^{1/4}_\beta\right\rangle}\\
=&~\inf_{\mathrm{Tr}(\rho^{1/4}_\beta A\rho^{1/4}_\beta\sqrt{\rho_\beta})=0,A\neq 0}\frac{-\left\langle \rho^{1/4}_\beta A \rho^{1/4}_\beta,\mathcal{H}(\rho_\beta, \widetilde{\mathcal{L}})[\rho^{1/4}_\beta A \rho^{1/4}_\beta]\right\rangle}{\left\langle \rho^{1/4}_\beta A \rho^{1/4}_\beta,\rho^{1/4}_\beta A \rho^{1/4}_\beta\right\rangle}\\
=&~\lambda_{\rm gap}(\mathcal{H}(\rho_\beta, \widetilde{\mathcal{L}}))\,,
\end{aligned}
\]
In the third equality, we use $\mc{A}(\rho_\beta, \widetilde{\mathcal{L}}_{\rm KMS})=0$ because $\widetilde{\mathcal{L}}_{\rm KMS}$ satisfies the $\rho_\beta$-KMS detailed balance condition. In the fourth equality, we use $\mathcal{H}(\rho_\beta, \widetilde{\mathcal{L}}_{\rm KMS})=\mathcal{H}(\rho_\beta, \widetilde{\mathcal{L}})$ because $\widetilde{\mathcal{L}}$ and $\widetilde{\mathcal{L}}_{\rm KMS}$ only differ by a unitary drift term that commutes with $\rho_\beta$, which does not contribute to the Hermitian part of the similarity transformation. In the fifth equality, we use that, if $\mathrm{Tr}(A\rho_\beta)=0$, then $\rho^{1/4}_\beta A \rho^{1/4}_\beta$ is orthogonal to $\sqrt{\rho_\beta}$ under the Hilbert-Schmidt inner product.
And the final step follows from the definition of $\lambda_{\rm gap}(\mathcal{H}(\rho_\beta, \widetilde{\mathcal{L}}))$.

Using~\eqref{eqn:similarity_transformation} and the spectral gap of $\mathcal{H}(\rho_\beta, \widetilde{\mathcal{L}})$, we can show that $\Phi_{\alpha}$ is a contraction map with respect to the metric $\|\rho^{-1/4}_\beta(\cdot)\rho^{-1/4}_\beta\|_2$. Specifically, given $\rho_1$ and $\rho_2$,
\begin{equation}\label{eqn:contraction_mixing_time}
\begin{aligned}
\left\|\rho_\beta^{-1/4}\widetilde{\Phi}_{\alpha}(\rho_1-\rho_2)\rho_\beta^{-1/4}\right\|_2=&~\left\|\rho_\beta^{-1/4}\mathcal{U}_S(T)\,\circ\,\exp(\widetilde{\mathcal{L}} \alpha^2)\,\circ\,\mathcal{U}_S(T)[\rho_1-\rho_2]\rho_\beta^{-1/4}\right\|_2\\
=&~\left\|\rho_\beta^{-1/4}\exp(\widetilde{\mathcal{L}} \alpha^2)\,\circ\,\mathcal{U}_S(T)[\rho_1-\rho_2]\rho_\beta^{-1/4}\right\|_2
\end{aligned}
\end{equation}
where we use $\mathcal{U}_S(T)$ commutes with $\rho^{-1/4}_\beta(\cdot)\rho^{-1/4}_\beta$ in second equality. For any  $X\in \mathcal{B}(\mathcal{H})$ satisfying $\mathrm{Tr}(X)=0$, we let $X(t):=\exp(\widetilde{\mathcal{L}}t)(X)$. 
Then, we have
\[
\begin{aligned}
&\frac{\mathrm{d}}{\mathrm{d}t}\left\|\rho_\beta^{-1/4}\exp(\widetilde{\mathcal{L}}t)(X)\rho_\beta^{-1/4}\right\|^2_2=~ \frac{\mathrm{d}}{\mathrm{d}t}\left\|\exp(\mc{K}(\rho_\beta,\widetilde{\mathcal{L}})t)\left[\rho^{-1/4}_\beta X\rho^{-1/4}_\beta\right]\right\|_2^2\\
=&~\left\langle \exp(\mc{K}(\rho_\beta,\widetilde{\mathcal{L}})t)\left[\rho^{-1/4}_\beta X\rho^{-1/4}_\beta\right],\left(\mc{K}(\rho_\beta,\widetilde{\mathcal{L}})+\mc{K}(\rho_\beta,\widetilde{\mathcal{L}})^\dagger\right)\exp(\mc{K}(\rho_\beta,\widetilde{\mathcal{L}})t)\left[\rho^{-1/4}_\beta X\rho^{-1/4}_\beta\right]\right\rangle\\
=&~2\left\langle\rho^{-1/4}_\beta X(t)\rho^{-1/4}_\beta,\mc{H}(\rho_\beta,\widetilde{\mathcal{L}})\left[\rho^{-1/4}_\beta X(t)\rho^{-1/4}_\beta\right]\right\rangle\\
\leq &~ 2\left(-\lambda_{\rm gap}(\mc{H}(\rho_\beta,\widetilde{\mathcal{L}}))\right)\left\|\rho^{-1/4}_\beta X(t)\rho^{-1/4}_\beta\right\|^2_2\,,
\end{aligned}
\]
where the first step follows from~\eqref{eqn:similarity_transformation}, the second step follows from matrix differentiation, the third step follows from~\eqref{eqn:similarity_transformation} again, and the fourth step follows from $\mathrm{Tr}[\rho_\beta^{-1/4}X(t)\rho_\beta^{-1/4}\sqrt{\rho_\beta}]=\mathrm{Tr}[X(t)]=0$ and the definition of $\lambda_{\rm gap}(\mc{H}(\rho_\beta,\widetilde{\mathcal{L}}))$.
Plugging this into \eqref{eqn:contraction_mixing_time}, we have
\[
\begin{aligned}
\left\|\rho_\beta^{-1/4}\widetilde{\Phi}_{\alpha}(\rho_1-\rho_2)\rho_\beta^{-1/4}\right\|_2=&~\left\|\rho_\beta^{-1/4}\exp(\widetilde{\mathcal{L}} \alpha^2)\,\circ\,\mathcal{U}_S(T)[\rho_1-\rho_2]\rho_\beta^{-1/4}\right\|_2\\
\leq &~e^{-\lambda_{\rm gap}(\mc{H}(\rho_\beta,\widetilde{\mathcal{L}}))\alpha^2}\left\|\rho_\beta^{-1/4}\mathcal{U}_S(T)(\rho_1-\rho_2)\rho_\beta^{-1/4}\right\|_2\\
=&~e^{-\lambda_{\rm gap}(\mc{H}(\rho_\beta,\widetilde{\mathcal{L}}))\alpha^2}\left\|\rho_\beta^{-1/4}(\rho_1-\rho_2)\rho_\beta^{-1/4}\right\|_2\,.
\end{aligned}
\]
This implies that $\widetilde{\Phi}_{\alpha}$ is a contraction map with respect to the metric $\|\rho^{-1/4}_\beta(\cdot)\rho^{-1/4}_\beta\|_2$ with contraction coefficient $e^{-\lambda_{\rm gap}(\mc{H}(\rho_\beta,\widetilde{\mathcal{L}}))\alpha^2}$.

Finally, using $\|BAB\|_1\leq \|B\|^2_4\|A\|_2$ from H\"older's inequality for Schatten norms (see e.g.~\cite{shebrawi2013trace}), we can relate $\|\rho_{\beta}^{-1/4}(\cdot)\rho_\beta^{-1/4}\|_2$ to $\|\cdot\|_1$: for any $X'\in \mathcal{B}(\mathcal{H})$, we have
\[
\|X'\|_1\leq \left\|\rho^{1/4}_\beta\right\|^2_4\left\|\rho_\beta^{-1/4}X'\rho_\beta^{-1/4}\right\|_2=\left\|\rho_\beta^{-1/4}X'\rho_\beta^{-1/4}\right\|_2\leq \left\|\rho^{-1/4}_\beta\right\|^2\|X'\|_2\leq \left\|\rho^{-1/2}_\beta\right\|\|X'\|_1\,.
\]
Hence, we get that for any $t\in \mathbb{N}$, and any state $\rho$,
\begin{align*}
    \left\|\widetilde{\Phi}_{\alpha}^t(\rho)-\rho_\beta\right\|_1=\left\|\widetilde{\Phi}_{\alpha}^t(\rho-\rho_\beta)\right\|_1\leq \left\|\rho_\beta^{-1/4}\widetilde{\Phi}_{\alpha}^t(\rho-\rho_\beta)\rho_\beta^{-1/4}\right\|_2\leq &~ e^{-\lambda_{\rm gap}(\mc{H}(\rho_\beta,\widetilde{\mathcal{L}}))\alpha^2t}\left\|\rho_\beta^{-1/4}(\rho-\rho_\beta)\rho_\beta^{-1/4}\right\|_2\\
    \leq &~e^{-\lambda_{\rm gap}(\mc{H}(\rho_\beta,\widetilde{\mathcal{L}}))\alpha^2t} \left\|\rho_\beta^{-1/2}\right\|\left\|\rho-\rho_\beta\right\|_1\\
    \leq &~ 2e^{-\lambda_{\rm gap}(\mc{H}(\rho_\beta,\widetilde{\mathcal{L}}))\alpha^2t} \left\|\rho_\beta^{-1/2}\right\|\,.
\end{align*}
Thus, the integer mixing time of $\widetilde{\Phi}_{\alpha}$ can be upper bounded by
\[
\tau_{{\rm mix}, \widetilde{\Phi}_{\alpha}}(\varepsilon)\leq \frac{1}{\lambda_{\rm gap}(\mc{H}(\rho_\beta,\widetilde{\mathcal{L}}))\alpha^2}\log\left(\frac{2\left\|\rho_\beta^{-1/2}\right\|}{\varepsilon}\right)\,.
\]
This concludes the proof.

\section{Proof of~\texorpdfstring{\cref{thm:ding_approx}}{Lg}}\label{app:ding_approx_proof}
We first introduce the Dyson series expansion of {$\Phi_{\alpha,T}$} in~\cref{eqn:Phi_alpha}. Following~\cite{wang2025lindbladdynamicsrigorousguarantees}, we use the subindex $\{-1, 1\}$ to relabel the system and bath operator as
\begin{equation*}
    S_1 = A_S,\quad S_{-1} = A_S^\dag,\quad B_1 = B_E,\quad B_{-1} = B_E^\dag.
\end{equation*}

According to~\cite[Theorem C.2]{wang2025lindbladdynamicsrigorousguarantees}, we have the following expansion: Let $\rho_{n+1}={\Phi_{\alpha,T}}(\rho_n)$, then $\rho_{n+1}$ can be generated by the following three steps~\footnote{In~\cite{wang2025lindbladdynamicsrigorousguarantees}, $\Gamma=\alpha\sqrt{\sigma}$ in our work and the definition of $f$ is different by a factor of $1/\sqrt{\sigma}$.
}:
\begin{align}
\rho_{n+1/3}=&~ U_S(T)\rho_n U_S^\dagger(T)\,.\\
\rho_{n+2/3} =&~  \rho_{n+1/3}  + \mathbb{E}_{A_S}\left(\sum_n \alpha^{2n} (-1)^n \sum_{k = 0}^{2n}(-1)^k \int \gamma(\omega){G}_{2n-k,A_S}^\dag(\omega) \rho_{n+1/3}  {G}_{k,A_S}(\omega)\mathrm{d}\omega\right)\,.\label{eqn:rho_n_1/3_update}\\
\rho_{n+1}= &~ U_S(T)\rho_{n+2/3} U_S^\dagger(T)\,.
\end{align}
Here
\begin{equation}
        {G}_{k,A_S}(\omega)  =\int_{-T<t_1\leq \cdots \leq t_k < T} A_S(t_1)A^\dagger_S(t_2)\cdots S_{(-1)^{k-1}}(t_k)e^{-i\omega\sum_{p=1}^k (-1)^p t_p} f(t_1)\cdots f(t_k)\mathrm{d}t_1 \cdots\mathrm{d}t_k
\end{equation}
where $\gamma(\omega)=\frac{g(\omega)+g(-\omega)}{1+\exp(\beta\omega)}$ and  $A_S$ is uniformly sampled from a set of coupling operators $\mc{A}=\{A^i,-A^i\}_i$ with $\{(A^i)^\dagger\}_i=\{A^i\}_i$.

Letting $T\rightarrow \infty$, we define
\begin{equation}
\label{eq:tildeGk}
    \begin{aligned}
        \widetilde{G}_{k,A_S}(\omega) & \coloneqq\int_{-\infty<t_1\leq \cdots \leq t_k < \infty} A_S(t_1)A^\dagger_S(t_2)\cdots e^{-i\omega\sum_{p=1}^k (-1)^p t_p} f(t_1)\cdots f(t_k)\mathrm{d}t_1 \cdots\mathrm{d}t_k\,, \\
    \end{aligned}
\end{equation}
and the limiting CPTP map
\begin{equation}\label{eqn:Phi_alpha_limit}
    \begin{aligned}
        \widetilde \Phi_\Gamma& \coloneqq  U_S(2T)\rho_n U_S^\dag(2T)
         \\
         &+ \mathbb{E}_{A_S}\left(\sum_{n\geq 1} \alpha^{2n}(-1)^n \sum_{k = 0}^{2n} (-1)^k \int \gamma(\omega) U_S(T) \widetilde{G}^\dag_{2n-k,A_S}(\omega)U_S(T)\rho_nU_S^\dag(T)\widetilde{G}_{k,A_S}(\omega) U_S^\dag(T)\mathrm{d}\omega\right)\,.
    \end{aligned}
\end{equation}
We note that $\alpha^2$ order terms in~\eqref{eqn:Phi_alpha_limit} gives the Lindbladian $\mathcal{L}$ defined in~\cref{eqn:lindbladian_operator} according to the proof of~\cite[Theorem 7]{ding2025endtoendefficientquantumthermal}. According to~\cite[Theorem D.2]{wang2025lindbladdynamicsrigorousguarantees}, we first have
\begin{equation}
\left\|\Phi_\Gamma - \widetilde \Phi_\Gamma\right\|_{1\rightarrow 1} = \mathcal{O}\left(\frac{\alpha^2\sigma^2}{T} e^{-T^2/(4\sigma^2)}\right)=\mathcal{O}\left(\alpha^2\sigma e^{-T^2/(4\sigma^2)}\right)\,
\end{equation}
when $T=\Omega(\sigma)$. This gives the first term in the error bound in~\eqref{eqn:Phi_map_approx}. Therefore, the main remaining task is to show
\[
\left\|\sum_{n\geq 2} \alpha^{2n} (-1)^n \sum_{k = 0}^{2n}(-1)^k \int \gamma(\omega){G}_{2n-k,A_S}^\dag(\omega) [\cdot] {G}_{k,A_S}(\omega)\mathrm{d}\omega\right\|_{1\to 1}=\mathcal{O}\left(\alpha^4\sigma\right)\,.
\]
Plugging in the definition of $f$ and using the change of variable, we obtain that:
\[
\begin{aligned}
&\int \gamma(\omega){G}_{2n-k,A_S}^\dag(\omega) \rho  {G}_{k,A_S}(\omega)\mathrm{d}\omega\\
=&~\frac{\sigma^n2^{3n/4}}{\pi^{n/4}}\Biggl[\int_{\substack{-\infty< s_1\leq s_2 \leq \cdots \leq s_{2n-k}<\infty,\\ -\infty < t_1 \leq t_2 \leq \cdots \leq t_k <\infty}}A_S(2\sigma s_{2n-k}) \cdots A_S(2\sigma  s_1)\rho A_S(2\sigma  t_1) \cdots A_S(2\sigma t_k) \\
&\cdot e^{-\sum_{p=1}^{2n-k}s_p^2 - \sum_{p=1}^k t_p^2}\int \gamma(\omega) e^{i2\sigma\omega \left(\sum_{p = 1}^{2n-k}(-1)^p s_p -  \sum_{p= 1}^k (-1)^pt_p\right)}\mathrm{d}\omega\mathrm{d}s_1 \cdots\mathrm{d}s_{2n-k}\mathrm{d}t_1 \cdots \mathrm{d}t_k\Biggr]\,.
\end{aligned}
\]
This implies that
\[
\begin{aligned}
&\left\|\int \gamma(\omega){G}_{2n-k,A_S}^\dag(\omega) [\cdot]  {G}_{k,A_S}(\omega)\mathrm{d}\omega\right\|_{1\to 1}\\
\leq &~\frac{\sigma^n2^{3n/4}}{\pi^{n/4}}\left\|A_S\right\|^{2n}\\
&\cdot \int_{\substack
{-\infty< s_1\leq s_2 \leq \cdots \leq s_{2n-k}<\infty,\\ -\infty < t_1 \leq t_2 \leq \cdots \leq t_k <\infty}} e^{-\sum_{p=1}^{2n-k}s_p^2 - \sum_{p=1}^k t_p^2}\left|\int \gamma(\omega) e^{i2\sigma\omega \left(\sum_{p = 1}^{2n-k}(-1)^p s_p - \sum_{p= 1}^k (-1)^pt_p\right)}\mathrm{d}\omega\right|\mathrm{d}s_1 \cdots\mathrm{d}s_{2n-k}\mathrm{d}t_1 \cdots \mathrm{d}t_k\\
\leq &~ \frac{(\Theta(\sigma))^{n}}{\sigma\sqrt{n}}\mathcal{O}\left( \log\left(\sqrt{2n}\sigma\right)\|\gamma'\|_{L^1}\right)\,,
\end{aligned}
\]
where we use~\cite[(D15)]{wang2025lindbladdynamicsrigorousguarantees} in the last step. Summing over $n\geq 2$ and $k\in \{0,\cdots, 2n\}$, we obtain that
\[
\left\|\sum_{n\geq 2} \alpha^{2n} (-1)^n \sum_{k = 0}^{2n}(-1)^k \int \gamma(\omega){G}_{2n-k,A_S}^\dag(\omega) [\cdot] {G}_{k,A_S}(\omega)\mathrm{d}\omega\right\|_{1\to 1}\leq \frac{\|\gamma'\|_{L^1}}{\sigma}\underbrace{\sum^\infty_{n=2} \left(\Theta(\alpha^2\sigma)\right)^n \sqrt{n}\log(\sqrt{2n\sigma})}_{=\mathcal{O}(\alpha^4\sigma^2\log(\sigma))}\,.
\]
This concludes the proof. \qed

\section{Fixed Point Approximation}\label{sec:ding_fixed_point}

We begin with the fixed-point approximation part of~\cref{thm:end_to_end_ding}, which is the rigorous counterpart of the informal statement in~\cref{thm:fixed_point_approx}. As explained further in the informal proof discussion in~\cref{sec:proof_overview}, the guiding idea is to construct a first-order correction to $\rho_\beta$ from the asymptotic expansion of the fixed-point equation. This leads us to introduce an auxiliary Hermitian operator $\rho^*$ that captures the order-$\alpha^2$ behavior of $\rho_{\rm fix}(\Phi_\alpha)$ and serves as a bridge between $\rho_\beta$ and the true fixed point. This operator need not itself be a quantum state, but it is sufficient for the analysis of the fixed-point error. The proposition below makes this construction precise and shows that, for small coupling strength, the fixed point of $\Phi_\alpha$ remains close to $\rho_\beta$.

\begin{prop}\label{lem:approx_fixed_point}
Under the setup of ~\cref{setup:ding} and \cref{assumption:ding_parameters}, there exists a Hermitian operator $\rho^*\in \mathcal{B}(\mathcal{H})$ with $\Tr(\rho^*)=1$ such that
\[
\bigl\|\rho^* - \rho_\beta\bigr\|_1 \leq \mathcal{O}\!\left(\frac{\alpha^2\beta^2}{T_0}\right) \,,
\qquad
\bigl\|\Phi_\alpha(\rho^*) - \rho^*\bigr\|_1 \leq \mathcal{O}\!\left(\sigma\beta\log(\sigma)\alpha^4\right)\,,
\]
  where $T_0$ is given in~\cref{assumption:ding_parameters}.
\textit{Furthermore}, for any $\varepsilon>0$, we have
\[
\bigl\|\rho_\beta-\rho_{\rm fix}(\Phi_\alpha)\bigr\|_1
\le
\varepsilon+ t_{{\rm mix},\Phi_\alpha}(\varepsilon) \mathcal{O}\!\left(\sigma(\log\sigma)\beta\alpha^2\right).
\]
\end{prop}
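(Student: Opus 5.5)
Following the heuristic in~\cref{sec:proof_overview}, we construct $\rho^*:=\rho_\beta+\alpha^2E$ with
\[
E=\int_{\mathbb R} e^{-iHt}Ye^{iHt}\,\mathrm{d}\nu(t),
\qquad
Y_{j,k}=-\frac{i(H_{\rm Lamb})_{j,k}}{Z_\beta}\,\frac{e^{-\beta\lambda_j}-e^{-\beta\lambda_k}}{\lambda_j-\lambda_k}
\]
in the eigenbasis of $H$. Here $\nu$ is the signed measure with $\widehat\nu(\omega)=\omega\,\widehat\mu(\omega)/(1-\widehat\mu(2\omega))$. Since $H_{\rm Lamb}$ is Hermitian, $Y$ is Hermitian with zero diagonal. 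Since $\nu$ is real-valued, $E$ is Hermitian and traceless, so $\operatorname{Tr}\rho^*=1$.

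\textbf{Step 1: the bound on $\|\rho^*-\rho_\beta\|_1$.} First I would prove a lemma (the analogue of~\cref{lem:nu_existence}) showing that for $\mu(t)=\mu_0(t/T_0)/T_0$ the multiplier $\widehat\nu$ is the Fourier transform of an $L^1$ function with $\|\nu\|_{L^1}=\mathcal O(1/T_0)$. By scaling, $\widehat\nu(\omega)=T_0^{-1}\,\widehat\nu_0(T_0\omega)$ with $\widehat\nu_0(\xi)=\xi\widehat\mu_0(\xi)/(1-\widehat\mu_0(2\xi))$, and $\|\nu\|_{L^1}=\|\nu_0\|_{L^1}/T_0$.

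We have $\widehat\mu_0(\xi)=e^{i\xi}(1-i\xi)^{-4}$. Hence $|\widehat\mu_0(2\xi)|<1$ for $\xi\neq0$, and $1-\widehat\mu_0(2\xi)\sim -2i\xi$ near $0$, which cancels the factor $\xi$. So $\widehat\nu_0$ is smooth, and it decays like $|\xi|^{-3}$ at infinity. Smoothness together with this decay, and decay of its derivatives, gives $\nu_0\in L^1$ via a standard Sobolev-type bound ($\|\nu_0\|_{L^1}\lesssim\|\widehat\nu_0\|_{H^1}$ with a weight). This step is the main technical obstacle: one has to check that $1-\widehat\mu_0(2\xi)$ never vanishes for $\xi\neq0$ and is bounded away from zero at large $\xi$, so that the derivatives are controlled.

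Next, $\|Y\|_1=\mathcal O(\beta\|H_{\rm Lamb}\|)$. To see this, write the divided difference as the integral
\[
\frac{e^{-\beta\lambda_j}-e^{-\beta\lambda_k}}{\lambda_j-\lambda_k}=-\beta\int_0^1 e^{-s\beta\lambda_j}e^{-(1-s)\beta\lambda_k}\,\mathrm{d}s,
\]
so that $Y=\frac{i\beta}{Z_\beta}\int_0^1 e^{-s\beta H}H_{\rm Lamb}e^{-(1-s)\beta H}\,\mathrm{d}s$. Hölder's inequality bounds each integrand in trace norm by $\|H_{\rm Lamb}\|\,Z_\beta$. Finally, $\|H_{\rm Lamb}\|=\mathcal O(\beta^{-1})$-type bounds from~\cref{sec:lemmas} (under the scaling of~\eqref{eqn:ding_choice}) give $\|\rho^*-\rho_\beta\|_1\le\alpha^2\|\nu\|_{L^1}\|Y\|_1=\mathcal O(\alpha^2\beta^2/T_0)$.

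\textbf{Step 2: the one-step defect.} Write $\Phi_\alpha(\rho^*)-\rho^*=\mathbb E_T\big[\widetilde\Phi_{\alpha,T}(\rho^*)-\rho^*\big]+\mathbb E_T\big[(\Phi_{\alpha,T}-\widetilde\Phi_{\alpha,T})(\rho^*)\big]$. By~\cref{thm:ding_approx}, the second term is at most $\mathcal O\big(\alpha^2\sigma e^{-T_0^2/(4\sigma^2)}+\alpha^4\sigma\log\sigma\,\|\gamma'\|_{L^1}\big)\|\rho^*\|_1$. The lower bound on $T_0$ in~\cref{assumption:ding_parameters} makes the first piece $\mathcal O(\alpha^4\sigma\beta\log\sigma)$, and $\|\gamma'\|_{L^1}=\mathcal O(\beta)$.

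For the first term, expand $e^{\alpha^2\mathcal L}=\mathrm{id}+\alpha^2\mathcal L+\mathcal O(\alpha^4\|\mathcal L\|^2_{1\to1})$. Use $[H,\rho_\beta]=0$, $\mathcal L_{\rm KMS}(\rho_\beta)=0$, and $\mathcal L(\rho_\beta)=-i[H_{\rm Lamb},\rho_\beta]$. The order-$\alpha^2$ part is
\[
\mathbb E_T\big(U_S(2T)EU_S^\dagger(2T)\big)-E+\mathbb E_T\big(U_S(T)(-i[H_{\rm Lamb},\rho_\beta])U_S^\dagger(T)\big).
\]
This vanishes identically by construction: entrywise it is exactly the matching identity, and $\widehat\nu$ was defined so that it holds for all $j\neq k$. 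The diagonal entries vanish trivially. The remainder is $\alpha^4\big(\|\mathcal L\|_{1\to1}\|E\|_1+\|\mathcal L\|^2_{1\to1}\big)=\mathcal O(\alpha^4)$, since $\|\mathcal L\|_{1\to1}=\mathcal O(1)$ by normalization of $f$ and $g$. This is dominated by the $\sigma\beta\log\sigma$ term.

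\textbf{Step 3: comparison with the true fixed point.} Let $\tau=\tau_{{\rm mix},\Phi_\alpha}(\varepsilon)$. Telescoping and using that $\Phi_\alpha$ is trace-norm contractive on Hermitian operators gives $\|\Phi_\alpha^\tau(\rho^*)-\rho^*\|_1\le\tau\,\mathcal O(\sigma\beta\log\sigma\,\alpha^4)$.

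The mixing-time definition is stated for states, while $\rho^*$ need not be positive. To handle this, write $\rho^*-\rho_{\rm fix}=c(\rho_+-\rho_-)$ with $\rho_\pm$ states and $c\le\tfrac12\|\rho^*-\rho_{\rm fix}\|_1$. This gives
\[
\|\Phi_\alpha^\tau(\rho^*)-\rho_{\rm fix}\|_1\le\varepsilon\,\|\rho^*\|_1=\varepsilon\,(1+\mathcal O(\alpha^2)).
\]
Combining this with Step 1 via the triangle inequality, and using $\tau\alpha^4=\alpha^2t_{{\rm mix},\Phi_\alpha}(\varepsilon)$, yields the claimed bound after absorbing constants into $\varepsilon$.
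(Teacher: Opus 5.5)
Your proposal follows the paper's proof essentially step for step. It uses the same $\rho^*=\rho_\beta+\alpha^2E$ built from the multiplier $\widehat\nu(\omega)=\omega\widehat\mu(\omega)/(1-\widehat\mu(2\omega))$, the same cancellation of the $T$-averaged order-$\alpha^2$ term, and the same telescoping against the mixing time. Your two local variations are fine:
\begin{itemize}
\item You get $\nu_0\in L^1$ from $\widehat\nu_0,\widehat\nu_0'\in L^2$ (a Carlson--Bernstein-type bound), where the paper uses $\widehat\nu_0,\widehat\nu_0''\in L^1$.
\item You handle the non-positivity of $\rho^*$ by a Jordan decomposition, where the paper compares $\|\Phi_\alpha^\tau(\rho^*)-\rho_{\rm fix}\|_1\le\|\Phi_\alpha^\tau(\rho_\beta)-\rho_{\rm fix}\|_1+\|\rho_\beta-\rho^*\|_1$. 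The paper's version gives coefficient exactly $1$ on $\varepsilon$ without rescaling.
\end{itemize}

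However, the $\beta$-dependence, which is part of what the statement asserts, is wrong as written.

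\textbf{First}, $\|H_{\rm Lamb}\|=\mathcal O(\beta)$, not $\mathcal O(\beta^{-1})$: the technical lemmas give $\|H_{\rm coh}\|=\mathcal O(\beta)$ and $\|G_{\mathcal D}\|=\mathcal O(\beta^2/\sigma)$. With your value, Step 1 would yield $\mathcal O(\alpha^2/T_0)$. It is $\|Y\|_1=\mathcal O(\beta\|H_{\rm Lamb}\|)=\mathcal O(\beta^2)$ that produces $\alpha^2\beta^2/T_0$.

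\textbf{Second}, $\|\mathcal L\|_{1\to1}=\mathcal O(\beta)$, not $\mathcal O(1)$, and normalization of $f$ and $g$ does not give such a bound.
\begin{itemize}
\item The dissipator is controlled by $\|m\|_{L^1}$ with $m(\tau)=\int\gamma(\omega)e^{-i\omega\tau}\,\mathrm d\omega$, whose modulus is a Gaussian of width of order $\beta$.
\item The coherent part needs the separate estimates on $H_{\rm coh}$ and on $G_{\mathcal D}$. The latter uses the bound $\mathcal O(\beta)\|[H,M_{\mathcal D}]\|$ on the $\tanh(\beta\nu/4)$-filtered operator.
\end{itemize}
So your Step 2 remainder is $\mathcal O(\alpha^4\beta^2)$. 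It is dominated by $\sigma\beta\log(\sigma)\alpha^4$ only because $\sigma=\Omega(\beta)$, which is exactly where that assumption enters.

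\textbf{Smaller points.}
\begin{itemize}
\item Your integral formula for $Y$ has diagonal entries $i\beta p_j(H_{\rm Lamb})_{jj}$, where $p_j$ are the eigenvalues of $\rho_\beta$. You therefore need $Y=Z-\mathrm{diag}(Z)$ together with trace-norm contractivity of the pinching.
\item $Y$ is actually anti-Hermitian and $\nu$ is purely imaginary, because $\widehat\nu(-\omega)=-\overline{\widehat\nu(\omega)}$. For example, $\widehat\nu_0(0)=i/10$: $\mu_0$ has mean $5$, so $1-\widehat\mu_0(2\xi)\sim-10i\xi$, not $-2i\xi$. It is the product of the two that makes $E$ Hermitian.
\item In Step 3 you still have to absorb the $\mathcal O(\alpha^2\beta^2/T_0)$ term from $\|\rho_\beta-\rho^*\|_1$ into the $t_{{\rm mix},\Phi_\alpha}(\varepsilon)$ term. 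The paper does this using $T_0=\Omega(\sigma)=\Omega(\beta)$.
\end{itemize}
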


The three estimates in~\cref{lem:approx_fixed_point} are proved by \cref{cor:approx_fix_point_1,cor:approx_fix_point_2,cor:approx_fix_point_3} in the following three subsections, respectively.

\subsection{Auxiliary operator construction and approximation to the Gibbs state}
We first explain how the auxiliary operator $\rho^*$ is constructed. The asymptotic expansion from the informal discussion~\cref{sec:proof_overview} suggests looking for a correction of the form
\[
\rho^*=\rho_\beta+\alpha^2E.
\]
We will first define the correction term $E$, then use it to control $\|\rho^*-\rho_\beta\|_1$, and finally show that this construction matches the cancellation of the averaged order-$\alpha^2$ term, which yields the approximate fixed-point bound in~\cref{lem:approx_fixed_point}. We now formalize this construction.

\begin{defn}\label{def:rho_star}
Let $\{\lambda_j\}$ denote the eigenvalues of $H$ with corresponding eigenstates $\{|i\rangle\}$.
Define the traceless Hermitian matrix $Y$ in the eigenbasis of $H$ by
\begin{equation}\label{eqn:Y_def}
Y_{j,k} = \begin{cases}
\displaystyle -\frac{i\,(H_{\rm Lamb})_{j,k}}{Z_\beta}\,\frac{e^{-\beta\lambda_j} - e^{-\beta\lambda_k}}{\lambda_j - \lambda_k}\,, & j \neq k\,,\\[1.2ex]
0\,, &j = j\,,
\end{cases}
\end{equation}
For the random-time distribution $\mu$ in~\cref{thm:end_to_end_ding}, let $\nu$ be a finite signed measure on $\mathbb R$ satisfying
\begin{equation}\label{eqn:nu_fourier}
\widehat{\nu}(\omega) = \omega\,\frac{\widehat{\mu}(\omega)}{1 - \widehat{\mu}(2\omega)}~~~\forall \omega\in \mathbb{R}\,.
\end{equation}
The correction term $E$ is defined by
\begin{equation}\label{eqn:E_def}
E := \int_{\mathbb{R}} e^{-iHt}\,Y\,e^{iHt}\,\mathrm{d}\nu(t)\,.
\end{equation}
\end{defn}

We first bound $\|E\|_1$ to control $\|\rho^*-\rho_\beta\|_1$.

\begin{lem}\label{lem:E_bound_general_nu}
Let $E$ be defined by~\cref{eqn:E_def}. Then
\[
\|E\|_1 \le \|\nu\|_{L^1}\,\|Y\|_1 \le \mathcal O\!\left(\beta\|H_{\rm Lamb}\|\,\|\nu\|_{\rm TV}\right)\,.
\]
\end{lem}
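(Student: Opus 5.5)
The first inequality follows from the triangle inequality for Bochner integrals together with the unitary invariance of the Schatten $1$-norm. Since $\nu$ is a finite signed measure, we write $\mathrm{d}\nu = h\,\mathrm{d}|\nu|$ with $|h|=1$ and get
\[
\|E\|_1=\Bigl\|\int_{\mathbb R} e^{-iHt}Ye^{iHt}\,\mathrm{d}\nu(t)\Bigr\|_1\le \int_{\mathbb R}\bigl\|e^{-iHt}Ye^{iHt}\bigr\|_1\,\mathrm{d}|\nu|(t)=\|Y\|_1\,|\nu|(\mathbb R).
\]
Here $|\nu|(\mathbb R)=\|\nu\|_{\rm TV}$, which coincides with $\|\nu\|_{L^1}$ when $\nu$ has a density. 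This step is routine.

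The real work is bounding $\|Y\|_1$. Entrywise, $Y$ is a Schur (Hadamard) product. Writing $\Delta_{jk}:=\frac{e^{-\beta\lambda_j}-e^{-\beta\lambda_k}}{\lambda_j-\lambda_k}$, which is the divided difference of $x\mapsto e^{-\beta x}$ and equals $-\beta e^{-\beta\lambda_j}$ on the diagonal, we have
\[
Y=-\frac{i}{Z_\beta}\bigl(\Delta\circ H_{\rm Lamb}\bigr)
\]
up to the diagonal, which we can remove. I would use the integral representation
\[
\frac{e^{-\beta a}-e^{-\beta b}}{a-b}=-\beta\int_0^1 e^{-s\beta a}e^{-(1-s)\beta b}\,\mathrm{d}s.
\]
With it the Schur product becomes an operator integral:
\[
\Delta\circ X=-\beta\int_0^1 e^{-s\beta H}Xe^{-(1-s)\beta H}\,\mathrm{d}s.
\]
So off-diagonally,
\[
Y=\frac{i\beta}{Z_\beta}\int_0^1 e^{-s\beta H}H_{\rm Lamb}e^{-(1-s)\beta H}\,\mathrm{d}s+\text{(diagonal correction)}=i\beta\int_0^1\rho_\beta^{s}H_{\rm Lamb}\rho_\beta^{1-s}\,\mathrm{d}s+D,
\]
where $D$ is the diagonal part, which must be subtracted to match $Y_{jj}=0$.

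Next I would apply Hölder's inequality for Schatten norms with exponents $1/s$ and $1/(1-s)$:
\[
\|\rho_\beta^sH_{\rm Lamb}\rho_\beta^{1-s}\|_1\le\|\rho_\beta^s\|_{1/s}\,\|H_{\rm Lamb}\|\,\|\rho_\beta^{1-s}\|_{1/(1-s)}=\|H_{\rm Lamb}\|,
\]
since $\|\rho_\beta^s\|_{1/s}=(\Tr\rho_\beta)^s=1$. Hence the integral term has trace norm at most $\beta\|H_{\rm Lamb}\|$. The diagonal entries of $\beta\int_0^1\rho_\beta^sH_{\rm Lamb}\rho_\beta^{1-s}\,\mathrm{d}s$ are $\beta(H_{\rm Lamb})_{jj}e^{-\beta\lambda_j}/Z_\beta$. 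Pinching onto the diagonal does not increase the trace norm, so $\|D\|_1\le\beta\|H_{\rm Lamb}\|$. Together these give $\|Y\|_1\le 2\beta\|H_{\rm Lamb}\|$.

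The main obstacle is exactly this trace-norm control of a Schur multiplier. A naive entrywise bound would introduce dimension factors, and the operator-integral and Hölder step is what avoids them. A secondary subtlety is degenerate eigenvalues ($\lambda_j=\lambda_k$ with $j\neq k$). For these I would interpret $\Delta_{jk}$ as the limit $-\beta e^{-\beta\lambda_j}$, which is consistent with the integral formula. I would also verify that $[H_{\rm Lamb}]$ is defined so these entries agree with the paper's convention; otherwise they can be absorbed in the same pinching argument.

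Combining the two bounds yields $\|E\|_1\le\|\nu\|_{\rm TV}\|Y\|_1=\mathcal O(\beta\|H_{\rm Lamb}\|\,\|\nu\|_{\rm TV})$.
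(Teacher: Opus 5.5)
Your proposal is correct and follows the paper's proof essentially step for step. Both arguments use the triangle inequality with unitary invariance for the first bound, the integral representation of the divided difference to write $Y$ as $i\beta\int_0^1\rho_\beta^{s}H_{\rm Lamb}\rho_\beta^{1-s}\,\mathrm{d}s$ minus its diagonal, and H\"older's inequality with exponents $1/s,\infty,1/(1-s)$. The only differences are cosmetic: you bound the diagonal part by the trace-norm contractivity of pinching, where the paper computes $\beta\sum_i|(H_{\rm Lamb})_{ii}|p_i$ directly, and you treat degenerate eigenvalues explicitly, which the paper leaves implicit.
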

\begin{proof}
Since unitary conjugation preserves the trace norm,
\[
\|E\|_1
\le
\int_{\mathbb R}\left\|e^{-iHt}Ye^{iHt}\right\|_1\,d|\nu|(t)
=
\|\nu\|_{\rm TV}\|Y\|_1\,
\]
Here, $\|\nu\|_{\rm TV}=\int d|\nu|(t)$ is the total variation of $\nu$.
It remains to bound $\|Y\|_1$. Using the integral representation
\[
\frac{e^{-\beta\lambda_j} - e^{-\beta\lambda_k}}{\lambda_j - \lambda_k} = -\beta\int_0^1 e^{-\beta((1-u)\lambda_j + u\lambda_k)}\,\mathrm{d}u\,,
\]
we obtain, for $j \neq k$,
\[
Y_{j,k} = i\beta\,(H_{\rm Lamb})_{j,k}\int_0^1 p_j^{1-u}\,p_k^{u}\,\mathrm{d}u\,,
\]
where $p_{j}:=e^{-\beta\lambda_j}/Z_\beta$ is the eigenvalue of $\rho_\beta$.
Thus, we have
\[
Y = Z - \mathrm{diag}(Z)\,,\qquad \text{where}~Z := i\beta\int_0^1 \rho_\beta^{1-u}\,H_{\rm Lamb}\,\rho_\beta^{u}\,\mathrm{d}u\,.
\]
By the triangle inequality,
\[
\|Y\|_1 \le \|Z\|_1 + \|\mathrm{diag}(Z)\|_1.
\]
For each $u \in [0,1]$, generalized H\"older's inequality~\cite{shebrawi2013trace} gives
\[
\bigl\|\rho_\beta^{1-u}\,H_{\rm Lamb}\,\rho_\beta^{u}\bigr\|_1
\le
\bigl\|\rho_\beta^{1-u}\bigr\|_{1/(1-u)}\,\|H_{\rm Lamb}\|\,\bigl\|\rho_\beta^{u}\bigr\|_{1/u}
=
\|H_{\rm Lamb}\|,
\]
where we used $\|\rho_\beta^{s}\|_{1/s}=1$. Hence $\|Z\|_1\le \beta\|H_{\rm Lamb}\|$. Moreover,
\[
\|\mathrm{diag}(Z)\|_1
=
\beta\sum_i |(H_{\rm Lamb})_{i,i}|\,p_i
\le
\beta\|H_{\rm Lamb}\|.
\]
Therefore,
\[
\|Y\|_1 = \mathcal O\!\left(\beta\|H_{\rm Lamb}\|\right),
\]
which proves the lemma.
\end{proof}

To make the bound from~\cref{lem:E_bound_general_nu} useful, we now choose the random-time distribution $\mu$ so that the associated signed measure $\nu$ has small total variation.

\begin{lem}\label{lem:nu_existence}
Let $\mu_0(t) := \frac{(t-1)^3 e^{-(t-1)}}{6}\,\mathbf{1}_{t\geq 1}$ and $\mu(t) := \mu_0(t/T_0)/T_0$ for $T_0 \geq 1$. Then there exists $\nu \in L^1(\RR)$ satisfying
\[
\widehat{\nu}(\omega) = \omega\,\frac{\widehat{\mu}(\omega)}{1 - \widehat{\mu}(2\omega)}\,,\qquad \|\nu\|_{L^1} = \mathcal O(T_0^{-1}).
\]
\end{lem}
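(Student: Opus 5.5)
The plan is to reduce to the unscaled case $T_0=1$ by scaling, compute $\widehat{\mu}_0$ explicitly, and then check that the resulting multiplier is smooth and decays fast enough that its inverse Fourier transform lies in $L^1$.

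\textbf{Step 1 (scaling).} With the convention $\widehat{f}(\omega)=\int f(x)e^{i\omega x}\,\mathrm{d}x$ and $\mu(t)=\mu_0(t/T_0)/T_0$, we have $\widehat{\mu}(\omega)=\widehat{\mu}_0(T_0\omega)$. Hence
\[
\widehat{\nu}(\omega)=\frac{1}{T_0}\,h(T_0\omega),\qquad h(x):=\frac{x\,\widehat{\mu}_0(x)}{1-\widehat{\mu}_0(2x)}\,.
\]
Suppose $h=\widehat{\kappa}$ for some $\kappa\in L^1(\RR)$. Then $\nu(t):=T_0^{-2}\kappa(t/T_0)$ satisfies the required identity for $\widehat{\nu}$, and $\|\nu\|_{L^1}=T_0^{-1}\|\kappa\|_{L^1}$. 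So everything reduces to the single $T_0$-independent claim $\kappa\in L^1$.

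\textbf{Step 2 (explicit formula).} The density $\mu_0$ is a $\mathrm{Gamma}(4,1)$ law shifted by $1$. Therefore
\[
\widehat{\mu}_0(x)=\frac{e^{ix}}{(1-ix)^4},
\]
which extends analytically to the strip $|\Im x|<1$. Consequently $h$ is analytic in a neighborhood of the real line, except possibly where the denominator $1-\widehat{\mu}_0(2x)$ vanishes. We locate those points as follows.
\begin{itemize}
\item For real $x\neq 0$ we have $|\widehat{\mu}_0(2x)|=(1+4x^2)^{-2}<1$, so the denominator is nonzero.
\item At $x=0$ the zero is simple, since $1-\widehat{\mu}_0(2x)=-2ix\,\mathbb{E}_{\mu_0}[t]+O(x^2)=-10ix+O(x^2)$.
\end{itemize}
The factor $x$ in the numerator cancels this simple zero. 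Hence the singularity is removable, $h(0)=i/10$, and $h\in C^\infty(\RR)$.

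\textbf{Step 3 (decay and $L^1$ inverse transform).} This is the main technical point. For $|x|\ge 1$ we have $|\widehat{\mu}_0(2x)|\le 1/25$, so the denominator stays bounded away from zero and its derivatives are $O(|x|^{-4})$. The numerator $x\,e^{ix}(1-ix)^{-4}$ and its first two derivatives are $O(|x|^{-3})$, because each derivative of $e^{ix}$ produces only bounded factors. Combining these with the smoothness on $[-1,1]$ from Step 2 gives $h,h',h''\in L^1(\RR)\cap C(\RR)$.

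The inverse transform $\kappa(t)=\frac{1}{2\pi}\int h(x)e^{-ixt}\,\mathrm{d}x$ can then be bounded in two ways.
\begin{itemize}
\item Directly, $|\kappa(t)|\le\|h\|_{L^1}/(2\pi)$.
\item Integrating by parts twice (boundary terms vanish by the decay), $|\kappa(t)|\le \|h''\|_{L^1}/(2\pi t^2)$.
\end{itemize}
Together these give $|\kappa(t)|\le C\min(1,t^{-2})$, so $\kappa\in L^1(\RR)$ with an absolute constant.

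The step I expect to need the most care is verifying the derivative bounds for $h''$ uniformly. This requires the quotient rule applied to the denominator near $x=0$, where the removable singularity has to be handled through analyticity: write $1-\widehat{\mu}_0(2x)=x\,q(x)$ with $q$ analytic and $q(0)=-10i\neq0$, so that $h=\widehat{\mu}_0/q$ near $0$. Once this is in place, $\|\nu\|_{L^1}=\mathcal{O}(T_0^{-1})$ follows from Step 1.
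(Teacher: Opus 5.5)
Your proof is correct and follows essentially the same route as the paper: the explicit formula $\widehat{\mu}_0(x)=e^{ix}/(1-ix)^4$, the removable singularity at $0$ with value $i/10$, $\mathcal{O}(|x|^{-3})$ decay of the multiplier and its first two derivatives, and the pointwise bound $|\kappa(t)|\le C\min(1,t^{-2})$ obtained from $\|h\|_{L^1}$ and $\|h''\|_{L^1}$. Your rescaling $\nu(t)=T_0^{-2}\kappa(t/T_0)$ is the correct normalization: the paper's last step writes $\nu(t)=\nu_0(t/T_0)/T_0$, which gives $\widehat{\nu}(\omega)=\widehat{\nu}_0(T_0\omega)$ and $\|\nu\|_{L^1}=\|\nu_0\|_{L^1}$, so its claimed identities $\widehat{\nu}(\omega)=T_0^{-1}\widehat{\nu}_0(T_0\omega)$ and $\|\nu\|_{L^1}=\|\nu_0\|_{L^1}/T_0$ hold only with your extra factor $T_0^{-1}$.
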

\begin{proof}
The Fourier transform of $\mu_0$ is $\widehat{\mu}_0(\omega) = e^{i\omega}/(1-i\omega)^4$, and $\widehat{\mu}(\omega) = \widehat{\mu}_0(T_0\omega)$. Define
\[
\widehat{\nu}_0(\omega) := \omega\,\frac{\widehat{\mu}_0(\omega)}{1-\widehat{\mu}_0(2\omega)} = \frac{\omega\,e^{i\omega}(1-2i\omega)^4}{(1-i\omega)^4\bigl((1-2i\omega)^4 - e^{2i\omega}\bigr)}\,.
\]

By Fourier inversion, if $\widehat{\nu}_0 \in L^1(\RR)$, then $\nu_0(t) = \frac{1}{2\pi}\int_{\RR} \widehat{\nu}_0(\omega)\,e^{-i\omega t}\,\mathrm{d}\omega$ satisfy
$$
\sup_{t \in \mathbb{R}}|\nu_0(t)| \le \| \hat{\nu}_0\|_{L^1}\,.
$$
Moreover, since $\widehat{t^2\nu_0}(\omega) = -\widehat{\nu}_0''(\omega)$, if $\widehat{\nu}_0'' \in L^1(\RR)$, then $$
\sup_{t \in \mathbb{R}}|t^2\nu_0(t)| \leq \frac{1}{2\pi}\|\widehat{\nu}_0''\|_{L^1}\,.
$$
Hence, if we can show that $\widehat{\nu}_0, \widehat{\nu}_0'' \in L^1(\RR)$, then we have
\[
\|\nu_0\|_{L^1} = \int_{|t|\leq 1}|\nu_0(t)|\,\mathrm{d}t + \int_{|t|>1}\frac{|t^2\nu_0(t)|}{t^2}\,\mathrm{d}t \leq \frac{1}{2\pi}\|\widehat{\nu}_0\|_{L^1} + \frac{1}{\pi}\|\widehat{\nu}_0''\|_{L^1} =\mathcal{O}(1)\,.
\]

One checks that $\widehat{\nu}_0$ is continuous on $\RR$ with $\lim_{\omega\to 0}\widehat{\nu}_0(\omega) = i/10$, and $|\widehat{\nu}_0(\omega)| = \mathcal{O}(|\omega|^{-3})$ as $|\omega|\to\infty$, so $\|\widehat{\nu}_0\|_{L^1} < \infty$.

For the second derivative, write $\widehat{\nu}_0(\omega) = \frac{A(\omega)}{B(\omega)}$ with $A(\omega) := \omega\widehat{\mu}_0(\omega)$ and $B(\omega) := 1 - \widehat{\mu}_0(2\omega)$. As $|\omega|\to\infty$,
\[
|A^{(k)}(\omega)| = \mathcal{O}(|\omega|^{-3})\;\;(k=0,1,2)\,,\qquad |B(\omega)| \geq \tfrac{1}{2}\,,\qquad |B^{(k)}(\omega)| = \mathcal{O}(|\omega|^{-4})\;\;(k=1,2)\,.
\]
The quotient rule gives
\[
\widehat{\nu}_0''(\omega) = \frac{A''B^2 - AB''B - 2A'BB' + 2A(B')^2}{B^3}\,,
\]
where all functions are evaluated at $\omega$. Combined with the bounds above, each term in the numerator is $\mathcal{O}(|\omega|^{-3})$ and $|B(\omega)|^3 \geq \tfrac{1}{8}$, so $|\widehat{\nu}_0''(\omega)| = \mathcal{O}(|\omega|^{-3})$ as $|\omega|\to\infty$. Near $\omega = 0$, a Taylor expansion shows $\widehat{\nu}_0''(0)$ is finite. Hence $\widehat{\nu}_0''$ is continuous with $\|\widehat{\nu}_0''\|_{L^1} < \infty$.

Finally, define $\nu(t) := \nu_0(t/T_0)/T_0$. A change of variables gives
\[
\widehat{\nu}(\omega) = T_0^{-1}\widehat{\nu}_0(T_0\omega) = \omega\,\frac{\widehat{\mu}(\omega)}{1-\widehat{\mu}(2\omega)}\,,\qquad \|\nu\|_{L^1} = \frac{\|\nu_0\|_{L^1}}{T_0} = \mathcal O(T_0^{-1})\,,
\]
which proves the lemma.
\end{proof}

\begin{cor}\label{cor:approx_fix_point_1}
Let $\rho^*:=\rho_\beta + \alpha^2 E$. Then, $\rho^*$ is Hermitian, trace one, and
\begin{align*}
    \|\rho^* - \rho_\beta\|_1 \leq  \mathcal{O}(\alpha^2\beta^2/T_0)\,.
\end{align*}
\end{cor}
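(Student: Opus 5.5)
The plan is to verify the three claims in turn: Hermiticity, unit trace, and the trace-norm bound. For the bound we combine \cref{lem:E_bound_general_nu} and \cref{lem:nu_existence} with an estimate $\|H_{\rm Lamb}\|=\mathcal O(\beta)$.

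\emph{Hermiticity.} Since $\rho_\beta$ is Hermitian, it suffices that $E$ is Hermitian. First, $Y$ is Hermitian. $H_{\rm Lamb}$ is Hermitian, so $\overline{(H_{\rm Lamb})_{k,j}}=(H_{\rm Lamb})_{j,k}$. The factor $(e^{-\beta\lambda_j}-e^{-\beta\lambda_k})/(\lambda_j-\lambda_k)$ is real and symmetric in $(j,k)$. Together with the prefactor $-i$, this gives $\overline{Y_{k,j}}=Y_{j,k}$. Equivalently, $Y=Z-\mathrm{diag}(Z)$ with $Z=i\beta\int_0^1\rho_\beta^{1-u}H_{\rm Lamb}\rho_\beta^u\,\mathrm du$; substituting $u\mapsto 1-u$ shows $Z^\dagger=Z$.

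Next, we need $\nu$ to be real-valued, so that $E^\dagger=\int e^{-iHt}Ye^{iHt}\,\mathrm d\nu(t)=E$. A real $\nu$ is characterized by $\overline{\widehat\nu(\omega)}=\widehat\nu(-\omega)$. Since $\mu$ is a real probability density, $\overline{\widehat\mu(\omega)}=\widehat\mu(-\omega)$, and the multiplier $\omega\,\widehat\mu(\omega)/(1-\widehat\mu(2\omega))$ inherits this symmetry. Hence the $\nu$ produced in \cref{lem:nu_existence} by Fourier inversion is real. If instead one only had a complex $\nu$, we would replace it by its real part; this is harmless because the real part has the same Hermitian-symmetric Fourier transform.

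\emph{Trace one.} Conjugation by $e^{-iHt}$ preserves the trace, so $\Tr E=\nu(\mathbb R)\,\Tr Y$. Since $Y$ has zero diagonal, $\Tr Y=0$. Hence $\Tr\rho^*=\Tr\rho_\beta=1$.

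\emph{Norm bound.} By \cref{lem:E_bound_general_nu} and \cref{lem:nu_existence},
\[
\|\rho^*-\rho_\beta\|_1=\alpha^2\|E\|_1\le\alpha^2\|\nu\|_{L^1}\|Y\|_1=\mathcal O\!\left(\alpha^2\beta\|H_{\rm Lamb}\|/T_0\right).
\]
It remains to show $\|H_{\rm Lamb}\|=\mathcal O(\beta)$; this is the main obstacle.

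Recall $H_{\rm Lamb}=H_{\rm coh}-G_{\mathcal D}$. For $G_{\mathcal D}$, \cref{thm:kms_dbc_lindbladian_characterization} gives $G_{\mathcal D}=\frac i2\sum_\nu\tanh(\beta\nu/4)M_\nu$. The standard bounds for such Bohr-frequency-filtered operators (the ones in \cref{sec:lemmas}, cf.\ \cite{ChenKastoryanoGilyen2023,Ding_2025}) control its norm by a constant multiple of $\|M\|\le\int\gamma(\omega)\|V(\omega)\|^2\,\mathrm d\omega$. By Plancherel, $\int\|V(\omega)\|^2\,\mathrm d\omega=\mathcal O(\|f\|_{L^2}^2)=\mathcal O(1)$. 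Since $\gamma=g$ is a density of height $\mathcal O(\beta)$, this yields $\mathcal O(\beta)$, or at worst $\mathcal O(1)$ up to logarithms.

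For $H_{\rm coh}$, the Lamb-shift estimates in \cref{sec:lemmas} should give $\int g(\omega)\|H_{\mathrm{LS},A_S}(\omega)\|\,\mathrm d\omega=\mathcal O(\beta)$. The route is to bound $\|\mathcal G_{A_S,f}(\omega)\|$ after integrating against $g$: the principal-value type integral is controlled by $\|g\|_\infty\lesssim\beta$ times $\|f\|_{L^1}^2$ divided by $\sigma$, which is $\mathcal O(\sigma)/\sigma=\mathcal O(1)$.

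If these estimates produce logarithmic factors in $\sigma$, we absorb them into the $\mathcal O$ notation. Combining, $\|\rho^*-\rho_\beta\|_1=\mathcal O(\alpha^2\beta^2/T_0)$.
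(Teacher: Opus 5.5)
Your skeleton is the paper's: $E$ is traceless, and \cref{lem:E_bound_general_nu} together with \cref{lem:nu_existence} reduces the bound to $\|H_{\rm Lamb}\|=\mathcal O(\beta)$. However, the two steps you supply yourself are flawed.

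First, the Hermiticity argument is wrong at both of its stages.
\begin{itemize}
\item $Y$ is anti-Hermitian, not Hermitian. Conjugation flips the prefactor $-i$, so $\overline{Y_{k,j}}=-Y_{j,k}$. Equivalently, $Z^\dagger=-Z$ after $u\mapsto1-u$. This is also forced by $(\lambda_k-\lambda_j)Y_{jk}=(-i[H_{\rm Lamb},\rho_\beta])_{jk}$: the left factor is odd and the right side is Hermitian.
\item $\nu$ is not real. The factor $\omega$ is odd, so $\overline{\widehat\nu(\omega)}=-\widehat\nu(-\omega)$, i.e.\ $\nu$ is purely imaginary. This matches $\widehat\nu_0(0)=i/10$ in the proof of \cref{lem:nu_existence}, whereas a real $\nu$ would have real $\widehat\nu(0)=\int\nu$.
\item $E$ is Hermitian precisely because these two factors of $i$ cancel: $E^\dagger=\int e^{-iHt}(-Y)e^{iHt}\,(-\mathrm d\nu(t))=E$.
\item Your fallback of replacing $\nu$ by its real part is not harmless. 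Since $\operatorname{Re}\nu\equiv 0$, it gives $E=0$, violates \eqref{eqn:nu_fourier}, and destroys the cancellation $\mathbb E_{T\sim\mu}\Delta(T)=0$ that \cref{lem:avg_approx_fixed_point} relies on.
\end{itemize}
The paper's own wording (``Hermitian $Y$'', ``signed measure'') is loose in the same way, and its proof simply asserts Hermiticity.

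Second, for $\|H_{\rm Lamb}\|=\mathcal O(\beta)$ you may cite \cref{lem:ding_hamiltonian_part_bound,lem:ding_jump_G_bound}, as the paper does, but the mechanisms you describe would not prove it.

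\emph{The $G_{\mathcal D}$ term.} The map $X\mapsto\frac i2\sum_\nu\tanh(\beta\nu/4)X_\nu$ is not bounded in operator norm by an absolute constant times $\|X\|$.
\begin{itemize}
\item Its kernel $\beta^{-1}\operatorname{csch}(2\pi t/\beta)$ has a non-integrable $1/t$ singularity.
\item On levels separated by $\gg\beta^{-1}$ it is essentially triangular truncation, whose norm grows like the logarithm of the number of levels. That can be as large as $\log(\beta\|H\|)$, which no $\mathcal O$ in the statement may absorb.
\end{itemize}
The missing idea is \cref{fac:ding_T_bound}. By oddness, $T(X)=\int_0^\infty\phi_\beta(t)\bigl(X(t)-X(-t)\bigr)\,\mathrm dt$, hence $\|T(X)\|=\mathcal O(\beta)\|[H,X]\|$. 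Integrating by parts with $(\partial_s+\partial_t)m(t-s)=0$ and $\|f'\|_{L^2}=\mathcal O(\sigma^{-1})$ gives $\|[H,M_{\mathcal D}]\|=\mathcal O(\beta/\sigma)$. Therefore $\|G_{\mathcal D}\|=\mathcal O(\beta^2/\sigma)=\mathcal O(\beta)$ for $\sigma=\Omega(\beta)$.

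\emph{The Plancherel step.} $\int\|V(\omega)\|^2\,\mathrm d\omega$ is not $\mathcal O(\|f\|_{L^2}^2)$; it can scale with the number of Bohr frequencies of $A_S$. What Plancherel actually gives is the operator inequality $\int\gamma V^\dagger V\,\mathrm d\omega\le 2\pi\|\gamma\|_\infty\|f\|_{L^2}^2\, I$.

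\emph{The $H_{\rm coh}$ term.} The factor $1/\sigma$ in your heuristic $\|g\|_\infty\|f\|_{L^1}^2/\sigma$ has no justification. Bounding $\int g\,\|H_{\mathrm{LS},A_S}(\omega)\|\,\mathrm d\omega$ with the norm inside the integral discards the principal-value cancellation near Bohr frequencies and can cost a factor $\log(\sigma/\beta)$. The paper instead integrates over $\omega$ first, producing kernels $k_j(\tau)=\widehat{\psi_j}(-\tau/\beta)$ with $\|k_j\|_{L^1}=\beta\|\widehat{\psi_j}\|_{L^1}=\mathcal O(\beta)$. It then applies Cauchy--Schwarz in $s$ using $\|f\|_{L^2}=1$.

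Finally, the statement's $\mathcal O$ does not hide $\log\sigma$ factors, so ``absorbing'' them is not permitted.
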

\begin{proof}
We combine~\cref{lem:E_bound_general_nu,lem:nu_existence} with the bound $\|H_{\rm Lamb}\|=\mathcal O(\beta)$ from~\cref{lem:ding_hamiltonian_part_bound,lem:ding_jump_G_bound}. This gives
\begin{equation}\label{eqn:E_bound}
\|E\|_1 \le \mathcal O\!\left(\frac{\beta\|H_{\rm Lamb}\|}{T_0}\right)=\mathcal O\!\left(\frac{\beta^2}{T_0}\right).
\end{equation}
Since $\rho^*=\rho_\beta+\alpha^2E$ and $E$ is traceless, we have $\rho^*$ is Hermitian and satisfies $\Tr(\rho^*)=1$. Moreover,
\begin{equation}\label{eqn:rhostar_bound}
\bigl\|\rho^* - \rho_\beta\bigr\|_1 = \alpha^2\,\|E\|_1 \leq \mathcal O\!\left(\frac{\alpha^2\beta^2}{T_0}\right).
\end{equation}
This completes the proof.
\end{proof}
\subsection{Auxiliary operator approximating the fixed point}
We next give a rigorous verification that the same construction also makes $\rho^*$ an approximate fixed point of the full channel. The informal asymptotic expansion already indicates that the averaged order-$\alpha^2$ term should cancel, so the task here is to turn that heuristic cancellation into a quantitative estimate. The argument has two parts. We first compare the exact channel {$\Phi_{\alpha,T}$} with the approximate channel $\widetilde{\Phi}_{\alpha,T}$ for each fixed evolution time $T$. We then analyze the averaged approximate dynamics and verify rigorously that the order-$\alpha^2$ contribution vanishes, so that only order-$\alpha^4$ terms remain.

For a fixed evolution time $T$, let
\[
\widetilde{\Phi}_{\alpha,T}:=\mathcal{U}_S(T)\circ e^{\alpha^2\mathcal{L}}\circ \mathcal{U}_S(T),
\qquad
\mathcal{L}=-i[H_{\rm Lamb},\,\cdot\,]+\mathcal{L}_{\rm KMS}.
\]
The first step is to control the approximation error between{ $\Phi_{\alpha,T}$} and $\widetilde{\Phi}_{\alpha,T}$. By~\cref{thm:ding_approx}, for every $T$ in the support of $\mu$,
\[
\bigl\|{\Phi_{\alpha,T}}(\rho^*)-\widetilde{\Phi}_{\alpha,T}(\rho^*)\bigr\|_1
=
\mathcal O\!\left(\alpha^2\sigma e^{-T^2/(4\sigma^2)}+\alpha^4\sigma\log(\sigma)\|\gamma'\|_{L^1}\right).
\]
Recall the Gaussian choice of $\gamma$ in~\cref{eqn:ding_choice}:
\[
\gamma(\omega)=\frac{1}{\sqrt{2\pi\left(2\beta^{-2}-1/(4\sigma^2)\right)}}\exp\left(-\frac{(\beta\omega+1)^2}{2\left(2-\beta^2/(4\sigma^2)\right)}\right)\,.
\]
The explicit formula for $\gamma$ shows that, when $\sigma=\Omega(\beta)$,
\[
\|\gamma'\|_{L^1}=\mathcal O(\beta).
\]
Since the support of $\mu$ is contained in $[T_0,\infty)$, averaging over $T\sim\mu$ gives
\begin{equation}\label{eqn:step3_direct_approx}
\bigl\|\Phi_\alpha(\rho^*)-\mathbb{E}_{T\sim\mu}\bigl(\widetilde{\Phi}_{\alpha,T}(\rho^*)\bigr)\bigr\|_1
=
\mathcal O\!\left(\alpha^2\sigma e^{-T_0^2/(4\sigma^2)}+\alpha^4\sigma\beta\log(\sigma)\right).
\end{equation}

It remains to control the averaged approximate channel.
\begin{lem}\label{lem:avg_approx_fixed_point}
With the notation above,
\[
\Bigl\|\mathbb E_{T\sim\mu}\bigl(\widetilde{\Phi}_{\alpha,T}(\rho^*)\bigr)-\rho^*\Bigr\|_1
\le
\alpha^4\|\mathcal{L}\|_{1\to1}\|E\|_1
+\frac{\alpha^4}{2}\|\mathcal{L}\|_{1\to1}^2\bigl(1+\alpha^2\|E\|_1\bigr).
\]
\end{lem}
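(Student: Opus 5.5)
Our plan is to Taylor-expand $e^{\alpha^2\mathcal L}$ to first order with an integral remainder, and to show that the averaged first-order term cancels exactly by the construction of $E$ in \cref{def:rho_star}. Write
\[
e^{\alpha^2\mathcal L}=\mathrm{id}+\alpha^2\mathcal L+\mathcal R_\alpha,\qquad \mathcal R_\alpha:=\int_0^{\alpha^2}(\alpha^2-s)\,\mathcal L^2e^{s\mathcal L}\,\mathrm ds,
\]
so that $\|\mathcal R_\alpha\|_{1\to1}\le \frac{\alpha^4}{2}\|\mathcal L\|_{1\to1}^2$. Here we use that $e^{s\mathcal L}$ is a CPTP map, which is contractive in trace norm on Hermitian operators (and $\rho^*$ is Hermitian). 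Since $\mathcal U_S(T)$ is trace-norm isometric, we get
\[
\widetilde\Phi_{\alpha,T}(\rho^*)=\mathcal U_S(2T)\rho^*+\alpha^2\,\mathcal U_S(T)\mathcal L\,\mathcal U_S(T)\rho^*+\mathcal U_S(T)\mathcal R_\alpha\mathcal U_S(T)\rho^*.
\]
The last term is bounded by $\frac{\alpha^4}{2}\|\mathcal L\|^2_{1\to1}\|\rho^*\|_1\le\frac{\alpha^4}{2}\|\mathcal L\|_{1\to1}^2(1+\alpha^2\|E\|_1)$, which is the second term of the claim.

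Next we split the middle term using $\rho^*=\rho_\beta+\alpha^2E$. The piece $\alpha^2\mathcal U_S(T)\mathcal L\,\mathcal U_S(T)(\alpha^2E)$ has trace norm at most $\alpha^4\|\mathcal L\|_{1\to1}\|E\|_1$, which is the first term of the claim. For the remaining pieces, $\mathcal U_S(T)\rho_\beta=\rho_\beta$ and $\mathcal L_{\rm KMS}(\rho_\beta)=0$, so $\mathcal L\,\mathcal U_S(T)\rho_\beta=-i[H_{\rm Lamb},\rho_\beta]$. Also $\mathcal U_S(2T)\rho^*=\rho_\beta+\alpha^2\mathcal U_S(2T)E$. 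It therefore suffices to prove the exact identity
\[
\mathbb E_{T\sim\mu}\bigl[\mathcal U_S(2T)E\bigr]-E+\mathbb E_{T\sim\mu}\bigl[\mathcal U_S(T)\bigl(-i[H_{\rm Lamb},\rho_\beta]\bigr)\bigr]=0 .
\]
Once this holds, averaging over $T$ and using the triangle inequality gives the lemma.

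The main step is verifying this identity entrywise in the eigenbasis of $H$. On the diagonal, $E_{jj}=0$ because $Y$ has zero diagonal and conjugation by $e^{-iHt}$ preserves diagonal entries, and $[H_{\rm Lamb},\rho_\beta]_{jj}=0$, so both sides vanish. For $j\neq k$, set $\omega=\lambda_k-\lambda_j$. Then $(e^{-iHt}Ye^{iHt})_{jk}=e^{i\omega t}Y_{jk}$, so \eqref{eqn:E_def} gives $E_{jk}=\widehat\nu(\omega)Y_{jk}$. Similarly, $\mathbb E_T[\mathcal U_S(2T)E]_{jk}=\widehat\mu(2\omega)E_{jk}$ and $\mathbb E_T[\mathcal U_S(T)X]_{jk}=\widehat\mu(\omega)X_{jk}$, using the paper's convention $\widehat f(\omega)=\int f e^{i\omega t}$.

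With $X=-i[H_{\rm Lamb},\rho_\beta]$ we have $X_{jk}=-i(H_{\rm Lamb})_{jk}(p_k-p_j)$. By \eqref{eqn:Y_def}, this equals $\omega\,Y_{jk}$ (we will check this sign carefully). The identity then reduces to
\[
\bigl(\widehat\mu(2\omega)-1\bigr)\widehat\nu(\omega)+\omega\,\widehat\mu(\omega)=0,
\]
which is exactly \eqref{eqn:nu_fourier}. The existence of such a $\nu$ and the convergence of the Fourier integrals are supplied by \cref{lem:nu_existence}, with $\widehat\mu(2\omega)\neq1$ for $\omega\neq0$.

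The main obstacle is the bookkeeping of signs and of the Fourier convention, in particular making sure $\omega$ is taken as $\lambda_k-\lambda_j$ consistently in $Y$, $\nu$ and $\mu$. If a sign mismatch appears, we would absorb it by checking the definition of $Y$ against the heuristic derivation in \cref{sec:proof_overview}.
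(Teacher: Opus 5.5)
Your proposal is correct and follows the paper's proof essentially step by step. Both expand $e^{\alpha^2\mathcal L}$ with an integral remainder, bound the $\alpha^4\,\mathcal U_S(T)\mathcal L(\mathcal U_S(T)E)$ term and the remainder (using $\|\rho^*\|_1\le 1+\alpha^2\|E\|_1$) exactly as you do, and show that the averaged order-$\alpha^2$ term vanishes entrywise via \eqref{eqn:nu_fourier}. Your sign check also goes through with no adjustment needed, since $(-i[H_{\rm Lamb},\rho_\beta])_{jk}=-i(H_{\rm Lamb})_{jk}(p_k-p_j)=(\lambda_k-\lambda_j)Y_{jk}$ and $\mathbb E_{T\sim\mu}e^{2i\omega T}=\widehat\mu(2\omega)$ under the paper's Fourier convention.
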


\begin{proof}
The high-level idea of the proof is to isolate the order-$\alpha^2$ contribution and check that it vanishes after averaging over $T$.

We first calculate $\widetilde{\Phi}_{\alpha,T}(\rho^*)$ explicitly. Using $\rho^*=\rho_\beta+\alpha^2E$, we write
\[
\mathcal{U}_S(T)[\rho^*]=\rho_\beta+\alpha^2U_S(T)EU_S^\dagger(T).
\]
Since $\mathcal{L}_{\rm KMS}(\rho_\beta)=0$, the order-$\alpha^2$ term contributed by $\mathcal{L}$ is
\[
\mathcal{L}(\rho_\beta)=-i[H_{\rm Lamb},\rho_\beta].
\]
Expanding $e^{\alpha^2\mathcal{L}}$ around this operator gives
\[
e^{\alpha^2\mathcal{L}}\bigl[\rho_\beta+\alpha^2U_S(T)EU_S^\dagger(T)\bigr]
=
\rho_\beta+\alpha^2U_S(T)EU_S^\dagger(T)+\alpha^2\mathcal{L}(\rho_\beta)
+\alpha^4\mathcal{L}\bigl(U_S(T)EU_S^\dagger(T)\bigr)+R_T,
\]
where
\[
R_T:=\int_0^{\alpha^2}(\alpha^2-s)e^{s\mathcal{L}}
\Bigl[\mathcal{L}^2\bigl(\rho_\beta+\alpha^2U_S(T)EU_S^\dagger(T)\bigr)\Bigr]\,\mathrm ds.
\]
Applying the outer unitary and subtracting $\rho^*$, we obtain
\begin{equation}\label{eqn:phi_tilde_rhostar_expansion}
\widetilde{\Phi}_{\alpha,T}(\rho^*)-\rho^*
=
\alpha^2\Delta(T)+\alpha^4\,\mathcal{U}_S(T)\Bigl[\mathcal{L}\bigl(U_S(T)EU_S^\dagger(T)\bigr)\Bigr]+\mathcal{U}_S(T)[R_T].
\end{equation}
where
\[
\Delta(T)
:=
U_S(2T)EU_S^\dagger(2T)-E+U_S(T)\bigl(-i[H_{\rm Lamb},\rho_\beta]\bigr)U_S^\dagger(T).
\]

We now verify that the $\alpha^2$ term in \eqref{eqn:phi_tilde_rhostar_expansion} vanishes. By the construction in~\cref{def:rho_star},
\[
E_{jk}=\widehat{\nu}(\lambda_k-\lambda_j)Y_{jk},
\qquad
\widehat{\nu}(\omega)=\omega\,\frac{\widehat{\mu}(\omega)}{1-\widehat{\mu}(2\omega)}.
\]
For $j\neq k$, writing $\omega:=\lambda_k-\lambda_j$, we have
\[
\bigl(U_S(2T)EU_S^\dagger(2T)-E\bigr)_{jk}
=
\bigl(e^{-2iT\omega}-1\bigr)\widehat\nu(\omega)Y_{jk},
\]
and
\[
\bigl(U_S(T)(-i[H_{\rm Lamb},\rho_\beta])U_S^\dagger(T)\bigr)_{jk}
=
\omega e^{-iT\omega}Y_{jk}.
\]
Taking expectation over $T\sim\mu$, we have
\begin{align*}
\mathbb E_{T\sim\mu}(\Delta(T))_{jk}
= &~ \left(\left(\mathbb{E}_{T\sim \mu}(e^{-2i\omega T})-1\right)\widehat{\nu}(\omega) + \omega \mathbb{E}_{T\sim \mu}(e^{-i\omega T})\right)Y_{jk}\\
=&~
\Bigl((\widehat\mu(2\omega)-1)\widehat\nu(\omega)+\omega\widehat\mu(\omega)\Bigr)Y_{jk}
=0\,.
\end{align*}
The diagonal entries vanish because $E_{jj}=0$ and $[H_{\rm Lamb},\rho_\beta]_{jj}=0$. Hence
\[
\mathbb E_{T\sim\mu}(\Delta(T))=0.
\]
Taking expectations in~\eqref{eqn:phi_tilde_rhostar_expansion}, using unitary invariance of the trace norm, and using that $e^{s\mathcal L}$ is trace-norm contractive, we obtain
\[
\Bigl\|\mathbb E_{T\sim\mu}\bigl(\widetilde{\Phi}_{\alpha,T}(\rho^*)\bigr)-\rho^*\Bigr\|_1
\le
\alpha^4\|\mathcal{L}\|_{1\to1}\|E\|_1
+\mathbb E_{T\sim\mu}(\|R_T]\|_1).
\]
Finally,
\[
\|R_T\|_1
\le
\int_0^{\alpha^2}(\alpha^2-s)\left\|\mathcal{L}^2\bigl(\rho_\beta+\alpha^2U_S(T)EU_S^\dagger(T)\bigr)\right\|_1\,\mathrm ds
\le
\frac{\alpha^4}{2}\|\mathcal{L}\|_{1\to1}^2\bigl(1+\alpha^2\|E\|_1\bigr),
\]
which proves the lemma.
\end{proof}

\begin{cor}\label{cor:approx_fix_point_2}
    Let $\rho^*=\rho_\beta + \alpha^2 E$ and $T_0\ge 2\sigma\sqrt{\log\!\left((\alpha^2\beta\log(\sigma))^{-1}\right)}$. Then, it holds that
    \begin{align*}
        \|\Phi_\alpha(\rho^*) - \rho^*\|_1 \leq \mathcal{O}(\sigma\beta\log(\sigma)\alpha^4)
        \,.
    \end{align*}
\end{cor}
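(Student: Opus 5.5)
The plan is to combine the two estimates already in hand with the triangle inequality:
\[
\|\Phi_\alpha(\rho^*)-\rho^*\|_1\le \bigl\|\Phi_\alpha(\rho^*)-\mathbb E_{T\sim\mu}\widetilde{\Phi}_{\alpha,T}(\rho^*)\bigr\|_1+\bigl\|\mathbb E_{T\sim\mu}\widetilde{\Phi}_{\alpha,T}(\rho^*)-\rho^*\bigr\|_1 .
\]
The first term is controlled by~\eqref{eqn:step3_direct_approx}. The second is controlled by~\cref{lem:avg_approx_fixed_point}. It then remains to check that every contribution is $\mathcal O(\sigma\beta\log(\sigma)\alpha^4)$ under the stated choice of $T_0$.

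\emph{Truncation term.} The quantity $\alpha^2\sigma e^{-T_0^2/(4\sigma^2)}$ is handled by the lower bound on $T_0$. From $T_0\ge 2\sigma\sqrt{\log((\alpha^2\beta\log\sigma)^{-1})}$ we get $e^{-T_0^2/(4\sigma^2)}\le \alpha^2\beta\log\sigma$. Hence this term is at most $\alpha^4\sigma\beta\log\sigma$, which matches the second term $\alpha^4\sigma\beta\log(\sigma)$ of~\eqref{eqn:step3_direct_approx}.

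\emph{Averaged approximate channel.} I would bound $\|\mathcal L\|_{1\to1}$ directly from the decomposition $\mathcal L=-i[H_{\rm Lamb},\cdot]+\mathcal L_{\rm KMS}$, or equivalently from~\eqref{eqn:lindbladian_operator}. The coherent part contributes at most $2\|H_{\rm Lamb}\|=\mathcal O(\beta)$, using the bound from~\cref{lem:ding_hamiltonian_part_bound,lem:ding_jump_G_bound}. The dissipator contributes at most $2\int\gamma(\omega)\|V_{A_S,f}(\omega)\|^2\,d\omega$. Since $\|A_S\|\le1$ and $\|f\|_{L^1}=\mathcal O(\sqrt\sigma)$, a crude estimate gives $\|V\|^2=\mathcal O(\sigma)$, so this is $\mathcal O(\sigma)$. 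A Plancherel-type averaging over $\omega$ is sharper: $\int\|V(\omega)\|^2d\omega\lesssim\|f\|_{L^2}^2=1$, which together with $\|\gamma\|_\infty=\mathcal O(\beta)$ gives $\mathcal O(\beta)$. Either way, $\|\mathcal L\|_{1\to1}=\mathcal O(\beta+\sigma)$ at worst, or $\mathcal O(\beta)$ with the sharper argument.

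Next, \cref{cor:approx_fix_point_1} gives $\|E\|_1=\mathcal O(\beta^2/T_0)$. Since $T_0\gtrsim\sigma\ge\Omega(\beta)$, this is $\mathcal O(\beta)$, and $\alpha^2\|E\|_1=o(1)$. The bound from~\cref{lem:avg_approx_fixed_point} is therefore $\mathcal O(\alpha^4(\|\mathcal L\|\,\|E\|_1+\|\mathcal L\|^2))$.

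\emph{Main obstacle.} The step I expect to be hardest is making $\|\mathcal L\|_{1\to1}^2$ fit inside $\sigma\beta\log\sigma$. With the sharper estimate $\|\mathcal L\|=\mathcal O(\beta)$, the bound becomes $\beta^2\le\sigma\beta$ because $\sigma=\Omega(\beta)$, and the conclusion follows. I would therefore establish the $\omega$-averaged bound $\int\gamma(\omega)\|V_{A_S,f}(\omega)\|^2d\omega=\mathcal O(\beta)$, or at least $\mathcal O(\sqrt{\sigma\beta})$, by writing the integral as a double time integral with kernel $\hat\gamma(t-t')f(t)f(t')$, where $\hat\gamma$ is the Fourier transform of $\gamma$. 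The $\|H_{\rm Lamb}\|$ contribution is already $\mathcal O(\beta)$ by the cited lemmas. Summing the three contributions gives the claim.
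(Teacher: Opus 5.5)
Your outline is the paper's proof. It uses the same triangle inequality through $\mathbb E_{T\sim\mu}\widetilde{\Phi}_{\alpha,T}(\rho^*)$. It uses \eqref{eqn:step3_direct_approx} and the lower bound on $T_0$ to get $\alpha^2\sigma e^{-T_0^2/(4\sigma^2)}\le\alpha^4\sigma\beta\log\sigma$. It applies \cref{lem:avg_approx_fixed_point} with $\|E\|_1=\mathcal O(\beta^2/T_0)$ and $\|\mathcal L\|_{1\to1}=\mathcal O(\beta)$, after which all remaining terms are $\mathcal O(\alpha^4\beta^2)\le\mathcal O(\alpha^4\sigma\beta)$. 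The paper simply cites \cref{lem:ding_L_norm_bound} for $\|\mathcal L\|_{1\to1}=\mathcal O(\beta)$. Your own derivation of that bound is the one step that fails as written.

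You bound the dissipator by the scalar $\int\gamma(\omega)\|V_{A_S,f}(\omega)\|^2\,\mathrm d\omega$ and then claim $\int\|V(\omega)\|^2\,\mathrm d\omega\lesssim\|f\|_{L^2}^2$. Plancherel only gives, for each fixed vector $\psi$, $\int\|V(\omega)\psi\|^2\,\mathrm d\omega=2\pi\int|f(t)|^2\|A_S(t)\psi\|^2\,\mathrm dt\le2\pi\|f\|_{L^2}^2\|\psi\|^2$. The supremum over $\psi$ cannot be moved inside the $\omega$-integral. In fact your scalar can be $\Theta(\sigma)$. Let $H$ be a direct sum of two-level blocks whose gaps $\nu_b$ form a grid of spacing $1/\sigma$ on $[0,10/\beta]$, and let $A=\bigoplus_b X^{(b)}$, so $\|A\|=1$. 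Then $\|V(\omega)\|=\max_{b,\pm}|\widehat f(\nu_b\pm\omega)|\gtrsim\sqrt\sigma$ for all $|\omega|\le10/\beta$. This window carries essentially all the mass of $\gamma$, so $\int\gamma\|V\|^2\,\mathrm d\omega\gtrsim\sigma$. Hence neither $\mathcal O(\beta)$ nor your fallback $\mathcal O(\sqrt{\sigma\beta})$ holds for this quantity. The double-time-integral rewriting you suggest is also unavailable once the norm sits inside the $\omega$-integral.

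The repair is to keep the $\omega$-integral inside the norm. Bound the positive operator $M_{\mathcal D}=\mathbb E_{A_S}\int\gamma(\omega)V^\dagger V\,\mathrm d\omega$ and the completely positive map $\mathcal J=\mathbb E_{A_S}\int\gamma(\omega)V(\cdot)V^\dagger\,\mathrm d\omega$ directly. There are two ways to do this.
\begin{enumerate}
\item Use the vector-wise Plancherel estimate above, which gives $\langle\psi,M_{\mathcal D}\psi\rangle\le2\pi\|\gamma\|_\infty\|f\|_{L^2}^2\|\psi\|^2=\mathcal O(\beta)\|\psi\|^2$. Combine it with $\|\mathcal J\|_{1\to1}=\|\mathcal J^\dagger(I)\|=\|M_{\mathcal D}\|$.
\item As in \cref{lem:ding_L_norm_bound}, write both as double time integrals with kernel $m(t-s)f(s)f(t)$, where $m(\tau)=\int\gamma(\omega)e^{-i\omega\tau}\,\mathrm d\omega$ has $\|m\|_{L^1}=\mathcal O(\beta)$. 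Then apply Cauchy--Schwarz in $s$.
\end{enumerate}
For the coherent part, note that in the form \eqref{eqn:lindbladian_operator} the Hamiltonian is $H_{\rm coh}$, whereas in the $\mathcal L_{\rm KMS}$ form you must also count $G_{\mathcal D}$. Both are $\mathcal O(\beta)$ by \cref{lem:ding_hamiltonian_part_bound,lem:ding_jump_G_bound}. With $\|\mathcal L\|_{1\to1}=\mathcal O(\beta)$ obtained this way, the rest of your argument goes through unchanged.
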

\begin{proof}
Combining~\cref{eqn:step3_direct_approx,lem:avg_approx_fixed_point} gives
\[
\|\Phi_\alpha(\rho^*)-\rho^*\|_1
\le
\mathcal O\!\left(\alpha^2\sigma e^{-T_0^2/(4\sigma^2)}+\alpha^4\sigma\beta\log(\sigma)\right)
+\alpha^4\|\mathcal{L}\|_{1\to1}\|E\|_1
+\frac{\alpha^4}{2}\|\mathcal{L}\|_{1\to1}^2\bigl(1+\alpha^2\|E\|_1\bigr).
\]
Combining the bound~\eqref{eqn:E_bound} with~\cref{lem:ding_L_norm_bound}, we get
\begin{align*}
\|\Phi_\alpha(\rho^*)-\rho^*\|_1
\leq &~
\mathcal O\!\left(\alpha^2\sigma e^{-T_0^2/(4\sigma^2)}+\alpha^4\sigma\beta\log(\sigma)\right)+
\mathcal O\!\left(\alpha^4\frac{\beta^3}{T_0}\right)
+\mathcal O(\alpha^4\beta^2)\\
= &~
\mathcal{O}\left(\alpha^2\sigma e^{-T_0^2/(4\sigma^2)}+\sigma\beta\log(\sigma)\alpha^4\right).
\end{align*}
When
$
T_0\ge 2\sigma\sqrt{\log\!\left((\alpha^2\beta\log(\sigma))^{-1}\right)}
$, we have
$
\alpha^2\sigma e^{-T_0^2/(4\sigma^2)}\le \alpha^4\sigma\beta\log(\sigma).
$
Thus
\[
\|\Phi_\alpha(\rho^*)-\rho^*\|_1
\le
\mathcal O\!\left(\sigma\beta\log(\sigma)\alpha^4\right)\,.
\]
\end{proof}

\subsection{Fixed point approximating the Gibbs state}

It remains to show that the fixed point of the quantum channel $\Phi_\alpha$ is close to the Gibbs state $\rho_\beta$. According to~\cref{cor:approx_fix_point_2}, we have constructed an auxiliary operator $\rho^*$ that is close to $\rho_\beta$, and we have shown that one application of $\Phi_\alpha$ moves $\rho^*$ only by a higher-order term:
\begin{align}\label{eqn:approxfix}
\|\Phi_\alpha(\rho^*)-\rho^*\|_1
\le
\mathcal O\!\left(\sigma\beta\log(\sigma)\alpha^4\right),
\end{align}
The estimate \eqref{eqn:approxfix} can now be combined with the mixing property of $\Phi_\alpha$ to bound the distance between $\rho^*$ and $\rho_{\rm fix}(\Phi_\alpha)$:

\begin{lem}\label{lem:rhostar_to_true_fixed_point}
For every $\varepsilon>0$,
\[
\|\rho^*-\rho_{\rm fix}(\Phi_\alpha)\|_1
\le
\varepsilon+\mathcal O\!\left(\frac{\alpha^2\beta^2}{T_0}\right)+t_{{\rm mix},\Phi_\alpha}(\varepsilon)\,\mathcal O\!\left(\sigma\beta\log(\sigma)\alpha^2\right)\,.
\]
\end{lem}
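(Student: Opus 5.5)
The plan is a telescoping argument. Set $\tau:=\tau_{{\rm mix},\Phi_\alpha}(\varepsilon)$ and split
\[
\|\rho^*-\rho_{\rm fix}(\Phi_\alpha)\|_1\le \|\rho^*-\Phi_\alpha^\tau(\rho^*)\|_1+\|\Phi_\alpha^\tau(\rho^*)-\rho_{\rm fix}(\Phi_\alpha)\|_1 .
\]
The first term will come from telescoping with the one-step estimate \eqref{eqn:approxfix}. The second term will come from the mixing property of $\Phi_\alpha$. A small issue arises here: $\rho^*$ need not be a state, so the definition of $\tau$ does not apply to it directly.

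For the first term, write
\[
\rho^*-\Phi_\alpha^\tau(\rho^*)=\sum_{n=0}^{\tau-1}\Phi_\alpha^n\bigl(\rho^*-\Phi_\alpha(\rho^*)\bigr).
\]
Since $\Phi_\alpha$ is an average of CPTP maps, it is CPTP. It is therefore trace-norm contractive on Hermitian operators (trace-preserving positive maps contract $\|\cdot\|_1$ on self-adjoint inputs). Each summand is thus bounded by $\|\Phi_\alpha(\rho^*)-\rho^*\|_1$. Applying \cref{cor:approx_fix_point_2} gives
\[
\|\rho^*-\Phi_\alpha^\tau(\rho^*)\|_1\le \tau\,\mathcal O(\sigma\beta\log(\sigma)\alpha^4)=t_{{\rm mix},\Phi_\alpha}(\varepsilon)\,\mathcal O(\sigma\beta\log(\sigma)\alpha^2),
\]
using $t_{{\rm mix},\Phi_\alpha}=\alpha^2\tau$ from \eqref{eqn:rescaled_mixing_time}.

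For the second term, I avoid applying the mixing-time definition to the non-state $\rho^*$ by inserting $\rho_\beta$, which is a genuine state. Then
\[
\|\Phi_\alpha^\tau(\rho^*)-\rho_{\rm fix}\|_1\le \|\Phi_\alpha^\tau(\rho^*-\rho_\beta)\|_1+\|\Phi_\alpha^\tau(\rho_\beta)-\rho_{\rm fix}\|_1 .
\]
The first piece is at most $\|\rho^*-\rho_\beta\|_1=\mathcal O(\alpha^2\beta^2/T_0)$, again by trace-norm contractivity on the Hermitian, traceless operator $\alpha^2E$ and by \cref{cor:approx_fix_point_1}. The second piece is at most $\varepsilon$ by \cref{def:mixing_time}, since $\rho_\beta$ is a state. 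Summing the three contributions gives exactly the stated bound.

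The only delicate point is justifying contractivity on the Hermitian but non-positive operator $\rho^*$. I will handle it by decomposing any Hermitian operator $X$ as $X=X_+-X_-$ with $\|X\|_1=\operatorname{Tr}X_++\operatorname{Tr}X_-$, and using positivity together with trace preservation of $\Phi_\alpha^n$. Everything else is direct bookkeeping from the previously established corollaries.
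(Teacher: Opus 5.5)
Your proposal is correct and matches the paper's proof: it splits at $\tau=\tau_{{\rm mix},\Phi_\alpha}(\varepsilon)$, telescopes using trace-norm contractivity for the first term, and inserts $\rho_\beta$ so that the mixing-time definition is applied to a genuine state in the second. Your explicit justification of contractivity on Hermitian, non-positive inputs via the decomposition $X=X_+-X_-$ is a detail the paper leaves implicit.
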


\begin{proof}
Let $\tau:=\tau_{{\rm mix},\Phi_\alpha}(\varepsilon)$ be the integer mixing time (\cref{def:mixing_time}) of the $\Phi_\alpha$. We split the error as
\[
\|\rho^*-\rho_{\rm fix}(\Phi_\alpha)\|_1
\le
\|\rho^*-\Phi_\alpha^\tau(\rho^*)\|_1
\!+\!
\|\Phi_\alpha^\tau(\rho^*)-\rho_{\rm fix}(\Phi_\alpha)\|_1\,.
\]

For the second term, we compare $\rho^*$ with the Gibbs state $\rho_\beta$ and use the definition of the mixing time:
\begin{align} \label{eqn:rhostar_fixedpoint_second_term}
\begin{aligned}
\|\Phi_\alpha^\tau(\rho^*)-\rho_{\rm fix}(\Phi_\alpha)\|_1
&~\le
\|\Phi_\alpha^\tau(\rho_\beta)-\rho_{\rm fix}(\Phi_\alpha)\|_1
\!+\!
\|\Phi_\alpha^\tau(\rho_\beta-\rho^*)\|_1\\
&~\le
\varepsilon+\|\rho_\beta-\rho^*\|_1
=
\varepsilon+\mathcal O\!\left(\frac{\alpha^2\beta^2}{T_0}\right),
\end{aligned}
\end{align}
where the last step uses the bound  $\|\rho^* - \rho_\beta\|_1 \leq  \mathcal{O}(\alpha^2\beta^2/T_0)$ from~\cref{cor:approx_fix_point_1}.

For the first term, we use a telescoping sum:
\[
\|\rho^*-\Phi_\alpha^\tau(\rho^*)\|_1
\le
\sum_{n=0}^{\tau-1}\|\Phi_\alpha^{n+1}(\rho^*)-\Phi_\alpha^n(\rho^*)\|_1.
\]
Since $\Phi_\alpha$ is trace-norm contractive,
\[
\|\Phi_\alpha^{n+1}(\rho^*)-\Phi_\alpha^n(\rho^*)\|_1
\le
\|\Phi_\alpha(\rho^*)-\rho^*\|_1
\]
for every $n\ge 0$. Therefore,
\begin{align}\label{eqn:rhostar_fixedpoint_first_term}
\|\rho^*-\Phi_\alpha^\tau(\rho^*)\|_1
\le
\tau\,\|\Phi_\alpha(\rho^*)-\rho^*\|_1
\le
\frac{t_{{\rm mix},\Phi_\alpha}(\varepsilon)}{\alpha^2}\,\mathcal O\!\left(\sigma\beta\log(\sigma)\alpha^4\right) = t_{{\rm mix},\Phi_\alpha}(\varepsilon)\,\mathcal O\!\left(\sigma\beta\log(\sigma)\alpha^2\right).
\end{align}

Combining~\eqref{eqn:rhostar_fixedpoint_second_term} and~\eqref{eqn:rhostar_fixedpoint_first_term}
 proves the lemma. 
\end{proof}

\begin{cor}\label{cor:approx_fix_point_3}
Let $\rho_{\rm fix}(\Phi_\alpha)$ be the fixed point of $\Phi_\alpha$. Then, it holds that
\begin{align*}
    \|\rho_\beta-\rho_{\rm fix}(\Phi_\alpha)\|_1
\le
\varepsilon+t_{{\rm mix},\Phi_\alpha}(\varepsilon)\,\mathcal O\!\left(\sigma\beta\log(\sigma)\alpha^2\right)\,.
\end{align*}
\end{cor}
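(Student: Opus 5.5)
The plan is to combine \cref{cor:approx_fix_point_1} with \cref{lem:rhostar_to_true_fixed_point} through the triangle inequality
\[
\|\rho_\beta-\rho_{\rm fix}(\Phi_\alpha)\|_1\le \|\rho_\beta-\rho^*\|_1+\|\rho^*-\rho_{\rm fix}(\Phi_\alpha)\|_1 ,
\]
with $\rho^*=\rho_\beta+\alpha^2E$ as in \cref{def:rho_star}. The first term is $\mathcal O(\alpha^2\beta^2/T_0)$ by \cref{cor:approx_fix_point_1}. The second term is at most $\varepsilon+\mathcal O(\alpha^2\beta^2/T_0)+t_{{\rm mix},\Phi_\alpha}(\varepsilon)\,\mathcal O(\sigma\beta\log(\sigma)\alpha^2)$ by \cref{lem:rhostar_to_true_fixed_point}. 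Together these give
\[
\varepsilon+\mathcal O\!\left(\frac{\alpha^2\beta^2}{T_0}\right)+t_{{\rm mix},\Phi_\alpha}(\varepsilon)\,\mathcal O\!\left(\sigma\beta\log(\sigma)\alpha^2\right).
\]

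The remaining task, and the only real obstacle, is to absorb the $\alpha^2\beta^2/T_0$ term into the mixing-time term. For this I would use two lower bounds.

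First, the parameters. By \cref{assumption:ding_parameters}, $T_0\ge 2\sigma\sqrt{\log((\alpha^2\beta\log\sigma)^{-1})}$. For $\alpha$ small this logarithm is at least a constant, so $T_0\gtrsim\sigma$. Also $\sigma=\Omega(\beta)$, which gives
\[
\frac{\alpha^2\beta^2}{T_0}\lesssim\frac{\alpha^2\beta^2}{\sigma}\lesssim\alpha^2\beta .
\]

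Second, the mixing time. I need $t_{{\rm mix},\Phi_\alpha}(\varepsilon)\,\sigma\log\sigma\gtrsim 1$, i.e. $\tau_{{\rm mix}}\gtrsim \alpha^{-2}/(\sigma\log\sigma)$.
\begin{itemize}
\item I would argue that one application of $\Phi_\alpha$ moves any state by at most $\mathcal O(\alpha^2\|\mathcal L\|_{1\to1})+\mathcal O(\alpha^4\sigma\beta\log\sigma)$. This follows from \cref{thm:ding_approx} together with $\|e^{\alpha^2\mathcal L}-I\|\le\alpha^2\|\mathcal L\|_{1\to 1}$ and \cref{lem:ding_L_norm_bound}.
\item The channel is not exactly a unitary-sandwiched channel, so one must be careful: $\mathcal U_S(T)$ itself does not move energy populations. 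I would therefore measure distance through the diagonal (energy-population) part, where only the $\mathcal O(\alpha^2)$ dissipative steps act.
\item For $\varepsilon<1/2$ (otherwise the statement is trivial, since the trace distance is at most $2$), reaching $\varepsilon$-closeness from an initial state far from $\rho_{\rm fix}$ then requires $\Omega(1/(\alpha^2\|\mathcal L\|_{1\to1}))$ steps. This gives $t_{\rm mix}\gtrsim 1/\beta^2$ up to constants.
\end{itemize}
Then $\alpha^2\beta\lesssim t_{\rm mix}\,\sigma\beta\log(\sigma)\alpha^2$ holds once $\sigma\log\sigma\gtrsim\beta^2$. This is where some care with $\beta$-dependence is needed; if $\beta$ is treated as $\mathcal O(1)$ relative to $\sigma$, it is immediate.

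If this lower-bound route proves messy, I would fall back on simply tracking the term. The displayed $\mathcal O(\cdot)$ hides $\beta$-constants, and $\alpha^2\beta^2/T_0\le\alpha^2\beta\cdot\beta/\sigma$ is dominated by $\alpha^2\sigma\beta\log\sigma$ times any mixing-time lower bound of order $\beta/\sigma^2$. That conclusion is weaker but suffices in the regime of \cref{thm:end_to_end_ding}.
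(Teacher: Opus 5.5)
Your skeleton is the same as the paper's. You use the triangle inequality, \cref{cor:approx_fix_point_1} for $\|\rho_\beta-\rho^*\|_1$ and \cref{lem:rhostar_to_true_fixed_point} for $\|\rho^*-\rho_{\rm fix}(\Phi_\alpha)\|_1$, and then absorb the leftover $\mathcal O(\alpha^2\beta^2/T_0)$. The difference is in that last step.

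The paper handles it in one line, citing only $T_0=\Omega(\sigma)=\Omega(\beta)$. But letting $t_{{\rm mix},\Phi_\alpha}(\varepsilon)\,\sigma\beta\log(\sigma)\alpha^2$ dominate $\alpha^2\beta^2/T_0$ actually needs a lower bound of the form $t_{{\rm mix},\Phi_\alpha}(\varepsilon)\gtrsim \beta/(T_0\sigma\log\sigma)$. The paper never states such a bound, and the trivial $\tau_{\rm mix}\ge 1$ only gives $t_{\rm mix}\ge\alpha^2$, which is not enough.

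Your dephasing argument supplies exactly this missing ingredient, so your version is more complete than the paper's:
\begin{itemize}
\item Populations in an eigenbasis of $H$ are invariant under $\mathcal U_S(T)$.
\item By \cref{thm:ding_approx}, one application of $\Phi_\alpha$ therefore changes the dephased state, in trace norm, by at most $\delta:=\alpha^2\|\mathcal L\|_{1\to1}+\mathcal O(\alpha^4\sigma\beta\log\sigma)$. The $e^{-T_0^2/(4\sigma^2)}$ term is absorbed by the choice of $T_0$.
\item Comparing two orthogonal eigenstates of $H$ then gives $\tau_{{\rm mix},\Phi_\alpha}(\varepsilon)\ge(1-\varepsilon)/\delta$.
\end{itemize}
State the per-step bound directly for the dephased states. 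Your first bullet, read literally, is false, since $\Phi_\alpha$ is not close to the identity.

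One bookkeeping correction. Since $\|\mathcal L\|_{1\to1}=\mathcal O(\beta)$ by \cref{lem:ding_L_norm_bound}, your argument gives $t_{{\rm mix},\Phi_\alpha}(\varepsilon)\gtrsim 1/\beta$, not $1/\beta^2$. This holds for, say, $\varepsilon\le 1/2$ and $\alpha^2\sigma\log\sigma\lesssim 1$.

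With the correct exponent, the third term is $\gtrsim\alpha^2\sigma\log\sigma$. This dominates $\alpha^2\beta^2/T_0\lesssim\alpha^2\beta^2/\sigma$ using only $\sigma=\Omega(\beta)$ and $\log\sigma\gtrsim 1$. So your extra condition $\sigma\log\sigma\gtrsim\beta^2$ is spurious. The fallback paragraph can also be dropped, since it still relies on a mixing-time lower bound anyway.

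If instead $\alpha^2\sigma\log\sigma\gtrsim 1$, the same bound $\tau\ge(1-\varepsilon)/\delta$ makes the third term $\Omega(1)$. The second term is $\mathcal O(\alpha^2\sigma)=\mathcal O(1)$ under the standing condition $\alpha^2\sigma=\mathcal O(1)$ of \cref{thm:ding_approx}, so it is absorbed into the constant.

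Finally, the statement is only trivially true for $\varepsilon\ge 2$, not for $\varepsilon\ge 1/2$, because your lower bound degrades like $1-\varepsilon$. Just restrict to small $\varepsilon$, as the paper implicitly does.
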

\begin{proof}
By the triangle inequality,
\begin{align}\label{eq:triangle}
\|\rho_\beta-\rho_{\rm fix}(\Phi_\alpha)\|_1
\le
\|\rho_\beta-\rho^*\|_1+\|\rho^*-\rho_{\rm fix}(\Phi_\alpha)\|_1.
\end{align}
The first term is controlled by~\cref{cor:approx_fix_point_1}, 
while the second term is controlled by~\cref{lem:rhostar_to_true_fixed_point}. Substituting these two bounds into~\eqref{eq:triangle} gives
\[
\|\rho_\beta-\rho_{\rm fix}(\Phi_\alpha)\|_1
\le
\varepsilon+\mathcal O\!\left(\frac{\alpha^2\beta^2}{T_0}\right)+t_{{\rm mix},\Phi_\alpha}(\varepsilon)\,\mathcal O\!\left(\sigma\beta\log(\sigma)\alpha^2\right).
\]
By~\cref{assumption:ding_parameters}, $T_0 = \Omega(\sigma)=\Omega(\beta) $, the second term is dominated by the third term, and we complete the proof.
\end{proof}

\section{Mixing Time and End-to-End Complexity Analysis}\label{sec:ding_mixing}
We now turn to the mixing-time part of~\cref{thm:end_to_end_ding}, 
which is formally stated in the following proposition.
\begin{prop}\label{prop:mixing_time_reduction_ding}
Under the setup of~\cref{thm:end_to_end_ding}, there exist absolute constants $c_\sigma,c_\alpha>0$ such that the following holds. Suppose $\mathcal L_{\rm KMS}$ has spectral gap $\lambda_{\rm gap}>0$, choosing
$
\sigma\ge c_\sigma\,\frac{\beta^2}{\lambda_{\rm gap}},
$
and
$
T_0\ge 2\sigma\sqrt{\log\!\left((\alpha^2\beta\log(\sigma))^{-1}\right)}$. Define quantum channels:
\begin{align*}
    \widetilde{\Phi}_{\alpha,T}
:=
\mathcal U_S(T)\circ e^{\alpha^2\mathcal L}\circ \mathcal U_S(T)\,,\qquad \widetilde{\Phi}_{\alpha}:=\mathbb E_{T\sim\mu}\bigl[\widetilde{\Phi}_{\alpha,T}\bigr]\,.
\end{align*}
Then, it holds that $\widetilde{\Phi}_{\alpha}$ has a unique fixed point and
\begin{align}\label{eqn:prop_mixing_part1}
t_{{\rm mix},\widetilde{\Phi}_\alpha}(\varepsilon)
\le
\mathcal O\!\left(\frac{1}{\lambda_{\rm gap}}\log\!\frac{4\|\rho_\beta^{-1/2}\|_2}{\varepsilon}\right).
\end{align}
Moreover, if the interacting strength is bounded by\[
\alpha^2 \le c_\alpha\,\frac{\varepsilon}{\sigma\beta\log(\sigma)\,t_{{\rm mix},\widetilde{\Phi}_\alpha}(\varepsilon/2)},
\]
then $\Phi_\alpha$ has a unique fixed point and $\tau_{{\rm mix},\Phi_\alpha}(2\varepsilon)\le \tau_{{\rm mix},\widetilde{\Phi}_\alpha}(\varepsilon/2)$.
Consequently,
\begin{align} \label{eqn:prop_mixing_part2}
t_{{\rm mix},\Phi_\alpha}(2\varepsilon)
\le
\mathcal O\!\left(\frac{1}{\lambda_{\rm gap}}\log\!\frac{8\|\rho_\beta^{-1/2}\|_2}{\varepsilon}\right).
\end{align}
\end{prop}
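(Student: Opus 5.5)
We prove \cref{prop:mixing_time_reduction_ding} in two stages: first the approximate channel $\widetilde{\Phi}_\alpha$, then a perturbation transfer to $\Phi_\alpha$.

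\emph{Stage 1: contraction of $\widetilde{\Phi}_\alpha$.} We follow the proof of \cref{thm:integer_mixing_time_upper_bound} in \cref{sec:proof_integer_mixing_time_upper_bound}, now keeping the noncommuting Lamb shift. Write $\mathcal L=\mathcal L_{\rm KMS}-i[H_{\rm Lamb},\cdot]$ and form the similarity transform $\mathcal K(\rho_\beta,\mathcal L)$. Its Hermitian part equals $\mathcal H(\rho_\beta,\mathcal L_{\rm KMS})$ plus the Hermitian part of the transformed commutator. That transformed commutator is
\[
X\mapsto -i\bigl(\rho_\beta^{-1/4}H_{\rm Lamb}\rho_\beta^{1/4}X-X\rho_\beta^{1/4}H_{\rm Lamb}\rho_\beta^{-1/4}\bigr),
\]
and its Hermitian part is controlled in operator norm by the Gibbs commutation defect
\[
\delta:=\bigl\|\rho_\beta^{-1/4}H_{\rm Lamb}\rho_\beta^{1/4}-\rho_\beta^{1/4}H_{\rm Lamb}\rho_\beta^{-1/4}\bigr\|.
\]
By \cref{lem:ding_hamiltonian_part_bound,lem:ding_jump_G_bound}, $\delta=\mathcal O(\beta^2/\sigma)$. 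Choosing $\sigma\ge c_\sigma\beta^2/\lambda_{\rm gap}$ makes $\delta\le\lambda_{\rm gap}/2$.

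The complication is that $\sqrt{\rho_\beta}$ is no longer in the kernel of the perturbed generator, so the fixed point moves and we cannot simply restrict to the orthogonal complement of $\sqrt{\rho_\beta}$. This is the content of \cref{lem:gap_stability_lamb_shift}, which I would invoke directly. Its mechanism is as follows. Work in the weighted norm $\|\rho_\beta^{-1/4}(\cdot)\rho_\beta^{-1/4}\|_2$ on traceless operators $X$. Decompose $\rho_\beta^{-1/4}X\rho_\beta^{-1/4}=c\sqrt{\rho_\beta}+X_\perp$. Since $\mathrm{Tr}X=0$ forces $c=0$, the direction $\sqrt{\rho_\beta}$ is already excluded. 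The dissipation estimate then gives
\[
\tfrac{d}{dt}\|\cdot\|_2^2\le -2(\lambda_{\rm gap}-\delta)\|\cdot\|_2^2.
\]
Trace preservation keeps differences of states traceless, so this estimate applies to $\rho_1-\rho_2$. Hence $e^{\alpha^2\mathcal L}$ contracts differences by a factor $e^{-\lambda_{\rm gap}\alpha^2/2}$. The unitaries $\mathcal U_S(T)$ commute with the weight, so every $\widetilde{\Phi}_{\alpha,T}$ contracts differences with this factor. Averaging over $T\sim\mu$ preserves the bound by convexity.

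Contraction of differences yields a unique fixed point of $\widetilde{\Phi}_\alpha$, by Banach's fixed point theorem on the affine set of trace-one operators. Converting the weighted $2$-norm to the trace norm exactly as in \cref{sec:proof_integer_mixing_time_upper_bound}, with a factor $2$ from comparing against the new fixed point, gives \eqref{eqn:prop_mixing_part1}.

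\emph{Stage 2: transfer to $\Phi_\alpha$.} By \cref{thm:ding_approx} averaged over $T\ge T_0$, and by the choice of $T_0$ (the same computation as in \cref{cor:approx_fix_point_2}), we have
\[
\|\Phi_\alpha-\widetilde{\Phi}_\alpha\|_{1\to1}\le C\sigma\beta\log(\sigma)\alpha^4=:\eta.
\]
Let $\tau=\tau_{{\rm mix},\widetilde{\Phi}_\alpha}(\varepsilon/2)$. Telescoping the difference of the $\tau$-fold iterates, and using that both maps are trace-norm contractive, gives $\|\Phi_\alpha^\tau-\widetilde{\Phi}_\alpha^\tau\|_{1\to1}\le\tau\eta$. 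The assumption on $\alpha^2$ makes $\tau\eta\le\varepsilon/4$, after choosing $c_\alpha$ small. Therefore, for any two states $\rho,\rho'$,
\[
\|\Phi_\alpha^\tau(\rho-\rho')\|_1\le \varepsilon+2\tau\eta\le 3\varepsilon/2<2.
\]
So $\Phi_\alpha^\tau$ strictly contracts differences of states in trace norm, which gives uniqueness of the fixed point of $\Phi_\alpha$. Taking $\rho'=\rho_{\rm fix}(\Phi_\alpha)$ gives $\sup_\rho\|\Phi_\alpha^\tau(\rho)-\rho_{\rm fix}\|_1\le2\varepsilon$. Hence $\tau_{{\rm mix},\Phi_\alpha}(2\varepsilon)\le\tau$, and multiplying by $\alpha^2$ gives \eqref{eqn:prop_mixing_part2}.

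\emph{Main obstacle.} The hard step is the gap stability of Stage 1: making the perturbation bound rigorous when $\mathcal L$ does not fix $\rho_\beta$, and controlling the non-Hermitian part of the transformed commutator. The first thing I would try is the traceless-subspace trick above. The defect estimate $\delta=\mathcal O(\beta^2/\sigma)$ itself comes from the appendix lemmas on $H_{\rm Lamb}$.
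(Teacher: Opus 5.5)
Your proposal is correct. Stage 1 is the paper's argument in substance: the coercive estimate on the traceless subspace in \cref{lem:gap_stability_lamb_shift}, the defect bound $\delta=\mathcal O(\beta^2/\sigma)$ from the appendix lemmas, unitary invariance of $d_\beta$, averaging over $T\sim\mu$, and then Banach's theorem plus the norm conversion in \cref{cor:mixing_phi_tilde}. The anti-Hermitian part of the transformed commutator never needs to be controlled, because it drops out of $\frac{\mathrm d}{\mathrm dt}\|\cdot\|_2^2$. So the obstacle you flag is already fully handled by that coercive estimate.

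Stage 2 differs from the paper in a useful way. The paper applies \cref{lem:mixing_transfer_close_channels}, which compares the two fixed points but assumes as a hypothesis that both channels already have unique fixed points. Hence the proposition's claim that $\Phi_\alpha$ has a unique fixed point is not actually derived there.

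Your direct bound $\|\Phi_\alpha^\tau(\rho-\rho')\|_1\le\varepsilon+2\tau\eta\le 3\varepsilon/2$ does derive it, once you add the scaling step. A uniform bound below $2$ on images of differences of states is not enough by itself. Write any traceless Hermitian $X$ as $\frac{\|X\|_1}{2}(\rho_+-\rho_-)$ with states $\rho_\pm$ (its Jordan decomposition). This gives $\|\Phi_\alpha^\tau(X)\|_1\le\frac{3\varepsilon}{4}\|X\|_1$. Applying it to $X=\rho-\rho'$ for two fixed states forces $X=0$ whenever $\varepsilon<4/3$.

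The cost of your route is only that harmless restriction on $\varepsilon$, together with the standard existence of a fixed point for a CPTP map. In return you get uniqueness as a proven conclusion and the slightly sharper $3\varepsilon/2$ in place of $2\varepsilon$.
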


The proof of the above proposition follows the strategy introduced in~\cref{sec:proof_overview}. In the following two subsections, we first prove~\eqref{eqn:prop_mixing_part1} in \cref{cor:mixing_phi_tilde} using a coercive estimate, and then we prove~\eqref{eqn:prop_mixing_part2} in \cref{lem:mixing_phi_alpha} using a quantum channel perturbation argument.

At the end of this section, we combine~\cref{lem:approx_fixed_point,prop:mixing_time_reduction_ding} to prove the end-to-end complexity analysis \cref{thm:end_to_end_ding,cor:end_to_end_ding}.

\subsection{Mixing time of the ideal channel}
We first prove the first part \eqref{eqn:prop_mixing_part1} of~\cref{prop:mixing_time_reduction_ding}. To do so, we begin by defining the following weighted $L^2$-distance between quantum states.
\begin{defn}
For two quantum states $\rho$ and $\sigma$, define the $\rho_\beta$-weighted distance as
\[
d_\beta(\rho,\sigma):=\left\|\rho_\beta^{-1/4}(\rho-\sigma)\rho_\beta^{-1/4}\right\|_2.
\]
\end{defn}
Note the spectral gap of $\mathcal{L}_{\rm KMS}$ implies the contraction of $\mathcal{L}_{\rm KMS}$  under this metric. In the following Lemma, we show that a bound on the Gibbs commutation defect of $H_{\rm Lamb}$ (see \eqref{eqn:Gibbs_defect}) is enough to preserve the contraction of $\mathcal L$ under the metric $d_\beta$, which also implies the contraction of $\widetilde{\Phi}_{\alpha,T}$ and $\widetilde{\Phi}_{\alpha}$.

\begin{lem}\label{lem:gap_stability_lamb_shift}
Recall
$
\mathcal L=\mathcal L_{\rm KMS}-i[H_{\rm Lamb},\,\cdot\,]
$, where $\mathcal L_{\rm KMS}$ satisfies the $\rho_\beta$-KMS detailed balance condition and has spectral gap $\lambda_{\rm gap}>0$. Suppose that
\begin{align}\label{eqn:Gibbs_defect}
\delta_{\rm Lamb} := \left\|\rho_\beta^{-1/4}H_{\rm Lamb}\rho_\beta^{1/4}
-\rho_\beta^{1/4}H_{\rm Lamb}\rho_\beta^{-1/4}\right\| \le \frac{\lambda_{\rm gap}}{2}\,,
\end{align}
then, for every traceless operator $X$ and every $t\ge 0$,
\begin{align}\label{eqn:contractL}
\left\|\rho_\beta^{-1/4}e^{t\mathcal L}(X)\rho_\beta^{-1/4}\right\|_2
\le
e^{-\lambda_{\rm gap}t/2}\left\|\rho_\beta^{-1/4}X\rho_\beta^{-1/4}\right\|_2\,.
\end{align}
Consequently, for every $T$ and every states $\rho,\sigma$,
\[
d_\beta\!\left(\widetilde{\Phi}_{\alpha,T}(\rho),\widetilde{\Phi}_{\alpha,T}(\sigma)\right)
\le
e^{-\lambda_{\rm gap}\alpha^2/2}d_\beta(\rho,\sigma)\,.
\]
\end{lem}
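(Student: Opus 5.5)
We follow the energy (differential inequality) method of the proof of \cref{thm:integer_mixing_time_upper_bound}, now treating the Lamb shift as a perturbation. Set $Y(t):=\rho_\beta^{-1/4}e^{t\mathcal L}(X)\rho_\beta^{-1/4}$, so $Y(t)=e^{t\mathcal K}Y(0)$ with $\mathcal K:=\mathcal K(\rho_\beta,\mathcal L)$ as in \eqref{eqn:similarity_transformation}. Then
\[
\frac{\mathrm d}{\mathrm dt}\|Y(t)\|_2^2=\bigl\langle Y,(\mathcal K+\mathcal K^\dagger)Y\bigr\rangle .
\]
We split $\mathcal K=\mathcal K(\rho_\beta,\mathcal L_{\rm KMS})+\mathcal K_{\rm Lamb}$, where
\[
\mathcal K_{\rm Lamb}(Y):=-i\rho_\beta^{-1/4}\bigl[H_{\rm Lamb},\rho_\beta^{1/4}Y\rho_\beta^{1/4}\bigr]\rho_\beta^{-1/4}.
\]
Since $\mathcal L_{\rm KMS}$ satisfies KMS detailed balance, $\mathcal K(\rho_\beta,\mathcal L_{\rm KMS})$ is Hilbert--Schmidt self-adjoint. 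Moreover $\operatorname{Tr}(Y\sqrt{\rho_\beta})=\operatorname{Tr}(e^{t\mathcal L}X)=0$, so the gap gives
\[
\langle Y,\mathcal K(\rho_\beta,\mathcal L_{\rm KMS})Y\rangle\le-\lambda_{\rm gap}\|Y\|_2^2,
\]
exactly as in the chain of equalities in \cref{sec:proof_integer_mixing_time_upper_bound}.

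The key computation is the Hermitian part of $\mathcal K_{\rm Lamb}$. Write $H_+:=\rho_\beta^{-1/4}H_{\rm Lamb}\rho_\beta^{1/4}$ and $H_-:=\rho_\beta^{1/4}H_{\rm Lamb}\rho_\beta^{-1/4}=H_+^\dagger$. Then
\[
\mathcal K_{\rm Lamb}(Y)=-i\bigl(H_+Y-YH_-\bigr).
\]
Its Hilbert--Schmidt adjoint is $\mathcal K_{\rm Lamb}^\dagger(Y)=i(H_-Y-YH_+)$, so
\[
(\mathcal K_{\rm Lamb}+\mathcal K_{\rm Lamb}^\dagger)(Y)=-i\bigl((H_+-H_-)Y+Y(H_+-H_-)\bigr).
\]
Therefore
\[
\bigl|\langle Y,(\mathcal K_{\rm Lamb}+\mathcal K_{\rm Lamb}^\dagger)Y\rangle\bigr|\le2\|H_+-H_-\|\,\|Y\|_2^2=2\delta_{\rm Lamb}\|Y\|_2^2,
\]
using $|\operatorname{Tr}(Y^\dagger DY)|\le\|D\|\,\|Y\|_2^2$. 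Combining the two pieces,
\[
\frac{\mathrm d}{\mathrm dt}\|Y\|_2^2\le(-2\lambda_{\rm gap}+2\delta_{\rm Lamb})\|Y\|_2^2\le-\lambda_{\rm gap}\|Y\|_2^2 .
\]
Grönwall's inequality then gives \eqref{eqn:contractL}. Tracelessness of $e^{t\mathcal L}X$ holds for every $t$ because $\mathcal L$ is trace-annihilating, so the gap bound applies along the whole trajectory. This sign and adjoint bookkeeping is the one delicate step. Note that when $H_{\rm Lamb}$ commutes with $\rho_\beta$ we get $H_+=H_-$, and the defect vanishes, consistent with the unitary-drift case.

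For the channel statement, take $X=\rho-\sigma$, which is traceless. As in \eqref{eqn:contraction_mixing_time}, $\mathcal U_S(T)$ commutes with conjugation by $\rho_\beta^{\pm1/4}$ and preserves Hilbert--Schmidt norms. Hence both outer unitaries drop out of $d_\beta$, and applying \eqref{eqn:contractL} with $t=\alpha^2$ to $\mathcal U_S(T)(\rho-\sigma)$, which is still traceless, yields the contraction factor $e^{-\lambda_{\rm gap}\alpha^2/2}$.
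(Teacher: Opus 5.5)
Your proposal is correct and is essentially the paper's argument. You split the similarity-transformed generator into the KMS part, controlled by the gap on the Hilbert--Schmidt complement of $\sqrt{\rho_\beta}$ (which the trajectory never leaves since $\operatorname{Tr}(e^{t\mathcal L}X)=0$), and the Lamb-shift part; you bound the Hermitian part of the latter by $\delta_{\rm Lamb}$, close with Gr\"onwall, and drop the outer unitaries because they commute with $\rho_\beta^{\pm1/4}$. Incidentally, your adjoint computation giving the anticommutator $-i\bigl((H_+-H_-)Y+Y(H_+-H_-)\bigr)$ is the correct form of the Hermitian part, whereas the paper's displayed formula has a sign slip in its second term that turns it into a commutator; this does not affect the result, since both versions are bounded in norm by $\delta_{\rm Lamb}$, which is all that is used.
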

\begin{proof}
We first prove the contraction estimate for $e^{t\mathcal L}$ \eqref{eqn:contractL} from the spectral gap condition of $\mathcal{L}_{\rm KMS} $. Recall the similarity-transformed generator defined as in~\eqref{eqn:similarity_transformation}:
\[
\mathcal K(\rho_\beta,\mathcal M)(X)
:=
\rho_\beta^{-1/4}\mathcal M\!\left(\rho_\beta^{1/4}X\rho_\beta^{1/4}\right)\rho_\beta^{-1/4}\,,
\]
whose Hermitian part is
\[
\mathcal H(\rho_\beta,\mathcal M)
:=
\frac{1}{2}\bigl(\mathcal K(\rho_\beta,\mathcal M)+\mathcal K(\rho_\beta,\mathcal M)^\dagger\bigr)\,.
\]
For $\mathcal L_{\rm KMS}$, the spectral-gap assumption is exactly the variational inequality
\begin{align}\label{eqn:variationalKMS}
-\langle X,\mathcal H(\rho_\beta,\mathcal L_{\rm KMS})(X)\rangle
\ge
\lambda_{\rm gap}\|X\|_2^2
\end{align}
for every $X$ orthogonal to $\sqrt{\rho_\beta}$ in Hilbert--Schmidt inner product. Under the perturbation
\[
\mathcal L=\mathcal L_{\rm KMS}+\delta \mathcal{L},
\qquad
\delta \mathcal{L}(\rho):=-i[H_{\rm Lamb},\rho]\,,
\]
the variational inequality similar to \eqref{eqn:variationalKMS} for $\mathcal{L}$ will follow once we control the Hermitian part of $\delta \mathcal{L}$. Indeed, by linearity of the similarity transform,
\[
\mathcal H(\rho_\beta,\mathcal L)
=
\mathcal H(\rho_\beta,\mathcal L_{\rm KMS})
+
\mathcal H(\rho_\beta,\delta \mathcal{L})\,.
\]
Therefore, for every such $X$,
\begin{align}\label{eqn:variationineq}
-\langle X,\mathcal H(\rho_\beta,\mathcal L)(X)\rangle
=
-\langle X,\mathcal H(\rho_\beta,\mathcal L_{\rm KMS})(X)\rangle
-\langle X,\mathcal H(\rho_\beta,\delta \mathcal{L})(X)\rangle
\ge
\lambda_{\rm gap}\|X\|_2^2-\|\mathcal H(\rho_\beta,\delta \mathcal{L})\|_{2\to2}\|X\|_2^2\,.
\end{align}

We now compute $\mathcal H(\rho_\beta,\delta \mathcal{L})$ explicitly. Let
\[
A:=\rho_\beta^{-1/4}H_{\rm Lamb}\rho_\beta^{1/4},
\qquad
A^\dagger:=\rho_\beta^{1/4}H_{\rm Lamb}\rho_\beta^{-1/4}.
\]
Then
\[
\mathcal K(\rho_\beta,\delta \mathcal{L})(X)=-iAX+iXA^\dagger,
\]
so
\[
\mathcal H(\rho_\beta,\delta \mathcal{L})(X)
=
\frac{1}{2}\bigl(\mathcal K(\rho_\beta,\delta \mathcal{L})+\mathcal K(\rho_\beta,\delta \mathcal{L})^\dagger\bigr)(X)
=
-\frac{i}{2}(A-A^\dagger)X+\frac{i}{2}X(A-A^\dagger)\,.
\]
Hence
\[
\|\mathcal H(\rho_\beta,\mathcal B)\|_{2\to2}
\le
\|A-A^\dagger\|
=
\left\|\rho_\beta^{-1/4}H_{\rm Lamb}\rho_\beta^{1/4}
-\rho_\beta^{1/4}H_{\rm Lamb}\rho_\beta^{-1/4}\right\| = \delta_{\rm Lamb}.
\]
Substituting this into \eqref{eqn:variationineq}, for every $X$ orthogonal to $\sqrt{\rho_\beta}$,
\[
\begin{aligned}
-\langle X,\mathcal H(\rho_\beta,\mathcal L)(X)\rangle_2
&\ge
(\lambda_{\rm gap}-\delta_{\rm Lamb})\|X\|_2^2\,.
\end{aligned}
\]
Thus, if $\delta_{\rm Lamb}\le \lambda_{\rm gap}/2$, then for every $X$ orthogonal to $\sqrt{\rho_\beta}$,
\begin{align}\label{eqn:coerciveL}
-\langle X,\mathcal H(\rho_\beta,\mathcal L)(X)\rangle_2
\ge
\frac{\lambda_{\rm gap}}{2}\|X\|_2^2\,.
\end{align}

This coercive estimate yields the desired contraction via the following differential inequality. For any traceless Hermitian operator $X$, let $X(t):=e^{t\mathcal L}(X)$ and
\[
\widetilde X(t):=\rho_\beta^{-1/4}X(t)\rho_\beta^{-1/4}.
\]
Since $\mathcal L$ is trace preserving and $\Tr(X)=0$, we have $\Tr(X(t))=0$ for all $t\ge 0$, and hence
\[
\langle \widetilde X(t),\sqrt{\rho_\beta}\rangle
=
\Tr(X(t))
=
0\,.
\]
Using the similarity-transformed generator,
\[
\frac{d}{dt}\widetilde X(t)=\mathcal K(\rho_\beta,\mathcal L)\bigl(\widetilde X(t)\bigr)\,.
\]
Thus,
\[
\begin{aligned}
\frac{d}{dt}\|\widetilde X(t)\|_2^2
=&~
\left\langle \mathcal K(\rho_\beta,\mathcal L)\bigl(\widetilde X(t)\bigr),\widetilde X(t)\right\rangle
+
\left\langle \widetilde X(t),\mathcal K(\rho_\beta,\mathcal L)\bigl(\widetilde X(t)\bigr)\right\rangle\\
=&~
2\left\langle \widetilde X(t),\mathcal H(\rho_\beta,\mathcal L)\bigl(\widetilde X(t)\bigr)\right\rangle\\
\le&~
-\lambda_{\rm gap}\|\widetilde X(t)\|_2^2\,,
\end{aligned}
\]
where in the last step we used the coercive estimate \eqref{eqn:coerciveL} on the subspace orthogonal to $\sqrt{\rho_\beta}$. Gr\"onwall's inequality yields
\[
\|\widetilde X(t)\|_2
\le
\exp\!\left(-\frac{\lambda_{\rm gap}t}{2}\right)\|\widetilde X(0)\|_2\,.
\]
This implies~\eqref{eqn:contractL}:
\[
\left\|\rho_\beta^{-1/4}e^{t\mathcal L}(X)\rho_\beta^{-1/4}\right\|_2
\le
\exp\!\left(-\frac{\lambda_{\rm gap}t}{2}\right)\left\|\rho_\beta^{-1/4}X\rho_\beta^{-1/4}\right\|_2\,,
\]
which proves the first part of the lemma.

Finally, we transfer this contraction from the generator $\mathcal{L}$ to the one-step channel $\widetilde{\Phi}_{\alpha,T}$ and $\widetilde{\Phi}_\alpha$. Since $\rho_\beta$ commutes with $H$, the outer conjugation $\mathcal U_S(T)(\cdot)=e^{-iHT}(\cdot)e^{iHT}$ preserves the metric $d_\beta$. Applying the previous estimate with $t=\alpha^2$, we obtain
\[
d_\beta\!\left(\widetilde{\Phi}_{\alpha,T}(\rho),\widetilde{\Phi}_{\alpha,T}(\sigma)\right)
\le
e^{-\lambda_{\rm gap}\alpha^2/2}d_\beta(\rho,\sigma).
\]
Taking the average over $T$ and applying Jensen's inequality yields
\[
d_\beta\!\left(\widetilde{\Phi}_{\alpha}(\rho),\widetilde{\Phi}_{\alpha}(\sigma)\right)
\le
e^{-\lambda_{\rm gap}\alpha^2/2}d_\beta(\rho,\sigma)\,,
\]
which proves the second part of the lemma.
\end{proof}

As a consequence of~\cref{lem:gap_stability_lamb_shift}, we establish the mixing time of $\widetilde{\Phi}_\alpha$:
\begin{cor}\label{cor:mixing_phi_tilde}
Under the setting of~\cref{prop:mixing_time_reduction_ding}, we have $\widetilde{\Phi}_\alpha$ has a unique fixed point $\rho_{\rm fix}(\widetilde{\Phi}_\alpha)$, and
\begin{align*}
    t_{{\rm mix},\widetilde{\Phi}_\alpha}(\varepsilon)
\le
\mathcal O\!\left(\frac{1}{\lambda_{\rm gap}}\log\!\frac{4\|\rho_\beta^{-1/2}\|_2}{\varepsilon}\right)\,.
\end{align*}
\end{cor}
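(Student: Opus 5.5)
Under the setting of \cref{prop:mixing_time_reduction_ding}, the plan is to verify the hypothesis \eqref{eqn:Gibbs_defect} of \cref{lem:gap_stability_lamb_shift}, use the resulting $d_\beta$-contraction of $\widetilde{\Phi}_\alpha$ to get uniqueness of the fixed point via Banach's fixed-point theorem, and then convert $d_\beta$-decay into trace-norm decay, exactly as in the proof of \cref{thm:integer_mixing_time_upper_bound} in \cref{sec:proof_integer_mixing_time_upper_bound}.

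\textbf{Step 1 (Lamb-shift defect).} By the estimates collected in \cref{sec:lemmas} (\cref{lem:ding_hamiltonian_part_bound,lem:ding_jump_G_bound}), which quantify \eqref{eqn:H_Lamb_error}, the defect satisfies $\delta_{\rm Lamb}\le C\beta^2/\sigma$ for an absolute constant $C$. Choosing $c_\sigma\ge 2C$ in $\sigma\ge c_\sigma\beta^2/\lambda_{\rm gap}$ then gives $\delta_{\rm Lamb}\le \lambda_{\rm gap}/2$. I expect this to be the only delicate point. Everything hinges on the precise $\beta$-dependence of the constant in \eqref{eqn:H_Lamb_error}. If those lemmas instead give $\|\cdot\|=\mathcal O(\beta^2/\sigma)$ only up to factors in $\|H\|$, the threshold on $\sigma$ must absorb them. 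I would simply cite the lemma in the form matching the stated choice of $\sigma$.

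\textbf{Step 2 (contraction and unique fixed point).} With Step 1, \cref{lem:gap_stability_lamb_shift} gives, for all states $\rho,\sigma$,
\[
d_\beta(\widetilde{\Phi}_\alpha(\rho),\widetilde{\Phi}_\alpha(\sigma))\le e^{-\lambda_{\rm gap}\alpha^2/2}d_\beta(\rho,\sigma).
\]
Here $d_\beta$ is a genuine metric on the compact, complete set of density matrices, and $\widetilde{\Phi}_\alpha$ maps this set to itself because each $\widetilde{\Phi}_{\alpha,T}$ is CPTP. Banach's theorem therefore yields a unique fixed point $\rho_{\rm fix}(\widetilde{\Phi}_\alpha)$. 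This fixed point need not equal $\rho_\beta$, which is why we use the two-state form of the contraction rather than comparing against $\rho_\beta$.

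\textbf{Step 3 (conversion to trace norm).} For any state $\rho$, write $\rho_{\rm fix}$ for $\rho_{\rm fix}(\widetilde{\Phi}_\alpha)$ and iterate the contraction. This gives
\[
\|\widetilde{\Phi}_\alpha^t(\rho)-\rho_{\rm fix}\|_1\le d_\beta(\widetilde{\Phi}_\alpha^t(\rho),\rho_{\rm fix})\le e^{-\lambda_{\rm gap}\alpha^2t/2}d_\beta(\rho,\rho_{\rm fix}).
\]
The first inequality is the Hölder bound $\|X'\|_1\le\|\rho_\beta^{-1/4}X'\rho_\beta^{-1/4}\|_2$ from \cref{sec:proof_integer_mixing_time_upper_bound}. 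The same section also gives
\[
d_\beta(\rho,\rho_{\rm fix})\le\|\rho_\beta^{-1/2}\|\,\|\rho-\rho_{\rm fix}\|_1\le 2\|\rho_\beta^{-1/2}\|\le 2\|\rho_\beta^{-1/2}\|_2.
\]
The right-hand side is at most $\varepsilon$ once $t\ge \frac{2}{\lambda_{\rm gap}\alpha^2}\log\frac{2\|\rho_\beta^{-1/2}\|_2}{\varepsilon}$. Hence $\tau_{{\rm mix},\widetilde{\Phi}_\alpha}(\varepsilon)$ is bounded by this quantity plus one for rounding. Multiplying by $\alpha^2$ as in \eqref{eqn:rescaled_mixing_time}, and absorbing the rounding term (the extra additive $\alpha^2$ is $\mathcal O(1/\lambda_{\rm gap})$ since $\alpha=o(1)$; the looser constant $4$ accommodates it), gives the claimed bound $t_{{\rm mix},\widetilde{\Phi}_\alpha}(\varepsilon)\le\mathcal O(\lambda_{\rm gap}^{-1}\log(4\|\rho_\beta^{-1/2}\|_2/\varepsilon))$.
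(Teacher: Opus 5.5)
Your proposal is correct and follows the paper's proof essentially step for step. You bound $\delta_{\rm Lamb}=\mathcal O(\beta^2/\sigma)$ via \cref{lem:ding_hamiltonian_part_bound,lem:ding_jump_G_bound} (the $\mathcal O(\beta^3/\sigma^2)$ piece from $G_{\mathcal D}$ is dominated since $\sigma=\Omega(\beta)$), invoke \cref{lem:gap_stability_lamb_shift} for the $d_\beta$-contraction of $\widetilde{\Phi}_\alpha$, apply Banach's theorem, and convert to trace norm by H\"older, with your care about the factor $1/2$ in $e^{-\lambda_{\rm gap}\alpha^2 t/2}$ and the integer rounding (which the paper glosses over) affecting only constants.
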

\begin{proof}
To apply~\cref{lem:gap_stability_lamb_shift}, we choose $\sigma$ sufficient large to make the Gibbs-commutation defect $\delta_{\rm Lamb}$ of the order $\lambda_{\rm gap}$. By~\cref{lem:ding_hamiltonian_part_bound,lem:ding_jump_G_bound},
\[
\delta_{\rm Lamb}
=
\mathcal O\!\left(\frac{\beta^2}{\sigma}\right)
\,.
\]
Hence, if $\sigma \ge c_k\beta^2/\lambda_{\rm gap}$, then the condition~\eqref{eqn:Gibbs_defect} of~\cref{lem:gap_stability_lamb_shift} is satisfied, and therefore
\begin{align} \label{eqn:contract-Phi}
d_\beta\!\left(\widetilde{\Phi}_{\alpha}(\rho_1),\widetilde{\Phi}_{\alpha}(\rho_2)\right)
\le
\exp\!\left(-\frac{\lambda_{\rm gap}\alpha^2}{2}\right)d_\beta(\rho_1,\rho_2)\,.
\end{align}
Since the metric $d_\beta$ is complete, \eqref{eqn:contract-Phi} and Banach's fixed point theorem imply that $\widetilde{\Phi}_\alpha$ has a unique fixed point $\rho_{\rm fix}(\widetilde{\Phi}_\alpha)$.

Then, for any initial state $\rho$ and any $\tau\in \mathbb{N}$,
\[
d_\beta\bigl(\widetilde{\Phi}_\alpha^\tau (\rho),\rho_{\rm fix}(\widetilde{\Phi}_\alpha)\bigr)\le   e^{-\tau \lambda_{\rm gap}\alpha^2}d_\beta\bigl(\rho,\rho_{\rm fix}(\widetilde{\Phi}_\alpha)\bigr)\,.
\]
We now transfer this to the bound in trace norm. Note for every operator $X$,
\[
\|X\|_1
\le
\left\|\rho_\beta^{-1/4}X\rho_\beta^{-1/4}\right\|_2
\]
by H\"older's inequality. Therefore,
\begin{align} \label{eqn:lem-mixing-1}
\left\|\widetilde{\Phi}_\alpha^\tau(\rho)-\rho_{\rm fix}(\widetilde{\Phi}_\alpha)\right\|_1
\le
d_\beta\bigl(\widetilde{\Phi}_\alpha^\tau (\rho),\rho_{\rm fix}(\widetilde{\Phi}_\alpha)\bigr)\le   e^{-\tau\lambda_{\rm gap}\alpha^2}d_\beta\bigl(\rho,\rho_{\rm fix}(\widetilde{\Phi}_\alpha)\bigr).
\end{align}
Using $\|\rho-\rho_{\rm fix}(\Phi)\|_1\le 2$ and again H\"older's inequality, we have
\begin{align} \label{eqn:lem-mixing-2}
d_\beta\bigl(\rho,\rho_{\rm fix}(\widetilde{\Phi}_\alpha)\bigr)
\le
\|\rho_\beta^{-1/4}\|_4^2\|\rho-\rho_{\rm fix}(\widetilde{\Phi}_\alpha)\|_1
\le
2\|\rho_\beta^{-1/2}\|_2.
\end{align}
Combining \eqref{eqn:lem-mixing-1} and \eqref{eqn:lem-mixing-2}, the bound on $\tau_{{\rm mix},\Phi}(\varepsilon)$ follows immediately, which also implies the rescaled mixing time $t_{{\rm mix},\Phi}(\varepsilon)=\alpha^2 \tau_{{\rm mix},\Phi}(\varepsilon)$ in the corollary.
\end{proof}

\subsection{From the ideal channel to the implemented channel}
Next, we transfer the contraction of $\widetilde{\Phi}_\alpha$ stated in Lemma \ref{lem:gap_stability_lamb_shift} to the mixing time of the implemented channel $\Phi_\alpha$, which will establish the second part of~\cref{prop:mixing_time_reduction_ding}.

We will use the following stability argument for the mixing time of quantum channels, which is part of the \cite[Theorem 8]{ding2025endtoendefficientquantumthermal}. For completeness, we provide the proof below.
\begin{lem}\label{lem:mixing_transfer_close_channels} 
Let $\Phi_1,\Phi_2$ be CPTP maps with unique fixed points. If
\[
\tau_{{\rm mix},\Phi_1}(\varepsilon/2)\,\|\Phi_1-\Phi_2\|_{1\to1}\le \frac{\varepsilon}{2},
\]
then
\[
\tau_{{\rm mix},\Phi_2}(2\varepsilon)\le \tau_{{\rm mix},\Phi_1}(\varepsilon/2).
\]
\end{lem}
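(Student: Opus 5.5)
Set $\tau:=\tau_{{\rm mix},\Phi_1}(\varepsilon/2)$ and compare the $\tau$-fold iterates of the two channels by telescoping. For any $\rho$,
\[
\Phi_2^\tau-\Phi_1^\tau=\sum_{k=0}^{\tau-1}\Phi_2^{\tau-1-k}(\Phi_2-\Phi_1)\Phi_1^{k}.
\]
Each factor $\Phi_2^{j}$ and $\Phi_1^{k}$ is CPTP, hence trace-norm contractive. Therefore
\[
\|\Phi_2^\tau-\Phi_1^\tau\|_{1\to1}\le \tau\|\Phi_1-\Phi_2\|_{1\to1}\le \varepsilon/2.
\]
Combined with the definition of $\tau$, this gives, for every state $\rho$,
\[
\|\Phi_2^\tau(\rho)-\rho_{\rm fix}(\Phi_1)\|_1\le \varepsilon/2+\varepsilon/2=\varepsilon.
\]

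Next, control the distance between the two fixed points. Apply the previous bound with $\rho=\rho_{\rm fix}(\Phi_2)$. Since $\Phi_2^\tau(\rho_{\rm fix}(\Phi_2))=\rho_{\rm fix}(\Phi_2)$, this yields
\[
\|\rho_{\rm fix}(\Phi_2)-\rho_{\rm fix}(\Phi_1)\|_1\le\varepsilon.
\]

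Finally, use the triangle inequality. For any $\rho$,
\[
\|\Phi_2^\tau(\rho)-\rho_{\rm fix}(\Phi_2)\|_1\le \|\Phi_2^\tau(\rho)-\rho_{\rm fix}(\Phi_1)\|_1+\|\rho_{\rm fix}(\Phi_1)-\rho_{\rm fix}(\Phi_2)\|_1\le 2\varepsilon.
\]
By the definition of the integer mixing time, this gives $\tau_{{\rm mix},\Phi_2}(2\varepsilon)\le\tau$.

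There is no real obstacle here. The only points needing care are that uniqueness of $\rho_{\rm fix}(\Phi_2)$ is assumed in the hypothesis, so the mixing time is well defined, and that the telescoping estimate must be applied in the $1\to1$ norm using contractivity of CPTP maps, rather than any finer structure of the channels.
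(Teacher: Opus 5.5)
Your proof is correct and follows essentially the same route as the paper. Both arguments combine the telescoping/contractivity bound $\|\Phi_2^\tau-\Phi_1^\tau\|_{1\to1}\le\tau\|\Phi_1-\Phi_2\|_{1\to1}$ with the mixing of $\Phi_1$ to control $\|\rho_{\rm fix}(\Phi_1)-\rho_{\rm fix}(\Phi_2)\|_1$, then conclude by the triangle inequality, and both reach the same estimate $2\tau\|\Phi_1-\Phi_2\|_{1\to1}+\varepsilon\le 2\varepsilon$. The differences are presentational: you first prove the uniform bound $\|\Phi_2^\tau(\rho)-\rho_{\rm fix}(\Phi_1)\|_1\le\varepsilon$ and specialize it to $\rho=\rho_{\rm fix}(\Phi_2)$, and you write out the telescoping identity that the paper uses implicitly.
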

\begin{proof}
Let $m:=\tau_{{\rm mix},\Phi_1}(\varepsilon/2)$. Since $\Phi_1$ mixes to its fixed point within $\varepsilon/2$ after $m$ steps,
\[
\begin{aligned}
\|\rho_{\rm fix}(\Phi_2)-\rho_{\rm fix}(\Phi_1)\|_1
\le &~
\left\|\Phi_1^m(\rho_{\rm fix}(\Phi_2))-\rho_{\rm fix}(\Phi_1)\right\|_1
+\left\|\Phi_1^m(\rho_{\rm fix}(\Phi_2))-\Phi_2^m(\rho_{\rm fix}(\Phi_2))\right\|_1\\
\le &~
\frac{\varepsilon}{2}
+m\,\|\Phi_1-\Phi_2\|_{1\to1}.
\end{aligned}
\]
Hence, for any initial state $\rho$,
\[
\begin{aligned}
\|\Phi_2^m(\rho)-\rho_{\rm fix}(\Phi_2)\|_1
\le &~
\|\Phi_2^m(\rho)-\Phi_1^m(\rho)\|_1
+\|\Phi_1^m(\rho)-\rho_{\rm fix}(\Phi_1)\|_1
+\|\rho_{\rm fix}(\Phi_1)-\rho_{\rm fix}(\Phi_2)\|_1\\
\le &~
2m\,\|\Phi_1-\Phi_2\|_{1\to1}+\varepsilon\,.
\end{aligned}
\]
Under the stated hypothesis, the right-hand side is at most $2\varepsilon$. Therefore,
\[
\tau_{{\rm mix},\Phi_2}(2\varepsilon)\le m=\tau_{{\rm mix},\Phi_1}(\varepsilon/2).
\qedhere
\]
\end{proof}

Now we prove the mixing time of $\Phi_\alpha$~\eqref{eqn:prop_mixing_part2} in \cref{prop:mixing_time_reduction_ding}.
\begin{lem}\label{lem:mixing_phi_alpha}
Under the setting of~\cref{prop:mixing_time_reduction_ding}, we have
\begin{align*}
    t_{{\rm mix},\Phi_\alpha}(2\varepsilon)
\le
\mathcal O\!\left(\frac{1}{\lambda_{\rm gap}}\log\!\frac{4\|\rho_\beta^{-1/2}\|_2}{\varepsilon}\right)\,.
\end{align*}
\end{lem}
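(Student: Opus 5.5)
We obtain the bound by applying \cref{lem:mixing_transfer_close_channels} with $\Phi_1=\widetilde{\Phi}_\alpha$ and $\Phi_2=\Phi_\alpha$. Two ingredients are needed: a mixing-time bound for $\widetilde{\Phi}_\alpha$, and a trace-norm estimate for $\|\widetilde{\Phi}_\alpha-\Phi_\alpha\|_{1\to1}$.

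\textbf{Step 1: mixing of $\widetilde{\Phi}_\alpha$.} By \cref{cor:mixing_phi_tilde}, which applies because $\sigma\ge c_\sigma\beta^2/\lambda_{\rm gap}$, $\widetilde{\Phi}_\alpha$ has a unique fixed point. Its integer mixing time obeys
\[
\tau_{{\rm mix},\widetilde{\Phi}_\alpha}(\varepsilon/2)=\alpha^{-2}t_{{\rm mix},\widetilde{\Phi}_\alpha}(\varepsilon/2)\le \mathcal O\!\left(\frac{1}{\alpha^2\lambda_{\rm gap}}\log\frac{4\|\rho_\beta^{-1/2}\|_2}{\varepsilon}\right).
\]

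\textbf{Step 2: channel distance.} For each $T\in\mathrm{supp}(\mu)\subset[T_0,\infty)$, \cref{thm:ding_approx} gives
\[
\|\Phi_{\alpha,T}-\widetilde{\Phi}_{\alpha,T}\|_{1\to1}=\mathcal O\!\left(\alpha^2\sigma e^{-T^2/(4\sigma^2)}+\alpha^4\sigma\log(\sigma)\|\gamma'\|_{L^1}\right).
\]
The error in that theorem is stated in trace norm uniformly over inputs, so it is a $1\to1$ bound. We then use three facts:
\begin{itemize}
\item $\|\gamma'\|_{L^1}=\mathcal O(\beta)$ for $\sigma=\Omega(\beta)$.
\item $e^{-T^2/(4\sigma^2)}\le e^{-T_0^2/(4\sigma^2)}$ for $T\ge T_0$.
\item The assumption on $T_0$ gives $\alpha^2\sigma e^{-T_0^2/(4\sigma^2)}\le\alpha^4\sigma\beta\log\sigma$, exactly as in \cref{cor:approx_fix_point_2}.
\end{itemize}
Averaging over $T\sim\mu$ with the triangle inequality then yields $\|\Phi_\alpha-\widetilde{\Phi}_\alpha\|_{1\to1}\le C\alpha^4\sigma\beta\log\sigma$.

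\textbf{Step 3: transfer.} Combining the two steps,
\[
\tau_{{\rm mix},\widetilde{\Phi}_\alpha}(\varepsilon/2)\,\|\Phi_\alpha-\widetilde{\Phi}_\alpha\|_{1\to1}\le C\,\alpha^2\sigma\beta\log(\sigma)\,t_{{\rm mix},\widetilde{\Phi}_\alpha}(\varepsilon/2).
\]
Under the hypothesis $\alpha^2\le c_\alpha\varepsilon/(\sigma\beta\log(\sigma)\,t_{{\rm mix},\widetilde{\Phi}_\alpha}(\varepsilon/2))$, with $c_\alpha\le 1/(2C)$, this product is at most $\varepsilon/2$.

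\cref{lem:mixing_transfer_close_channels} assumes $\Phi_2$ already has a unique fixed point, so we check this for $\Phi_\alpha$ separately. Suppose $\rho_1,\rho_2$ are both fixed points of $\Phi_\alpha$. The triangle-inequality argument in the proof of that lemma shows each lies within $\varepsilon$ of $\rho_{\rm fix}(\widetilde{\Phi}_\alpha)$. A cleaner route is to note that $\Phi_\alpha^m$ is a strict $1\to1$ contraction on traceless operators with $m=\tau_{{\rm mix},\widetilde{\Phi}_\alpha}(\varepsilon/2)$. Indeed,
\[
\|\Phi_\alpha^m(X)\|_1\le\|\widetilde{\Phi}_\alpha^m(X)\|_1+m\|\Phi_\alpha-\widetilde{\Phi}_\alpha\|_{1\to1}\|X\|_1\le(\varepsilon+\varepsilon/2)\|X\|_1
\]
for traceless Hermitian $X$, after splitting $X$ into positive and negative parts. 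This factor is $<1$ for $\varepsilon<2/3$, so $\Phi_\alpha$ has a unique fixed point, and larger $\varepsilon$ is trivial.

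Applying the lemma gives $\tau_{{\rm mix},\Phi_\alpha}(2\varepsilon)\le\tau_{{\rm mix},\widetilde{\Phi}_\alpha}(\varepsilon/2)$. Multiplying by $\alpha^2$ and using Step 1 yields the claimed bound on $t_{{\rm mix},\Phi_\alpha}(2\varepsilon)$.

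The main obstacle is Step 2. \cref{thm:ding_approx} must hold as a genuine $1\to1$ bound uniformly in $T\ge T_0$, including the exponential tail term. The rest is bookkeeping of constants.
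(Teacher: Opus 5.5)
Your proposal is correct and follows the paper's proof essentially step for step: bound $\|\Phi_\alpha-\widetilde{\Phi}_\alpha\|_{1\to1}=\mathcal O(\alpha^4\sigma\beta\log\sigma)$ using \cref{thm:ding_approx} and the choice of $T_0$, then combine \cref{lem:mixing_transfer_close_channels} (with $\Phi_1=\widetilde{\Phi}_\alpha$, $\Phi_2=\Phi_\alpha$) with \cref{cor:mixing_phi_tilde}. Your extra check that $\Phi_\alpha^m$ strictly contracts traceless operators, so that $\Phi_\alpha$ has the unique fixed point that \cref{lem:mixing_transfer_close_channels} presupposes, is valid in the small-$\varepsilon$ regime that matters and fills in a point the paper's proof leaves unverified.
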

\begin{proof}
By~\cref{thm:ding_approx}, for every density matrix $\rho$ and every $T$ in the support of $\mu$,
\[
\left\|{\Phi_{\alpha,T}}(\rho)-\widetilde{\Phi}_{\alpha,T}(\rho)\right\|_1
\le
\mathcal O\!\left(\alpha^2\sigma e^{-T^2/(4\sigma^2)}+\alpha^4\sigma\beta\log(\sigma)\right)\,.
\]
Since the support of $\mu$ is contained in $[T_0,\infty)$ and
\[
T_0\ge 2\sigma\sqrt{\log\!\left((\alpha^2\beta\log(\sigma))^{-1}\right)}\,,
\]
averaging over $T\sim\mu$ gives
\begin{align}\label{eqn:errorPhi}
\|\Phi_\alpha-\widetilde{\Phi}_\alpha\|_{1\to 1}
\le
\mathcal O\!\left(\alpha^2\sigma e^{-T_0^2/(4\sigma^2)}+\alpha^4\sigma\beta\log(\sigma)\right)
=
\mathcal O(\alpha^4\sigma\beta\log(\sigma))\,.
\end{align}
Now we apply~\cref{lem:mixing_transfer_close_channels} with $\Phi_1=\widetilde{\Phi}_\alpha$ and $\Phi_2=\Phi_\alpha$. It suffices to choose sufficient small strength $\alpha$ so that
\[
t_{{\rm mix},\widetilde{\Phi}_\alpha}(\varepsilon/2)\,\alpha^{-2}\|\Phi_\alpha-\widetilde{\Phi}_\alpha\|_{1\to 1}\le \frac{\varepsilon}{2}.
\]
Applying \eqref{eqn:errorPhi}, the above condition holds whenever
\[
\alpha^2
\le \frac{c_\alpha\varepsilon}{\sigma\beta\log(\sigma)\,t_{{\rm mix},\widetilde{\Phi}_\alpha}(\varepsilon/2)}
\]
for some absolute constant $c_\alpha$. Therefore~\cref{lem:mixing_transfer_close_channels,cor:mixing_phi_tilde} yields
\[
t_{{\rm mix},\Phi_\alpha}(2\varepsilon)
\;=\;
\alpha^2\tau_{{\rm mix},\Phi_\alpha}(2\varepsilon)
\le
\alpha^2\tau_{{\rm mix},\widetilde{\Phi}_\alpha}(\varepsilon/2)
\le
\mathcal O\!\left(\frac{1}{\lambda_{\rm gap}}\log\!\frac{4\|\rho_\beta^{-1/2}\|_2}{\varepsilon}\right)\,.
\]
We complete the proof of the lemma.
\end{proof}

\subsection{Proof of~\texorpdfstring{\cref{thm:end_to_end_ding,cor:end_to_end_ding}}{Lg}}\label{sec:proof_end_to_end_ding}
We now combine~\cref{lem:approx_fixed_point,prop:mixing_time_reduction_ding} to prove \cref{thm:end_to_end_ding,cor:end_to_end_ding}.
\begin{proof}[Proof of \cref{thm:end_to_end_ding}]
Let
$
L_\varepsilon:=\log\!\frac{8\|\rho_\beta^{-1/2}\|_2}{\varepsilon} = \mathcal{O}(\beta\|H\|+\log(1/\varepsilon)).
$

By Proposition \ref{prop:mixing_time_reduction_ding}, one can choose
\begin{align*}
    \sigma =\frac{c_k\beta^2}{\lambda_{\rm gap}}\,,\qquad \alpha^2
=
c_\alpha\,\frac{\varepsilon\lambda_{\rm gap}}{\sigma\beta\log(\sigma)}\log^{-1}\left(\frac{4\|\rho_{\beta}^{-1/2}\|_2}{\varepsilon}\right)\,,\qquad T_0\ge 2\sigma\sqrt{\log\!\left((\alpha^2\beta\log(\sigma))^{-1}\right)}\,,
\end{align*}
so that
\[
t_{{\rm mix},\Phi_\alpha}(2\varepsilon)
\le
\mathcal O\!\left(\frac{1}{\lambda_{\rm gap}}\log\left(\frac{4\|\rho_{\beta}^{-1/2}\|_2}{\varepsilon}\right)\right)\,,
\qquad
\tau_{{\rm mix},\Phi_\alpha}(2\varepsilon)
\le
\mathcal O\!\left(\frac{\log\left(4\|\rho_{\beta}^{-1/2}\|_2/\varepsilon\right)}{\lambda_{\rm gap}\alpha^2}\right)\,.
\]
Substituting these choices into~\cref{lem:approx_fixed_point}, we obtain
\begin{align*}
\|\rho_\beta-\rho_{\rm fix}(\Phi_\alpha)\|_1
\leq &~
\frac{\varepsilon}{2}
+t_{{\rm mix},\Phi_\alpha}(\varepsilon/2) \mathcal{O}\!\left(\sigma\beta\log(\sigma)\alpha^2\right)\\
\leq &~ \frac{\varepsilon}{2}
+\mathcal O\!\left(\frac{\log\left(\|\rho_{\beta}^{-1/2}\|_2/\varepsilon\right)}{\lambda_{\rm gap}}\sigma\beta\log(\sigma)c_\alpha\,\frac{\varepsilon\lambda_{\rm gap}}{\sigma\beta\log(\sigma)}\log^{-1}\left(\frac{4\|\rho_{\beta}^{-1/2}\|_2}{\varepsilon}\right)\right)\\
= &~ \mathcal{O}(\varepsilon)\,.
\end{align*}
Therefore,
\[
\|\rho_\beta-\rho_{\rm fix}(\Phi_\alpha)\|_1
\le
\mathcal O(\varepsilon).
\]
Moreover, after
\[
\tau_{{\rm mix},\Phi_\alpha}(\varepsilon/2)
=\mathcal{O}\left(\frac{\log\left(\|\rho_{\beta}^{-1/2}\|_2/\varepsilon\right)}{\alpha^2 \lambda_{\rm gap}} \right) =
\mathcal O\!\left(
\frac{\beta^3\log(\beta^2/\lambda_{\rm gap})}{\lambda_{\rm gap}^3\varepsilon}
\bigl(\beta\|H\|+\log(1/\varepsilon)\bigr)^2
\right)=\widetilde{\mathcal O}\!\left(\frac{\beta^5 \|H\|^2}{\lambda_{\rm gap}^3\varepsilon}\right)
\]
applications of $\Phi_\alpha$, the output is within trace distance $\varepsilon$ from $\rho_\beta$.

The proof of \cref{thm:end_to_end_ding} is then completed.
\end{proof}

\begin{proof}[Proof of~\cref{cor:end_to_end_ding}]
Under the assumption that $\|H\| = \Theta(n), \lambda_{\rm gap} = \Theta_\beta\left(\lambda_0/n\right)$, we have
\begin{align*}
    N=\tau_{{\rm mix},\Phi_\alpha}(\varepsilon/2)=\widetilde{\mathcal O}_\beta\!\left(\frac{\|H\|^2}{\lambda_{\rm gap}^3\varepsilon}\right)=\widetilde{\mathcal O}_\beta\left(\frac{n^5}{\varepsilon \lambda_0^3}\right)\,.
\end{align*}

It remains to count the total simulation time. In each iteration, the Hamiltonian evolution time $T$ is sampled from $\mu(t)=\mu_0(t/T_0)/T_0$, so each step uses evolution time on the scale of $T_0$. More precisely,
\begin{align*}
\mathbb E_{T\sim\mu}(T)
=
T_0\int_0^\infty t\,\mu_0(t)\,\mathrm dt
= &~
5T_0 \\
= &~ \Theta\!\left(
\sigma\sqrt{\log\!\left((\alpha^2\beta\log(\sigma))^{-1}\right)}
\right)\\
= &~
\mathcal O\!\left(
\frac{\beta^2}{\lambda_{\rm gap}}
\sqrt{\log\!\left(
\frac{\beta^2\bigl(\beta\|H\|+\log(1/\varepsilon)\bigr)}
{\varepsilon\lambda_{\rm gap}^2}
\right)}
\right)\\
= &~ \widetilde{\cal O}\left(\frac{\beta^2}{\lambda_{\rm gap}}\right)\,,
\end{align*}
where the second step follows from the definition of $\mu_0$ in~\cref{lem:nu_existence}.

In particular, if $\|H\| = \Theta(n), \lambda_{\rm gap} = \Theta_\beta(\frac{\lambda_0}{n})$, the total evolution time is
\begin{align*}
    N\cdot \widetilde{\cal O}\left(\frac{\beta^2}{\lambda_{\rm gap}}\right) = \widetilde{O}_\beta\left( \frac{n^6}{\varepsilon \lambda_0^{4}}\right)\,.
\end{align*}
We complete the proof of the corollary.
\end{proof}

\section{Technical Lemmas}\label{sec:lemmas}

In this section, we prove some technical lemmas about the Lamb-shift term $H_{\rm Lamb}$, which are used to prove the approximation and the mixing time results. More specifically, under~\cref{setup:ding}, the residual coherent term $H_{\rm Lamb}$ appearing in~\eqref{eqn:Lindbladian_approx_ding} can be written, using~\cref{thm:kms_dbc_lindbladian_characterization}, as
\[
H_{\rm Lamb}=H_{\rm coh}-G_{\mathcal D},
\]
where $H_{\rm coh}$ comes from the Hamiltonian part built from $H_{\mathrm{LS},A_S}(\omega)$ and $G_{\mathcal D}$ is the coherent correction associated with the jump family in Theorem \ref{thm:kms_dbc_lindbladian_characterization}. In what follows, we establish the following bounds:
\[
\|H_{\rm Lamb}\|=\mathcal O(\beta),
\qquad
\left\|\rho_\beta^{1/4}H_{\rm Lamb}\rho_\beta^{-1/4}
-\rho_\beta^{-1/4}H_{\rm Lamb}\rho_\beta^{1/4}\right\|
=
\mathcal O\!\left(\frac{\beta^2}{\sigma}\right).
\]

\begin{lem}\label{lem:ding_hamiltonian_part_bound}
Under the setup in~\cref{setup:ding} and the choices in~\cref{eqn:ding_choice},
and coupling operators sampled uniformly from $\mathcal A=\{A^i,-A^i\}_i$ with $\|A^i\|\le 1$, the Hamiltonian part of~\cref{eqn:lindbladian_operator},
\begin{align}\label{eqn:HLSnormbound}
H_{\rm coh}:=\mathbb E_{A_S}\int_{-\infty}^{\infty} g(\omega)\,H_{\mathrm{LS},A_S}(\omega)\,\mathrm d\omega,
\end{align}
satisfies
\[
\|H_{\rm coh}\|\le \mathcal O(\beta).
\]
Furthermore,
\begin{align}\label{eqn:Gibbs_defect_lem}
\left\|\rho_\beta^{-1/4}H_{\rm coh}\rho_\beta^{1/4}
-\rho_\beta^{1/4}H_{\rm coh}\rho_\beta^{-1/4}\right\|
\le
\mathcal O\!\left(\frac{\beta^2}{\sigma}\right)\,.
\end{align}
\end{lem}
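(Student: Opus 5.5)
We will work in the energy eigenbasis of $H$ and write $H_{\rm coh}$ as an operator Fourier transform of the coupling operators. Write $A_S=\sum_{\nu}A_{S,\nu}$ (Bohr-frequency decomposition), so that $A_S(t)=\sum_\nu A_{S,\nu}e^{i\nu t}$. Insert this into the definition \eqref{eqn:G_S} of $\mathcal G_{A_S,f}$. With the Gaussian $f$ from \eqref{eqn:ding_choice}, each block $(\mathcal G_{A_S,f}(\omega))$ between Bohr components $\nu_1,\nu_2$ becomes an explicit scalar kernel $K_\sigma(\omega;\nu_1,\nu_2)$ times $A_{S,\nu_2}^\dagger A_{S,\nu_1}$. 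That kernel is the ordered double integral of two Gaussians, i.e. a Gaussian in $\nu_1-\nu_2$ of width $1/\sigma$ multiplied by a Faddeeva/Dawson-type function in $\omega+(\nu_1+\nu_2)/2$. Integrating against $g(\omega)$ (Gaussian of width $\sim\beta^{-1}$) and the Fermi factors yields $H_{\rm coh}=\mathbb E_{A_S}\sum_{\nu_1,\nu_2}h(\nu_1,\nu_2)A_{S,\nu_2}^\dagger A_{S,\nu_1}$ for an explicit symbol $h$.

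\textbf{Norm bound.} To bound $\|H_{\rm coh}\|$ without summing over Bohr frequencies, we will avoid the frequency picture. We bound directly in time: $\|\mathcal G_{A_S,f}(\omega)\|\le\int\!\!\int_{s_2<s_1}|f(s_1)f(s_2)|\,\mathrm ds_2\mathrm ds_1$, which is $O(\sigma)$. This crude bound is too large. Instead, we first integrate in $\omega$: $\int g(\omega)w(\omega)e^{-i\omega(s_1-s_2)}\mathrm d\omega$, with $w$ the Fermi weight, decays in $|s_1-s_2|$ on scale $\beta$ because $g\cdot w$ is smooth on scale $1/\beta$. Hence the double time integral is effectively restricted to $|s_1-s_2|\lesssim\beta$. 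This gives $\int f(s_1)^2\mathrm ds_1\cdot O(\beta)=O(\beta)$, since $\|f\|_{L^2}=1$. This is the strategy of \cite[Theorem F.1]{wang2025lindbladdynamicsrigorousguarantees} and \cite[Theorem 27]{ding2025endtoendefficientquantumthermal}, which we will invoke, after checking that $g$ and $\tilde\beta$ in \eqref{eqn:ding_choice} satisfy their smoothness hypotheses with $\|\widehat{g w}\|$-type constants of order $\beta$.

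\textbf{Gibbs defect.} We use $\rho_\beta^{-1/4}X_{\nu}\rho_\beta^{1/4}=e^{\beta\nu/4}X_\nu$, so the defect equals $\sum_\nu 2\sinh(\beta\nu/4)(H_{\rm coh})_\nu$. Here $(H_{\rm coh})_\nu$ collects the terms with $\nu_1-\nu_2=\nu$. The key observation is that the Gaussian factor in $\nu_1-\nu_2$ coming from the product $f(s_1)f(s_2)$, after changing to center and difference times, forces $|\nu|\lesssim 1/\sigma$. Concretely, in time variables, multiplication by $\sinh(\beta\nu/4)$ corresponds to an imaginary shift of the center time $u=(s_1+s_2)/2$ by $\pm i\beta/4$. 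Because $f(s_1)f(s_2)=\tfrac{1}{\sigma\sqrt{2\pi}}e^{-u^2/(2\sigma^2)}e^{-v^2/(8\sigma^2)}$ is entire in $u$, we can write the defect as the same operator integral with the center weight replaced by $e^{-(u\pm i\beta/4)^2/(2\sigma^2)}-e^{-u^2/(2\sigma^2)}$ (suitably symmetrized). The $L^1$ norm of that difference, relative to the original weight, is $O(\beta/\sigma)$. Multiplying by the $O(\beta)$ factor from the difference-time integration of the previous step gives $O(\beta^2/\sigma)$.

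The main obstacle is making this contour shift rigorous. The operator $A_S(u\pm v/2)$ continued to complex $u$ is unbounded in norm ($e^{\beta\|H\|}$-type growth), so the shift cannot be applied naively. To avoid this, we will place the shift on the scalar Gaussian weight only, working Bohr-frequency-wise: the identity $\int e^{-u^2/2\sigma^2}e^{i\nu u}\,\mathrm du\cdot 2\sinh(\beta\nu/4)=\int[\ldots\text{shifted weights}]e^{i\nu u}\,\mathrm du$ holds scalar-wise. Resumming then produces a time integral of bounded operators $A_S(t)$ against a complex weight of small $L^1$ norm. If this resummation fails to give a bounded-norm representation, the fallback is to cite the corresponding estimate in \cite[Theorem F.1]{wang2025lindbladdynamicsrigorousguarantees} directly.
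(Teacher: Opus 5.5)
Your proposal is correct and follows essentially the same route as the paper. For the norm bound, you integrate out $\omega$ first to get a time-difference kernel whose $L^1$ norm is $O(\beta)$, then use $\|f\|_{L^2}=1$; the paper proves the kernel bound directly, by rescaling and using $\psi_j,\psi_j''\in L^1$, rather than citing it. For the defect, you write conjugation by $\rho_\beta^{\mp1/4}$ as an imaginary time shift $\mp i\beta/4$ and move it onto the Gaussian envelope by a contour shift, gaining $O(\beta/\sigma)$.

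The differences are cosmetic. You shift the center time $u=(s_1+s_2)/2$, where the envelope factorizes. The paper instead shifts $s=s_2$ at fixed $\tau$ and uses $f_+f_+-f_-f_-=\delta f\,f_++f_-\,\delta f$ with Cauchy--Schwarz.

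Your concern about rigor is unnecessary. In finite dimension, $\|A_S(z)\|\le e^{\beta\|H\|/2}$ uniformly on the strip $|\mathrm{Im}\,z|\le\beta/4$, and the Gaussian envelope decays along horizontal lines, so the paper's direct contour shift is legitimate. Your Bohr-frequency-wise fallback is equivalent.
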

\begin{proof}
Recall from~\cref{eqn:ding_choice} that
\[
\tilde\beta=\frac{2\beta}{2-\beta^2/(4\sigma^2)}=\frac{2\beta}{\sigma_\beta},
\qquad
g(\omega)=\frac{\beta}{\sqrt{2\pi \sigma_\beta}}
\exp\!\left(-\frac{(\beta\omega+1)^2}{2\sigma_\beta}\right)\,,
\]
where we let $\sigma_\beta:=2-\frac{\beta^2}{4\sigma^2}.$ Then, by~\cref{eqn:lindbladian_operator},
\[
H_{\rm coh}
=
\mathbb E_{A_S}\int_{-\infty}^{\infty} g(\omega)\,H_{\mathrm{LS},A_S}(\omega)\,\mathrm d\omega,
\]
where
\begin{align*}
H_{\mathrm{LS},A_S}(\omega)
=&~
-\Im\!\left(
\frac{e^{-\tilde\beta\omega}}{1+e^{-\tilde\beta\omega}}\,\mathcal G_{A_S^\dagger,f}(\omega)
+
\frac{1}{1+e^{-\tilde\beta\omega}}\,\mathcal G_{A_S,f}(-\omega)
\right)\,,\\
\mc{G}_{A_S,f}(\omega)=&~\int^\infty_{-\infty}\int^{s_1}_{-\infty}f(s_2)f(s_1) A^\dagger_S(s_2)A_S(s_1)\exp(-i\omega(s_1-s_2))\mathrm{d}s_2\mathrm{d}s_1\,.
\end{align*}
Accordingly, we define
\[
Q_{\rm coh}
:=
\mathbb E_{A_S}\int_{-\infty}^{\infty}
g(\omega)\left(
\frac{e^{-\tilde\beta\omega}}{1+e^{-\tilde\beta\omega}}\,\mathcal G_{A_S^\dagger,f}(\omega)
+
\frac{1}{1+e^{-\tilde\beta\omega}}\,\mathcal G_{A_S,f}(-\omega)
\right)\mathrm d\omega\,,
\]
so that
\begin{align*}
    H_{\rm coh}=-\mathrm{Im}(Q_{\rm coh})\,.
\end{align*}

We now bound the norm of $Q_{\rm coh}$, which implies the norm of $H_{\rm coh}$.
For this, we split it as $Q_{\rm coh}=Q_1+Q_2$, where
\[
Q_1
:=
\mathbb E_{A_S}\int_{-\infty}^{\infty}
\frac{g(\omega)e^{-\tilde\beta\omega}}{1+e^{-\tilde\beta\omega}}\,\mathcal G_{A_S^\dagger,f}(\omega)\,\mathrm d\omega\,,
\qquad
Q_2
:=
\mathbb E_{A_S}\int_{-\infty}^{\infty}
\frac{g(\omega)}{1+e^{-\tilde\beta\omega}}\,\mathcal G_{A_S,f}(-\omega)\,\mathrm d\omega\,.
\]
Using the definition of $\mathcal G_{A_S,f}$ and the change of variables $s=s_2$, $\tau=s_1-s_2$, these two terms can be organized as
\begin{equation}\label{eqn:Q1_Q2_expansions}
\begin{aligned}
Q_1
= &~
\mathbb E_{A_S}\int_0^{\infty}\underbrace{\int_{-\infty}^{\infty}\frac{g(\omega)e^{-\tilde\beta\omega}}{1+e^{-\tilde\beta\omega}}\,e^{-i\omega\tau}\,\mathrm d\omega}_{=:k_1(\tau)}\left(\int_{-\infty}^{\infty}
f(s)f(s+\tau)\,A_S(s)A_S^\dagger(s+\tau)\,\mathrm ds\right)\mathrm d\tau\,,\\
Q_2
=&~
\mathbb E_{A_S}\int_0^{\infty}\underbrace{\int_{-\infty}^{\infty}\frac{g(\omega)}{1+e^{-\tilde\beta\omega}}\,e^{i\omega\tau}\,\mathrm d\omega}_{=:k_2(\tau)}\left(\int_{-\infty}^{\infty}
f(s)f(s+\tau)\,A_S^\dagger(s)A_S(s+\tau)\,\mathrm ds\right)\mathrm d\tau\,.
\end{aligned}
\end{equation}
Since $\|A_S(t)\|\le 1$ and $\|f(\cdot+\tau)\|_{L^2}=\|f\|_{L^2}$, Cauchy--Schwarz yields that, for $j=1,2$,
\begin{align}\label{eqn:Qjnorm}
\|Q_j\|
\le
\|k_j\|_{L^1([0,\infty))}\sup_{\tau\ge 0}\int_{-\infty}^{\infty}|f(s)f(s+\tau)|\,\mathrm ds
\le
\|k_j\|_{L^1([0,\infty))}\|f\|_{L^2}^2\,.
\end{align}
It therefore remains to bound $\|k_j\|_{L^1([0,\infty))}$. For this, we can rewrite $k_1$ and $k_2$ as rescaled Fourier transforms:
\begin{align*}
k_1(\tau)=&~ \int_{-\infty}^{\infty}\frac{\beta}{\sqrt{2\pi \sigma_\beta}}
\exp\!\left(-\frac{(\beta\omega+1)^2}{2\sigma_\beta}\right)\frac{e^{-\tilde\beta\omega}}{1+e^{-\tilde\beta\omega}}\,e^{-i\omega\tau}\,\mathrm d\omega\\
=&~ \int_{-\infty}^{\infty}\frac{1}{\sqrt{2\pi \sigma_\beta}}
\exp\!\left(-\frac{(\beta\omega+1)^2}{2\sigma_\beta}\right)\frac{1}{1+e^{2\beta\omega/\sigma_\beta}}\,e^{-i\beta \omega\frac{\tau}{\beta}}\,\mathrm d(\beta \omega)\\
= &~ \int_{-\infty}^{\infty}\psi_1(x)\,e^{-ix\frac{\tau}{\beta}}\,\mathrm dx
=\widehat{\psi_1}(-\tau/\beta)\,,
\end{align*}
and similarly,
\begin{align*}
k_2(\tau)=\widehat{\psi_2}(-\tau/\beta)\,,
\end{align*}
where
\[
\psi_1(x):=\frac{1}{\sqrt{2\pi \sigma_\beta}}
\frac{e^{-(x+1)^2/(2\sigma_\beta)}}{1+e^{2x/\sigma_\beta}}\,,
\qquad
\psi_2(x):=\frac{1}{\sqrt{2\pi \sigma_\beta}}
\frac{e^{-(x+1)^2/(2\sigma_\beta)}}{1+e^{-2x/\sigma_\beta}}\,.
\]
Thus, the $L^1$ norm of $k_j$ can thus be bounded by
\begin{align} \label{eqn:kjbound}
\qquad
\|k_j\|_{L^1([0,\infty))}\le \|k_j\|_{L^1}
=
\beta\cdot \left\|\widehat{\psi_j}\right\|_{L^1}~~~\forall j=1,2\,.
\end{align}

We claim that $\|\widehat{\psi_j}\|_{L^1} = \mathcal{O}(1)$, and thus $\|k_j\|_{L^1([0,\infty))} = \mathcal{O}(\beta)$. Indeed, since $\sigma_\beta=2-\frac{\beta^2}{4\sigma^2}\in[7/4,2)$, direct differentiation shows
\[
\|\psi_j\|_{L^1}+\|\psi_j''\|_{L^1}=\mathcal O(1)
\qquad \forall j=1,2.
\]
Note that for any $\xi\in \mathbb{R}$,
\begin{align*}
    \left|\widehat{\psi_j}(\xi)\right|= \left|\int_{-\infty}^{\infty}\psi_j(x)\,e^{ix\xi}\,\mathrm dx\right|\leq \|\psi_j\|_{L^1}\,, \quad \text{and}\quad \left|\widehat{\psi_j}(\xi)\right|=\left|\frac{1}{(i\xi)^2}\int_{-\infty}^{\infty}\psi_j''(x)\,e^{ix\xi}\,\mathrm dx\right|\leq \frac{\|\psi_j''\|_{L^1}}{\xi^2}\,.
\end{align*}
Hence, for any $j=1,2$,
\[
\left|\widehat{\psi_j}(\xi)\right|
\le
\min\left\{\|\psi_j\|_{L^1},\frac{\|\psi_j''\|_{L^1}}{\xi^2}\right\},
\]
which implies
\begin{align} \label{psijbound}
\|\widehat{\psi_j}\|_{L^1}
\le
\int_{|\xi|\le 1}\|\psi_j\|_{L^1}\,\mathrm d\xi
+
\int_{|\xi|>1}\frac{\|\psi_j''\|_{L^1}}{\xi^2}\,\mathrm d\xi
=
\mathcal O(1)\,.
\end{align}

Combining this with \eqref{eqn:Qjnorm} and $\|f\|_{L^2}=1$, we have
\[
\|H_{\rm coh}\|\le \|Q_{\rm coh}\|\le \|Q_1\|+\|Q_2\|=\mathcal O(\beta).
\]
Thus we complete the proof of the norm bound \eqref{eqn:HLSnormbound}.

To bound the Gibbs-commutation defect~\eqref{eqn:Gibbs_defect_lem}, using the decomposition $H_{\rm coh}=-\mathrm{Im}(Q_1+Q_2)$, we have
\begin{equation}\label{eqn:Gibbs_defect_Hcoh}
\begin{aligned}
    \left\|\rho_\beta^{-1/4}H_{\rm coh}\rho_\beta^{1/4}-\rho_\beta^{1/4}H_{\rm coh}\rho_\beta^{-1/4}\right\|=&~ \left\|\mathrm{Im}\left(\rho_\beta^{-1/4}(Q_1+Q_2)\rho_\beta^{1/4}-\rho_\beta^{1/4}(Q_1+Q_2)\rho_\beta^{-1/4}\right)\right\|\\
    \leq &~ \left\|\rho_\beta^{-1/4}(Q_1+Q_2)\rho_\beta^{1/4}-\rho_\beta^{1/4}(Q_1+Q_2)\rho_\beta^{-1/4}\right\|\\
    \leq &~ \left\|\rho_\beta^{-1/4}Q_1\rho_\beta^{1/4}-\rho_\beta^{1/4}Q_1\rho_\beta^{-1/4}\right\| + \left\|\rho_\beta^{-1/4}Q_2\rho_\beta^{1/4}-\rho_\beta^{1/4}Q_2\rho_\beta^{-1/4}\right\|\,.
\end{aligned}
\end{equation}
By~\eqref{eqn:Q1_Q2_expansions}, we have
\begin{align*}
    \rho_{\beta}^{\mp 1/4}Q_1\rho_{\beta}^{\pm 1/4}
= &~
\mathbb E_{A_S}\int_0^{\infty}k_1(\tau)\left(\int_{-\infty}^{\infty}
f(s)f(s+\tau)\,\left(\rho_{\beta}^{\mp 1/4}A_S(s)\rho_\beta^{\pm 1/4}\right)\left(\rho_\beta^{\pm 1/4}A_S(s+\tau)\rho_\beta^{\mp 1/4}\right)^\dagger\,\mathrm ds\right)\mathrm d\tau\,,\\
\rho_{\beta}^{\mp 1/4}Q_2\rho_{\beta}^{\pm 1/4}
= &~
\mathbb E_{A_S}\int_0^{\infty}k_2(\tau)\left(\int_{-\infty}^{\infty}
f(s)f(s+\tau)\,\left(\rho_{\beta}^{\pm 1/4}A_S(s)\rho_\beta^{\mp 1/4}\right)^\dagger\left(\rho_\beta^{\mp 1/4}A_S(s+\tau)\rho_\beta^{\pm 1/4}\right)\,\mathrm ds\right)\mathrm d\tau\,.
\end{align*}
Since $\rho_\beta^{\mp1/4}A_S(t)\rho_\beta^{\pm1/4}=A_S(t\mp i\beta/4)$ and the integrands are entire in $(s,\tau)$, we may shift the contour in the $s$-variable and obtain
\begin{align*}
    \rho_{\beta}^{\mp 1/4}Q_1\rho_{\beta}^{\pm 1/4}
= &~
\mathbb E_{A_S}\int_0^{\infty}k_1(\tau)\left(\int_{-\infty}^{\infty}
f(s)f(s+\tau)\,A_S(s\mp i\beta/4)A_S^\dagger(s\mp i\beta/4+\tau)\,\mathrm ds\right)\mathrm d\tau\\
= &~ \mathbb E_{A_S}\int_0^{\infty}k_1(\tau)\left(\int_{-\infty\mp i\beta/4}^{\infty \mp i\beta/4}
f(s\pm i\beta/4)f(s\pm i\beta/4+\tau)\,A_S(s)A_S^\dagger(s+\tau)\,\mathrm ds\right)\mathrm d\tau\\
= &~ \mathbb E_{A_S}\int_0^{\infty}k_1(\tau)\left(\int_{-\infty\mp i\beta/4}^{\infty \mp i\beta/4}
f_\pm(s)f_\pm(s+\tau)\,A_S(s)A_S^\dagger(s+\tau)\,\mathrm ds\right)\mathrm d\tau\\
= &~ \mathbb E_{A_S}\int_0^{\infty}k_1(\tau)\left(\int_{-\infty}^{\infty}
f_\pm(s)f_\pm(s+\tau)\,A_S(s)A_S^\dagger(s+\tau)\,\mathrm ds\right)\mathrm d\tau\,,
\end{align*}
where we define $f_\pm(t):=f(t\pm i\beta/4)$. Similarly, we also have
\begin{align*}
    \rho_{\beta}^{\mp 1/4}Q_2\rho_{\beta}^{\pm 1/4} = &~ \mathbb E_{A_S}\int_0^{\infty}k_2(\tau)\left(\int_{-\infty}^{\infty}
f_\pm(s)f_\pm(s+\tau)\,A_S^\dagger(s)A_S(s+\tau)\,\mathrm ds\right)\mathrm d\tau\,.
\end{align*}
Therefore, for $j=1,2$, since $\|A_S(s)\|\leq 1$ for $s\in \mathbb{R}$, we have
\[
\left\|\rho_\beta^{-1/4}Q_j\rho_\beta^{1/4}
-\rho_\beta^{1/4}Q_j\rho_\beta^{-1/4}\right\|
\le
\int_0^\infty |k_j(\tau)|\int_{-\infty}^{\infty}
\left|f_+(s)f_+(s+\tau)-f_-(s)f_-(s+\tau)\right|\,\mathrm ds\,\mathrm d\tau\,.
\]
Let $\delta f:=f_+-f_-$. Then, we have
\[
f_+(s)f_+(s+\tau)-f_-(s)f_-(s+\tau)
=
\delta f(s)f_+(s+\tau)+f_-(s)\delta f(s+\tau)\,,
\]
and by Cauchy--Schwarz inequality in the $s$-variable, we get
\begin{equation}\label{eqn:Qj_defect_reduction}
\left\|\rho_\beta^{-1/4}Q_j\rho_\beta^{1/4}
-\rho_\beta^{1/4}Q_j\rho_\beta^{-1/4}\right\|
\le
\|k_j\|_{L^1([0,\infty))}
\left(\|f_+\|_{L^2}+\|f_-\|_{L^2}\right)\|\delta f\|_{L^2}.
\end{equation}
By \eqref{eqn:kjbound} and \eqref{psijbound}, we have $\|k_j\|_{L^1([0,\infty)} \le \mathcal{O}(\beta)$. It remains to bound $\|f_{\pm}\|_{L^2}$ and $\|\delta f\|_{L^2}$.
Direct calculation shows that
\[
|f(t\pm i\beta/4)|=e^{\beta^2/(64\sigma^2)}f(t)\,,
\qquad
f(t+i\beta/4)-f(t-i\beta/4)
=
2i\,e^{\beta^2/(64\sigma^2)}f(t)\sin\!\left(\frac{\beta t}{8\sigma^2}\right)\,.
\]
Hence,
\begin{equation}\label{eqn:fpm_l2_bound}
\|f_\pm\|_{L^2}=e^{\beta^2/(64\sigma^2)}\|f\|_{L^2}=e^{\beta^2/(64\sigma^2)}\,,
\end{equation}
and
\begin{equation}\label{eqn:delta_f_l2_bound}
\|\delta f\|_{L^2}
\le
2e^{\beta^2/(64\sigma^2)}
\left\|f(t)\sin\!\left(\frac{\beta t}{8\sigma^2}\right)\right\|_{L^2}
\le
\ 2e^{\beta^2/(64\sigma^2)}
\left\|f(t)\frac{\beta|t|}{8\sigma^2}\right\|_{L^2}
=
\frac{\beta \,e^{\beta^2/(64\sigma^2)}}{4\sigma^2}\|tf(t)\|_{L^2}\,,
\end{equation}
Since $\|f\|_{L^2}=1$ and $\|tf(t)\|_{L^2}=\sigma$, we have
\begin{align}\label{eqn:f_pm_delta_f_norm}
    \|f_{\pm}\|_{L^2}\leq \mathcal{O}(1)\,,\qquad \|\delta f\|_{L^2}\leq \mathcal{O}\left(\frac{\beta}{\sigma}\right)\,.
\end{align}
Plugging them into \cref{eqn:Qj_defect_reduction}, we have
\[
\left\|\rho_\beta^{-1/4}Q_j\rho_\beta^{1/4}
-\rho_\beta^{1/4}Q_j\rho_\beta^{-1/4}\right\|
\le
\mathcal O(\beta)\cdot \mathcal O\!\left(\frac{\beta}{\sigma}\right)
=
\mathcal O\!\left(\frac{\beta^2}{\sigma}\right)
\]
for $j=1,2$. By~\eqref{eqn:Gibbs_defect_Hcoh}, we conclude that
\[
\left\|\rho_\beta^{-1/4}H_{\rm coh}\rho_\beta^{1/4}
-\rho_\beta^{1/4}H_{\rm coh}\rho_\beta^{-1/4}\right\|
\le \mathcal O\!\left(\frac{\beta^2}{\sigma}\right)\,.
\]

The proof of the lemma is then completed.
\end{proof}

\begin{lem}\label{lem:ding_jump_G_bound}
Under the setup in~\cref{setup:ding} and the choices in~\cref{eqn:ding_choice},
let
\begin{align}
L_{A_S,\omega}:=&~ \sqrt{\gamma(\omega)}\,V_{A_S,f}(\omega)\,,\\
M_{\mathcal D}
:= &~
\mathbb E_{A_S}\int_{-\infty}^{\infty}
\left(L_{A_S,\omega}\right)^\dagger L_{A_S,\omega}\,\mathrm d\omega
=
\mathbb E_{A_S}\int_{-\infty}^{\infty}
\gamma(\omega)\,V_{A_S,f}(\omega)^\dagger V_{A_S,f}(\omega)\,\mathrm d\omega\,.\label{eqn:def_M_D}
\end{align}
Then the coherent correction associated with this jump family in~\cref{thm:kms_dbc_lindbladian_characterization}
\[
G_{\mathcal D}
=
\frac{i}{2}\sum_{\nu}\tanh\!\left(\frac{\beta\nu}{4}\right)(M_{\mathcal D})_\nu.
\]
satisfies
$$
\|G_{\mathcal D}\|
\le
\mathcal O\!\left(\frac{\beta^2}{\sigma}\right)\,.
$$
Furthermore,
\begin{align} \label{eqn:boundGDdefect}
\left\|\rho_\beta^{1/4}G_{\mathcal D}\rho_\beta^{-1/4}
-\rho_\beta^{-1/4}G_{\mathcal D}\rho_\beta^{1/4}\right\|
\le
\mathcal O\!\left(\frac{\beta^3}{\sigma^2}\right)
\end{align}
\end{lem}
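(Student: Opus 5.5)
The plan is to work in the Bohr-frequency (energy) picture. Both $G_{\mathcal D}$ and its Gibbs-conjugated versions are then obtained from $M_{\mathcal D}$ by a \emph{multiplier} acting on the Bohr frequency $\nu$. I will show that this multiplier carries a Gaussian factor $e^{-c\sigma^2\nu^2}$ coming from the envelope $f$. That factor confines the relevant frequencies to $|\nu|\lesssim 1/\sigma$, where $\tanh(\beta\nu/4)=\mathcal O(\beta/\sigma)$ and $\sinh(\beta\nu/4)=\mathcal O(\beta/\sigma)$.

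\textbf{Step 1: rewrite the summand.} Write $A_S=\sum_{\nu_1}A_{\nu_1}$, where $A_{\nu_1}:=\sum_{\lambda_k-\lambda_l=\nu_1}\Pi_\ell A_S\Pi_k$, so that $A_S(t)=\sum_{\nu_1}e^{-i\nu_1 t}A_{\nu_1}$. Then $V_{A_S,f}(\omega)=\sum_{\nu_1}\widehat f(-\omega-\nu_1)A_{\nu_1}$ (up to sign conventions), which gives
\[
\gamma(\omega)V^\dagger V=\gamma(\omega)\sum_{\nu_1,\nu_2}\overline{\widehat f(\cdot)}\,\widehat f(\cdot)\,A_{\nu_2}^\dagger A_{\nu_1}.
\]
The product $A_{\nu_2}^\dagger A_{\nu_1}$ has Bohr frequency $\nu=\nu_1-\nu_2$. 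Since $\widehat f$ is a Gaussian of width $\mathcal O(1/\sigma)$, completing the square gives
\[
\widehat f(\omega+\nu_1)\widehat f(\omega+\nu_2)=e^{-\sigma^2\nu^2/2}\,\widehat f\bigl(\omega+\bar\nu\bigr)^2,\qquad \bar\nu=\tfrac{\nu_1+\nu_2}{2}.
\]
Integrating over $\omega$ then yields
\[
M_{\mathcal D}=\mathbb E_{A_S}\sum_{\nu_1,\nu_2}e^{-\sigma^2\nu^2/2}\,w(\bar\nu)\,A_{\nu_2}^\dagger A_{\nu_1},\qquad w(\bar\nu):=\int\gamma(\omega)\widehat f(\omega+\bar\nu)^2\,\mathrm d\omega .
\]

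\textbf{Step 2: a two-variable transference bound.} For any $F(\nu_1,\nu_2)$ whose two-dimensional inverse Fourier transform $\check F$ lies in $L^1(\mathbb R^2)$, I claim
\[
\sum F(\nu_1,\nu_2)A_{\nu_2}^\dagger A_{\nu_1}=\iint \check F(s_1,s_2)\,A_S^\dagger(s_2)A_S(s_1)\,\mathrm ds_1\mathrm ds_2,
\]
so its operator norm is at most $\|\check F\|_{L^1}$, because $\|A_S(s)\|\le 1$. This is the same mechanism as in \cref{lem:ding_hamiltonian_part_bound}.

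\textbf{Step 3: the bound on $\|G_{\mathcal D}\|$.} Apply Step 2 with
\[
F=\tfrac{i}{2}\tanh(\beta\nu/4)\,e^{-\sigma^2\nu^2/2}\,w(\bar\nu).
\]
The Jacobian of $(\nu_1,\nu_2)\mapsto(\nu,\bar\nu)$ is $1$, so $\|\check F\|_{L^1}$ factorizes as
\[
\bigl\|\bigl(\tanh(\beta\nu/4)e^{-\sigma^2\nu^2/2}\bigr)^{\vee}\bigr\|_{L^1}\cdot\|\check w\|_{L^1}.
\]
For the first factor, rescale $\nu=x/\sigma$. The function becomes $\tanh(\tfrac{\beta}{4\sigma}x)e^{-x^2/2}$, which has $L^1$ and $\partial_x^2$-$L^1$ norms of order $\mathcal O(\beta/\sigma)$, because $\sigma=\Omega(\beta)$. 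The ``$\min\{\|\psi\|_{L^1},\|\psi''\|_{L^1}/\xi^2\}$'' argument from \cref{lem:ding_hamiltonian_part_bound} then gives an $L^1$ bound of $\mathcal O(\beta/\sigma)$ on the Fourier side. For the second factor, $w$ is $\gamma$ convolved with a Gaussian of width $1/\sigma\lesssim 1/\beta$, so $w$ is a smooth bump on scale $1/\beta$. Rescaling $\bar\nu=x/\beta$ exactly as for $\psi_j$ gives $\|\check w\|_{L^1}=\mathcal O(\beta)$. Together these give $\|G_{\mathcal D}\|=\mathcal O(\beta^2/\sigma)$.

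\textbf{Step 4: the Gibbs-commutation defect.} Conjugation $\rho_\beta^{\pm1/4}(\cdot)\rho_\beta^{\mp1/4}$ multiplies the $\nu$-component by $e^{\mp\beta\nu/4}$ (up to the sign convention). The defect in \eqref{eqn:boundGDdefect} is therefore Step 2 applied to
\[
F'=\tfrac{i}{2}\cdot 2\sinh(\beta\nu/4)\tanh(\beta\nu/4)\,e^{-\sigma^2\nu^2/2}\,w(\bar\nu).
\]
After rescaling $\nu=x/\sigma$, the $\nu$-factor is $\sinh(\tfrac{\beta x}{4\sigma})\tanh(\tfrac{\beta x}{4\sigma})e^{-x^2/2}$. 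This is $\mathcal O(\beta^2/\sigma^2)$ in $L^1$ and in second-derivative $L^1$, since the Gaussian dominates the $e^{\beta|x|/(4\sigma)}$ growth when $\beta/\sigma=\mathcal O(1)$. Multiplying by $\|\check w\|_{L^1}=\mathcal O(\beta)$ gives $\mathcal O(\beta^3/\sigma^2)$.

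\textbf{Expected main obstacle.} The delicate step is Step 1: getting the exact Gaussian factorization $e^{-\sigma^2\nu^2/2}w(\bar\nu)$. This requires tracking the sign conventions in $V_{A_S,f}(\omega)$ and the normalization of $f$ carefully. I will also need to check that $w$, a convolution of $\gamma$ with $\widehat f^{\,2}$, indeed has Fourier transform with $L^1$ norm $\mathcal O(\beta)$, uniformly in $\sigma\ge\beta$. If the factorization is only approximate, my fallback is to keep the full two-variable Gaussian and bound $\|\check F\|_{L^1(\mathbb R^2)}$ directly, again using $L^1$ and second-derivative $L^1$ bounds in each variable.
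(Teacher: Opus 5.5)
Your proposal is correct, and it takes a genuinely different route from the paper. The paper stays in the time domain. It writes the $\tanh$ multiplier as a principal-value convolution with $\beta^{-1}\operatorname{csch}(2\pi t/\beta)$, which gives $\|T(X)\|=\mathcal O(\beta)\,\|[H,X]\|$. It then bounds $\|[H,M_{\mathcal D}]\|\le 2\|m\|_{L^1}\|f'\|_{L^2}\|f\|_{L^2}=\mathcal O(\beta/\sigma)$ by integrating by parts in $(s,t)$. For the defect it uses $\rho_\beta^{1/4}G_{\mathcal D}\rho_\beta^{-1/4}-\rho_\beta^{-1/4}G_{\mathcal D}\rho_\beta^{1/4}=T(D_{\mathcal D})$, together with a contour shift $f\mapsto f(\cdot\pm i\beta/4)$, to show $\|[H,D_{\mathcal D}]\|=\mathcal O(\beta^2/\sigma^2)$. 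You instead work entirely in Bohr frequencies, and each of your steps checks out. Your completed square is exact for $\widehat f(\omega)\propto e^{-\sigma^2\omega^2}$, and since $\widehat f$ is even, the sign ambiguity in Step~1 is harmless. The map $(s_1,s_2)\mapsto((s_1+s_2)/2,\,s_1-s_2)$ has unit Jacobian, so $\|\check F\|_{L^1}$ factorizes exactly. The second factor is easier than you expect: $|\check w(c)|=|(f*f)(c)|\,|m(c)|$, so $\|\check w\|_{L^1}\le\|f\|_{L^2}^2\|m\|_{L^1}=\mathcal O(\beta)$ without any rescaling argument. In fact $w$ is a $\sigma$-independent Gaussian, because the variances add to $(2-\beta^2/(4\sigma^2))/\beta^2+1/(4\sigma^2)=2/\beta^2$. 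What your route buys is a single mechanism for both estimates. The factor $e^{-\sigma^2\nu^2/2}$ confines the Bohr frequencies of $M_{\mathcal D}$ to $|\nu|\lesssim 1/\sigma$, and the Gibbs weight $e^{\pm\beta\nu/4}$ plays the role of the paper's contour shift, so you need neither the principal-value kernel nor any commutator estimates. The paper's route is more robust with respect to the envelope. Its norm bound needs only $\|f'\|_{L^2}=\mathcal O(1/\sigma)$, and its defect bound needs only analyticity of $f$ in the strip $|\mathrm{Im}\,t|\le\beta/4$ plus control of $f(\cdot+i\beta/4)-f(\cdot-i\beta/4)$. Your exact factorization uses the Gaussian form of $\widehat f$ essentially, so for other envelopes you would have to fall back on your non-factorized two-variable bound.
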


The proof of~\cref{lem:ding_jump_G_bound} relies on the following fact:

\begin{fact}\label{fac:ding_T_bound}
Let $X=\sum_{\nu}X_\nu$ be the Bohr-frequency decomposition of an operator $X$, and define
\[
T(X):=\frac{i}{2}\sum_{\nu}\tanh\!\left(\frac{\beta\nu}{4}\right)X_\nu.
\]
Then
\[
\|T(X)\|=\mathcal O(\beta)\,\|[H,X]\|.
\]
\end{fact}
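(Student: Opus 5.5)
The plan is to work in the energy eigenbasis of $H$. For each Bohr frequency $\nu$, the scalar factor $\tanh(\beta\nu/4)$ should be traded for the multiplier $\nu$, which is what produces the commutator, at the price of an operator-norm bounded transform with norm $\mathcal O(\beta)$.

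Since $[H,X]_\nu=\nu X_\nu$, we can write
\[
T(X)=\frac{i}{2}\sum_\nu \frac{\tanh(\beta\nu/4)}{\nu}\,\nu X_\nu=\frac{i}{2}\sum_\nu h(\nu)\,[H,X]_\nu,\qquad h(\nu):=\frac{\tanh(\beta\nu/4)}{\nu}.
\]
At $\nu=0$ the summand vanishes, so $h(0)=\beta/4$ is harmless. The function $h$ is even, real, smooth, and decays like $1/|\nu|$. The plan is to realize the Bohr-frequency multiplier $h$ as an operator Fourier transform. If $h(\nu)=\int_{\mathbb R} k(t)e^{i\nu t}\,dt$ with $k\in L^1$, then
\[
\sum_\nu h(\nu)Y_\nu=\int k(t)\,e^{iHt}Ye^{-iHt}\,dt,
\]
and conjugation by unitaries preserves the operator norm. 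This gives $\|T(X)\|\le \tfrac12\|k\|_{L^1}\|[H,X]\|$. By scaling, $h(\nu)=\beta\,h_1(\beta\nu)$ with $h_1(x)=\tanh(x/4)/x$. Hence $k(t)=k_1(t/\beta)$ up to normalization, and $\|k\|_{L^1}=\beta\|k_1\|_{L^1}$. So everything reduces to showing that $\widehat{h_1}$ (the inverse transform) lies in $L^1$.

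The main obstacle is that $h_1$ decays only like $1/|x|$, so it is not in $L^1$, and the earlier "$\psi,\psi''\in L^1$" trick from the proof of \cref{lem:ding_hamiltonian_part_bound} does not directly apply.

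I would instead use the known explicit transform. The function $\tanh(x/4)/x$ is the Fourier transform of a kernel with a logarithmic singularity at $t=0$ and exponential decay; concretely, it is proportional to $\log\coth(\pi|t|)$ after rescaling. This kernel is integrable: it has a log singularity at $0$ and decays like $e^{-c|t|}$.

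To avoid computing it exactly, I would argue as follows. Note that $h_1'$ and $h_1''$ are in $L^1$ and $h_1$ is smooth. This gives $|k_1(t)|\le \|h_1''\|_{L^1}/t^2$ for large $|t|$, which settles integrability at infinity. For small $|t|$, split $h_1=\tfrac{1}{4}\,\mathrm{sgn}$-type tail plus an $L^1$ part. Concretely, write
\[
h_1(x)=\frac{1}{4}\cdot\frac{1}{\sqrt{x^2+16}}\cdot 4+r(x),
\]
choosing the first piece so that $r=\mathcal O(|x|^{-2})$ is smooth and integrable. The first piece has a Bessel-$K_0$ transform, which is integrable with a log singularity. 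Therefore $\|k_1\|_{L^1}=\mathcal O(1)$ and $\|T(X)\|\le \mathcal O(\beta)\|[H,X]\|$.
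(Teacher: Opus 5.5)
Your proof is correct, and it rests on the same underlying idea as the paper's, organized differently. The paper keeps the multiplier $\tfrac{i}{2}\tanh(\beta\nu/4)$ itself and writes it as the principal-value transform of the odd, non-integrable kernel $\phi_\beta(t)=\beta^{-1}\operatorname{csch}(2\pi t/\beta)$. It then symmetrizes to $T(X)=\int_0^\infty\phi_\beta(t)\bigl(X(t)-X(-t)\bigr)\,\mathrm dt$ and only afterwards brings in the commutator, via the mean-value bound $\|X(t)-X(-t)\|\le 2t\|[H,X]\|$. That works because $t\,\phi_\beta(t)$ is bounded at $0$ and decays exponentially, which is immediate from the explicit formula. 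You instead divide by $\nu$ first and show that $\tanh(\beta\nu/4)/\nu$ is the transform of an $L^1$ kernel. By Fubini the two kernels are the same object: your $k(s)=\tfrac{1}{\pi}\log\coth(\pi|s|/\beta)$ equals $2\int_{|s|}^\infty\phi_\beta(t)\,\mathrm dt$, so integrating $t|\phi_\beta(t)|$ in the paper amounts to computing $\|k\|_{L^1}$ in your version. The paper's route never has to deal with the $1/|x|$ decay of $\tanh(x/4)/x$ or with a log-singular kernel. Yours does, which is why you need either the explicit $\log\coth$ transform or the $K_0$-plus-remainder splitting. In return you get a clean multiplier (transference) statement. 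Moreover, positivity of $\log\coth$ gives $\|k\|_{L^1}=h(0)=\beta/4$ with no integration, hence $\|T(X)\|\le\tfrac{\beta}{8}\|[H,X]\|$; evaluating the paper's integral gives the same constant. One tidying point in your fallback: apply the $t^{-2}$ decay estimate to $\check r$, where $r,r''\in L^1$ make it classical, rather than to $k_1$ itself. Since $h_1\notin L^1$, applying it to $k_1$ would need a distributional justification; the exponential decay of $K_0$ already covers the other piece at infinity.
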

\begin{proof}
The Fourier transform identity
\[
\frac{i}{2}\tanh\!\left(\frac{\beta\nu}{4}\right)
=
\mathrm{p.v.}\int_{-\infty}^{\infty}\frac{1}{\beta}\operatorname{csch}\!\left(\frac{2\pi t}{\beta}\right)e^{it\nu}\,\mathrm dt
\]
implies that, for any operator $X=\sum_{\nu}X_\nu$,
\[
T(X)
=
\mathrm{p.v.}\int_{-\infty}^{\infty}\phi_\beta(t)\,X(t)\,\mathrm dt,
\qquad
\phi_\beta(t):=\,\frac{1}{\beta}\operatorname{csch}\!\left(\frac{2\pi t}{\beta}\right),
\qquad
X(t):=e^{iHt}Xe^{-iHt}.
\]
Because $\phi_\beta$ is odd,
\[
T(X)=\int_0^\infty \phi_\beta(t)\bigl(X(t)-X(-t)\bigr)\,\mathrm dt\,.
\]
For every $t\ge 0$,
\[
X(t)-X(-t)=i\int_{-t}^{t}e^{iHs}[H,X]e^{-iHs}\,\mathrm ds\,,
\]
and hence
\[
\|X(t)-X(-t)\|\le 2t\,\|[H,X]\|\,.
\]
Therefore,
\[
\|T(X)\|\le 2\|[H,X]\|\int_0^\infty t\,|\phi_\beta(t)|\,\mathrm dt\,.
\]
By the change of variables $u=2\pi t/\beta$,
\[
\int_0^\infty t\,|\phi_\beta(t)|\,\mathrm dt
=
\frac{1}{\beta}\int_0^\infty t\,\operatorname{csch}\!\left(\frac{2\pi t}{\beta}\right)\mathrm dt
=
\frac{\beta}{4\pi^2}\int_0^\infty u\,\operatorname{csch}(u)\,\mathrm du
=
\mathcal O(\beta)\,,
\]
since $\int_0^\infty u\,\operatorname{csch}(u)\,\mathrm du<\infty$. Thus
\[
\|T(X)\|
=
\mathcal O(\beta)\,\|[H,X]\|\,.
\]
\end{proof}

\begin{proof}[Proof of~\cref{lem:ding_jump_G_bound}]
Applying~\cref{thm:kms_dbc_lindbladian_characterization} to the jump family $\{L_{A_S,\omega}\}_{A_S,\omega}$, the corresponding coherent term is given by
\[
G_{\mathcal D}=T(M_{\mathcal D})\,,
\]
where $T$ is the operator introduced in~\cref{fac:ding_T_bound}. We first bound $\|G_{\mathcal D}\|$, which by~\cref{fac:ding_T_bound} reduces to controlling $\|[H,M_{\mathcal D}]\|$.

Recall that $V_{A_S,f}$ is defined as in~\eqref{eqn:V_sf}:
\[
V_{A_S,f}(\omega)=\int_{-\infty}^{\infty}f(t)\,A_S(t)e^{-i\omega t}\,\mathrm dt\,.
\]
Therefore,
\[
V_{A_S,f}(\omega)^\dagger V_{A_S,f}(\omega)
=
\int_{-\infty}^{\infty}\int_{-\infty}^{\infty}
f(s)f(t)\,A_S^\dagger(s)A_S(t)e^{-i\omega(t-s)}\,\mathrm ds\,\mathrm dt\,.
\]
Substituting this into the definition of $M_{\mathcal D}$ in~\eqref{eqn:def_M_D} and integrating over $\omega$ gives
\begin{align} \label{eqn:MDexpression}
M_{\mathcal D}
=\mathbb E_{A_S}\int_{-\infty}^{\infty}
\gamma(\omega)\,V_{A_S,f}(\omega)^\dagger V_{A_S,f}(\omega)\,\mathrm d\omega=
\mathbb E_{A_S}\int_{-\infty}^{\infty}\int_{-\infty}^{\infty}
f(s)f(t)\,m(t-s)\,A_S^\dagger(s)A_S(t)\,\mathrm ds\,\mathrm dt\,,
\end{align}
where
\[
m(\tau):=\int_{-\infty}^{\infty}\gamma(\omega)e^{-i\omega\tau}\,\mathrm d\omega\,.
\]
For our Gaussian choice of $\gamma$,
\[
m(\tau)
=
\int_{-\infty}^{\infty}\frac{\beta}{\sqrt{2\pi \sigma_\beta}}
\exp\!\left(-\frac{(\beta\omega+1)^2}{2\sigma_\beta}\right)e^{-i\omega\tau}\,\mathrm d\omega
=
\exp\!\left(\frac{i\tau}{\beta}\right)\exp\!\left(-\frac{\sigma_\beta\tau^2}{2\beta^2}\right)\,,
\]
where $\sigma_\beta:=2-\frac{\beta^2}{4\sigma^2}$.

Now we reformulate $[H,M_{\mathcal{D}}]$ using \eqref{eqn:MDexpression}:
\begin{align*}
    [H,M_{\mathcal{D}}] = \mathbb E_{A_S}\int_{-\infty}^{\infty}\int_{-\infty}^{\infty}
f(s)f(t)\,m(t-s)\left[H,A_S^\dagger(s)A_S(t)\right]\mathrm ds\,\mathrm dt\,.
\end{align*}
Note $\partial_t A_S(t)=i[H,A_S(t)]$, we have
\begin{align}
\left[H,A_S^\dagger(s)A_S(t)\right]
=
-i(\partial_s+\partial_t)\bigl(A_S^\dagger(s)A_S(t)\bigr)\,.
\end{align}
Using the above identity and $(\partial_s+\partial_t)m(t-s)=0$, integrating by parts yields
\[
[H,M_{\mathcal D}]
=
i\,\mathbb E_{A_S}\int_{-\infty}^{\infty}\int_{-\infty}^{\infty}
m(t-s)\bigl(f'(s)f(t)+f(s)f'(t)\bigr)A_S^\dagger(s)A_S(t)\,\mathrm ds\,\mathrm dt\,.
\]
Writing $\tau=t-s$, this becomes
\[
[H,M_{\mathcal D}]
=
i\,\mathbb E_{A_S}\int_{-\infty}^{\infty}m(\tau)\left(\int_{-\infty}^{\infty}
\bigl(f'(s)f(s+\tau)+f(s)f'(s+\tau)\bigr)A_S^\dagger(s)A_S(s+\tau)\,\mathrm ds\right)\mathrm d\tau\,.
\]
Since $\|A_S^\dagger(s)A_S(s+\tau)\|\le 1$, Cauchy--Schwarz yields
\[
\|[H,M_{\mathcal D}]\|
\le
2\|m\|_{L^1}\sup_{\tau\in\mathbb R}\int_{-\infty}^{\infty}|f'(s)f(s+\tau)|\,\mathrm ds
\le
2\|m\|_{L^1}\|f'\|_{L^2}\|f\|_{L^2}\,.
\]
For our Gaussian envelope,
\[
\|m\|_{L^1}=\mathcal O(\beta),
\qquad
\|f\|_{L^2}=1,
\qquad
\|f'\|_{L^2}=\mathcal O(\sigma^{-1}),
\]
and therefore
$
\|[H,M_{\mathcal D}]\|
=
\mathcal O\!\left(\frac{\beta}{\sigma}\right).
$ Applying~\cref{fac:ding_T_bound} with $X=M_{\mathcal D}$, we obtain that
$$
\|G_{\mathcal D}\|
\le
\mathcal O\!\left(\frac{\beta^2}{\sigma}\right)\,.
$$

Now we bound the Gibbs-commutation defect \eqref{eqn:boundGDdefect}. Let
\[
D_{\mathcal D}:=\rho_\beta^{1/4}M_{\mathcal D}\rho_\beta^{-1/4}
-\rho_\beta^{-1/4}M_{\mathcal D}\rho_\beta^{1/4}.
\]
Since
\[
\rho_\beta^{1/4}(M_{\mathcal D})_\nu\rho_\beta^{-1/4}
-\rho_\beta^{-1/4}(M_{\mathcal D})_\nu\rho_\beta^{1/4}
=
\left(e^{-\beta\nu/4}-e^{\beta\nu/4}\right)(M_{\mathcal D})_\nu=(D_{\cal D})_\nu\,,
\]
we have
\[
\rho_\beta^{1/4}G_{\mathcal D}\rho_\beta^{-1/4}
-\rho_\beta^{-1/4}G_{\mathcal D}\rho_\beta^{1/4}=\rho_\beta^{1/4}T(M_{\mathcal D})\rho_\beta^{-1/4}
-\rho_\beta^{-1/4}T(M_{\mathcal D})\rho_\beta^{1/4}
=
T(D_{\mathcal D})\,.
\]
Therefore, by~\cref{fac:ding_T_bound},
\begin{align}\label{eqn:combinecommutant}
\left\|\rho_\beta^{1/4}G_{\mathcal D}\rho_\beta^{-1/4}
-\rho_\beta^{-1/4}G_{\mathcal D}\rho_\beta^{1/4}\right\|
\le
\mathcal O(\beta)\,\|[H,D_{\mathcal D}]\|\,.
\end{align}
Let $f_\pm(t):=f(t\pm i\beta/4)$, and $\delta f:=f_+-f_-$. By the same contour-shift argument used in the proof of~\cref{lem:ding_hamiltonian_part_bound},
\[
D_{\mathcal D}
=
\mathbb E_{A_S}\int_{-\infty}^{\infty}\int_{-\infty}^{\infty}
\bigl(f_+(s)f_+(t)-f_-(s)f_-(t)\bigr)m(t-s)A_S^\dagger(s)A_S(t)\,\mathrm ds\,\mathrm dt\,.
\]
Using
\[
\left[H,A_S^\dagger(s)A_S(t)\right]
=
-i(\partial_s+\partial_t)\bigl(A_S^\dagger(s)A_S(t)\bigr),
\qquad
(\partial_s+\partial_t)m(t-s)=0,
\]
and integrating by parts, we obtain
\[
[H,D_{\mathcal D}]
=
i\,\mathbb E_{A_S}\int_{-\infty}^{\infty}\int_{-\infty}^{\infty}
m(t-s)(\partial_s+\partial_t)\bigl(f_+(s)f_+(t)-f_-(s)f_-(t)\bigr)
A_S^\dagger(s)A_S(t)\,\mathrm ds\,\mathrm dt\,.
\]
Hence,
\begin{align}\label{eqn:mtaubound}
\|[H,D_{\mathcal D}]\|
\le
\int_{-\infty}^{\infty}|m(\tau)|\left(\int_{-\infty}^{\infty}
\left|\frac{\mathrm d}{\mathrm ds}\Bigl(f_+(s)f_+(s+\tau)-f_-(s)f_-(s+\tau)\Bigr)\right|\,\mathrm ds\right)\mathrm d\tau\,.
\end{align}
Writing
\[
\frac{\mathrm d}{\mathrm ds}\Bigl(f_+(s)f_+(s+\tau)-f_-(s)f_-(s+\tau)\Bigr)
=
\delta f'(s)f_+(s+\tau)+f_-'(s)\delta f(s+\tau)
+\delta f(s)f_+'(s+\tau)+f_-(s)\delta f'(s+\tau)\,,
\]
we obtain
\[
\sup_{\tau\in\mathbb R}\int_{-\infty}^{\infty}
\left|\frac{\mathrm d}{\mathrm ds}\Bigl(f_+(s)f_+(s+\tau)-f_-(s)f_-(s+\tau)\Bigr)\right|\,\mathrm ds
\le
2\Bigl(\|\delta f'\|_{L^2}(\|f_+\|_{L^2}+\|f_-\|_{L^2})
+
\|\delta f\|_{L^2}(\|f_+'\|_{L^2}+\|f_-'\|_{L^2})\Bigr).
\]
For the Gaussian envelope,
\[
\|f_\pm\|_{L^2}=\mathcal O(1)\,,
\qquad
\|f_\pm'\|_{L^2}=\mathcal O(\sigma^{-1})\,.
\]
And by~\eqref{eqn:f_pm_delta_f_norm},
\[
\|\delta f\|_{L^2}=\mathcal O\!\left(\frac{\beta}{\sigma}\right)\,.
\]
Moreover, differentiating the explicit formulas for $f(t\pm i\beta/4)$ gives
\[
\delta f'(t)
=
e^{\beta^2/(64\sigma^2)}\left(
-2i\,f'(t)\sin\!\left(\frac{\beta t}{8\sigma^2}\right)
-i\,\frac{\beta}{4\sigma^2}f(t)\cos\!\left(\frac{\beta t}{8\sigma^2}\right)
\right)\,.
\]
Thus, using $|\sin x|\le |x|$, we have
\[
\|\delta f'\|_{L^2}
\le
\mathcal O\!\left(\frac{\beta}{4\sigma^2}\right)\bigl(\|t f'(t)\|_{L^2}+\|f\|_{L^2}\bigr)
=
\mathcal O\!\left(\frac{\beta}{\sigma^2}\right),
\]
since $\|f\|_{L^2}=1$ and $\|t f'(t)\|_{L^2}=\mathcal O(1)$.
Therefore,
\[
\sup_{\tau\in\mathbb R}\int_{-\infty}^{\infty}
\left|(\partial_s+\partial_t)\bigl(f_+(s)f_+(t)-f_-(s)f_-(t)\bigr)\right|_{t=s+\tau}\,\mathrm ds
=
\mathcal O\!\left(\frac{\beta}{\sigma^2}\right)\,.
\]
Combining this with \eqref{eqn:mtaubound} and note $\|m\|_{L^1}=\mathcal O(\beta)$, it follows that
\[
\|[H,D_{\mathcal D}]\|
=
\mathcal O\!\left(\frac{\beta^2}{\sigma^2}\right).
\]
Substituting this into \eqref{eqn:combinecommutant} yields
\[
\left\|\rho_\beta^{1/4}G_{\mathcal D}\rho_\beta^{-1/4}
-\rho_\beta^{-1/4}G_{\mathcal D}\rho_\beta^{1/4}\right\|
=
\mathcal O\!\left(\frac{\beta^3}{\sigma^2}\right)\,,
\]
which completes the proof of the lemma.
\end{proof}

\begin{lem}\label{lem:ding_L_norm_bound}
Under the same assumptions as in~\cref{lem:ding_jump_G_bound}, the Lindbladian
\[
\mathcal L(\rho)=-i[H_{\rm Lamb},\rho]+\mathcal L_{\rm KMS}(\rho)
\]
satisfies
\[
\|\mathcal L\|_{1\to1}=\mathcal O(\beta)\,.
\]
\end{lem}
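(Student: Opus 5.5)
The plan is to bound each piece of $\mathcal L$ separately in the induced trace norm via the triangle inequality. We split
\[
\mathcal L(\rho)=-i[H_{\rm coh},\rho]+\mathbb E_{A_S}\int_{-\infty}^{\infty}\gamma(\omega)\,\mathcal D_{V_{A_S,f}(\omega)}(\rho)\,\mathrm d\omega,
\]
which is just the original form \eqref{eqn:lindbladian_operator}. The additional coherent term $G_{\mathcal D}$ appears in both $H_{\rm Lamb}$ and $\mathcal L_{\rm KMS}$ with opposite signs, so it cancels. Alternatively, one can keep the decomposition $H_{\rm Lamb}=H_{\rm coh}-G_{\mathcal D}$ and bound the terms separately.

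For the coherent parts, we use $\|[K,\rho]\|_1\le 2\|K\|\,\|\rho\|_1$. This gives
\[
\|[H_{\rm coh},\cdot]\|_{1\to1}\le 2\|H_{\rm coh}\|=\mathcal O(\beta)
\]
by \cref{lem:ding_hamiltonian_part_bound}. If one works with the KMS decomposition instead, we also need $\|[G_{\mathcal D},\cdot]\|_{1\to1}\le 2\|G_{\mathcal D}\|=\mathcal O(\beta^2/\sigma)=\mathcal O(\beta)$. This follows from \cref{lem:ding_jump_G_bound} together with $\sigma=\Omega(\beta)$ from \cref{assumption:ding_parameters}.

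For the dissipative part, the transition term $\rho\mapsto \mathbb E_{A_S}\int\gamma(\omega)V\rho V^\dagger\,\mathrm d\omega$ is completely positive. Its $1\to1$ norm on Hermitian inputs is therefore bounded by $\|\mathcal T^\dagger(I)\|=\|M_{\mathcal D}\|$, and for general inputs a factor $2$ suffices. The normalization term satisfies $\|\tfrac12\{M_{\mathcal D},\rho\}\|_1\le\|M_{\mathcal D}\|\,\|\rho\|_1$. So everything reduces to showing $\|M_{\mathcal D}\|=\mathcal O(\beta)$.

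To bound $\|M_{\mathcal D}\|$, we use the representation \eqref{eqn:MDexpression}:
\[
M_{\mathcal D}=\mathbb E_{A_S}\iint f(s)f(t)\,m(t-s)\,A_S^\dagger(s)A_S(t)\,\mathrm ds\,\mathrm dt .
\]
Substitute $\tau=t-s$ and use $\|A_S(\cdot)\|\le1$. Cauchy--Schwarz in $s$ then gives $\|M_{\mathcal D}\|\le\|m\|_{L^1}\|f\|_{L^2}^2$. The Gaussian formula $m(\tau)=e^{i\tau/\beta}e^{-\sigma_\beta\tau^2/(2\beta^2)}$ yields $\|m\|_{L^1}=\sqrt{2\pi/\sigma_\beta}\,\beta=\mathcal O(\beta)$, and we have $\|f\|_{L^2}=1$.

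A cruder alternative also works: $\|V_{A_S,f}(\omega)\|\le\|f\|_{L^1}=\mathcal O(\sqrt\sigma)$, with $\int\gamma=\mathcal O(1)$. However, this only gives $\mathcal O(\sigma)$, which is why the correlation-function route is the one to use.

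Summing the bounds gives $\|\mathcal L\|_{1\to1}=\mathcal O(\beta)$. The main point to check is that the bound on $M_{\mathcal D}$ is independent of $\sigma$, which is exactly what the $\|m\|_{L^1}$ argument supplies.
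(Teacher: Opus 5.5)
Your proposal is correct and follows essentially the same route as the paper: split $\mathcal L$ by the triangle inequality, use $\|H_{\rm coh}\|=\mathcal O(\beta)$ from \cref{lem:ding_hamiltonian_part_bound}, and bound $\|M_{\mathcal D}\|\le\|m\|_{L^1}\|f\|_{L^2}^2=\mathcal O(\beta)$ through the correlation-function representation and Cauchy--Schwarz. The differences are only streamlinings: you observe that $G_{\mathcal D}$ cancels between $H_{\rm Lamb}$ and $\mathcal L_{\rm KMS}$, so \cref{lem:ding_jump_G_bound} is not needed here, and you control the transition term by $\|\mathcal T^\dagger(I)\|=\|M_{\mathcal D}\|$ using complete positivity, whereas the paper estimates $\mathcal J$ directly from its double-integral representation.
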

\begin{proof}
By~\cref{thm:kms_dbc_lindbladian_characterization}, the corresponding KMS-detailed-balance generator associated with the jump family $\{L_{A_S,\omega}\}_{A_S,\omega}$ can be written as
\[
\mathcal L_{\rm KMS}(\rho)
=
i[G_{\mathcal D},\rho]
+\mathcal J(\rho)
-\frac12\{M_{\mathcal D},\rho\}\,,
\]
where
\[
\mathcal J(\rho):=\mathbb E_{A_S}\int_{-\infty}^{\infty}\gamma(\omega)\,
V_{A_S,f}(\omega)\rho V_{A_S,f}(\omega)^\dagger\,\mathrm d\omega\,.
\]
Hence,
\[
\|\mathcal L\|_{1\to1}
\le
2\|H_{\rm Lamb}\|
+2\|G_{\mathcal D}\|
+\|\mathcal J\|_{1\to1}
+\|M_{\mathcal D}\|\,.
\]
Let
\[
m(\tau):=\int_{-\infty}^{\infty}\gamma(\omega)e^{-i\omega\tau}\,\mathrm d\omega
=
\exp\!\left(\frac{i\tau}{\beta}\right)\exp\!\left(-\frac{\sigma_\beta\tau^2}{2\beta^2}\right),
\qquad
a_k:=2-\frac{\beta^2}{4\sigma^2}\,.
\]
Using the definition of $V_{A_S,f}(\omega)$ and Fourier inversion, we may rewrite
\[
\mathcal J(\rho)
=
\mathbb E_{A_S}\int_{-\infty}^{\infty}\int_{-\infty}^{\infty}
m(t-s)f(s)f(t)\,A_S(t)\rho A_S^\dagger(s)\,\mathrm ds\,\mathrm dt\,,
\]
and similarly
\[
M_{\mathcal D}
=
\mathbb E_{A_S}\int_{-\infty}^{\infty}\int_{-\infty}^{\infty}
m(t-s)f(s)f(t)\,A_S^\dagger(s)A_S(t)\,\mathrm ds\,\mathrm dt\,.
\]
Since $\|A_S(t)\|\le 1$, for every $\rho$ we have
\[
\|\mathcal J(\rho)\|_1
\le
\mathbb E_{A_S}\int_{-\infty}^{\infty}\int_{-\infty}^{\infty}
|m(t-s)|\,|f(s)f(t)|\,\|A_S(t)\rho A_S^\dagger(s)\|_1\,\mathrm ds\,\mathrm dt
\le
\left(\int_{-\infty}^{\infty}\int_{-\infty}^{\infty}|m(t-s)|\,|f(s)f(t)|\,\mathrm ds\,\mathrm dt\right)\|\rho\|_1\,.
\]
Changing variables to $\tau=t-s$ and using Cauchy--Schwarz,
\[
\int_{-\infty}^{\infty}\int_{-\infty}^{\infty}|m(t-s)|\,|f(s)f(t)|\,\mathrm ds\,\mathrm dt
\le
\|m\|_{L^1}\sup_{\tau\in\mathbb R}\int_{-\infty}^{\infty}|f(s)f(s+\tau)|\,\mathrm ds
\le
\|m\|_{L^1}\|f\|_{L^2}^2\,.
\]
Hence
\[
\|\mathcal J\|_{1\to1}
\le
\|m\|_{L^1}\|f\|_{L^2}^2\,.
\]
The same argument, using $\|A_S^\dagger(s)A_S(t)\|\le 1$, gives
\[
\|M_{\mathcal D}\|
\le
\|m\|_{L^1}\|f\|_{L^2}^2\,.
\]
Since
\[
\|m\|_{L^1}
=
\mathcal O(\beta),
\qquad
\|f\|_{L^2}=1,
\]
it follows that
\[
\|\mathcal J\|_{1\to1}=\mathcal O(\beta),
\qquad
\|M_{\mathcal D}\|=\mathcal O(\beta).
\]
Finally, by~\cref{lem:ding_hamiltonian_part_bound,lem:ding_jump_G_bound}, we have
\[
\|H_{\rm Lamb}\|
\le
\|H_{\rm coh}\|+\|G_{\mathcal D}\|
=
\mathcal O(\beta)\,,
\]
Substituting these bounds into the estimate for $\|\mathcal L\|_{1\to1}$ above yields $\|\mathcal L\|_{1\to1}=\mathcal O(\beta)$.
\end{proof}

\bibliographystyle{plain}
\bibliography{ref.bib}

\end{document}